
\documentclass[english,11pt,a4paper]{article}



\usepackage{ifdraft}
\usepackage{pgf}
\usepackage{xspace}

\usepackage[english]{babel}
\usepackage{geometry}
\usepackage{makeidx}
\usepackage{tabto}
\usepackage{fancyhdr}
\usepackage{multicol}

\usepackage{marvosym}
\usepackage{amsfonts}
\usepackage{amssymb}
\usepackage{stmaryrd}
\usepackage{braket}
\usepackage{bbold}
\usepackage{bbm}

\usepackage{braket}
\usepackage{extarrows}
\usepackage{colonequals}

\usepackage{mathtools}
\usepackage{amsmath}
\usepackage{amsthm}
\usepackage{cases}
\usepackage{multirow, bigdelim}
\usepackage{nicefrac}
\usepackage{relsize}

\usepackage{algorithm}
\usepackage{algorithmic}

\usepackage{array}
\usepackage{enumerate}
\usepackage[inline]{enumitem}
\usepackage{graphicx}
\usepackage{caption}
\usepackage{subcaption}
\usepackage{tikz}
\usepackage{tkz-graph}
\usepackage{float}
\usepackage{pdfpages}
\usepackage[nopgf,nographicx,nobookmark]{incgraph}
\usepackage{longtable}
\usepackage{array}
\usepackage{multirow}
\usepackage{makecell}

\usepackage{hyperref}

\usepackage{tocbasic}
\usepackage{cite}
\usepackage{url}

\usepackage{xcolor,color, colortbl}

\usepackage{marginnote} 
\usepackage{textcomp}


\allowdisplaybreaks[1]
\graphicspath{ {images/} }


\setcounter{tocdepth}{1}


\usetikzlibrary{fit}
\usetikzlibrary{arrows}
\usetikzlibrary{petri}
\usetikzlibrary{topaths}
\usetikzlibrary{positioning}
\usetikzlibrary{patterns}
\usetikzlibrary{decorations.pathreplacing}
\usetikzlibrary{calc}
\usetikzlibrary{shapes}
\usetikzlibrary{trees}
\usetikzlibrary{intersections}
\usetikzlibrary{backgrounds}
\tikzset{>=latex'}


\theoremstyle{plain}
\newtheorem{theorem}             {Theorem}[section]
\newtheorem{lemma}      [theorem]{Lemma}
\newtheorem{corollary}  [theorem]{Corollary}

\theoremstyle{definition}
\newtheorem{definition} [theorem]{Definition}

\makeatletter
\newtheorem*{rep@theorem}{\rep@title}
\newcommand{\newreptheorem}[2]{%
\newenvironment{rep#1}[1]{%
 \def\rep@title{#2 \ref{##1}}%
 \begin{rep@theorem}}%
 {\end{rep@theorem}}}
\makeatother

\newreptheorem{theorem}{Theorem}


\newcommand{\Nat}{\ensuremath{\mathbb{N}}}
\newcommand{\Rel}{\ensuremath{\mathbb{R}}}



\newcommand{\abs}[1]{| #1 |}

\DeclareMathOperator{\lcm}{lcm}
\newcommand{\reverseFunction}[1]{\ensuremath{#1^{-1}}}

\newcommand{\restr}[2]{
	\ensuremath{
		\left.\kern-\nulldelimiterspace
		#1
		\vphantom{\big|}
		\right|_{#2}
	}
}

\newcommand{\disjointUnion}{\ensuremath{\mathbin{\dot{\cup}}}}




\title{
	An Upper Bound on the Weisfeiler-Leman Dimension\footnote{The research leading to these results has received funding from the European Research Council (ERC) under the European Union’s Horizon 2020 research and innovation programme (EngageS: grant agreement No.\ 820148).}
}
\author{%
	Thomas Schneider and Pascal Schweitzer\\
	TU Darmstadt
}


\makeatletter
\def\@maketitle{%
  \newpage
  \null
  \vskip 2em%
  \begin{center}%
  \let \footnote \thanks
    {\LARGE \@title \par}%
    \vskip 1.5em%
    {\large \@author}%
    \vskip 1.5em%
    {\large \@date}%
  \end{center}%
  \par
  \vskip 1.5em}
\makeatother


\definecolor{darkgreen}{rgb}{0,0.6,0}
\definecolor{darkred}{RGB}{128, 0, 0}
\definecolor{darkblue}{RGB}{51, 204, 204}
\definecolor{darkyellow}{RGB}{204,204,0}
\definecolor{fuchsia}{RGB}{255,0,255}

\definecolor{lightblue}{RGB}{173,216,230}
\definecolor{lightred}{RGB}{233,150,122}
\definecolor{lightyellow}{RGB}{250,250,210}
\definecolor{lightgray}{RGB}{198,198,198}

\definecolor{hrefblue}{rgb}{0.5,0.5,1.0}
\definecolor{hrefred}{rgb}{0.5,0,0}
\definecolor{hrefgreen}{rgb}{0,0.5,0}
\definecolor{hrefblue}{rgb}{0,0,0.5}
\definecolor{labelkey}{RGB}{51, 204, 204}

\definecolor{lavender}{RGB}{181,126,220}
\definecolor{sage}{RGB}{188,184,138}


\hypersetup{breaklinks=true}
\urlstyle{same}
\Urlmuskip=0mu plus 1mu\relax
\bibliographystyle{alphaurl}

\hypersetup{colorlinks,linkcolor=hrefred,filecolor=hrefgreen,urlcolor=hrefred,citecolor=hrefblue}


\setlist[enumerate]{label=(\arabic*)}


\def\env@matrix{\hskip -\arraycolsep
	\let\@ifnextchar\new@ifnextchar
	\array{*\c@MaxMatrixCols c}}

\makeatletter
\renewcommand*\env@matrix[1][c]{\hskip -\arraycolsep
	\let\@ifnextchar\new@ifnextchar
	\array{*\c@MaxMatrixCols #1}}
\makeatother



\DeclareMathOperator{\Fibers}{F}
\newcommand{\coherentConfig}{\ensuremath{\mathfrak{X}}}
\newcommand{\fibers}[1]{\ensuremath{\Fibers \left( #1 \right)}}
\newcommand{\interspace}[2]{\ensuremath{\coherentConfig[#1,#2]}}
\newcommand{\inducedCC}[1]{\ensuremath{\coherentConfig[#1]}}
\newcommand{\intDegree}[1]{\ensuremath{\Deg(#1)}}
\newcommand{\minimalDegree}[2]{\Deg(#1,#2)}
\DeclareMathOperator*{\ul}{ul}

\DeclareMathOperator*{\Type}{T}
\newcommand{\type}[1]{\ensuremath{\Type\left(#1\right)}}

\newcommand{\vertices}{\ensuremath{\Omega}}
\newcommand{\relations}{\ensuremath{\mathcal{A}}}
\newcommand{\arcs}{\ensuremath{A}}

\DeclareMathOperator*{\WLdim}{WLdim}
\newcommand{\wldim}[1]{\ensuremath{\WLdim\left(#1\right)}}

\newcommand{\finer}{\preccurlyeq}
\newcommand{\wld}[1]{\texttt{WL#1}}
\newcommand{\wltwo}{\wld{2}\xspace}
\newcommand{\wldstable}[2]{\ensuremath{#2^{\wld{#1}}_{\infty}}}

\DeclareMathOperator*{\Quotient}{Q}
\newcommand{\quotientGraph}[1]{\ensuremath{\Quotient(#1)}}
\newcommand{\quotientGraphLarge}[1]{\ensuremath{\Quotient^L(#1)}}
\newcommand{\quotientGraphSmall}[1]{\ensuremath{\Quotient^S(#1)}}
\DeclareMathOperator{\ColorDeg}{qdeg}
\DeclareMathOperator{\ColorDegLarge}{qdeg_L}
\DeclareMathOperator{\ColorDegSmall}{qdeg_S}
\DeclareMathOperator{\ColorDegRelevantSmall}{qdeg_{rS}}
\newcommand{\colorDeg}[1]{\ensuremath{\ColorDeg(#1)}}
\newcommand{\colorDegLarge}[1]{\ensuremath{\ColorDegLarge(#1)}}
\newcommand{\colorDegSmall}[1]{\ensuremath{\ColorDegSmall(#1)}}
\newcommand{\colorDegRelevantSmall}[1]{\ensuremath{\ColorDegRelevantSmall(#1)}}

\newcommand{\coloring}{\ensuremath{\chi}}
\newcommand{\coloredGraph}{\ensuremath{(G,\coloring)}}
\DeclareMathOperator{\Deg}{d}

\newcommand{\degree}[3]{\Deg_{#1}^{#3}\left(#2\right)}

\DeclareMathOperator{\CFI}{CFI}
\newcommand{\cfi}[1]{\CFI \left(#1\right)}

\newcommand{\partition}[1]{\ensuremath{\mathcal{P}^1_1(#1)}}
\newcommand{\partitionRel}[3]{\ensuremath{\mathcal{P}^{#1}_{#2}(#3)}}
\newcommand{\equivalenceClasses}[1]{\ensuremath{\mathcal{P}(#1)}}
\newcommand{\partitionStructure}[1]{\ensuremath{\mathfrak{S}(#1)}}

\DeclareMathOperator{\treewidth}{tw}
\DeclareMathOperator{\pathwidth}{pw}

\newcommand{\f}{f}
\DeclareMathOperator{\hfunction}{h}
\DeclareMathOperator{\parameters}{Par}

\newcommand{\interspacePattern}[1]{\ensuremath{\llbracket#1\rrbracket}}
\newcommand{\ipfourClique}  {\ensuremath{\interspacePattern{\clique{4},2}}}
\newcommand{\ipfourMatching}{\ensuremath{\interspacePattern{\disjointCliques{2}{2},2}}}
\newcommand{\ipfourCycle}   {\ensuremath{\interspacePattern{\cycle{4},2}}}

\newcommand{\ipsixCliqueTwo}     {\ensuremath{\interspacePattern{\clique{6},2}}}
\newcommand{\ipsixCliqueTwoTwice}{\ensuremath{\interspacePattern{\clique{6},2,2}}}
\newcommand{\ipsixCliqueThree}   {\ensuremath{\interspacePattern{\clique{6},3^\dag}}}
\newcommand{\ipsixCliqueThreeD}	 {\ensuremath{\interspacePattern{\clique{6},3^\ddag}}}

\newcommand{\ipsixMatching}             {\ensuremath{\interspacePattern{\disjointCliques{3}{2},2}}}
\newcommand{\ipsixMatchingTwice}        {\ensuremath{\interspacePattern{\disjointCliques{3}{2},2,2}}}
\newcommand{\ipsixMatchingMatching}     {\ensuremath{\interspacePattern{\disjointCliques{3}{2},2;\disjointCliques{3}{2},2}}}
\newcommand{\ipsixMatchingAndCycle}     {\ensuremath{\interspacePattern{\cycle{6},2;\disjointCliques{3}{2},2}}}
\newcommand{\ipsixMatchingAndComplement}{\ensuremath{\interspacePattern{\clique{2,2,2},2;\disjointCliques{3}{2},2}}}

\newcommand{\ipsixTriangle}               {\ensuremath{\interspacePattern{\disjointCliques{2}{3},3^\dag}}}
\newcommand{\ipsixTriangleComplement}     {\ensuremath{\interspacePattern{\clique{3,3},2}}}
\newcommand{\ipsixTriangleComplementTwice}{\ensuremath{\interspacePattern{\clique{3,3},2,2}}}

\newcommand{\ipsixMatchingComplement} {\ensuremath{\interspacePattern{\clique{2,2,2},3^\dag}}}
\newcommand{\ipsixMatchingComplementD}{\ensuremath{\interspacePattern{\clique{2,2,2},3^\ddag}}}


\newcommand{\matchingCC}[1]{\ensuremath{\disjointCliques{#1}{2}}}
\newcommand{\clique}[1]{\ensuremath{K_{#1}}}
\newcommand{\cycle}[1]{\ensuremath{C_{#1}}}
\newcommand{\disjointCliques}[2]{\ensuremath{#1 \clique{#2}}}
\newcommand{\disjointCycles}[2]{\ensuremath{#1 \cycle{#2}}}
\DeclareMathOperator{\fanoPlane}{FP}
\DeclareMathOperator{\LeviGraph}{L}
\newcommand{\leviGraph}[1]{\ensuremath{\LeviGraph\!\left(#1\right)}}
\newcommand{\leviFano}{\leviGraph{\fanoPlane}}
\newcommand{\rookGraph}[1]{\ensuremath{R_{#1}}}

\newcommand{\matching}[1]{\ensuremath{#1 K_{1,1}}}
\newcommand{\interspaceFourSix}{\ensuremath{Sp_{4,6}}}


\tikzstyle{vertex}=[circle,draw,minimum size=.2mm]
\tikzstyle{empty}=[]
\tikzstyle{edge}=[draw,very thick]
\tikzstyle{arrow}=[draw,very thick,->]
\tikzstyle{rectangle}=[thick]

\makeatletter
\newcommand{\thickhline}{%
    \noalign {\ifnum 0=`}\fi \hrule height 1pt
    \futurelet \reserved@a \@xhline
}
\newcolumntype{"}{@{\hskip\tabcolsep\vrule width 1pt\hskip\tabcolsep}}
\makeatother

\begin{document}
    \maketitle
    \thispagestyle{empty}
    \begin{abstract}
        The Weisfeiler-Leman (WL) algorithms form a family of incomplete approaches to the graph isomorphism problem. They recently found various applications in algorithmic group theory and machine learning.
        In fact, the algorithms form a parameterized family: for each~$k \in \Nat$ there is a corresponding $k$-dimensional algorithm~$\wld{k}$.
        The algorithms become increasingly powerful with increasing dimension, but at the same time the running time increases.
        The WL-dimension of a graph~$G$ is the smallest $k \in \Nat$ for which~$\wld{k}$ correctly decides isomorphism between~$G$ and every other graph.
        In some sense, the WL-dimension measures how difficult it is to test isomorphism of one graph to others using a fairly general class of combinatorial algorithms. Nowadays, it is a standard measure in descriptive complexity theory for the structural complexity of a graph.

        We prove that the WL-dimension of a graph on~$n$ vertices is at most $3/20 \cdot  n + o(n) = 0.15 \cdot n + o(n)$.

        Reducing the question to coherent configurations, the proof develops various techniques to analyze their structure. This includes sufficient conditions under which a fiber can be restored uniquely up to isomorphism if it is removed, a recursive proof exploiting a degree reduction and treewidth bounds, as well as an exhaustive analysis of interspaces involving small fibers.

        As a base case, we also analyze the dimension of coherent configurations with small fiber size and thereby graphs with small color class size.
    \end{abstract}

    \addtocounter{page}{-1}
    \newpage


\section{Introduction}
\label{sec:intro}

In recent years, the Weisfeiler-Leman (WL) dimension has evolved to become a standard measure for the structural complexity of a graph. Initially coined in the context of isomorphism questions\cite{Ba79b,MR0543783,MR1060782}, a plethora of equivalent reformulations in seemingly unrelated areas has surfaced (e.g.~\cite{DBLP:journals/combinatorica/CaiFI92,DBLP:journals/jgt/Dvorak10,DBLP:journals/siamcomp/AtseriasM13,DBLP:journals/jsyml/GroheO15,DBLP:conf/icalp/DellGR18,DBLP:journals/jct/AtseriasMRSSV19}). The concept in particular has applications in machine learning on graphs (see~\cite{Morrisetal2023} for a survey).

In its initial formulation, the WL-dimension of a graph~$G$ is  characterized as the minimum~$k$ required so that the~$k$-dimensional WL-algorithm distinguishes~$G$ from every non-isomorphic graph.
By a central result of Cai, F\"{u}rer, and Immerman~\cite[Theorem~5.2]{DBLP:journals/combinatorica/CaiFI92}, the dimension plus one is also the least number of variables required to identify~$G$ in a particular logic (fixed-point logic with counting) and also the number of pebbles required in a particular combinatorial pebble game (the bijective pebble game).

In some sense, the WL-dimension measures how difficult it is to test isomorphism of one graph to others using a fairly general class of combinatorial algorithms.
Crucially, isomorphism between graphs of bounded WL-dimension can be decided in polynomial time. More precisely, if the WL-dimension is at most~$k$, then the problem can be solved in time~$O(n^{k+1}\log n)$.
While group theoretic techniques can circumvent the structural complexity given by high WL-dimension, the currently fastest theoretical algorithm, which runs in quasi-polynomial time~\cite{DBLP:conf/stoc/Babai16}, nevertheless uses a~$O(\log(n))$-dimensional WL-algorithm as a subroutine.

Many graph classes are known to have bounded WL-dimension (e.g., all classes with an excluded minor~\cite{DBLP:books/cu/G2017} and bounded clique width graphs~\cite{DBLP:journals/tocl/GroheN23}), giving a polynomial time isomorphism algorithm for these classes.

Initial hopes to find a general bound on the WL-dimension
of graphs, however, were dispelled by the seminal construction of  Cai, F\"{u}rer, and Immerman~\cite{DBLP:journals/combinatorica/CaiFI92} which constructs graphs of order~$n$ and WL-dimension~$\Omega(n)$.

In this paper we investigate explicit bounds for the WL-dimension. A calculation of the precise constant in~\cite{DBLP:journals/combinatorica/CaiFI92} yields a lower bound of $0.00465\cdot n$ as demonstrated in~\cite{DBLP:journals/dam/PikhurkoVV06,DBLP:conf/asl/PikhurkoV09}.
In a publication independent from ours, Kiefer and Neuen~\cite{DBLP:journals/corr/abs-2402-03274} observe a lower bound of~$\frac{n}{96} - o(n)$.
This constitutes the current best lower bound in the literature and is slightly below the one we state in this paper using essentially the same observations.
In~\cite{DBLP:journals/dam/PikhurkoVV06,DBLP:conf/asl/PikhurkoV09}, an explicit upper bound of~$0.5 n + 1.5$ follows from upper bounds that apply in a more general context (more specifically from bounds on the ``non-counting version'' of the WL-algorithm).
We previously supervised a Bachelor's thesis by Simon Lutz at TU Kaiserslautern that shows an upper bound of~$\lceil\frac{n}{3}\rceil+2$.
The best currently known upper bound on the WL-dimension of~$\lceil\frac{n}{4}\rceil+o(n)$ is proven by Kiefer and Neuen~\cite{DBLP:journals/corr/abs-2402-03274} in their mentioned independent work.

\paragraph*{Contributions.}
The main result of this paper establishes an upper bound for the WL-dimension of~$0.15 \cdot  n + o(n)$ for all~$n$-vertex graphs~$G$.
As part of the proof, we also derive an upper bound of~$0.05 \cdot n + o(n)$ on the WL-dimension of colored graphs whose color classes all have size at most~$7$.
Possibly more important than the precise value of our bounds are the techniques we develop in the proof, allowing for the analysis of the descriptive complexity of graphs.
In brief, they include criteria under which color classes are removable without any effect on the WL-dimension, an analysis of the possible structures joining an arbitrarily color class with a color class of size at most~$7$, and a general framework to bound the WL-dimension with recursive reductions using potential functions.
These techniques are described in more detail below.

As a complementary result, we also observe in this paper that for all orders~$n$ there are graphs~$G$ with a WL-dimension of at least~$ 0.0105027 \cdot n - o(n)$.

As usual, these upper and lower bounds for the WL-dimension can both be recast in logical terms, bounding the number of variables required for graph identification in fixed point-logic with counting.

\paragraph*{Techniques.} On a macroscopic scale the idea for the upper bound proof is as follows. To facilitate recursion, we first generalize the problem to vertex- and edge-colored graphs. More specifically, it is sufficient for us to consider vertex- and edge-colored complete graphs that satisfy strong regularity conditions. In particular, it suffices to consider so-called coherent configurations, which naturally generalize various highly regular graph families, such as strongly regular graphs and distance-regular graphs. They can also be understood as the stable colorings under the 2-dimensional WL-algorithm. By increasing the dimension by at most 2 once, we can at any point in time, even during recursion, assume that all our objects are coherent configurations. In the language of coherent configurations, the notions of vertex color classes translate into so-called \emph{fibers} of the coherent configuration, and we will use terms interchangeably.

The proof strategy is to reduce the number of vertices, with a focus on reducing those contained in large fibers. We repeatedly use individualizations: an individualization artificially assigns a single vertex a different color. The effect on the coherent configuration, when the 2-dimensional WL-algorithm is applied, is that other fibers are split into smaller ones, which simplifies the coherent configuration. A vertex individualization decreases the dimension by at most 1, meaning we can bound the dimension of the configuration without individualization in terms of the dimension of the configuration we obtain after individualization.
We use this strategy to argue that, at a sublinear cost regarding the dimension, we can ensure that the maximum color class size is sublinear in the number of vertices.

With small fibers in mind, we investigate situations in which a color class can be removed without decreasing the WL-dimension. We call configurations \emph{critical} if no fiber can be removed this way. Fibers of size at most~$3$, which we call \emph{tiny}, can always be removed.
For fibers that are not tiny, we develop a technique to determine from the combinatorial structure that a fiber is \emph{restorable} and thus can be removed. Formally, the technique is based on extendability of automorphisms of induced subconfigurations.

Our base case of the recursion is the situation in which all fibers have size at most 7. We call such fibers \emph{small}. We analyze the structure of the configurations that small fibers can induce, as well as the possible connections between small fibers. These connections are called \emph{interspaces}. The \emph{quotient graph} captures the structural information given by fibers and interspaces. The vertices of the quotient graph are the vertex color classes (i.e.,~fibers). Two of these vertices are connected by an edge if the corresponding color classes are not homogeneously connected, that is, if the interspace is not trivial. Intuitively this means that there is some form of structural dependence between the color classes. For example, the connected components of the quotient graph can be treated separately when determining the WL-dimension, since they are structurally independent.

Several reductions lead us to a quotient graph of maximum degree at most 3. This means that every color class is non-trivially connected to at most 3 other color classes. At this point we can use bounds on the treewidth for cubic graphs to bound the WL-dimension.
Overall we show that a coherent configuration with fiber size at most~$7$ has WL-dimension at most~$0.05 \cdot n + o(n)$.

For the general recursion we also need to understand the possible interspaces between small and large fibers.
We define a specific potential function that measures the progress we make towards the base case. It gives us the possibility of trading individualizations for a reduction of the potential. We then define a sequence of~\emph{local reductions} that, for various subconfigurations and types of interspace, provide a positive trade-off. We can thus inductively assume that these subconfigurations and interspaces are not present in our configuration.

To finally reach the base case, we employ a global argument concerning the structure of configurations that avoid the subconfigurations. In more detail, we introduce the concept of a~\emph{$t$-reduced} configuration and show that configurations to which none of the local reductions are applicable are~$t$-reduced.
For reduced structures, the global argument allows us to separate the graph into pieces whose underlying structure either has small treewidth or which consist only of small fibers.
Overall this recursive approach proves the main theorem.

The lower bound simply follows straightforwardly by combining three known results on expansion, treewidth, and the CFI-construction in the evident fashion. An intermediate step in that argument is that  random cubic graphs asymptotically almost surely have treewidth at least~$0.04201\cdot n$.

\paragraph*{Structure.}
We revisit the most important definitions and concepts in Section~\ref{preliminaries/sec} and give details on the lower bound in the fairly compact Section~\ref{lower-bound/sec}.

The proof of the upper bound, in contrast, turned out to be significantly more involved.
A detailed outline of the overall proof, the involved ideas, and the new techniques is given in Section~\ref{upper-bound/sec}.
However, it omits some repetitive case distinctions, various tedious calculations, and technical aspects.
These can be found throughout Sections~\ref{critical-graph/sec} to~\ref{sec:proof:of:main:thm}:
the concepts of criticality and restorability are introduced in Section~\ref{critical-graph/sec}.
Sections~\ref{small-cc/sec} and~\ref{interspace-large-small/sec} focus on interspaces between small fibers and interspaces between large and small fibers, respectively.
Using these insights, Section~\ref{critical:restorable/sec} examines the restorability of interspaces.
Building on the previous results, Section~\ref{wldim-small/sec} proves the bound for WL-dimension of coherent configurations with fiber size at most~$7$.
Section~\ref{sec:limit:fiber:sizes} provides arguments to limit the fiber size.
The mechanisms of the local reductions are presented in Section~\ref{sec:potential:func}, and Section~\ref{recursive-argement/sec} contains the collection of necessary local reductions.
Sections~\ref{structure-reduced-cc/sec} and~\ref{global-argument/sec} define the concept of~$t$-reducedness and provide a bound on the WL-dimension of~$t$-reduced coherent configurations.
Finally, Section~\ref{sec:proof:of:main:thm} proves the upper bound by combining the previous results.

\paragraph*{Related Work.}  A concrete classification of graphs with WL-dimension 1 is known \cite{DBLP:journals/cc/ArvindKRV17,DBLP:journals/tocl/KieferSS22}.
Fuhlbr{\"{u}}ck, K{\"{o}}bler, and Verbitsky analyze the structure of
graphs with WL-dimension~2~\cite{DBLP:journals/siamdm/FuhlbruckKV21} and bounded color class size. Some of our structural lemmas can be seen as direct generalizations of results in their paper.
A recent generalization of their complexity results regarding the WL-dimension can be found in~\cite{DBLP:conf/csl/LichterRS25}. We should remark that in the two papers, just like in ours, the CFI-graphs appear innately.

A survey on descriptive complexity in particular with bounds related to the{\linebreak}WL-dimension can be found in~\cite{DBLP:conf/asl/PikhurkoV09}.
The term WL-dimension was coined by Grohe in his monograph~\cite{DBLP:books/cu/G2017}. The main result of this monograph implies that for non-trivial minor-closed graphs classes the WL-dimension is bounded. As remarked above, this is also true for graphs of bounded rank-width~\cite{DBLP:journals/tocl/GroheN23}. In recent years, for several graph classes explicit bounds on the dimension have been proven, including planar graphs~\cite{DBLP:journals/jacm/KieferPS19}, distance hereditary graphs~\cite{DBLP:journals/gc/GavrilyukNP23}, interval graphs~\cite{Ponomarenko2000interval}, permutation graphs~\cite{DBLP:journals/corr/abs-2305-15861}, and circulant graphs~\cite{wu2024weisfeiler}.

\paragraph*{Future Work.}
While our overall approach may very well be used to improve the upper bound even further, it seems that our argument involving interspaces will then produce an ever increasing number of cases. Therefore, to further improve the bounds with our approach, these arguments might need to be automated or structurally simplified.

Having developed various new techniques and a global view to deal with coherent configurations, another pressing questions for us is as follows. The Deep-Weisfeiler-Leman framework is an extension of the WL-algorithm to match choiceless polynomial time~\cite{DBLP:conf/soda/GroheSW21}. For this extension it is an open problem whether it provides a polynomial-time solution to graph isomorphism. This question in turn has consequences for the quest for a logic capturing polynomial time~\cite{DBLP:journals/jacm/LichterS24}, a central problem in finite model theory. We therefore wonder whether some of our new techniques can be used to provide better insight into Deep-Weisfeiler-Leman.


\section{Preliminaries}
\label{preliminaries/sec}

Our overall proof makes ample use of coherent configurations, for which the basic concepts and notation are introduced in this section (see also~\cite{DBLP:conf/stoc/Babai16,CC} for the broader theory).

Let~$\vertices$ be a finite set and~$A$ be a binary relation on~$\vertices$.
We set~$n \coloneqq \abs{\vertices}$.
Throughout the paper, we call the elements in~$\vertices$ \emph{vertices} and the elements in~$A$ \emph{arcs}.
We also write~$vw$ instead of~$(v,w)$ to denote arcs.

We call an arc~$vw$ a~\emph{self-loop} if~$v = w$, and interpret a self-loop~$vv$ often as its corresponding vertex~$v$.
If we want to emphasize that both~$vw, wv \in \arcs$, then we refer to them as~\emph{edge}.
We set~$A^\star \coloneqq \{wv \mid vw \in A\}$ and call it the~\emph{transpose relation} of~$A$.
For~$v \in \Omega$, the set~$vA \coloneqq \{w \in \vertices \mid vw \in A\}$ is called the \emph{neighborhood} of~$v$ under~$A$ and we define~$\degree{}{v}{A} \coloneqq \abs{vA}$.
Given a set~$\Delta \subseteq \vertices$, we also use the notation $\degree{}{\Delta}{A}$ if~$\degree{}{v}{A}$ is independent of the choice of~$v \in \Delta$.
We denote~$\{(v,v) \mid v \in \vertices\}$ by~$1_\vertices$.

A~\emph{coloring of a set~$A$} is a function~$\coloring \colon A \to C$ where~$C$ is the set of~\emph{colors}.
The~\emph{color} of~$a \in A$ is~$\coloring(a)$ and the set of all colors is~$\coloring(A)$.
The coloring induces a~\emph{color partition~$\pi(\coloring)$} on~$A$.
For a partition~$\mathcal{P}$ of a finite set~$\Omega$, we denote~$\{\bigcup_{P \in \mathcal{P}'} P \mid \mathcal{P}' \in 2^\mathcal{P} \}$ by~$\mathcal{P}^\cup$.

\paragraph{Graphs.}
Given a finite set~$\vertices$ and a binary relation~$A$, we call the pair~$(\vertices, \arcs)$ a~\emph{(directed) graph~$G$}.
We denote the set of all vertices of~$G$ by~$\vertices(G)$ and the set of all arcs of~$G$ by~$\arcs(G)$.
We call~$G$~\emph{undirected} if~$\arcs = \arcs^\star$.
Two vertices~$v,w \in \vertices$ are~\emph{adjacent} if~$vw \in \arcs$ or~$wv \in \arcs$.
Given~$\Delta \subseteq \vertices$, the~\emph{subgraph of~$G$ induced by~$\Delta \subseteq \vertices$} is~$(\Delta, \arcs \cap \Delta^2)$ and is denoted by~$G[\Delta]$.
Given the graph~$G$ and a coloring~$\coloring \colon \arcs(G) \to C$, we call~$(G,\coloring)$ a~\emph{colored graph}.

\paragraph{Isomorphisms.}
An~\emph{isomorphism} between uncolored graphs~$G$ and $H$ is a bijection~$\varphi: V(G) \to V(H)$ which preserves adjacency and non-adjacency, that is, for all~$v,w \in \vertices(G)$ we have~$vw \in \arcs(G)$ if and only if~$\varphi(v)\varphi(w) \in \arcs(H)$.
Given two colored graphs~$(G,\coloring_G)$ and~$(H,\coloring_H)$, an isomorphism~$\varphi$ between~$G$ and~$H$ is called~\emph{color-permuting} if for all~$v,w,v',w' \in \vertices(G)$ it satisfies
\begin{equation*}
\label{eq:isomorphic-colored}
    \coloring_G(vw) = \coloring_G(v'w') \iff \coloring_H(\varphi(v)\varphi(w)) = \coloring_H(\varphi(v')\varphi(w')),
\end{equation*}
and~\emph{color-preserving} if for all~$v,w \in \vertices(G)$ we have~$\coloring_G(vw) = \coloring_H(\varphi(v)\varphi(w))$.
If there exists an isomorphism between~$G$ and~$H$, then we call the graphs~\emph{isomorphic} and write~$G \cong H$.
Unless otherwise stated, we require all isomorphism between colored graphs to be color-preserving.

\paragraph{Weisfeiler-Leman Algorithm.}
Given a positive integer~$k$, the~\emph{$k$-dimensional Weis\-feiler-Leman algorithm~\wld{k}} is part of a family of incomplete deciders for the \emph{isomorphism problem}, which, given two uncolored graphs~$G$ and~$H$, asks whether~$G \cong H$ holds.

In a nutshell, the algorithm colors the~$k$-tuples of vertices of the graphs as follows. Initially it colors each~$k$-tuple depending on the isomorphism type of the graph induced by the tuple. Here the order of the vertices is taken into account. The algorithm then repeatedly refines the coloring by considering the possible tuples and colors one obtains by replacing one of the vertices in the tuple by an arbitrary other vertex. During this process, it bundles pieces of information together when possible.

In more detail, given the colored graph~$\coloredGraph$, the algorithm~\wld{k} determines for every~$(v_1, \dots, v_k) \in \vertices(G)^k$ an initial coloring
\begin{equation*}
    \coloring^{\wld{k}}_0(v_1, \dots, v_k) \coloneqq (\vartheta(v_1v_1),\vartheta(v_1v_2),  \dots, \vartheta(v_kv_k))
\end{equation*} where
\begin{equation*}
    \vartheta(vw) \coloneqq
    \begin{cases}
        (0,\coloring(vw)) & \text{if } v = w, \\
        (1,\coloring(vw)) & \text{if } v \neq w  \text{ and } vw \in \arcs(G), \\
        (2,0) & \text{if } v \neq w  \text{ and }  vw \notin \arcs(G).
    \end{cases}
\end{equation*}
Next, it iteratively computes~$\coloring^{\wld{k}}_{i+1}(v_1,\dots,v_k)$ defined by
\begin{equation*}
    (
        \coloring^{\wld{k}}_{i}(v_1,\dots,v_k),
        \{\!\!\{
            (\coloring^{\wld{k}}_{i}(w,v_2,\dots,v_k),\dots,\coloring^{\wld{k}}_{i}(v_1,\dots,v_{k-1},w)) \mid w \in \vertices(G)
        \}\!\!\}
    )
\end{equation*}
for all~$(v_1, \dots v_k) \in \vertices(G)^k$.
This process stops if~$\pi(\coloring^{\wld{k}}_{i})$ is~\emph{stable under~\wld{k}}, that is~$\pi(\coloring^{\wld{k}}_{i}) = \pi(\coloring^{\wld{k}}_{i + 1})$.
For the~$i$ at which the process stops, we define~$\wldstable{k}{\coloring}\coloneqq \coloring^{\wld{k}}_{i+1}$.

If the initially given graph is uncolored, we start the algorithm on the monochromatic version of it.
The algorithm~$\wld{k}$ \emph{distinguishes} graphs~$(G,\coloring_G)$ and~$(H,\coloring_H)$ if~$\{\!\!\{ \wldstable{k}{\coloring}(\overline{v}) \mid \overline{v} \in \vertices(G)^k \}\!\!\} \neq \{\!\!\{ \wldstable{k}{\coloring}(\overline{v}) \mid \overline{v} \in \vertices(H)^k \}\!\!\}$.
The notation~$(G,\coloring_G) \simeq_k (H,\coloring_H)$ indicates that the graphs are not distinguished by~$\wld{k}$.
The algorithm~\wld{k}~\emph{identifies}~$(G,\coloring_G)$ if it distinguishes the graph~$(G,\coloring_G)$ from all non-isomorphic graphs.
The~\emph{Weisfeiler-Leman dimension}~$\wldim{(G,\coloring_G)}$ of a graph~$(G,\coloring_G)$ is the minimal~$k \in \Nat$ such that~\wld{k} identifies~$G$.
Note that for a colored graph~$(G,\coloring_G)$, the~WL-dimension only depends on~$\pi(\coloring_G)$ and not on~$\coloring_G$ itself.

From the coloring stable under \wltwo, one can read out many invariants like distances between the vertices in~$G$.
Therefore, \wltwo is able to identify cycles and cliques.

\paragraph{Coherent configurations.}
Let~$\vertices$ be a finite set of vertices and~$\relations$ a set of binary relations partitioning~$\vertices^2$.
Then the pair~$\coherentConfig = (\vertices, \relations)$ is called~\emph{coherent configuration} if all of the following properties are met:
\begin{enumerate}[label = (CC\arabic*), leftmargin = 4em]
    \item \label{coherent-config:fibers}
    All relations of~$\relations$ contain either only self-loops $vv$ or only edges between different vertices~$vu$ with~$v\neq u$.
    \item \label{coherent-config:symmetric}
    For each relation~$A\in\relations$, the set $\relations$ contains the \emph{transposed} relation~$A^\star= \{vu \mid uv \in A \}$.
    \item \label{coherent-config:wl2}
    For all triples~$A, B, T \in \relations$, and vertices~$v,w$ with~$vw \in T$ the number of vertices~$u$ with~$vu \in A$ and~$uw \in B$ depends only on~$A$,~$B$, and~$T$, but not on the choice of~$v$ and $w$.
\end{enumerate}

We denote the set of all vertices of~$\coherentConfig$ by~$\vertices(\coherentConfig)$, and the set of all relations of~$\coherentConfig$ by~$\relations(\coherentConfig)$.
We call each~$\arcs \in \relations$ a \emph{basis relation}, $\abs{\vertices}$ the~\emph{order of~$\coherentConfig$}, and~$\abs{\relations}$ the~\emph{rank of~$\coherentConfig$}.

In literature, Condition~\ref{coherent-config:fibers} is often formulated as~$1_\vertices \in \relations^\cup$, meaning there is a union of basis relations containing all self-loops.
Condition~\ref{coherent-config:wl2} is often referred to as~\emph{coherence}.
Further, the number of vertices~$u$ with~$vu \in A$ and~$uw \in B$ in~\ref{coherent-config:wl2} is called an~\emph{intersection number} and is denoted by~$c^{AB}_T$.

\paragraph{Fibers and interspaces.}
Let~$\coherentConfig$ be a coherent configuration.
A basis relation of self-loops is called a~\emph{fiber}.
Recall that we interpret a self-loop~$vv$ often as the vertex~$v$.
Abusing notation, we call a vertex set~$\Delta \subseteq \vertices$ a fiber if~$1_\Delta \in \relations$.
We denote the set of all fibers of~$\coherentConfig$ by~$\fibers{\coherentConfig}$.
By Property~\ref{coherent-config:fibers} the vertex set~$\vertices(\coherentConfig)$ is partitioned by the collection of fibers~$\fibers{\coherentConfig}$.
A fiber~$\Delta$ is called a~\emph{singleton} if~$\abs{\Delta} = 1$, and if a coherent configuration~$\coherentConfig$ has only one fiber, we call~$\coherentConfig$~\emph{homogeneous}.

Given two unions of fibers~$\mathcal{R},\mathcal{B} \in \fibers{\coherentConfig}^\cup$, there is a unique subset~$\relations'$ of~$\relations(\coherentConfig)$ which partitions~$\mathcal{R} \times \mathcal{B}$.
We call this partition the~\emph{interspace} between $\mathcal{R}$ and~$\mathcal{B}$ and denote it by~$\interspace{\mathcal{R}}{\mathcal{B}}$.
Observe that~$\arcs \in \interspace{\mathcal{R}}{\mathcal{B}}$ if and only if~$\arcs^\star \in \interspace{\mathcal{B}}{\mathcal{R}}$.
If~$\abs{\interspace{\mathcal{R}}{\mathcal{B}}} = 1$ we call~$\interspace{\mathcal{R}}{\mathcal{B}}$ and~$\interspace{\mathcal{B}}{\mathcal{R}}$~\emph{homogeneous}.
If~$\mathcal{R} = \mathcal{B}$, we shorten~$\interspace{\mathcal{R}}{\mathcal{R}}$ to~$\inducedCC{\mathcal{R}}$.
For a union of fibers~$\mathcal{R}$, we define~$\coherentConfig - \mathcal{R} = (\vertices(\coherentConfig) - \mathcal{R}, \coherentConfig[\vertices(\coherentConfig) - \mathcal{R}])$.

Given two fibers~$R,B \in \fibers{\coherentConfig}$ and a basis relation~$\arcs \in \interspace{R}{B}$, we set $\intDegree{\arcs} \coloneqq \degree{}{R}{A}$ and
$\minimalDegree{R}{B} \coloneqq \min_{\arcs \in \interspace{R}{B}} \intDegree{A}$.

\paragraph{Constituents.}
Let~$\coherentConfig$ be a coherent configuration.
The graph~$(\vertices(\coherentConfig),\arcs)$ induced by a basis relation~$\arcs$ of~$\relations$ is called a~\emph{constituent} of~$\coherentConfig$.

For basis relation~$\arcs$ of a coherent configuration~$\coherentConfig$, we call the subgraph induced by~$\arcs$ a~\emph{constituent} of~$\coherentConfig$.
To declare that the constituent~$G$ is \emph{contained} in the interspace (or cell) between fibers~$R$ and~$B$, we will abuse notation and write~$G \in \interspace{R}{B}$ instead of~$\arcs(G) \in \interspace{R}{B}$. If clear from context, abusing notation further, by~$G\in \interspace{R}{B}$ for explicitly given graphs~$G$, we  mean that~$\interspace{R}{B}$ contains a constituent that is isomorphic to~$G$.

\paragraph{Coherent closure and individualizations.}
Given two coherent configurations~$\coherentConfig$ and~$\coherentConfig'$, we say that $\coherentConfig$ is at least as \emph{fine} as~$\coherentConfig'$, denoted by~$\coherentConfig \finer \coherentConfig'$ if for all~$\arcs \in \relations(\coherentConfig')$ we have~$\arcs \in \relations(\coherentConfig)^\cup$.
Conversely, we say that $\coherentConfig'$ is at least as \emph{coarse} as~$\coherentConfig$.

Let~$\relations$ be a collection of relations on~$\vertices$ which does not necessarily satisfy Property~\ref{coherent-config:wl2}.
The~\emph{coherent closure} of~$\relations$ is the coarsest coherent configuration whose relations are each contained in some relation in~$\relations$.
For vertices~$v_1,\dots, v_\ell \in \vertices$, we set~$\coherentConfig_{v_1,\dots,v_\ell} \coloneqq \wltwo(\relations(\coherentConfig) \cup \{1_{v_1},\dots,1_{v_\ell}\})$.
Intuitively we individualize~$v_1,\dots,v_\ell$, that is, force each~$v_i$ to form its own color class (in terms of colored graphs) or equivalently to be its own fiber (in terms of coherent configuration).

\begin{theorem}
\label{preliminaries:wldim-individualizations/thm}
    Let~$\coherentConfig$ be a coherent configuration, and let~$v_1,\dots,v_\ell \in \vertices(\coherentConfig)$.
    Then~$$\wldim{\coherentConfig} \leq \ell + \max\{2,\wldim{\coherentConfig_{v_1,\dots,v_\ell}}\}.$$
\end{theorem}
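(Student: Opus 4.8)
The plan is to prove the statement by induction on $\ell$, with the crux being the case $\ell = 1$, which compares the WL-coloring of $\coherentConfig$ directly with that of $\coherentConfig_v$ for a single individualized vertex $v$.

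First I would reduce to $\ell = 1$. If the claim holds for single individualizations, then applying it repeatedly gives $\wldim{\coherentConfig} \leq 1 + \max\{2, \wldim{\coherentConfig_{v_1}}\}$, then $\wldim{\coherentConfig_{v_1}} \leq 1 + \max\{2, \wldim{\coherentConfig_{v_1,v_2}}\}$, and so on; since $\coherentConfig_{v_1}$ is again a coherent configuration (it is a WL-2-stable coloring, hence by the lemma before this theorem it is one, at least after passing to the induced partition), one can chain the inequalities and collapse the nested maxima using $\max\{2, 1 + \max\{2,x\}\} = 1 + \max\{2,x\}$ type identities to obtain the bound $\ell + \max\{2, \wldim{\coherentConfig_{v_1,\dots,v_\ell}}\}$. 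So the real work is the single-vertex case.

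For $\ell = 1$: set $k \coloneqq \max\{2, \wldim{\coherentConfig_v}\}$; I must show $\wld{(k+1)}$ identifies $\coherentConfig$. Let $H$ be any graph with $\coherentConfig \simeq_{k+1} H$; the goal is $\coherentConfig \cong H$. The standard technique is the pebble-game / Spoiler-Duplicator argument: since $\wld{(k+1)}$ does not distinguish $\coherentConfig$ from $H$, Duplicator wins the bijective $(k+1)$-pebble game on $(\coherentConfig, H)$. Spoiler places one pebble on $v$ in $\coherentConfig$; Duplicator's bijection picks a corresponding vertex $v' \in \vertices(H)$, and (after checking the atomic types match) the pair $(\coherentConfig, v)$ and $(H, v')$ are then not distinguished by $\wld{k}$ in the game with one fewer free pebble — i.e.\ $\coherentConfig_v \simeq_k H_{v'}$ where $H_{v'}$ is $H$ with $v'$ individualized. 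Here one uses that $\wld{k}$ applied to $\coherentConfig$ with $v$ individualized computes at least the coloring of $\coherentConfig_v$ (since $\wld{k}$ with $k \geq 2$ refines the WL-2-stable coloring $\coherentConfig_v$ — this is where the $\max\{2,\cdot\}$ enters: we need $k \geq 2$ so that the coherent closure is captured). Because $k \geq \wldim{\coherentConfig_v}$, the algorithm $\wld{k}$ identifies $\coherentConfig_v$, so $\coherentConfig_v \cong H_{v'}$ as colored graphs, via an isomorphism $\psi$ sending $v \mapsto v'$. Finally $\psi$, being color-preserving for the coherent-closure coloring which refines the original adjacency relation of $\coherentConfig$, is in particular an isomorphism $\coherentConfig \to H$; hence $\coherentConfig \cong H$, as required.

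The main obstacle — and the point that needs care — is the bookkeeping that connects ``$\wld{(k+1)}$ does not distinguish $\coherentConfig$ from $H$'' to ``$\wld{k}$ does not distinguish $\coherentConfig_v$ from $H_{v'}$.'' This requires either the Cai--F\"urer--Immerman pebble-game characterization (so that an extra pebble on the individualized vertex exactly trades for a dimension) together with the observation that the individualized graph's WL-$k$ coloring dominates the coherent closure $\coherentConfig_v$ when $k\ge 2$; care is also needed because $H$ need not itself be a coherent configuration, so one argues about $H_{v'}$ and its WL-$k$-stable coloring rather than presupposing coherence. The use of $\max\{2,\wldim{\coherentConfig_v}\}$ rather than just $\wldim{\coherentConfig_v}$ is precisely to guarantee $k\ge 2$, ensuring $\wld{k}$ is strong enough to reconstruct the coherent closure after individualization.
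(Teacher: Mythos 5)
The paper does not actually prove this theorem: it is stated in the preliminaries as a known fact, so there is no proof of record to compare against. Your argument is the standard one and is essentially sound: reduce to $\ell=1$ by chaining, using that~$(\coherentConfig_{v_1})_{v_2}=\coherentConfig_{v_1,v_2}$ and the identity~$\max\{2,1+\max\{2,x\}\}=1+\max\{2,x\}$; then trade the individualized vertex for one pebble; and use~$k\geq 2$ so that the~$\wld{k}$-stable coloring of~$(\coherentConfig,v)$ determines the coherent closure~$\coherentConfig_v$, whence identification of~$\coherentConfig_v$ by~$\wld{k}$ yields an isomorphism which, since the stable colors refine and encode the original basis relations consistently on both sides, is also an isomorphism of the original colored graphs. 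The two subtleties you flag (that~$H$ need not be coherent, so one must pass to the~$\wld{k}$-stable colorings of~$(\coherentConfig,v)$ and~$(H,v')$ rather than presuppose coherence, and that this is exactly where~$\max\{2,\cdot\}$ enters) are the right ones. One slip to correct: with the convention used in the paper, $\wld{k}$ corresponds to the bijective pebble game with~$k+1$ pebble pairs, so~$\coherentConfig\simeq_{k+1}H$ gives Duplicator a win in the~$(k+2)$-pebble game; after Spoiler pins one pebble pair on~$(v,v')$, the remaining~$k+1$ pebbles yield~$(\coherentConfig,v)\simeq_k(H,v')$. As written (a win in the~$(k+1)$-pebble game), pinning a pebble would only give~$\simeq_{k-1}$, which is not enough; the corrected count delivers exactly the relation your argument uses.
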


\paragraph{Coherent configurations and WL.}
The~$2$-dimensional Weisfeiler-Leman algorithm (\wltwo) computes the coarsest coherent configuration refining a given configuration.
By interpreting edges, non-edges, and self-loops as separate relations, and applying \wltwo, every graph can be transformed into a coherent configuration.

Conversely, we interpret a coherent configuration~$\coherentConfig$ as a colored complete  directed graph~$(G_\coherentConfig,\coloring_\coherentConfig)$ where
\[
    \vertices(G_\coherentConfig) = \vertices(\coherentConfig), \quad \arcs(G_\coherentConfig) = \vertices(\coherentConfig)^2,  \quad   \text{ and } \quad  \pi_{G_\coherentConfig}(\coloring) = \relations.
\]
We call this graph an~\emph{associated graph} of~$\coherentConfig$. Since we only specify the partition of the color classes and not the colors themselves, there are actually multiple associated graphs, but this will not be relevant for us.
The color partition~$\pi(\coloring_\coherentConfig)$ is stable under~$\wld{2}$ (see~\cite{DBLP:conf/stoc/Babai16}).

Overall, coherent configurations are stable under \wltwo, and in fact ``stability under \wltwo'' and ``coherence'' essentially refer to the same concept.

\paragraph{Quotient graph.}
The overarching connectivity between fibers of~$\coherentConfig$ is captured by the~\emph{quotient graph}~$\quotientGraph{\coherentConfig}$ of~$\coherentConfig$ which uses the set of fibers as vertex set and contains an edge between two fibers if they form a non-homogeneous interspace, that is, an interspace containing more than one relation.
More formally, it is the uncolored, undirected graph~$(\fibers{\coherentConfig}, \{ RB \mid \abs{\interspace{R}{B}} > 1 \})$.

We say that a fiber~$F$ is~\emph{adjacent} to a fiber~$F'$ (or~\emph{neighboring} a fiber~$F'$) if the interspace between~$F$ and $F'$ is not homogeneous ($FF' \in \arcs(\quotientGraph{\coherentConfig})$).
Further, the interspace~$\interspace{F}{F'}$ is~\emph{incident} to~$F$.
The \emph{quotient degree}~$\colorDeg{F}$ of a fiber~$F$ is its degree in~$\quotientGraph{\coherentConfig}$, that is, the number of non-homogeneous interspaces incident with~$F$.

We call a set~$\mathcal{S}$ of fibers~\emph{dominating} if every fiber of~$\coherentConfig$ is in~$\mathcal{S}$ or adjacent to at least one fiber of~$\mathcal{S}$ in~$\quotientGraph{\coherentConfig}$.

\paragraph{Special graphs.}
We denote the disjoint union of~$s$ copies of a graph~$G$ by~$sG$ and write~$\clique{n}$ for the complete, undirected graph of order~$n$.
We refer to the cycle of order~$n$ by~$\cycle{n}$, and use~$\overrightarrow{\cycle{n}}$ for the directed version.
We should remark that we use~$\disjointCliques{2}{3}$ and~$\disjointCycles{2}{3}$ both to refer to the same graph.
We refer to the rook graph on~$n\times n$ vertices by~$\rookGraph{n}$ and denote the Paley tournament on~$7$ vertices by~$PTr(7)$.

Given two vertex sets~$R$ and~$B$ of size~$n_1$ and~$n_2$ respectively, the graph~$\clique{n_1,n_2}$ is the complete, directed, bipartite graph from~$R$ to~$B$.
We call~$\disjointCliques{n_1}{1,\nicefrac{n_2}{n_1}}$ a~\emph{star}, and~$\matching{n_1}$ a~\emph{matching}.
Given disjoint fibers~$\{r_1,\dots,r_n\}$ and~$\{b_1,\dots,b_n\}$, the set of arcs~$\{r_ib_i, r_ib_{i+1} \mid i \in \{1,\dots, n-1\}\} \cup \{r_nb_n,r_nb_1\}$ introduce a direction-alternating cycle of length~$2n$ in the interspace between the fibers.
Abusing notation we denote it by~$\cycle{2n}$.

\begin{figure}[tbp]
    \centering
    \begin{minipage}{.32\textwidth}
        \centering

\begin{tikzpicture}[scale=0.755]
    \useasboundingbox (-2.25,-2.15) rectangle ++(4.5,4.7);

    \node[vertex,lightgray,fill=lightgray] (p0) at (90:1) {};
    \node[vertex,lightgray,fill=lightgray] (p1) at (90:2.1) {};
    \node[vertex,lightgray,fill=lightgray] (p2) at (210:1) {};
    \node[vertex,lightgray,fill=lightgray] (p3) at (210:2.1) {};
    \node[vertex,lightgray,fill=lightgray] (p4) at (330:1) {};
    \node[vertex,lightgray,fill=lightgray] (p5) at (330:2.1) {};
    \draw[arrow,thick,black]
        (p0) edge (p2)
        (p2) edge (p4)
        (p4) edge (p0);
    \draw[arrow,thick,black, bend right=40]
        (p1) edge (p3)
        (p3) edge (p5)
        (p5) edge (p1);
    \draw[arrow,thick,black, bend right]
        (p0) edge (p3)
        (p3) edge (p4)
        (p4) edge (p1)
        (p1) edge (p2)
        (p2) edge (p5)
        (p5) edge (p0);

\end{tikzpicture}
        \caption{The graph~${\protect\overrightarrow{C_3}[K_2]}$.}
        \label{small-cc:doubleTriangle/fig}
    \end{minipage}
    \hfil
    \begin{minipage}{.32\textwidth}
        \centering

\begin{tikzpicture}

    \useasboundingbox (-1.5,-1.5) rectangle ++(3,3.6);

    \foreach \x in {0,...,6}
    {
        \node[vertex,lightred, fill=lightred ] (v1\x) at (-1.2,-1.2 +\x*0.5) {};
        \node[vertex,lightblue,fill=lightblue] (v2\x) at ( 1.2,-1.2 +\x*0.5) {};

    }
    \draw[edge,thick,->,black]
        (v10) edge  (v20)
        (v11) edge  (v20)
        (v11) edge  (v21)
        (v12) edge  (v21)
        (v12) edge  (v22)
        (v13) edge  (v22)
        (v13) edge  (v23)
        (v14) edge  (v23)
        (v14) edge  (v24)
        (v15) edge  (v24)
        (v15) edge (v25)
        (v16) edge (v25)
        (v16) edge (v26)
        (v10) edge (v26);
    \draw[edge, thick, ->, black]
        (v10) edge  (v22)
        (v11) edge  (v23)
        (v12) edge  (v24)
        (v13) edge  (v25)
        (v14) edge  (v26)
        (v15) edge (v20)
        (v16) edge (v21);
\end{tikzpicture}
        \caption[Incidence graph of the Fano plane.]{The graph~$\leviFano$.}%
        \label{small-cc:leviFano/fig}
    \end{minipage}
    \hfil
    \begin{minipage}{.32\textwidth}
        \centering

\begin{tikzpicture}
    \useasboundingbox (-.5,-.75) rectangle ++(2.5,3.6);
    \foreach \x in {0,...,3}
    {
        \node[vertex,lightred,fill=lightred] (r\x) at (0,.25+2*\x/4) {};
    }
    \foreach \x in {0,...,5}
    {
        \node[vertex,lightblue,fill=lightblue] (b\x) at (1.5,-.25+2*\x/4) {};
    }
    \draw[edge,thick,->]
        (r0) edge (b0)
        (r1) edge (b0)
        (r1) edge (b3)
        (r2) edge (b3)
        (r2) edge (b5)
        (r3) edge (b5)
        (r3) edge (b2)
        (r0) edge (b2)
        (r0) edge (b1)
        (r2) edge (b1)
        (r1) edge (b4)
        (r3) edge (b4);
\end{tikzpicture}
        \caption[Incidence graph of the clique of size four.]{The graph~$\interspaceFourSix$.}
        \label{small-cc:graph-between-4cc-6cc/fig}
    \end{minipage}
\end{figure}

The graph~$\overrightarrow{C_3}[K_2]$ is a~$\overrightarrow{C_3}$ in which each vertex is replaced by two isomorphic copies of itself:
more precisely, it is the directed graph~$(\vertices,\arcs)$ where~$\vertices = \{v_i, w_i \mid i \in \{0,1,2\}\}$ and
\[
    \arcs = \{v_iv_{j},v_iw_{j},w_iv_{j},w_iw_{j} \mid (i,j) \in \{(0,1),(1,2),(2,0)\}\}
\]
by~$\overrightarrow{C_3}[K_2]$.
It is visualized in Figure~\ref{small-cc:doubleTriangle/fig}.

The graph~$\leviFano$ is a directed bipartite graph~$(R \cup B, \arcs)$ with~$\abs{R} = \abs{B}  = 7$ such that for all distinct~$r,r'\in R$ there is exactly one~$b \in B$ such that~$rb,r'b \in \arcs$.
The underlying undirected graph of~$\leviFano$ is isomorphic to the incidence graph (or Levi graph) of the Fano plane, which is also known as Heawood graph.
The graph~$\leviFano$ is shown in Figure~\ref{small-cc:leviFano/fig}.

The graph~$\interspaceFourSix$ is a directed version of the incidence graph of~$\clique{4}$ and is depicted in Figure~\ref{small-cc:graph-between-4cc-6cc/fig}:
more precisely, it is a directed bipartite graph~$(R \cup B, \arcs)$ with~$\abs{R} = 4$, $\abs{B} = 6$ such that for all distinct~$r,r' \in R$ there is exactly one~$b \in B$ such that~$rb,r'b \in \arcs$..


\section{Outline of the Proof for the Upper Bound}
\label{upper-bound/sec}

In this section, we outline all steps used to prove the upper bound. The details for these steps are then provided in Sections~\ref{critical-graph/sec} to~\ref{sec:proof:of:main:thm}.

\paragraph{General Strategy.}
Instead of proving our upper bound only for graphs, we generalize the claim to coherent configurations~$\coherentConfig$. The reader unfamiliar with them may think of them as edge and vertex colored graphs with strong regularity conditions for the color classes and between them. We then show the following:
    \begin{center}
    If~$\coherentConfig$ has~$n$ vertices, then~$\wldim{\coherentConfig}\leq \frac{3}{20} \cdot n + o(n)$. (See Theorem~\ref{main-theorem/thm})\end{center}

The proof of the theorem aims to iteratively reduce the number of vertices or simplify the coherent configuration as follows. For a coherent configuration~$\coherentConfig$, we want to bound $\wldim{\coherentConfig}$ in terms of~$\wldim{\coherentConfig'}$ where~$\coherentConfig'$ has fewer vertices than~$\coherentConfig$ or is structurally simpler. To simplify~$\coherentConfig$, we use several types of reductions.

In the first, we exploit knowledge of the WL-algorithm to show that certain color classes can be removed without changing the dimension at all.
In the second, called \emph{local reductions}, we individualize vertices. For example,~$\coherentConfig'$ could be obtained from~$\coherentConfig$ by an individualization followed by the application of~$\wltwo$. This means that~$\wldim{\coherentConfig}\leq \wldim{\coherentConfig'}+1$. 
Although each individualization increases our upper bound by 1, we ensure that sufficient progress is made during the reduction process. We do so by carefully choosing the vertices we individualize. (Formally a potential function argument is used to measure the trade-off.)
Eventually, no further local reduction can be applied.

At that point, we split the resulting coherent configuration using a global reduction. This splits the configuration into multiple independent components.
These components are the base cases of our reduction process, and we deal with each component separately. More formally, the overall dimension can be bounded in terms of the maximum of the dimensions of the subconfigurations induced by the components.

In all of the reductions, we heavily rely on structural analysis of fibers, interspaces, and their overall connectivity.
A criterion often influencing the selection of the reduction is the size of the fiber:
in particular, fibers~$F$ are differentiated into~\emph{tiny} fibers~($\abs{F} < 4$), \emph{small} fibers~($4 \leq \abs{F} \leq 7$), and~\emph{large} fibers~($7 < \abs{F}$).

\paragraph{Criticality and restorability (Section~\ref{critical-graph/sec}).}
For the first set of reductions, we employ the concept of a \emph{critical} configuration (see Definition~\ref{def:critical-graph}):
intuitively, a coherent configuration is critical if no fiber can be removed without decreasing the WL-dimension.
From this concept, we derive multiple structural properties, e.g. the coherent configuration does not contain tiny fibers (see Lemma~\ref{critical:tiny-CC/lem}).

A new technique we develop to deal with fibers that are not tiny is the \emph{restorability} of fibers:
intuitively, when a restorable fiber is removed, there is up to isomorphism a unique way of reinserting the fiber.
Within critical coherent configurations, such restorable fibers are severely limited (see Lemma~\ref{critical:restorable/lem}).

\paragraph{Interspaces between small fibers (Section~\ref{small-cc/sec}).}
Based on exhaustive generation of small homogeneous coherent configurations (see~\cite{MiyamotoHanaki2000} and~\cite{DBLP:journals/dm/HanakiM03} for example), we classify the interspaces between two small fibers (see Lemma~\ref{small-cc:interspace/lem}).
This classification implies the following structural consequence:
in a critical coherent configuration all of whose fibers are small, either all fibers have the same size or each fiber has size $4$ or~$6$ (see Theorem~\ref{critical:small-cc/thm}).

\paragraph{Interspace patterns (Section~\ref{interspace-large-small/sec}).}
Our next goal is to describe the possible interspaces between a small fiber and a fiber of arbitrary size.

Since the large fiber is not bounded in size, there is of course an infinite number of isomorphism types that appear.
To solve this issue and classify the possible interspaces between a small fiber and a fiber of arbitrary size, we introduce a new concept (see Definition~\ref{large-small-interspace:interspace-pattern/def}):
An \emph{interspace pattern} is a string of information which suffices to characterize the interspace up to a certain equivalence and allows us to extract all combinatorial information relevant for our purposes.

Using this concept, we classify the interspaces occurring between a fiber of size~$4$ or~$6$ and an arbitrary large fiber into finitely many classes (see Corollary~\ref{large-small-interspace:classification:uniqueness/cor}).
We remark that only considering cases with a small fiber of size~$4$ or~$6$ suffices for our overall proof.

\paragraph{Exploiting restorability (Section~\ref{critical:restorable/sec}).}
With the interspace classifications at hand, we now examine induced coherent subconfigurations containing a small fiber~$S$ with regard to criticality.
It turns out that certain combinations of interspace patterns occurring simultaneously at one fiber cause~$S$ to be restorable.
This dictates that certain other combinations of interspace patterns must appear (for example, see Lemma~\ref{critical:4cc:restorable:cycle/lem}).
It also gives us further structural consequences, e.g., lower bounds on the number of neighboring fibers~$S$ must have (for example, see Lemma~\ref{critical:4-cc:restorable:DUC/lem}).

\medskip
This concludes the first set of reductions.
We rely on their structural consequences heavily in Sections~\ref{wldim-small/sec} to~\ref{global-argument/sec} -- even though this is not apparent from this outline.
Before we move on to our second set of reductions, we consider the two base cases.

\paragraph{First base case: only small fibers (Section~\ref{wldim-small/sec}).}

With the preparations complete, we examine a critical coherent configuration~$\coherentConfig$ in which all fibers are small.
While there are multiple cases to consider, in the most important of these, the coherent configuration~$\coherentConfig$ is essentially a CFI-graph with a cubic base graph.
For such graphs we use general bounds on the treewidth of cubic graphs \cite{pathwidthCubicGraphs} to show that~$\wldim{\coherentConfig} \leq \frac{n}{24} + o(n)$.
Overall, we show an upper bound for a coherent configuration containing only small fibers.

\begin{center}
    If~$\coherentConfig$ has~$n$ and all its fibers are small,\linebreak then~$\wldim{\coherentConfig} \leq \frac{n}{20} + o(n)$. (See Theorem~\ref{small-cc:wldim/thm})
\end{center}

\paragraph{Second base case: limited fiber size (Section~\ref{sec:limit:fiber:sizes}).}

In our overall argument, we want to use a bound on the size of the fibers that appear in the coherent configurations. For this purpose
we adapt Zemlyachenko's degree reduction technique (see~\cite{DBLP:conf/fct/Babai81}) to coherent configurations. For the cost of a sublinear number of individualizations we obtain a bound~$d$ on the degree of all but the largest relation in each interspace and fiber.
Using an adaptation of the argument,  we exploit the fact that the degree can be bounded with another sublinear number of individualizations to obtain a bound on the fiber size of the resulting coherent configuration of interest (see Theorem~\ref{lem:bound-on-cc-size}):

If we assume that, in addition to an upper bound~$t \in \Nat$ on the fiber size, the treewidth of~$\quotientGraph{\coherentConfig}$ is bounded as well, the WL-dimension of~$\coherentConfig$ can be bounded in terms of the treewidth of its quotient graph and its fiber size (see Lemma~\ref{lem:bd:tw:and:fibre:size:bd:WL}).

\paragraph{Local reductions (Sections~\ref{sec:potential:func} and~\ref{recursive-argement/sec}).}

Now let us consider the local reductions, which are our second set of reductions:
the overall goal for local reductions is that, after their exhaustive application, we obtain a critical coherent configuration structurally similar to the base cases previously discussed.
In particular, this means avoiding certain local subconfigurations~$\mathfrak{S}$ of~$\coherentConfig$, e.g., interspaces between a large and a small fiber having certain interspace patterns.
We individualize~$t$ vertices and re-establish the coherence and criticality.
Generally, this causes the large fibers to split into multiple fibers smaller in size.
The result is a critical coherent configuration~$\coherentConfig'$ which does not contain the unwanted local configuration~$\mathfrak{S}$.
While we are one step closer to our goal, there are two challenges we need to address:
\begin{itemize}
    \item
    Since the use of individualization increases our final bound on the WL-dimension, the configuration~$\coherentConfig'$ must be obtained from~$\coherentConfig$ in a controlled manner.
    We must carefully choose the vertices to be individualized and charge the costs of individualizations to the progress we made towards our goal.
    To this end, we define the potential function~$\tau$ with~$\tau(\coherentConfig) \coloneqq \frac{3n_\ell + n_s - 8k_\ell}{20}$ where parameters~$n_\ell$, $k_\ell$, and~$n_s$ of~$\coherentConfig$ are the number of vertices in large fibers, the number of large fibers, and the number of vertices in small fibers respectively.
    The progress of a local reduction can be quantified by~$\tau(\coherentConfig') - \tau(\coherentConfig)$, and only local reductions where~$\tau(\coherentConfig) \geq t + \tau(\coherentConfig')$ for~$t$ individualizations are eligible.
    \item
    While reestablishing the coherence causes fibers contained in~$\mathfrak{S}$ to split, other fibers outside of~$\mathfrak{S}$ might also be affected.
    So the resulting coherent configuration~$\coherentConfig'$ depends on~$\coherentConfig - \mathfrak{S}$.
    To dodge this pitfall, we introduce a ``monotonization''~$\widetilde{f}$ of the WL-dimension:
    it maximizes the WL-dimension over all critical coherent configurations whose potential is smaller or equal to a given one.
\end{itemize}
While the machinery is more technical than presented here, each local reduction roughly adheres to the following blueprint:
if~$\coherentConfig$ contains~$\mathfrak{S}$, then~$\wldim{\coherentConfig} \leq t + \widetilde{f}(\tau(\coherentConfig) - t')$ where~$t' \geq t$.

Even though local reductions imply many structural consequences, we want to highlight a particular one:
if a certain reduction (Lemma~\ref{lem:local-argument:3-large-neighbors}) is exhaustively applied, the treewidth of the quotient graph involving large fibers is bounded.
Down the line, this allows us to apply the second base case to these parts.

In general, the use of the reductions allows us to reduce our problem to coherent configurations that satisfy a set of very restrictive structural properties.
This allows us to employ a global argument.

\paragraph{Global argument (Sections~\ref{structure-reduced-cc/sec} and~\ref{global-argument/sec}).}

Now assume that the fiber size of the critical coherent configuration~$\coherentConfig$ is bounded by a constant~$t$ and no local reduction is applicable to~$\coherentConfig$.
We call~$\coherentConfig$ a \emph{$t$-reduced} coherent configuration.
Considering the global structure of~$\coherentConfig$, it can be partitioned into large components, small components, and a boundary:
\begin{itemize}
    \item
    \emph{Large components} contain at least one large fiber.
    If these components are separated from the rest of~$\coherentConfig$, their WL-dimension can be bounded by the second base case.
    \item
    \emph{Small components} contain only small fibers that have size~$4$ or~$6$.
    If these components are separated from the rest of~$\coherentConfig$, an upper bound on their WL-dimension is given by the first base case.
    \item
    The \emph{boundary} contains the inner fibers of the induced paths that connect small and large components.
    All fibers of the boundary are small fibers. The end fibers of the paths are small fibers of small or large components.
\end{itemize}

We now aim to decompose~$\coherentConfig$ into the small and large components.
Due to various structural properties, only a few carefully chosen vertices within the paths are needed to take care of the paths between the large and small components.
Further, since the size of the boundary can be bounded by the number of large fibers, we can charge the individualizations to the large components.
Thus the overall number of individualizations required in the boundary is not too large.

\paragraph{Combining the arguments (Section~\ref{sec:proof:of:main:thm}).}
We finally combine the local reductions, the global argument, and both base cases and tally up the number of vertices that we have individualized up to now.
Overall, we conclude the proof of our main theorem by bounding the WL-dimension by~$\frac{3}{20}n + o(n)$.


\section{Critical Coherent Configurations}
\label{critical-graph/sec}

We start our investigation  of coherent configurations by introducing the two core concepts of this paper and by presenting the first properties.

\begin{definition}
\label{def:wldim}
    Let~$\coherentConfig$ be a coherent configuration, and let~$(G_\coherentConfig,\coloring_\coherentConfig)$ be an associated graph.
    We set~$\wldim{\coherentConfig} \coloneqq \wldim{(G_\coherentConfig,\coloring_\coherentConfig)}$.
\end{definition}

Note that, the WL-dimension of a coherent configuration only depends on the color partition induced by~$\coloring_\coherentConfig$ and not on the actual colors.\footnote{The definition essentially treats the colors as distinguishable. It is possible to  alternatively define a related but different notion of WL-dimension related to separability (see~\cite{CC} for more information). However, our Definition~\ref{def:wldim} is the one that is suitable for the application to graphs.}

\begin{definition}
\label{def:critical-graph}
    We call a coherent configuration~$\coherentConfig$ \emph{critical} if~$\wldim{\coherentConfig}\geq 2$ and  there is no set of fibers~$\mathcal{R} \subsetneq \fibers{\coherentConfig}$ such that~$\wldim{\coherentConfig - \bigcup_{R \in \mathcal{R}} R} = \wldim{\coherentConfig}$.
\end{definition}

\begin{lemma}
\label{crictial:quotientGraph-connected/lem} \label{critical:disjoint-union/lem}
    Let~$\coherentConfig$ be a coherent configuration.
    If~$\coherentConfig$ is critical, then~$\quotientGraph{\coherentConfig}$ is connected.
\end{lemma}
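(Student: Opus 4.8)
The plan is to argue by contraposition: assume $\quotientGraph{\coherentConfig}$ is disconnected and show $\coherentConfig$ is not critical, i.e. that some proper nonempty union of fibers can be removed without lowering the WL-dimension. If $\quotientGraph{\coherentConfig}$ is disconnected, then the fibers split into two nonempty families $\mathcal{R}_1, \mathcal{R}_2$ with no edge between them in the quotient graph. Writing $\Omega_i = \bigcup_{R \in \mathcal{R}_i} R$, the fact that there is no quotient edge across the partition means that for every $R \in \mathcal{R}_1$ and $B \in \mathcal{R}_2$ the interspace $\interspace{R}{B}$ is homogeneous, so the full configuration $\coherentConfig$ is the ``disjoint sum'' of the induced configurations $\coherentConfig[\Omega_1]$ and $\coherentConfig[\Omega_2]$ together with a single cross-color on all of $\Omega_1 \times \Omega_2$ (and its transpose). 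In particular $\coherentConfig$ carries no combinatorial information linking the two parts.

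The key step is then: $\wldim{\coherentConfig} = \max\{\wldim{\coherentConfig[\Omega_1]}, \wldim{\coherentConfig[\Omega_2]}, 2\}$, or at least $\wldim{\coherentConfig} \le \max\{\wldim{\coherentConfig - \Omega_1}, \wldim{\coherentConfig - \Omega_2}, 2\}$, which suffices. To see this I would show that $\wld{k}$ run on $\coherentConfig$ ``restricts'' to $\wld{k}$ on each part: because any $k$-tuple of vertices has its entries partitioned among $\Omega_1$ and $\Omega_2$, and the only relation across the parts is the trivial one, the stable coloring $\wldstable{k}{\coloring}$ on a tuple lying entirely in $\Omega_i$ depends only on $\coherentConfig[\Omega_i]$, while for mixed tuples the color records only the pattern of which entries lie in $\Omega_1$ versus $\Omega_2$ plus the stable colors of the two sub-tuples. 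Consequently, if $k \ge \max\{2,\wldim{\coherentConfig[\Omega_1]},\wldim{\coherentConfig[\Omega_2]}\}$ then $\wld{k}$ identifies $\coherentConfig$: given any $H$ with $H \simeq_k \coherentConfig$, the $\wld{k}$-invariants of $H$ force $H$ to likewise decompose into two parts that are $\wld{k}$-equivalent, hence isomorphic, to $\coherentConfig[\Omega_1]$ and $\coherentConfig[\Omega_2]$ respectively, with the trivial cross-relation; gluing these isomorphisms yields $H \cong \coherentConfig$. (Since $\wldim{\coherentConfig} \ge 2$ by the definition of critical, the ``$\max$ with $2$'' is harmless.) In particular $\wldim{\coherentConfig} \le \max\{2, \wldim{\coherentConfig[\Omega_1]}, \wldim{\coherentConfig[\Omega_2]}\}$, and since trivially $\wldim{\coherentConfig} \ge \wldim{\coherentConfig[\Omega_i]}$ (an induced subconfiguration obtained by removing fibers cannot have larger dimension — a tuple restricted to a subconfiguration sees no more than the whole), we get $\wldim{\coherentConfig} = \wldim{\coherentConfig - \Omega_2}$ whenever $\wldim{\coherentConfig[\Omega_1]} \le \wldim{\coherentConfig[\Omega_2]}$ (and symmetrically otherwise). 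Either way, one of $\Omega_1, \Omega_2$ is a proper nonempty union of fibers whose removal preserves the dimension, contradicting criticality.

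The main obstacle is making the ``$\wld{k}$ decomposes along a disconnected quotient graph'' claim fully rigorous — in particular the reverse direction, namely that any graph $\wld{k}$-indistinguishable from $\coherentConfig$ must itself split in the same way and with each piece $\wld{k}$-indistinguishable from the corresponding piece of $\coherentConfig$. One clean way is to first observe that for $k \ge 2$ the homogeneity of all cross-interspaces is itself $\wld{2}$-detectable, so any $H \simeq_k \coherentConfig$ has a vertex partition into $V_1, V_2$ whose sizes match $|\Omega_1|, |\Omega_2|$, with no nontrivial edge colors across, and with $H[V_i] \simeq_k \coherentConfig[\Omega_i]$; then apply the assumption $k \ge \wldim{\coherentConfig[\Omega_i]}$ to each part to get $H[V_i] \cong \coherentConfig[\Omega_i]$, and combine. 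I would isolate this ``disjoint sum of configurations'' behaviour as its own short sub-claim (it is a standard fact about WL on disjoint unions, adapted to the coherent-configuration setting where the two parts are joined by a single trivial color), prove it once, and then the lemma follows immediately. The only mild subtlety to be careful about is the colored versus uncolored setting and the precise definition of $\wldim$ for configurations, but since all we need is the inequality $\wldim{\coherentConfig} \le \max\{2, \wldim{\coherentConfig - \Omega_1}, \wldim{\coherentConfig - \Omega_2}\}$, the argument goes through verbatim.
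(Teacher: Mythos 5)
Your proposal is correct and follows essentially the same route as the paper, whose proof sketch is precisely the observation that $\wldim{\coherentConfig}$ equals the maximum of the dimensions of the configurations induced by the connected components of $\quotientGraph{\coherentConfig}$, so a critical configuration must have a connected quotient graph; your write-up just fills in the standard argument that \wld{k} decomposes across the trivial cross-interspace. One small remark: the inequality you call ``trivial,'' namely $\wldim{\coherentConfig}\geq\wldim{\coherentConfig[\Omega_i]}$, is not trivial for arbitrary unions of fibers and in this setting is itself proved by the same gluing argument (attach the other part via the homogeneous interspace), but that is exactly what your sub-claim provides, so the proof is sound.
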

\begin{proof}[Proof sketch]
    Since~$$\wldim{\coherentConfig}= \max \{\wldim{\coherentConfig[C]}\mid  \text{$C$ a connected component of~$\quotientGraph{\coherentConfig}$}\},$$ the quotient graph~$\quotientGraph{\coherentConfig}$ must consist of exactly one connected component.
\end{proof}

\begin{lemma}
\label{critical:star/lem}
    Let~$\coherentConfig$ be a coherent configuration.
    If~$\coherentConfig$ is critical, then there are no distinct fibers~$R,B \in \fibers{\coherentConfig}$ with~$\minimalDegree{B}{R}=1$, that is, fibers with~$\disjointCliques{\abs{R}}{1,\frac{\abs{B}}{\abs{R}}} \in \interspace{R}{B}$.
\end{lemma}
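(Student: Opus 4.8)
The plan is to contradict criticality directly by showing that the single fibre~$R$ is removable, that is, $\wldim{\coherentConfig - R} = \wldim{\coherentConfig}$; this suffices because $\{R\} \subsetneq \fibers{\coherentConfig}$, since $R \neq B$. First I unpack the hypothesis. By definition of $\minimalDegree{B}{R}$ there is a basis relation $A \in \interspace{B}{R}$ with $\intDegree{A} = \degree{}{B}{A} = 1$, so $A$ is the graph of a map $f\colon B \to R$, where $f(b)$ is the unique vertex with $bf(b)\in A$. Since $A^\star$ is a basis relation, $\degree{}{R}{A^\star}$ is a constant $q = \abs{B}/\abs{R}$, so $f$ is surjective and every $B_r \coloneqq f^{-1}(r)$ has size~$q$ (in the graph language this is precisely the star $\disjointCliques{\abs{R}}{1,q}$ in $\interspace{R}{B}$). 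By coherence the composition $A\cdot A^\star$ is a union of basis relations, and since each $bA$ is a singleton it equals the kernel relation $\{(b,b') : f(b)=f(b')\}$, which I denote~$\sim$; in particular $\sim$ already lives inside $\inducedCC{B}$, hence inside $\coherentConfig - R$, and its classes are exactly the~$B_r$.

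The heart of the argument is that $R$ is \emph{redundant}: unlike~$B$ (which, being a blow-up of~$R$, may carry genuine extra internal structure), $R$ is merely a canonical quotient of~$B$ and can be rebuilt from $\coherentConfig - R$ without affecting the dimension. Concretely, from $\coherentConfig - R$ one reconstructs a configuration $\hat{\coherentConfig}$ by adjoining a fresh fibre $R'$ whose vertices are the $\sim$-classes, connecting $R'$ to~$B$ by the two relations realizing the quotient map (analogues of~$A$ and $A^\star$), connecting a class $[\bar b] \in R'$ to a vertex~$x$ of any fibre~$X \neq R$ by a relation recording the multiset $\{\!\!\{\coloring_{\coherentConfig}(bx) : b \in B_{\bar b}\}\!\!\}$, and taking the coherent closure. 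From the structure constants $c^{T}_{A^\star, A'}$ of~$\coherentConfig$ one checks that these multisets are constant on the basis relations of $\interspace{R}{X}$, so the bijection $r \mapsto B_r$ exhibits $\coherentConfig$ as a coherent refinement of the reconstructed data; conversely the coherent closure does not overshoot, because $\coherentConfig$ is already \wld{2}-stable and, in the refinement of a pair $(r,x)$ inside $\coherentConfig$, the contributions of the vertices $b \in B_r$ reproduce exactly this same multiset (and likewise for $\inducedCC{R}$ and $\interspace{R}{B}$). Hence $\hat{\coherentConfig} \cong \coherentConfig$. Running the identical reconstruction on an arbitrary graph $H$ with $\coherentConfig - R \simeq_k H$ (any $k \ge 2$) yields $\hat H$ with $\coherentConfig \simeq_k \hat H$ and $\hat H \cong \coherentConfig$ if and only if $H \cong \coherentConfig - R$, because the adjoined fibre and all incident relations are \wld{k}-definable from the rest, so a Duplicator strategy transports along the reconstruction; symmetrically, if $\coherentConfig \simeq_k H$ then the image of~$R$ in~$H$ is \wld{k}-definable and deleting it gives $\coherentConfig - R \simeq_k H - R_H$. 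Combining the two directions gives $\wldim{\coherentConfig - R} = \wldim{\coherentConfig}$, contradicting criticality.

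I expect the main obstacle to be exactly this faithfulness step: proving not merely that $R$ is reconstructible up to isomorphism, but that the reconstruction is \wld{k}-definable in \emph{both} directions, so that deleting~$R$ neither lowers the dimension (the interspaces $\interspace{R}{X}$ must carry no information beyond~$\sim$ and the interspaces $\interspace{B}{X}$) nor raises it (re-adjoining~$R$ to a \wld{k}-equivalent graph must be canonical). Both reduce to a careful bookkeeping with the coherence axiom~\ref{coherent-config:wl2}. The degenerate cases are instructive and can be verified by hand: when $q=1$ the relation~$A$ is a bijective basis relation and $R,B$ are literally isomorphic fibres, and when $\abs{R}=1$ the \wld{2}-stability of~$\coherentConfig$ already forces~$R$ to be joined monochromatically to every fibre.
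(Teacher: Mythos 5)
Your proposal takes essentially the same approach as the paper: both exploit the star relation to treat $R$ as a $\wld{2}$-definable quotient of $B$, argue that $R$ can therefore be removed and canonically restored, and conclude $\wldim{\coherentConfig - R} = \wldim{\coherentConfig}$, contradicting criticality. The only real difference is in packaging — you phrase the restoration as an explicit reconstruction $\hat{\coherentConfig}$ and spell out both inequalities, whereas the paper's sketch directly extends an isomorphism $\varphi$ of $\coherentConfig-R$ along the star to all of $\coherentConfig$ and addresses one direction explicitly — but the underlying idea is the same.
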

\begin{proof}[Proof sketch]
    We argue that~$\wldim{\coherentConfig-R} \geq \wldim{\coherentConfig}$.
    Recall that we can interpret coherent configuration as complete colored graphs.
    Suppose that~$\coherentConfig' \simeq_k \coherentConfig$.
    We can assume that the two configurations are defined on the same vertex set and their vertex colorings agree.
    Then~$\coherentConfig'[\vertices(\coherentConfig)-B] \simeq_k  \coherentConfig[\vertices(\coherentConfig)-B]$.

    It suffices now to observe that an isomorphism~$\varphi$ from~$\coherentConfig'[\vertices(\coherentConfig)-B]$ to~$\coherentConfig[\vertices(\coherentConfig)-B]$ extends to an isomorphism~$\widehat{\varphi}$ from~$\coherentConfig'$ to~$\coherentConfig$. The extension is defined as follows.
    Let~$U$ be a basis relation in~$\interspace{R}{B}$ such that~$(R \cup B, U)$ is isomorphic to~$\disjointCliques{\abs{R}}{1,\nicefrac{|B|}{|R|}}$.
    For vertex~$r\in R$ choose a neighbor~$b$ in~$B$ with respect to~$U$.
    Set~$\widehat{\varphi}(r)$ to be the unique neighbor with respect to~$U$ of~$\varphi(b)$. Since criticality requires~$\wldim{\coherentConfig}\geq 2$, it follows from coherence that~$\widehat{\varphi}(r)$ is an isomorphism.
\end{proof}

\begin{lemma}
\label{critical:cycle/lem}
    Let~$\coherentConfig$ be a coherent configuration.
    If~$\coherentConfig$ is critical, then there are no fibers~$R,B\in \fibers{\coherentConfig}$ with~$|R|=|B|$ odd such that~$\minimalDegree{R}{B}=2$.
    In particular, if~$\coherentConfig$ is critical, then there is no constituent~$\disjointCycles{t}{2x} \in \interspace{R}{B}$ for positive integers~$t,x$ with~$x$ odd.
\end{lemma}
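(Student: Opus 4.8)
The plan is to reduce the claim to Lemma~\ref{critical:star/lem}. Concretely, I would show that under the hypotheses the interspace $\interspace{R}{B}$ must contain a basis relation that is a perfect matching between $R$ and $B$, so that $\minimalDegree{R}{B}=\minimalDegree{B}{R}=1$; Lemma~\ref{critical:star/lem} then immediately yields that $\coherentConfig$ is not critical, a contradiction. The matching in question is the ``antipodal'' relation inside the $2$-regular bipartite constituent realizing $\minimalDegree{R}{B}=2$, and the sole point at which parity is used is that an antipode of an $R$-vertex lies in $B$ (rather than in $R$) exactly when the relevant cycle length is $\equiv 2\pmod 4$. Note $R\ne B$ is forced, since $1_R$ is a constituent of $\inducedCC{R}$ of out-degree $1$.

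First I would fix the structure. Let $U\in\interspace{R}{B}$ be a basis relation with $\intDegree{U}=2$, which exists since $\minimalDegree{R}{B}=2$. Because $\abs{R}=\abs{B}$, counting the arcs of $U$ from both sides shows that $U^\star$ also has constant in-degree $2$ on $B$, so the bipartite graph on $R\cup B$ with arc set $U$ is $2$-regular, hence a disjoint union of even cycles. The relation ``$r$ and $r'$ lie in the same $U$-component'' is $\bigcup_{j\ge 0}(UU^\star)^j$; it is an equivalence relation that is a union of basis relations of $\inducedCC{R}$, so by coherence all of its classes have the same size, and thus all components of $U$ are cycles of one common length $2x$ with $tx=\abs{R}$ for some $t\ge 1$. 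Since $\abs{R}$ is odd, $x$ is odd; for the ``in particular'' part one simply starts from a constituent $U$ isomorphic to $\disjointCycles{t}{2x}$ with $x$ odd.

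Next I would isolate the antipodal relation. For odd $d$ with $1\le d\le x$, let $D_{\le d}\subseteq R\times B$ be the set of pairs at distance at most $d$ in the bipartite graph $(R\cup B,U)$; starting from $U$ and composing $(d-1)/2$ times on the right with $U^\star U$ gives exactly $D_{\le d}$ (the admissible $R$-to-$B$ distances in $\cycle{2x}$ are $1,3,\dots,x$, and $2x\ge 6$ makes the successive layers strictly nested), so each $D_{\le d}$ is a union of basis relations of $\interspace{R}{B}$, with $D_{\le x}$ equal to the whole ``same component'' relation. Hence $M\coloneqq D_{\le x}\setminus D_{\le x-2}$ is a union of basis relations consisting of the pairs at distance exactly $x$. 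Since $x$ is odd, for each $r\in R$ the vertex at distance $x$ from $r$ in its cycle is the unique $B$-vertex of that cycle at maximal distance, so $M$ is a perfect matching ($\matching{\abs{R}}$) between $R$ and $B$. Finally, a union of basis relations of an interspace that happens to be a perfect matching is necessarily a single basis relation: each constituent in the union has constant $R$-side degree, which is $0$ or $1$, and a degree-$1$ constituent already saturates $R$. Thus $M$ itself is a basis relation with $\intDegree{M}=1$, giving $\minimalDegree{R}{B}=1$, and the same applied to $M^\star$ gives $\minimalDegree{B}{R}=1$.

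Lemma~\ref{critical:star/lem} forbids precisely the existence of distinct fibers with $\minimalDegree{B}{R}=1$ in a critical configuration, so this is a contradiction and no such pair $R,B$ exists. The step I expect to need the most care is making ``$M$ is exactly the antipodal matching and is recognized by $\coherentConfig$'' fully rigorous: one must verify that the distance layers $D_{\le d}$ really are unions of basis relations and that $D_{\le x-2}\subsetneq D_{\le x}$ (using $2x\ge 6$ and $x$ odd, so that the set difference isolates distance exactly $x$), while the ``equal-length components'' observation and the ``a basis-relation union that is a perfect matching is a single basis relation'' observation are routine consequences of coherence that nonetheless should be spelled out.
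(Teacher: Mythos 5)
Your proof is correct and takes essentially the same approach as the paper: both identify the antipodal relation in the $2$-regular bipartite constituent as a perfect-matching basis relation, conclude $\minimalDegree{R}{B}=1$, and invoke Lemma~\ref{critical:star/lem}. The paper reaches the antipodal matching informally by observing that $\wld{2}$ detects color-alternating path lengths, whereas you make this rigorous by realizing the distance layers $D_{\le d}$ as explicit relation compositions and checking uniform cycle length via coherence; this fills in precisely the steps the paper leaves implicit.
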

\begin{proof}
    Recall that~$\wltwo$ is able to measure lengths of paths of particular colors between two vertices, and in particular, color-alternating paths between two vertices. Under the assumptions, for all~$r \in R$ there is a unique~$b \in B$ such that there are two color-alternating paths between~$R$ and~$B$ of equal length.
    By Property~\ref{coherent-config:wl2}, the arcs~$rb$ form a basis relation of~$\coherentConfig$, and thus~$\disjointCliques{\abs{R}}{1,1} \in \interspace{R}{B}$.
    This contradicts Lemma~\ref{critical:star/lem}.
\end{proof}

A fiber~$R$ of a coherent configuration is called~\emph{large} if~$8 \leq \abs{R}$, \emph{small} if~$4 \leq \abs{R} \leq 7$, and~\emph{tiny} if~$\abs{R} \leq 3$.
Thinking of red, blue and yellow, we will typically use the letters~$R$,~$B$ and~$Y$ for fibers in general. We will use~$L$ or~$S$ to indicate that the fiber in question is large or small, respectively.

\begin{lemma}
\label{critical:tiny-CC/lem}
    Let~$\coherentConfig$ be a coherent configuration.
    If~$\coherentConfig$ is critical, then there is no tiny fiber in~$\fibers{\coherentConfig}$.
\end{lemma}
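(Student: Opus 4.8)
The plan is to argue by contradiction: assume that $\coherentConfig$ is critical and that $\fibers{\coherentConfig}$ contains a tiny fiber $R$, and then combine the connectedness of the quotient graph (Lemma~\ref{crictial:quotientGraph-connected/lem}) with the absence of stars in interspaces (Lemma~\ref{critical:star/lem}). The underlying observation is that a fiber with at most $3$ vertices is simply too small to be incident to a non-homogeneous interspace without that interspace containing a star, which criticality forbids.

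Before the main argument I would dispose of the degenerate case in which $R$ is the only fiber of $\coherentConfig$: then $\abs{\vertices(\coherentConfig)} \le 3$, and such a configuration is identified by color refinement (i.e.\ by \wld{1}), so $\wldim{\coherentConfig} \le 1 < 2$, contradicting the requirement $\wldim{\coherentConfig} \ge 2$ in the definition of criticality. Hence we may assume $\abs{\fibers{\coherentConfig}} \ge 2$. Since $\quotientGraph{\coherentConfig}$ is connected and has at least two vertices, the fiber $R$ has a neighbor $C \in \fibers{\coherentConfig}$ in $\quotientGraph{\coherentConfig}$; in particular $C \ne R$, and $\abs{\interspace{C}{R}} \ge 2$ (equivalently $\abs{\interspace{R}{C}} \ge 2$, using that transposition is a size-preserving bijection between the two interspaces).

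Now I would exploit that $\abs{R} \le 3$. The basis relations of $\interspace{C}{R}$ partition $C \times R$, so for a fixed $c \in C$ the sets $cA$ with $A \in \interspace{C}{R}$ form a partition of $R$ into $\abs{\interspace{C}{R}} \ge 2$ blocks, each nonempty because $\intDegree{A}$, the common value of $\abs{cA}$ over $c \in C$, is a positive constant (as $A$ is a nonempty basis relation contained in $C \times R$). A set of size at most $3$ cannot be split into two or more nonempty blocks without one of them being a singleton, so some basis relation $A \in \interspace{C}{R}$ has $\intDegree{A} = 1$, that is, $\minimalDegree{C}{R} = 1$. As $C \ne R$, this contradicts Lemma~\ref{critical:star/lem}, completing the proof.

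The argument is genuinely short, and I do not expect a real obstacle: no new techniques are needed beyond the two cited lemmas. The only steps that require a moment's care are the degenerate single-fiber case and the (tacit) convention that the empty set of fibers is excluded when testing criticality. The substantive content — that a star in an interspace lets one restore the incident fiber and therefore delete it — is already packaged inside Lemma~\ref{critical:star/lem}; here one merely needs to notice that tininess forces such a star to be present. (Note in particular that Lemma~\ref{critical:cycle/lem} is not needed for this statement.)
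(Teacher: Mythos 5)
Your proof is correct and takes essentially the same approach as the paper: both dispose of the single-fiber case via the criticality requirement $\wldim{\coherentConfig}\geq 2$, and both observe that a tiny fiber incident to a non-homogeneous interspace forces a degree-one basis relation (i.e., a star), contradicting Lemma~\ref{critical:star/lem}. The paper phrases the final contradiction via connectedness of $\quotientGraph{\coherentConfig}$ after noting all incident interspaces must be homogeneous, while you invoke connectedness first to produce a neighbor and then derive the star; these are two equivalent orderings of the same two-lemma argument.
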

\begin{proof}
    Let~$R \in \fibers{\coherentConfig}$ be a tiny fiber. It cannot be that $R$ is the only fiber of~$\coherentConfig$ since otherwise we would have~$ \wldim{\coherentConfig}=1$.
    If~$\abs{\interspace{R}{B}} > 1$, then~$\disjointCliques{\abs{R}}{1,\nicefrac{\abs{B}}{\abs{R}}} \in \interspace{R}{B}$ contradicting Lemma~\ref{critical:star/lem}.
    So all interspaces incident to~$R$ in~$\quotientGraph{\coherentConfig}$ are homogeneous which contradicts that the quotient graph is connected (Lemma~\ref{crictial:quotientGraph-connected/lem}).
\end{proof}

Let~$\mathcal{R}$ be a union of fibers of~$\coherentConfig$, and let~$\mathcal{B}$ be the union of fibers that are not in~$\mathcal{R}$ but adjacent (in~$\quotientGraph{\coherentConfig}$) to some fiber in~$\mathcal{R}$.
We call~$\mathcal{R}$ \emph{restorable} if every automorphism of~$\coherentConfig[\mathcal{B}]$ that extends to an automorphism of
$\coherentConfig-\mathcal{R}$ also extends to an automorphism of~$\coherentConfig[\mathcal{R} \cup \mathcal{B}]$.

\begin{lemma}
\label{critical:restorable/lem}
    If a coherent configuration~$\coherentConfig$ is critical, then there is no non-dominating set of fibers whose union is restorable.
\end{lemma}
\begin{proof}
    Assume that there is a non-dominating set of fibers whose union~$\mathcal{R}$ is restorable in a critical coherent configuration~$\coherentConfig$.
    Let~$\mathcal{B}$ be the union of fibers that are not in~$\mathcal{R}$ but adjacent (in~$\quotientGraph{\coherentConfig}$) to some fiber in~$\mathcal{R}$. Set~$k\coloneqq\max \{ \wldim{\coherentConfig[\mathcal{R}\cup \mathcal{B}]},\wldim{\coherentConfig-\mathcal{R}} \}$. We argue that~$k= \wldim{\coherentConfig}$, which proves the statement.
    Recall that we can interpret coherent configuration as complete colored graphs.
    So let~$\coherentConfig'$ be a second coherent configuration for which~$(\coherentConfig,\chi)\simeq_k (\coherentConfig',\chi')$ with suitable colorings~$\chi$ and~$\chi'$. Set~$\mathcal{R}'\coloneqq\coloring'^{-1}(\coloring(\mathcal{R}))$
    and~$\mathcal{B}'\coloneqq\coloring'^{-1}(\coloring(\mathcal{B}))$ to be the vertices in~$\coherentConfig'$ in classes corresponding to~$\mathcal{R}$ and~$\mathcal{B}$, respectively.
    By the choice of~$k$ there is an isomorphism~$\varphi$ from~$\coherentConfig-\mathcal{R}$ to~$\coherentConfig'-\mathcal{R}'$ and an isomorphism~$\psi$ from~$\coherentConfig[\mathcal{R}\cup \mathcal{B}]$ to~$\coherentConfig'[\mathcal{R}'\cup \mathcal{B}']$.
    The restriction~$\psi^{-1}(\varphi|_{\mathcal{B}})$ is an automorphism of~$\coherentConfig[\mathcal{B}]$. Since~$\mathcal{R}$ is restorable, this restriction extends to an automorphism~$\mu$ of~$ \coherentConfig[\mathcal{R}\cup \mathcal{B}]$.
    Now consider the following map~$\nu\colon \vertices(\coherentConfig)\rightarrow \vertices(\coherentConfig')$ defined by
    \[
        \nu (v) \coloneqq
        \begin{cases}
                \varphi(v),   & \text{for~$v\notin \mathcal{R}\cup \mathcal{B}$}\\
                \psi(\mu(v)), & \text{otherwise.}
        \end{cases}
    \]
    To see that this map is an isomorphism from~$\coherentConfig$ to~$\coherentConfig'$, it suffices to observe that for~$b\in \mathcal{B}$ we have~$\psi(\mu(b))= \varphi(v)$, so~$\nu$ combines two isomorphisms that agree on~$\mathcal{B}$. We conclude that~$\coherentConfig\cong \coherentConfig'$.
\end{proof}

Let~$R$ be a fiber of a coherent configuration. We say that a fiber~$Y$ distinct from~$R$ is \emph{taken care of (regarding restorability of~$R$)} if for every~$y\in Y$ and~$U\in \interspace{Y}{R}$ there is a fiber~$B$  different from~$R$ and~$Y$, a point~$b\in B$, and an interspace~$U_B\in \interspace{B}{R}$  such that~$bU_B\subseteq yU$.

\begin{lemma}
    \label{critical:restorable:take-care/lem}
    Let~$\coherentConfig$ be a coherent configuration with distinct fibers~$R$ and~$Y$.
    If~$Y$ is taken care of (regarding restorability of~$R$) and~$R$ is restorable in~$\coherentConfig-Y$, then~$R$ is restorable in~$\coherentConfig$.
\end{lemma}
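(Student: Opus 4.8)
The plan is to show that the restorability condition for $R$ in $\coherentConfig$ follows from the restorability of $R$ in $\coherentConfig - Y$, using the hypothesis that $Y$ is taken care of to handle the extra fiber $Y$. Let $\mathcal{B}$ be the union of fibers distinct from $R$ that are adjacent to $R$ in $\quotientGraph{\coherentConfig}$, and let $\mathcal{B}' = \mathcal{B} \setminus Y$ be the analogous set in $\coherentConfig - Y$ (note $\mathcal{B}'$ may not literally equal the neighborhood of $R$ in $\coherentConfig-Y$, but it contains it; I would first clarify this bookkeeping, possibly noting $\mathcal{B}'$ is exactly the neighborhood of $R$ in $\quotientGraph{\coherentConfig-Y}$ since removing $Y$ does not create or destroy interspaces among the remaining fibers). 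Take an automorphism $\alpha$ of $\coherentConfig[\mathcal{B}]$ that extends to an automorphism $\widehat{\alpha}$ of $\coherentConfig - R$. The goal is to extend $\alpha$ to an automorphism of $\coherentConfig[R \cup \mathcal{B}]$.

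First I would restrict: $\widehat{\alpha}$ restricted to $\vertices(\coherentConfig) - R - Y = \vertices(\coherentConfig - Y) - R$ is an automorphism of $(\coherentConfig - Y) - R$, and its further restriction to $\mathcal{B}'$ is an automorphism of $(\coherentConfig-Y)[\mathcal{B}']$ that extends (via $\widehat{\alpha}|_{\vertices(\coherentConfig-Y)-R}$) to an automorphism of $(\coherentConfig - Y) - R$. Since $R$ is restorable in $\coherentConfig - Y$, this restriction extends to an automorphism $\beta$ of $(\coherentConfig - Y)[R \cup \mathcal{B}'] = \coherentConfig[R \cup \mathcal{B}']$. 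Now I would glue: define $\gamma$ on $R \cup \mathcal{B}$ by $\gamma = \beta$ on $R \cup \mathcal{B}'$ and $\gamma = \alpha$ on $Y \cap \mathcal{B}$ (the two agree on $\mathcal{B}'\cap$ their common domain because $\beta$ restricted to $\mathcal{B}'$ equals $\widehat{\alpha}|_{\mathcal{B}'} = \alpha|_{\mathcal{B}'}$). It remains to check $\gamma$ respects the interspaces touching $Y$, namely $\interspace{Y}{R}$ (the interspaces $\interspace{Y}{B}$ for $B \subseteq \mathcal{B}'$ and $\inducedCC{Y}$ are handled by $\alpha$ already being an automorphism of $\coherentConfig[\mathcal{B}]$, and $\inducedCC{R}$, $\interspace{R}{B}$ for $B \subseteq \mathcal{B}'$ are handled by $\beta$).

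The main obstacle — and the only place where "$Y$ is taken care of" is used — is verifying that $\gamma$ preserves $\interspace{Y}{R}$, i.e. that for each $U \in \interspace{Y}{R}$ and each $y \in Y$ we have $\gamma(y)\gamma(r) \in U$ for all $r \in yU$, equivalently $\alpha(y)\,\beta(R)$-image relations match. Here I would use the defining property of "taken care of": for each $y \in Y$ and $U \in \interspace{Y}{R}$ there is a fiber $B \subseteq \mathcal{B}'$ (distinct from $R,Y$), a point $b \in B$, and $U_B \in \interspace{B}{R}$ with $bU_B \subseteq yU$. By coherence (Property \ref{coherent-config:wl2}) the set $yU$ is in fact partitioned, over all such triples for varying $y$, in a way that is determined by the structure; the key point is that $yU$ is a union of sets of the form $bU_B$ with $b \in \mathcal{B}'$, so the image $\gamma(y)\,U$-neighborhood is forced by where $\gamma = \beta$ sends $\mathcal{B}' \cup R$ and where $\gamma = \alpha$ sends $y$ — and since $\beta$ already preserves all $\interspace{B}{R}$ and $\alpha$ preserves $\interspace{Y}{B}$, the composite constraint $bU_B \subseteq yU$ propagates to $\gamma(b)U_B \subseteq \gamma(y)U$, which pins down $\gamma(y)U$ exactly as needed. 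Writing this carefully — showing the $bU_B$'s genuinely cover $yU$ and that matching all of them forces the correct image relation, rather than merely a coarser one — is the technical heart; I expect it to hinge on an argument that a vertex $r \in R$ is determined within $yU$ by the list of fibers $B$ and relations $U_B$ with $r \in bU_B$, which is exactly the kind of statement that coherence of $\coherentConfig$ guarantees once $\wldim{\coherentConfig}\geq 2$ is available (and criticality contexts ensure enough structure), so I would invoke coherence to conclude $\gamma$ is an isomorphism and hence $R$ is restorable in $\coherentConfig$.
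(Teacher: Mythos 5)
Your overall decomposition matches the paper's: restrict $\alpha$ to $\mathcal{B}\setminus Y$, use restorability of $R$ in $\coherentConfig - Y$ to get $\beta$ on $R\cup\mathcal{B}'$, glue $\beta$ with $\alpha$ on $Y$, and observe that the only interspace whose preservation is not automatic is $\interspace{Y}{R}$. That is exactly the paper's skeleton.

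The gap is in how you close the $\interspace{Y}{R}$ step, and you concede it yourself (``Writing this carefully \dots is the technical heart; I expect it to hinge on an argument that a vertex $r\in R$ is determined within $yU$ by the list of fibers $B$ and relations $U_B$ with $r\in bU_B$''). That hoped-for uniqueness statement is both unnecessary and not what coherence gives you; you are imagining a harder argument than exists. The correct, local argument is: fix $y\in Y$, $r\in yU$. The ``taken care of'' hypothesis hands you some triple $(B,b_0,U_B)$ with $b_0U_B\subseteq yU$. By coherence the relation $b U_B\subseteq yU$ depends only on the basis relation containing $yb$, and the number of $b$ (of that same color w.r.t.\ $y$) with $r\in bU_B$ is a constant that is strictly positive (any element of $b_0U_B$ witnesses this). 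So there is a $b\in B$ with $r\in bU_B\subseteq yU$. Now chain the known preservations: since $\beta=\widehat\varphi$ is an automorphism of $\coherentConfig[(\mathcal{B}\setminus Y)\cup R]$ and $r\in bU_B$, we get $\gamma(r)=\widehat\varphi(r)\in\widehat\varphi(b)U_B=\alpha(b)U_B$; and since $\alpha$ is color-preserving on $\mathcal{B}$ and the color of $yb$ already ``knows'' $bU_B\subseteq yU$, we get $\alpha(b)U_B\subseteq\alpha(y)U$. Hence $\gamma(r)\in\gamma(y)U$. Running this over all basis relations of $\interspace{Y}{R}$ and using that they partition $Y\times R$ and $\gamma$ is a bijection finishes the check. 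No covering statement, no uniqueness of $r$ given its ``profile,'' and no appeal to criticality (which the lemma does not assume) is needed.
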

\begin{proof}
    Let~$\mathcal{B}$ be the union of fibers that are neighbors of~$R$ in $\quotientGraph{\coherentConfig}$.
    Suppose~$\varphi$ is an automorphism of~$\coherentConfig[\mathcal{B}]$. Consider the restriction~$\varphi'\coloneqq \restr{\varphi}{\mathcal{B}\setminus{Y}}$. Since~$R$ is restorable in~$\coherentConfig-Y$ there is automorphism~$\widehat{\varphi}$ that is an extension of~$\varphi'$ to~$\coherentConfig[(\mathcal{B} \setminus Y)\cup R]$. Let~$\overline{\varphi}$ be the common extension of~$\widehat{\varphi}$ and~$\varphi'$.

    It suffices now to argue that~$\overline{\varphi}$ is an automorphism of~$\coherentConfig[\mathcal{B}\cup R]$. For this it suffices to consider non-homogeneous basis relations~$U\in \interspace{Y}{R}$.
    So assume~$yr \in U$.
    Since~$Y$ is taken care of, there is a fiber~$B\subseteq \mathcal{B}$ other than~$Y$, a vertex~$b\in B$, and~$U_B\in \interspace{B}{R}$ with~$bU_B\subseteq yU$. Due to coherence,~$b$ can be chosen so that~$r\in bU_B$.
    Since~$\widehat{\varphi}$ is an automorphism, we
    have that~$\widehat{\varphi}(r)\in \widehat{\varphi}(b) U_B$.
    Due to coherence we have also have
    that~$\varphi(b)U_B\subseteq \varphi(y)U$ because the color of an arc~$y'b'$ ``knows'' whether~$b'U_B\subseteq y'U$.
    We conclude that~$(\overline{\varphi}(y),\overline{\varphi}(r))\in U$.
\end{proof}

We call a vertex set~$R \subseteq \vertices(\coherentConfig)$ a~\emph{module} if for all~$b \in \vertices(\coherentConfig) \setminus R$ and all~$\arcs \in \relations(\coherentConfig)$ we have~$b\arcs \cap R \in \{\emptyset, R\}$.

\begin{lemma}
\label{critical:small-cc:module/lem}
    Let~$\coherentConfig$ be a coherent configuration that is not homogeneous.
    If~$\coherentConfig$ is critical, then there is no small fiber that can be properly partitioned into at most 3 modules.
\end{lemma}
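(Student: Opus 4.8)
The plan is to assume toward a contradiction that $R\in\fibers{\coherentConfig}$ is a small fiber admitting a proper partition $R = R_1 \disjointUnion \cdots \disjointUnion R_m$ into modules with $2 \le m \le 3$, and to derive a reduction ruled out by one of the preceding lemmas. Note first that $R$ has at least one neighbour in $\quotientGraph{\coherentConfig}$, since $\coherentConfig$ is not homogeneous and $\quotientGraph{\coherentConfig}$ is connected (Lemma~\ref{crictial:quotientGraph-connected/lem}). I would then record the consequences of modularity: because each $R_i$ is a module of $\coherentConfig$ and $R$ is a single fiber, a short degree-counting argument from Property~\ref{coherent-config:wl2} shows that for $i\neq j$ the interspace between $R_i$ and $R_j$ is a single (complete bipartite) relation, that the induced configurations $\inducedCC{R_i}$ are pairwise isomorphic, and — crucially — that every fiber $B$ adjacent to $R$ ``sees'' $R$ only at the granularity of the module partition, i.e.\ for each basis relation $U\in\interspace{R}{B}$ and each $b\in B$ the set $\{r\in R:(r,b)\in U\}$ is a nonempty union of modules. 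Consequently the degrees of the relations of $\interspace{B}{R}$ are sums of module sizes.

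Next I would pin down the module sizes. If for some adjacent $B$, some $b\in B$ and some $U$ the set $\{r:(r,b)\in U\}$ is a single module of size $1$, then $\minimalDegree{B}{R}=1$, contradicting Lemma~\ref{critical:star/lem}. Combining this with the requirement that the degree of a relation be constant over the fiber $R$ forces, for every adjacent $B$, that the trace of $\interspace{R}{B}$ on $R$ is a partition into single modules that never isolates a singleton module. Running through the size multisets of a set of size in $\{4,\dots,7\}$ then leaves only two possibilities: $m=2$ with two modules of equal size $2$ or $3$, or $m=3$ with three modules of size $2$.

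Finally I would show that $R$ is restorable and conclude. Each neighbour $B$ of $R$ realises, as $b$ ranges over $B$, every single module as a trace, so by Lemma~\ref{critical:restorable:take-care/lem} every neighbour of $R$ is taken care of by any other neighbour; iterating this reduces the claim to the case that $R$ has a unique neighbour $B$. For that base case one checks that the module-attachment data of $B$ refines $\inducedCC{B}$ — already \wltwo applied to $\coherentConfig$ separates the $B$-pairs whose endpoints attach alike to the modules of $R$ from those attaching ``crossed'' — so any colour-preserving automorphism of $\coherentConfig[\mathcal{B}]$ surviving in $\coherentConfig-R$ respects this data, hence is realised on $R$ by an element of $\Aut(\inducedCC{R})$ permuting the (mutually isomorphic) modules compatibly; the union of the two extends it to an automorphism of $\coherentConfig[R\cup\mathcal{B}]$. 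Thus $R$ is restorable. If $R$ is non-dominating, this contradicts Lemma~\ref{critical:restorable/lem}. If $R$ is dominating, then $\coherentConfig[R\cup\mathcal{B}]=\coherentConfig$ and a separate argument applies: the configuration is by now tightly constrained, and individualising $m-1\le 2$ suitably chosen vertices of $R$ collapses $R$ to singletons, so that Theorem~\ref{preliminaries:wldim-individualizations/thm} together with restorability bounds $\wldim{\coherentConfig}$ by a quantity witnessing non-criticality.

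The hard part is the bookkeeping in the last two steps: turning ``modules $\Rightarrow$ coarse attachments'' into an airtight and exhaustive list of admissible module-size patterns (eliminating every other pattern via Lemma~\ref{critical:star/lem}), and — more delicate — verifying in the surviving cases that every automorphism of $\coherentConfig[\mathcal{B}]$ persisting in $\coherentConfig-R$ genuinely respects the module-attachment partition of each neighbour, which is what makes the extension to $R$ well defined; the dominating subcase must also be handled explicitly rather than through Lemma~\ref{critical:restorable/lem}.
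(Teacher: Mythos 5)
Your route is genuinely different from the paper's and, as it stands, has real gaps. The paper contracts each module to a single representative: the WL-dimension is unchanged, the contracted fiber is tiny, and by modularity every interspace incident to it is either homogeneous or contains a star, so the removal arguments behind Lemmas~\ref{critical:star/lem} and~\ref{critical:tiny-CC/lem} (together with connectivity, Lemma~\ref{crictial:quotientGraph-connected/lem}) contradict criticality directly — no restorability and no non-domination hypothesis is needed. Your plan instead funnels everything through Lemma~\ref{critical:restorable/lem}, which only applies to a \emph{non-dominating} union of fibers, and your fallback for the dominating case does not work: individualizing $m-1\le 2$ vertices and invoking Theorem~\ref{preliminaries:wldim-individualizations/thm} only gives an upper bound on $\wldim{\coherentConfig}$, and no upper bound on the dimension contradicts criticality — criticality is refuted only by exhibiting a proper set of fibers whose removal leaves the dimension unchanged, or by $\wldim{\coherentConfig}\le 1$. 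A small fiber adjacent to every other fiber is perfectly possible, so this case is not vacuous, and the proof as written fails there.

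Two further gaps. First, your intermediate claim that the traces of the relations of $\interspace{B}{R}$ on $R$ form ``a partition into single modules'' is false: each trace is only a \emph{union} of modules, so the reduction to the size multisets $2{+}2$, $3{+}3$, $2{+}2{+}2$ is not justified by the criteria you state. For instance, a size-$4$ fiber split into two singleton modules and one $2$-element module is not excluded by ``no isolated singleton trace'' (a neighbour can attach with traces equal to the union of the two singletons and to the $2$-element module), and unequal multisets such as $2{+}3$, $2{+}4$ or $2{+}2{+}3$ require extra observations you do not make — e.g.\ that the traces of a fixed relation must cover $R$ (constancy of the transpose degree over the fiber), and that a relation all of whose traces equal one fixed proper union of modules cannot be a basis relation. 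Second, the extension step — that every automorphism of $\coherentConfig[\mathcal{B}]$ surviving in $\coherentConfig-R$ induces a permutation of the modules that lifts to an automorphism of $\coherentConfig[R\cup\mathcal{B}]$ — is exactly the hard content and is asserted rather than proved; for example, when $\inducedCC{R}\cong(\overrightarrow{C_3}[K_2],3K_2)$ only cyclic permutations of the modules lift, so one must argue that coherence forces the orientation to be visible in the colors of $\coherentConfig[\mathcal{B}]$, a case analysis of the kind the paper carries out for its restorability lemmas and which its contraction proof of this statement avoids entirely.
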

\begin{proof}
    Let~$i \in \{2,3\}$.
    Towards a contradiction, assume that there is a small fiber~$S \in \fibers{\coherentConfig}$ which partitions into $i$~modules~$M_1,\dots,M_i$.
    By Property~\ref{coherent-config:wl2}, all these modules are of equal size~$\frac{\abs{S}}{i} \in \{2,3\}$.
    Furthermore, due to their small size, we have~$K_i \in \coherentConfig[M_j]$ for all~$j \in \{1,\dots,i\}$.
    Let~$\coherentConfig'$ be a copy of~$\coherentConfig$ in which we replace~$S$ by~$S' \coloneqq \{m_1,\dots,m_i\}$ where for all~$j \in \{1,\dots,i\}$ the vertex~$m_j$ is an arbitrary representative of~$M_j$. (So we contract the modules.)
    Since~$\wltwo$ identifies cliques and~$\interspace{M_j}{M_{j'}}$ is homogeneous for all distinct~$j,j' \in \{1,\dots,i\}$, $\wldim{\coherentConfig} = \wldim{\coherentConfig'}$.
    Combining Property~\ref{coherent-config:wl2} with the fact that~$M_j$ is a module for all~$j \in \{1,\dots,i\}$, we conclude for all~$R \in \fibers{\coherentConfig} \setminus \{S\}$ that either~$\coherentConfig'[R,S']$ is homogeneous or~$\disjointCliques{i}{\frac{\abs{R}}{i},1} \in \coherentConfig'[R,S']$.
    This contradicts the arguments used in the proof of Lemma~\ref{critical:tiny-CC/lem} which show that~$S'$ can be removed without decreasing the WL-dimension.
\end{proof}

\begin{theorem}
\label{interspace-divisor/thm}
    Let~$\coherentConfig$ be a coherent configuration.
    Suppose~$(R,B,Y)$ is an induced path in~$\quotientGraph{\coherentConfig}$, $r \in R$, $y \in Y$, $U \in \interspace{R}{B}$, and~$U' \in \interspace{Y}{B}$.
    Then~$\intDegree{U} \cdot \intDegree{U'} = \abs{B} \cdot \abs{rU \cap yU'}$.

    In particular~$\lcm\{\abs{B}, \intDegree{U}\}>1$ and~$\lcm\{\abs{B}, \intDegree{U'}\}>1$. In particular~$\abs{B}$ is not prime.
\end{theorem}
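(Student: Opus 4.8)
The plan is to prove the identity by one double-counting argument and then read off the arithmetic consequences. First I would record what the hypothesis "$(R,B,Y)$ is an induced path in $\quotientGraph{\coherentConfig}$" buys us: the fibers $R$, $B$, $Y$ are pairwise distinct, the interspaces $\interspace{R}{B}$ and $\interspace{B}{Y}$ each contain at least two basis relations (these are the two edges), while $\interspace{R}{Y}$ is homogeneous, hence consists of the single basis relation $R\times Y\in\relations(\coherentConfig)$. This homogeneity is exactly the point that makes the count factor. Alongside this I would note two elementary degree facts from coherence: $\intDegree{U}=\abs{rU}$ is independent of $r\in R$ by the definition of $\intDegree{\cdot}$; and, writing $d'\coloneqq\abs{\{y\in Y\mid yb\in U'\}}=\abs{bU'^\star}$, Property~\ref{coherent-config:wl2} applied with the relation $1_B$ shows $d'$ is independent of $b\in B$. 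Counting the arcs of $U'$ first by out-degree from $Y$ and then by in-degree at $B$ gives $\abs{Y}\cdot\intDegree{U'}=\abs{U'}=\abs{B}\cdot d'$.

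The heart of the argument is to count $X\coloneqq\{(r,b,y)\in R\times B\times Y\mid rb\in U,\ yb\in U'\}$ in two ways. Grouping by $(r,y)\in R\times Y$, the number of admissible $b$ is $\abs{rU\cap yU'}$, and by Property~\ref{coherent-config:wl2} applied to the basis relation $R\times Y$ this equals a constant $c$ independent of $r$ and $y$; hence $\abs{X}=\abs{R}\,\abs{Y}\,c$, and in particular $\abs{rU\cap yU'}=c$ for every choice of $r,y$. Grouping instead by $(r,b)\in U$, the number of admissible $y$ is $d'$, so $\abs{X}=\abs{U}\cdot d'=\abs{R}\cdot\intDegree{U}\cdot d'$. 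Equating the two expressions, cancelling $\abs{R}$, and substituting $d'=\abs{Y}\,\intDegree{U'}/\abs{B}$ yields $\abs{B}\,c=\intDegree{U}\cdot\intDegree{U'}$, which is precisely the claimed identity.

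For the "in particular" part I would argue arithmetically. Since $\interspace{R}{B}$ consists of at least two basis relations partitioning $R\times B$, for a fixed $r\in R$ the set $B$ is partitioned into at least two nonempty pieces $rU_i$, so $1\le\intDegree{U}\le\abs{B}-1$, and symmetrically $1\le\intDegree{U'}\le\abs{B}-1$; in particular $\abs{B}\ge2$, so $\lcm\{\abs{B},\intDegree{U}\}\ge\abs{B}>1$ and likewise $\lcm\{\abs{B},\intDegree{U'}\}>1$. Finally the identity shows $\abs{B}\mid\intDegree{U}\cdot\intDegree{U'}$; were $\abs{B}$ a prime $p$, it would divide one of the two factors, contradicting $1\le\intDegree{U},\intDegree{U'}\le p-1$. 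Hence $\abs{B}$ is not prime.

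I do not expect a genuine obstacle: once the triple $X$ is in view the rest is bookkeeping. The one conceptual point is that "induced" is needed exactly to make $\interspace{R}{Y}$ homogeneous, so that $\abs{rU\cap yU'}$ is a true constant and the triple count factors cleanly; the one computational point to watch is keeping $U$, $U'$ and $U'^\star$ (and on which side each degree lives) straight, and, in the last step, recalling that $\intDegree{U},\intDegree{U'}<\abs{B}$ precisely because $RB$ and $BY$ are edges of $\quotientGraph{\coherentConfig}$.
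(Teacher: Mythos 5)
Your proof is correct and amounts to the paper's own argument made explicit: the paper phrases it probabilistically (the chance that $rb\in U$ is the same whether $b$ is drawn from all of $B$ or from $yU'$, by regularity of $U^\star$), while you count the triple set $X=\{(r,b,y):rb\in U,\ yb\in U'\}$ in two ways and fold in the handshake identity for $U'$; both pivot on the same two coherence facts, namely that $\abs{rU\cap yU'}$ is constant because $\interspace{R}{Y}$ is homogeneous and that in-degrees in $B$ are constant. One remark on the ``in particular'' clause: read literally, $\lcm\{\abs{B},\intDegree{U}\}>1$ is trivially true once $\abs{B}\ge 2$ (as you observe) and does not use the identity at all; the paper's proof, which notes $1\le\abs{rU\cap yU'}<\intDegree{U},\intDegree{U'}$, is actually establishing the non-trivial statement $\gcd\{\abs{B},\intDegree{U}\}>1$ (and likewise for $U'$), so $\lcm$ is almost certainly a typo for $\gcd$; your prime-divisibility argument for the final claim is correct regardless and is the cleaner way to see it.
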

\begin{proof}
    Choose~$y \in Y$ and consider~$yU'$.
    Since~$\interspace{R}{Y}$ is homogeneous, the number~$\abs{rU \cap yU'}$ is independent of the choice of~$r \in R$ or~$y \in Y$.
    Note that all vertices in~$B$ have the same degree with respect to~$U^\star$.
    Thus the probability that~$rb \in U$ for~$r \in R$ and~$b \in B$ chosen independently uniformly at random is the same as the probability that~$rb \in U$ for~$r \in R$ and~$b \in yU'$ chosen independently uniformly at random.
    On top of that, both probabilities are the same if we first fix an arbitrary~$r$ and then choose~$b$ only  from~$B$ or~$yU'$, respectively.
    Thus~$\frac{\intDegree{U}}{\abs{B}} = \frac{\abs{vU \cap yU'}}{\intDegree{U'}}$.

    Observe that~$\intDegree{U'} > \abs{rU \cap yU'} \geq 1$ and that~$\intDegree{U} > \abs{rU \cap y U'} \geq 1$. This then implies the second part of the theorem.
\end{proof}


\section{Small fibers and their interspaces}
\label{small-cc/sec}

Observe that the constituents of a homogeneous coherent configuration with order at most~$7$ determine the isomorphism type of said homogeneous coherent configuration.
While this does not necessarily hold for homogeneous coherent configurations of larger orders, the observation translates to the coherent subconfiguration induced by a small fiber.
More formally, given an interspace~$\interspace{R}{B}$ between the (not necessarily distinct) fibers~$R$ and~$B$, let~$T$ be the collection of basis relations~$\arcs$ in a~$\interspace{R}{B}$ up to isomorphism while omitting~$1_R$,~$1_B$, and transposed basis relation~$\arcs^\star$,.
Define~$\type{\interspace{R}{B}}$ to be the collection of isomorphism types of the constituents induced by the basis relations in~$T$.


\begin{table}
    \centering\def\arraystretch{1.5}%
    \begin{tabular}{|c|l|}
        \hline
        $n$ & $\type{\coherentConfig}$ \\ \hline
        1 & $(K_1)$\\ \hline
        2 & $(K_2)$\\ \hline
        3 & $(K_3)$,~$(\overrightarrow{C_3})$\\ \hline
        4 & $(K_4)$,~$(C_4,2K_2)$,~$(2K_2,2K_2,2K_2)$,~$(\overrightarrow{C_4},2K_2)$\\ \hline
        5 & $(K_5)$,~$(C_5,C_5)$,~$(\overrightarrow{C_5},\overrightarrow{C_5})$, \\ \hline
        6 & $(K_6)$,~$(2 K_3,K_{3,3})$,~$(2\overrightarrow{C_3},K_{3,3})$,~$(3 K_2,K_{2,2,2})$,~$(3K_2,\overrightarrow{C_3}[K_2])$,\\
            & $(C_6,3K_2,2K_3)$,~$(\overrightarrow{C_6},2\overrightarrow{C_3},3K_2)$,~$(3K_2,3K_2,\overrightarrow{2C_3},3K_2)$\\ \hline
        7 & $(K_7)$,~$(C_7,C_7,C_7)$,~$(PTr(7))$,~$(\overrightarrow{C_7},\overrightarrow{C_7},\overrightarrow{C_7})$ \\ \hline
    \end{tabular}
    \caption{Classification of all homogeneous coherent configurations of order~$n \leq 7$.}
    \label{small-cc:classificaiton-small-cc/tab}
\end{table}

\begin{lemma}
\label{small-cc:induced-cc/lem} \label{small-cc:implied-cc/lem}
    Every tiny or small fiber~$R$ in a coherent configuration induces, up to isomorphism, one of~$23$  coherent configurations. These~$23$ options are given in Table~\ref{small-cc:classificaiton-small-cc/tab}.
\end{lemma}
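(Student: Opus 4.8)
The plan is to reduce the classification of induced configurations on a small fiber to the classification of homogeneous coherent configurations of order $n \le 7$. Observe first that if $R$ is a fiber, then by definition $1_R \in \relations(\coherentConfig)$, so the induced configuration $\inducedCC{R}$ is a homogeneous coherent configuration of order $|R| \in \{4,5,6,7\}$. Hence it suffices to enumerate all homogeneous coherent configurations of each such order up to isomorphism, and then count the distinct ones; that enumeration is exactly what Table~\ref{small-cc:classificaiton-small-cc/tab} records, and the total over $|R| \in \{4,5,6,7\}$ is $4 + 3 + 8 + 4 + 3 = $ well, more carefully: the rows for $|R|=4,5,6,7$ contribute $4$, $3$, $8$, $4$ configurations (reading off $\type{\inducedCC{R}}$), for a total of $19$; adding nothing for tiny sizes. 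Wait — the statement claims $22$, which strongly suggests the intended reading includes the tiny rows $|R|=1,2,3$ as well, i.e.\ the lemma is really ``any fiber of size at most $7$ induces one of $22$ configurations,'' with $1 + 1 + 1 + 4 + 3 + 8 + 4 = 22$. I would first make the bookkeeping match the table: $4 + 3 + 8 + 4 = 19$ options for genuinely small fibers, plus the three trivial ones for tiny sizes, giving $22$; either way the content is the table.

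First I would establish the easy cases $|R| = 4, 5, 7$. For $|R| = 4$: a homogeneous coherent configuration is determined by an association scheme on $4$ points; the rank can be $1$ (only $1_R$, impossible for a proper configuration once we note rank $\ge 2$ is needed to have any structure — actually rank $1$ forces $K_4$, rank is at least $1$), and a short case analysis on the possible valencies of the non-reflexive basis relations, using Property~\ref{coherent-config:wl2}, yields exactly $K_4$, the configuration of $C_4$ together with its complement $2K_2$ (the two relations of valency $1$ partitioning the $4$-cycle's complement), the rank-$4$ scheme $(2K_2,2K_2,2K_2)$, and the directed version $(\overrightarrow{C_4},2K_2)$. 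For $|R| = 5$: the only homogeneous coherent configurations are $K_5$, the cycle $C_5$ with its complement (another $C_5$), and the directed Paley-type splitting $(\overrightarrow{C_5},\overrightarrow{C_5})$; this is classical and follows from the fact that $5$ is prime so all non-trivial basis relations have equal valency $2$ or the whole thing is $K_5$. For $|R| = 7$, similarly $7$ is prime, and one gets $K_7$, $C_7$ split into three $C_7$'s (the cyclotomic scheme), the Paley tournament $PTr(7)$, and its ``three directed $\overrightarrow{C_7}$'' refinement.

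The main obstacle is $|R| = 6$, which is the only composite non-prime-power size in range and thus admits the richest variety: here a fiber can carry imprimitive structure (systems of imprimitivity into $2$ blocks of $3$, or $3$ blocks of $2$), giving the configurations involving $K_{3,3}$, $\overrightarrow{C_3}[K_2]$, $K_{2,2,2}$, and the various mixed rank-$4$ schemes listed. I would handle this by first determining the possible block systems (a nontrivial module structure on $R$ forces equal block sizes by Property~\ref{coherent-config:wl2}, so blocks of size $2$ or $3$), then for each case analyzing the induced quotient configuration (order $2$ or $3$, hence trivial) and the fibered structure between blocks, and finally treating the primitive case, where a counting argument on valencies (they must sum to $5$ and divide into feasible intersection numbers) pins down the remaining options. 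The enumeration is finite but tedious; I would cite the known classification of small association schemes / coherent configurations (e.g.\ from \cite{CC}) to certify completeness rather than reprove it from scratch, and simply verify that each configuration in the table is indeed realizable and coherent, and that no two are isomorphic.
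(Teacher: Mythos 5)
Your proposal follows essentially the same route as the paper: observe that $\inducedCC{R}$ is a homogeneous coherent configuration of order at most $7$ and then invoke the known classification of small homogeneous coherent configurations (the paper cites Hanaki--Miyamoto for exactly this), so the argument is correct. Your remark on the bookkeeping is also apt --- the count of $22$ refers to all rows of Table~\ref{small-cc:classificaiton-small-cc/tab} including orders $1$--$3$, while genuinely small fibers (sizes $4$--$7$) account for $19$ of them.
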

\begin{proof}
    Observe that~$(R,\inducedCC{R})$ is a homogeneous coherent configuration.
    Small homogeneous coherent configurations have been classified, see for example~\cite{MiyamotoHanaki2000,DBLP:journals/dm/HanakiM03}.
\end{proof}

Note that for small fibers, the constituents of the induced homogeneous coherent configuration contained within a fiber determine the isomorphism type of said homogeneous coherent configuration.
(This is not necessarily the case for larger fibers.)

\begin{figure}[tbp]
    \centering
    \begin{subfigure}[]{.23\textwidth}
        \begin{center}

\begin{tikzpicture}[scale=0.85]

    \def\degree{60}

    \foreach \x in {0,...,5}
    {
        \node[vertex,lightgray,fill=lightgray] (p\x) at (\x*\degree+60:1) {};
    }
    \draw[arrow,blue]
        (p0) edge (p1)
        (p1) edge (p2)
        (p2) edge (p0);
    \draw[arrow,blue]
        (p3) edge (p4)
        (p4) edge (p5)
        (p5) edge (p3);
\end{tikzpicture}
        \end{center}
        \subcaption{$(2\overrightarrow{C_3},K_{3,3})$.}
        \label{small-cc:6cc:directed-2K3/fig}
    \end{subfigure}
    \hfil
    \begin{subfigure}[]{.23\textwidth}
        \begin{center}

\begin{tikzpicture}[scale=0.4]
    \node[vertex,lightgray,fill=lightgray] (p0) at (90:1) {};
    \node[vertex,lightgray,fill=lightgray] (p1) at (90:2.5) {};
    \node[vertex,lightgray,fill=lightgray] (p2) at (210:1) {};
    \node[vertex,lightgray,fill=lightgray] (p3) at (210:2.5) {};
    \node[vertex,lightgray,fill=lightgray] (p4) at (330:1) {};
    \node[vertex,lightgray,fill=lightgray] (p5) at (330:2.5) {};
    \draw[arrow,blue]
        (p0) edge (p2)
        (p2) edge (p4)
        (p4) edge (p0);
    \draw[arrow,blue, bend right=40]
        (p1) edge (p3)
        (p3) edge (p5)
        (p5) edge (p1);
    \draw[arrow,blue, bend right]
        (p0) edge (p3)
        (p3) edge (p4)
        (p4) edge (p1)
        (p1) edge (p2)
        (p2) edge (p5)
        (p5) edge (p0);

    \draw[edge,darkyellow] (p0) -- (p1);
    \draw[edge,darkyellow] (p2) -- (p3);
    \draw[edge,darkyellow] (p4) -- (p5);

\end{tikzpicture}
        \end{center}
        \subcaption{$(\overrightarrow{C_3}[K_2],3K_2)$.}
        \label{small-cc:6cc:wreath/fig}
    \end{subfigure}
    \hfil
    \begin{subfigure}[]{.27\textwidth}
        \begin{center}

\begin{tikzpicture}[scale=0.85]

    \def\degree{60}

    \foreach \x in {0,...,5}
    {
        \node[vertex,lightgray,fill=lightgray] (p\x) at (\x*\degree+60:1) {};
    }
    \draw[edge,darkred] (p0) -- (p1);
    \draw[edge,darkred] (p2) -- (p3);
    \draw[edge,darkred] (p4) -- (p5);

    \draw[edge,red] (p1) -- (p2);
    \draw[edge,red] (p3) -- (p4);
    \draw[edge,red] (p5) -- (p0);

    \draw[edge,darkyellow]
        (p0) edge (p3)
        (p1) edge (p4)
        (p5) edge (p2);

    \draw[arrow,blue]
        (p0) edge (p2)
        (p2) edge (p4)
        (p4) edge (p0);
    \draw[arrow,blue]
        (p1) edge (p5)
        (p5) edge (p3)
        (p3) edge (p1);
\end{tikzpicture}
        \end{center}
        \subcaption{$(3K_2,3K_2,2\overrightarrow{C_3},3K_2)$.}
        \label{small-cc:6cc:alternating-C6/fig}
    \end{subfigure}
    \hfil
    \begin{subfigure}[]{.23\textwidth}
        \begin{center}

\begin{tikzpicture}[scale=0.85]

    \def\degree{60}

    \foreach \x in {0,...,5}
    {
        \node[vertex,lightgray,fill=lightgray] (p\x) at (\x*\degree+60:1) {};
    }
    \draw[arrow,darkred]
        (p0) edge (p1)
        (p1) edge (p2)
        (p2) edge (p3)
        (p3) edge (p4)
        (p4) edge (p5)
        (p5) edge (p0);

    \draw[edge,darkyellow] (p0) -- (p3);
    \draw[edge,darkyellow] (p1) -- (p4);
    \draw[edge,darkyellow] (p5) -- (p2);

    \draw[arrow,blue]
        (p0) edge (p2)
        (p2) edge (p4)
        (p4) edge (p0);
    \draw[arrow,blue]
        (p1) edge (p3)
        (p3) edge (p5)
        (p5) edge (p1);
\end{tikzpicture}
        \end{center}
        \subcaption{$(\overrightarrow{C_6},2\overrightarrow{C_3},3K_2)$.}
        \label{small-cc:6cc:directed-C6/fig}
    \end{subfigure}
    \caption{Coherent configurations of order~$6$ with directed edges.}
    \label{small-cc:6cc:coherent-config/fig}
\end{figure}

\begin{lemma}
    \label{small-cc:interspace/lem}
    Let~$\coherentConfig$ be a coherent configuration, and let~$(R,B)$ be an edge in~$\quotientGraph{\coherentConfig}$ such that~$2 \leq \abs{R} \leq \abs{B} \leq 7$.
    \begin{enumerate}[label=(\arabic*)]
        \item If~$\abs{R} = 2$,
        then~$\abs{B} \in \{2,4,6\}$ and~$\disjointCliques{2}{1,\frac{\abs{B}}{2}} \in \interspace{R}{B}$.
        \item If~$\abs{R} = 3$,
        then~$\abs{B} \in \{3,6\}$ and~$\disjointCliques{3}{1,\frac{\abs{B}}{3}} \in \interspace{R}{B}$.
        \item If~$\abs{R} = 4$, then~$\abs{B} \in \{4,6\}$.
        Furthermore, if~$\abs{R} = 4$ and~$\abs{B} = 4$, then at least one of the constituents~$\cycle{8}$, $\matching{4}$, and~$\disjointCliques{2}{2,2}$ is contained in~$\interspace{R}{B}$, and if~$\abs{R} = 4$ and~$\abs{B} = 6$, then at least one of constituents~$\interspaceFourSix$ and~$\disjointCliques{2}{2,3}$ is contained in~$\interspace{R}{B}$.
        \item If~$\abs{R} = 5$,
        then~$\abs{B} = 5$ and~$\matching{5} \in \interspace{R}{B}$.
        \item If~$\abs{R} = 6$,
        then~$\abs{B} = 6$ and at least one of constituents~$\matching{6}$, $\cycle{12}$, $\disjointCliques{3}{2,2}$, or~$\disjointCliques{2}{3,3}$ is contained in~$\interspace{R}{B}$.
        \item If~$\abs{R} = 7$,
        then~$\abs{B} = 7$ and at least one of constituents~$\matching{7}$ and~$\leviFano$ is contained in~$\interspace{R}{B}$.
    \end{enumerate}
\end{lemma}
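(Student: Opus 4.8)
The plan is to analyse the two-fiber coherent configuration $\coherentConfig[R\cup B]$, all of whose relevant data ($\interspace{R}{B}$, $\inducedCC{R}$, $\inducedCC{B}$, and coherence) is inherited from $\coherentConfig$. Write $k=\abs{\interspace{R}{B}}$; since $RB$ is an edge of $\quotientGraph{\coherentConfig}$ we have $k\ge 2$. Counting the arcs of a basis relation $U\in\interspace{R}{B}$ in two ways gives $\abs{R}\cdot\intDegree{U}=\abs{B}\cdot\intDegree{U^\star}=\abs{U}$, and summing over the interspace gives $\sum_U\intDegree{U}=\abs{B}$ and $\sum_U\intDegree{U^\star}=\abs{R}$. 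Setting $g=\gcd(\abs{R},\abs{B})$, coprimality forces $\tfrac{\abs{R}}{g}\mid\intDegree{U^\star}$, so each $\intDegree{U^\star}$ is a positive multiple of $\tfrac{\abs{R}}{g}$; comparing with $\sum_U\intDegree{U^\star}=\abs{R}$ yields $2\le k\le g$. In particular $\gcd(\abs{R},\abs{B})\ge 2$, and together with $\abs{R}\le\abs{B}\le 7$ this already forces the stated value of $\abs{B}$ in all six cases.

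For $\abs{R}\in\{2,3\}$ this essentially concludes the proof: here $g=\abs{R}$, so writing $m_U=\intDegree{U^\star}\ge 1$ we have $\sum_U m_U=\abs{R}$ with at least two summands, hence some $m_U=1$. A basis relation with $\intDegree{U^\star}=1$ has $\intDegree{U}=\abs{B}/\abs{R}$, and its $\abs{R}$ out-neighbourhoods in $B$ are pairwise disjoint and exhaust $B$, so it is a disjoint union of $\abs{R}$ stars $K_{1,\abs{B}/\abs{R}}$ --- the relation named in parts~(1) and~(2). For $\abs{R}\in\{4,5,6,7\}$ we add two observations. First, every $U\in\interspace{R}{B}$ is a bipartite graph on $\abs{R}+\abs{B}\le 14$ vertices in which each vertex of $R$ has degree $\intDegree{U}$ and each vertex of $B$ has degree $\intDegree{U^\star}$, and the partition of $R$ (resp.\ $B$) by the connected components of $U$ is a union of basis relations; since such a partition restricts on a fiber to an equivalence relation with equal-sized classes, all components use the same number of $R$-vertices and the same number of $B$-vertices, which discards every ``unbalanced'' bipartite type such as $\cycle{4}\disjointUnion\cycle{6}$ on $5+5$ vertices. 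Second, Property~\ref{coherent-config:wl2} says that for every interior basis relation $X\in\relations(\inducedCC{R})$ --- with $\inducedCC{R}$ one of the configurations of Table~\ref{small-cc:classificaiton-small-cc/tab} by Lemma~\ref{small-cc:induced-cc/lem} --- and every $U\in\interspace{R}{B}$, the number $\abs{rX\cap bU^\star}$ is constant as $(r,b)$ ranges over a fixed basis relation of $\interspace{R}{B}$ (and the analogous statement holds for $\abs{rU\cap r'U}$ over basis relations of $\inducedCC{R}$). From $\sum_U\intDegree{U^\star}=\abs{R}$ and $k\ge 2$ there remain only a handful of degree multisets; whenever one contains a $1$ the corresponding relation is a perfect matching $\matching{\abs{R}}$, which is among the claimed constituents, so it suffices to treat the finitely many ``matching-free'' multisets (all entries $\ge 2$).

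For those one runs through the admissible isomorphism types of the constituents. If $\abs{R}=\abs{B}=4$ the only matching-free multiset is $(2,2)$, and a $2$-regular bipartite graph on $4+4$ is $\cycle{8}$ or $\cycle{4}\disjointUnion\cycle{4}=\disjointCliques{2}{2,2}$, both claimed. If $\abs{R}=4$, $\abs{B}=6$ the multiset is forced to $(3,3)$ with each constituent $3$-regular on $R$ and $2$-regular on $B$, i.e.\ a $3$-regular multigraph with six edges on the four vertices of $R$: up to isomorphism $\clique{4}$ (giving $\interspaceFourSix$), a disjoint pair of triple edges (giving $\disjointCliques{2}{2,3}$), or one further multigraph, which is excluded because it would realise the three perfect matchings of $\clique{4}$ on $R$, forcing $\inducedCC{R}=(2K_2,2K_2,2K_2)$, and would then demand a basis relation of the \emph{homogeneous} configuration $\inducedCC{B}$ of non-constant valency. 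If $\abs{R}=\abs{B}=5$ the only matching-free multiset is $(2,3)$; the $2$-regular relation $U$, being balanced, is $\cycle{10}$, which makes the relation $\{(r,r'): r,r'\text{ share a }U\text{-neighbour}\}$ contain a $\cycle{5}$ on $R$, so $\inducedCC{R}\in\{(C_5,C_5),(\overrightarrow{C_5},\overrightarrow{C_5})\}$, and then composing that $\cycle{5}$-relation with $U$ produces non-constant values on the complementary degree-$3$ relation --- a contradiction, so a matching occurs after all. If $\abs{R}=\abs{B}=6$ the matching-free multisets are $(2,2,2)$, $(2,4)$, $(3,3)$: a degree-$2$ constituent is $\cycle{12}$, $3\cycle{4}=\disjointCliques{3}{2,2}$, or $2\cycle{6}$, and $2\cycle{6}$ induces a $2K_3$-partition of $R$ whose interior relation, by Property~\ref{coherent-config:wl2}, forces an interspace relation of valency $1$ (contradicting matching-freeness); a degree-$3$ constituent in the multiset $(3,3)$ is disconnected --- hence $\disjointCliques{2}{3,3}$ --- or connected, and connectedness makes a valency count rule out $\inducedCC{R}=(K_6)$, after which a finite check over the remaining choices of $\inducedCC{R}$ and the connected cubic bipartite graphs on twelve vertices again leaves only $\disjointCliques{2}{3,3}$. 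If $\abs{R}=\abs{B}=7$ the matching-free multisets are $(2,5)$, $(3,4)$, $(2,2,3)$: the degree-$2$ option is $\cycle{14}$, which forces $\inducedCC{R}$ to a $\cycle{7}$-type configuration and is then excluded as in the $\abs{R}=5$ case; and a degree-$3$ constituent $U$ must be connected (the only balanced bipartite graph on $7+7$), so $\sum_{r'}\abs{rX\cap r'X}=6$ over the interior relations of $R$ forces every two vertices of $R$ to have a unique common $U$-neighbour --- directly if $\inducedCC{R}=(K_7)$, and after excluding the residual multiplicity patterns by a further use of Property~\ref{coherent-config:wl2} otherwise --- which is exactly the incidence condition defining $\leviFano$.

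The main obstacle is this last step: pinning down, for each fiber-size pair with $\abs{R}\ge 4$, the isomorphism types of the biregular bipartite constituents. The delicate sub-cases are $\abs{R}\in\{6,7\}$, where $\inducedCC{R}$ has several admissible types (up to eight for $\abs{R}=6$) and several cubic bipartite graphs on twelve and fourteen vertices must be inspected. The uniform tool is the constancy of $\abs{rX\cap bU^\star}$ and of $\abs{rU\cap r'U}$ over the basis relations of the interspace, which in each case either identifies the constituent with one of $\matching{\abs{R}}$, $\cycle{2\abs{R}}$, a disjoint union of cliques or stars, $\interspaceFourSix$, or $\leviFano$, or produces a lower-valency interspace relation already on the claimed list.
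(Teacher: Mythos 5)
Your route is essentially the paper's: pin down $\abs{B}$ by degree/divisibility constraints, then classify the biregular constituents using the classification of small homogeneous configurations (Lemma~\ref{small-cc:induced-cc/lem}) together with the constancy of intersection numbers from Property~\ref{coherent-config:wl2}, finishing with a finite case analysis. Where you differ, the differences are improvements in presentation: the uniform $\gcd$ argument gives all the size restrictions at once (the paper instead cites Fuhlbr\"uck--K\"obler--Verbitsky for $\abs{R},\abs{B}\leq 4$ and argues ad hoc otherwise); the equal-size-of-components observation cleanly kills unbalanced cycle decompositions; the multigraph enumeration for $\abs{R}=4,\abs{B}=6$ (with the third multigraph excluded via a non-constant-valency union relation on $B$) replaces the paper's injectivity dichotomy on $b\mapsto bU^\star$ and is correct; and your explicit non-constancy computation for $\abs{R}=\abs{B}=5$ (and its $\cycle{14}$ analogue) substantiates a step the paper only asserts ``due to Property~\ref{coherent-config:wl2}''.

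The one substantive shortfall is that the decisive finite checks are announced but not performed: for $\abs{R}=\abs{B}=6$ with degree multiset $(3,3)$ you say a ``finite check over the remaining choices of $\inducedCC{R}$ and the connected cubic bipartite graphs on twelve vertices'' leaves only $\disjointCliques{2}{3,3}$, and for $\abs{R}=\abs{B}=7$ with a valency-$3$ relation and $\inducedCC{R}\neq(K_7)$ you defer to ``excluding the residual multiplicity patterns by a further use of Property~\ref{coherent-config:wl2}''. These are exactly the cases where the paper's proof does its real work, running constituent by constituent ($K_6$, $C_6$, $2K_3$, $3K_2$; $K_7$, $C_7$, $PTr(7)$) with the criterion that $d(G)\cdot s$ must be divisible by $d(U)$, where $s=\abs{rU\cap r'U}$ on an arc of $G$, plus a handshake double count. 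Your tools are the right ones and the checks do go through (e.g.\ the connected cubic case on $6+6$ collapses either to a contradiction or to $\disjointCliques{2}{3,3}$/a matching, as the paper shows), but as written the proposal does not establish these two cases; spelling out the divisibility criterion and the resulting short case lists would close it.
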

\begin{proof}
    The claims for~$\abs{R},\abs{B} \in \{1,\dots,4\}$ have been determined in \cite[Figures 2 and 3]{DBLP:journals/siamdm/FuhlbruckKV21}.
    Let~$U \in \interspace{R}{B}$ be such that~$\intDegree{U} = \minimalDegree{R}{B}$.
    Note that~$\intDegree{U} \in \{1,\dots,\lfloor \frac{\abs{B}}{2} \rfloor\}$.
    The constituent~$(R \disjointUnion B, U)$ satisfies~
    \begin{equation}
    \label{interspace:handshake/eq}
        \abs{R}\intDegree{U} = \abs{B} \intDegree{U^\star},
    \end{equation}
    and thus~$\intDegree{U^\star} \in \{1,\dots,\lfloor \frac{\abs{R}}{2} \rfloor\}$.
    Note~$\intDegree{U} = \intDegree{U^\star}$ if~$\abs{R} = \abs{B}$, and in  particular, the interspace~$\matching{\abs{R}} \in \interspace{R}{B}$ appears if~$\intDegree{U} = 1$.

    Now assume~$\abs{R} \leq 4 < \abs{B}$.
    If~$\abs{R} = 2$ (respectively~$\abs{R} = 3$), then by Equation~\eqref{interspace:handshake/eq} the size of~$B$ is~$6$ and~$\intDegree{U} = 3$ (respectively~$\intDegree{U} = 2$).
    Thus~$\disjointCliques{2}{1,3} \in \interspace{R}{B}$ (respectively~$\disjointCliques{3}{1,2} \in \interspace{R}{B}$).
    If~$\abs{R} = 4$, then~$\abs{B} = 6$,~$\intDegree{U} = 3$, and~$\intDegree{U^\star} = 2$.
    If there are distinct~$b,b' \in B$ such that~$bU^\star = b'U^\star$, then~$\disjointCliques{2}{2,3} \in \interspace{R}{B}$.
    Otherwise, for all distinct~$r,r' \in R$ there is a~$b$ with~$bU^\star = \{r,r'\}$.
    Hence~$\clique{4} \in \inducedCC{R}$ and~$\interspaceFourSix \in \interspace{R}{B}$.

    By Equation~\eqref{interspace:handshake/eq}, if~$\abs{R} = 5$, then~$\abs{B} = 5$ and~$\intDegree{U} = 2$.
    Due to Property~\ref{coherent-config:wl2}, we have~$\cycle{10} \in \interspace{R}{B}$ but also~$\matching{5} \in \interspace{R}{B}$.

    In the upcoming cases of~$\abs{R} \in \{6,7\}$, a finer case distinction will be necessary.
    So for the rest of the proof, let~$G$ be a constituent which is contained in~$\inducedCC{R}$, and let~$r,r'\in  R$ be such that~$rr' \in \arcs(G)$.
    Further, define~$s \coloneqq \abs{rU \cap r'U}$. We will repeatedly use the following \emph{coherence criterion}:
    if~$\Deg(G)$ is the degree of the graph underlying~$G$, then~$\Deg(G)\cdot s$ is a multiple of~$\intDegree{U}$.

    Assume~$\abs{R} = 6$. By Equation~\eqref{interspace:handshake/eq} we know that~$\abs{B} = 6$.
    If~$\intDegree{U} = 2$, then at least one of the following constituents is contained in~$\interspace{R}{B}$:
    $\cycle{12}$, $\disjointCliques{3}{2,2}$, or~$2\cycle{6}$.
    In the last case, we also have~$\matching{6} \in \interspace{R}{B}$ due to Property~\ref{coherent-config:wl2}.
    All other possible constituents imply a violation of Property~\ref{coherent-config:wl2}.
    Assume now that~$\intDegree{U} = 3$:

    \begin{enumerate}[label=(\arabic*)]
        \item
        $G \cong \clique{6}$.
        We will double count how many arcs from~$U$ have an endpoint in~$rU$:
        recall that~$\intDegree{U} = \intDegree{U^\star} = 3$ since~$\abs{R} = \abs{B}$.
        Thus there are~$\sum_{b \in rU} 3 = 9$ such arcs.
        On the other hand, there are~$3$ such arcs which start in~$r$.
        Furthermore, each~$r' \in R\setminus\{r\}$ has exactly~$s$ common neighbors with~$r$, so~$5s$ arcs that do not start in~$r$.
        Together, we have~$3+5s \neq 9$ for all~$s \in \{0,\dots,3\}$.
        Thus we obtain a contradiction.

        \item
        $G \cong \cycle{6}$ or~$G \cong \overrightarrow{C_6}$.
        We have~$\Deg(G)\in \{1,2\}$ and~$\intDegree{U}=3$. So~$s\notin \{1,2\}$ by the coherence criterion. If~$s=3$, then, since~$G$ is connected, all vertices of~$R$ would have the same neighbors under~$U$ in~$B$ which violates coherence. If~$s=0$, then~$U$ forms an~$\disjointCliques{2}{3,3}$ interspace.

        \item
        $G \cong \disjointCliques{2}{3}$ or~$G \cong 2\overrightarrow{\cycle{3}}$.
        Due to the coherence criterion~$s \notin \{1,2\}$.
        Also~$s=0$ is impossible since three vertices cannot have pairwise non-adjacent neighborhoods under~$U$ each covering half the vertices of~$B$.
        If~$s = 3$, then~$\disjointCliques{2}{3,3} \in \interspace{R}{B}$.

        \item
        $G \cong \matchingCC{3}$. Due to the coherence criterion~$s\notin \{1,2\}$.
        A contradiction of Equation~\eqref{interspace:handshake/eq} is implied if~$s = 3$ because then vertices of~$B$ have an even number of incoming arcs from~$U$.
        Suppose~$s = 0$ and let~$r'',r''' \in R$ such that~$r''r''' \in \arcs(G)$.
        If~$\abs{r U \cap r''U} \in \{0,3\}$, then~$\disjointCliques{2}{3,3} \in \interspace{R}{B}$.
        If~$\abs{r U \cap r''U} \in \{1,2\}$, then~$\abs{r U \cap r''U} = \abs{r' U \cap r'''U}$ and implying that~$\cycle{6} \in \inducedCC{B}$.
    \end{enumerate}

    Assume~$\abs{R} = \abs{B} = 7$.
    If~$\intDegree{U} = 2$, then~$\cycle{14} \in \interspace{R}{B}$ and~$\matching{7} \in \interspace{R}{B}$.
    So assume~$\intDegree{U} = 3$.

    \begin{enumerate}[label=(\arabic*)]
        \item
        $G \cong \clique{7}$.
        We will double count how many arcs from~$U$ have an endpoint in~$rU$:
        recall that~$\intDegree{U} = \intDegree{U^\star} = 3$ since~$\abs{R} = \abs{B}$.
        Thus we have~$\sum_{b \in rU} 3 = 9$.
        On the other hand, there are~$3$ ingoing arcs which start in~$r$.
        Furthermore, each~$r' \in R\setminus\{r\}$ has exactly~$s$ common neighbors with~$r$ so~$6s$ arcs that do not start in~$r$.
        Together, we have~$3+6s \neq 9$ for all~$s \in \{0,2,3\}$.
        If~$s = 1$, then~$\leviFano \in \interspace{R}{B}$ since the interspace satisfies the combinatorial properties of a projective plane.

        \item
        $G \cong \cycle{7}$ or~$G \cong \overrightarrow{\cycle{7}}$.
        As before, since~$G$ is connected, $s = 3$ is impossible.
        By the coherence criterion~$s\notin \{1,2\}$.
        In fact, this means that $\abs{\overline{r}U \cap \overline{r'}U} = 0$ for all~$\overline{r},\overline{r'}\in R$ since each such pair forms an arc in some constituent of~$R$ isomorphic to~$\cycle{7}$ or~$\overrightarrow{\cycle{7}}$ (See Table~\ref{small-cc:classificaiton-small-cc/tab}). This is however a contradiction since~$7$ is not divisible by~$3$.

        \item
        $G \cong PTr(7)$. Recall that~$G$ is a complete oriented graph.
        If~$s = 3$, all vertices of~$R$ have the same neighbors under~$U$ since~$G$ is connected, so the interspace is trivial.
        If~$s = 1$, then~$\leviFano \in \interspace{R}{B}$, because the combinatorial properties of a projective plane are fulfilled. Suppose~$R = \{r_1,\dots,r_7\}$.
        If~$s = 0$, then~$r_1U \cap r_2U = \emptyset$ and~$r_1U \cap r_3U = \emptyset$.
        This implies~$r_3U \cap r_2U \neq \emptyset$, a contradiction.
        Finally if~$s=2$ then at least~$2\cdot 7=14$ of the arcs from~$U$ end in~$r_1U$.
        But only~$|r_1U|\cdot d(U^\star)= 3\cdot 3=9$ of the arcs from~$U$ can end in~$r_1U$.\qedhere
    \end{enumerate}
\end{proof}


\begin{table}
    \centering\def\arraystretch{1.2}%
    \begin{tabular}{|c|c|l|}
        \hline
        $\abs{R}$ & $\abs{B}$ & $\type{\interspace{R}{B}}$ \\ \hline
        4 & 4 & $(\cycle{8}, \cycle{8})$, $(\disjointCliques{2}{2,2}, \disjointCliques{2}{2,2})$\\ \hline
        4 & 6 & $(\interspaceFourSix, \interspaceFourSix)$, $(\disjointCliques{2}{2,3},\disjointCliques{2}{2,3})$\\ \hline
        6 & 6 & \makecell[l]{$(\cycle{12}, \cycle{12}, \disjointCliques{3}{2,2})$, $(\disjointCliques{2}{3,3},\disjointCliques{2}{3,3})$,\\$(\disjointCliques{3}{2,2}, \disjointCliques{3}{2,2}, \disjointCliques{3}{2,2})$, $(\disjointCliques{3}{2,2},R \times B - \disjointCliques{3}{2,2})$} \\ \hline
        7 & 7 & $(\leviFano, R \times B -\leviFano)$ \\ \hline
    \end{tabular}
    \caption{Classification of all interspaces~$\interspace{R}{B}$ between fibers~$R$ and~$B$ with~$4$ $\leq \abs{R} \leq \abs{B} \leq 7$ in critical coherent configurations.}
    \label{small-cc:classificaiton-small-interspaces/tab}
\end{table}

To classify all possible isomorphism types of interspaces between small fibers, we can examine the coarsest coherent configurations containing one of the graphs guaranteed to exist by Lemma~\ref{small-cc:interspace/lem} as constituent. Table~\ref{small-cc:classificaiton-small-interspaces/tab} lists all isomorphism types between two small fibers in a critical coherent configuration. Note that the isomorphism type is determined by the isomorphism types of its constituents. (This is not necessarily the case for interspaces between larger fibers). For fibers of size~$4$ and~$6$ the classification can alternatively be derived from the upcoming Theorem~\ref{global-argument:large-small-interspace:classification}.

\begin{lemma}
\label{small-cc:interspace-implies-cc/lem}
    Let~$\coherentConfig$ be a coherent configuration, and let~$(R,B)$ be an edge in~$\quotientGraph{\coherentConfig}$ with~$\abs{R} \leq \abs{B}$, and suppose~$x \in \{4,6\}$ and~$y \in \{2,3\}$.
    \begin{enumerate}[label = (\arabic*)]
        \item
        If~$\cycle{2x} \in \interspace{R}{B}$, then~$\cycle{x} \in \inducedCC{R}$.
        \item
        If~$y\clique{2,2} \in \interspace{R}{B}$, then~$\disjointCliques{y}{2} \in \inducedCC{R}$.
        \item
        If~$\disjointCliques{2}{2,3} \in \interspace{R}{B}$, then~$\matchingCC{2} \in \inducedCC{R}$ and either~$\disjointCliques{2}{3} \in \inducedCC{B}$ or~$2\overrightarrow{C_3} \in \inducedCC{B}$.
        \item
        If~$\disjointCliques{2}{3,3} \in \interspace{R}{B}$, then either~$\disjointCliques{2}{3} \in \inducedCC{R}$ or~$2\overrightarrow{\cycle{3}} \in \inducedCC{B}$.
        \item
        If~$\leviFano \in \interspace{R}{B}$, then either~$\clique{7} \in \inducedCC{R}$ or~$PTr(7) \in \inducedCC{R}$.
        \item
        If~$\interspaceFourSix \in \interspace{R}{B}$, then~$\clique{4} \in \inducedCC{R}$ and either~$\matchingCC{3}, K_{2,2,2} \in \inducedCC{B}$ or~$\matchingCC{3},\overrightarrow{\cycle{3}}[\clique{2}] \in \inducedCC{B}$.
    \end{enumerate}
\end{lemma}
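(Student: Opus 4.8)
The plan is to prove all six parts by a single mechanism, converting information about the interspace constituent into information about the induced homogeneous configurations. Fix a constituent $U\in\interspace{R}{B}$ of the stated isomorphism type. By Property~\ref{coherent-config:wl2} the number $\abs{rU\cap r'U}$ depends only on the basis relation of $\inducedCC{R}$ containing $rr'$, so every level set of the map $(r,r')\mapsto\abs{rU\cap r'U}$ is a union of basis relations of $\inducedCC{R}$; the same holds for $U^\star$ and $\inducedCC{B}$. I would first read these level sets off from the combinatorics of $U$, and then turn ``a union of basis relations of prescribed shape'' into ``a single constituent of the claimed type (or one of the two claimed types)'' using: (a) the fact that in a homogeneous configuration every basis relation is a non-empty regular (di)graph, hence a regular spanning sub-(di)graph of the graph underlying the level set; (b) the classification of homogeneous configurations of order at most $7$ (Lemma~\ref{small-cc:induced-cc/lem}, Table~\ref{small-cc:classificaiton-small-cc/tab}); and (c) the coherence criterion from the proof of Lemma~\ref{small-cc:interspace/lem}, i.e.\ $\intDegree{U}\mid\intDegree{G}\cdot s$ for a constituent $G$ of $\inducedCC{R}$ with $s$ the common-$U$-neighbour count on $G$, together with the more basic fact (again Property~\ref{coherent-config:wl2}) that $c^{U}_{U,G}=\abs{rU\cap b\,G^\star}$ is constant over all arcs $rb\in U$.

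Parts (1)--(4) are the cases where the level sets already have the right shape. In (2) two vertices of $R$ in the same part of $y\clique{2,2}$ have equal $U$-neighbourhoods and two in different parts have disjoint ones, so the ``same part'' relation $\disjointCliques{y}{2}$ is a union of basis relations; being a matching it is $1$-regular, so it cannot split and is a single constituent. The identical computation puts the matching $\matchingCC{2}$ inside $\inducedCC{R}$ in (3); and in both (3) and (4) the analogous relation on the size-$6$ fiber is $\disjointCliques{2}{3}$, two disjoint triangles. A basis relation contained in $\disjointCliques{2}{3}$ is a regular spanning sub-digraph of two triangles, hence either all of $\disjointCliques{2}{3}$ or a directed triangle factor $2\overrightarrow{\cycle{3}}$ (a triangle has no perfect matching, and any finer splitting forces an out-degree of $0$); this is exactly the stated dichotomy, applied to $\inducedCC{R}$ resp.\ $\inducedCC{B}$. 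For (1), writing the alternating $\cycle{2x}$ with $R$-neighbourhoods $r_iU=\{b_i,b_{i+1}\}$ (indices mod $x$), consecutive $r_i$ share exactly one $B$-neighbour and those at cyclic distance $\geq 2$ share none, so the ``$=1$'' level set on $R$ is the cycle $\cycle{x}$. For $x\in\{4,6\}$ this cycle, as a union of basis relations, could only split either into its two perfect matchings or into $\overrightarrow{\cycle{x}}\sqcup\overleftarrow{\cycle{x}}$; but each $b_jU^\star=\{r_{j-1},r_j\}$ is an edge of $\cycle{x}$, and a short check shows that in both cases $c^{U}_{U,G}=\abs{rG\cap b\,U^\star}$ is not constant on $U$ (for the directed split the two $U$-arcs into a single $b_j$ already disagree; for the matching split the values $0$ and $1$ both occur as $b_j$ varies). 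Hence $\cycle{x}$ does not split and is a constituent of $\inducedCC{R}$.

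Parts (5) and (6) are the cases where the common-neighbour count is $s=1$ for every pair of distinct vertices (two lines of the Fano plane meet in one point; two adjacent subdivided edges of $\clique{4}$ share exactly one vertex, and the ``$\geq 1$'' level set is all of $\clique{\abs{R}}$), so the level-set argument is vacuous for $\inducedCC{R}$. Instead I would invoke (c) with $\intDegree{U}=3$ and $s=1$: every constituent $G$ of $\inducedCC{R}$ satisfies $3\mid\intDegree{G}$, which against Table~\ref{small-cc:classificaiton-small-cc/tab} eliminates every homogeneous configuration of order $7$ except $(\clique{7})$ and $(PTr(7))$ (part (5)), and every one of order $4$ except $(\clique{4})$ (the first assertion of (6)). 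For the size-$6$ fiber $B$ in (6): $\abs{bU^\star\cap b'U^\star}\in\{0,1\}$, so $\inducedCC{B}$ refines the partition $\{1_B,\matchingCC{3},K_{2,2,2}\}$, with $\matchingCC{3}$ the ``disjoint $\clique{4}$-edges'' relation (a matching, hence a single constituent, which gives the first half of the $B$-claim) and $K_{2,2,2}$ the ``adjacent $\clique{4}$-edges'' relation (the octahedron). A coherence computation comparing, for a fixed $b$ corresponding to an edge $rs$ of $\clique{4}$, the two values of $c^{U}_{U,G}$ coming from the $U$-arcs $rb$ and $sb$ shows that the $G$-neighbourhood of $b$ meets the edges at $r$ and the edges at $s$ equally often, hence $\intDegree{G}\in\{2,4\}$ for every constituent $G\subseteq K_{2,2,2}$. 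If some such $G$ has $\intDegree{G}=4$ then $K_{2,2,2}$ is itself a constituent and $K_{2,2,2}\in\inducedCC{B}$; otherwise $K_{2,2,2}$ splits into exactly two constituents of out-degree $2$, and the same ``two arcs into $b$'' comparison (together with the parity $\intDegree{G}$ even, which kills the remaining listed configurations) rules out a triangle-factor constituent $2\cycle{3}$ or $2\overrightarrow{\cycle{3}}$ inside $K_{2,2,2}$, so by Table~\ref{small-cc:classificaiton-small-cc/tab} the only surviving possibility is $\inducedCC{B}=(\matchingCC{3},\overrightarrow{\cycle{3}}[\clique{2}])$, whence $\overrightarrow{\cycle{3}}[\clique{2}]\in\inducedCC{B}$.

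The explicit level-set computations and the look-ups against Table~\ref{small-cc:classificaiton-small-cc/tab} are routine. The one genuinely delicate step, which I expect to be the main obstacle, is the $B$-side of (6): excluding the homogeneous configurations whose induced structure on $B$ contains a triangle factor---$(\cycle{6},\matchingCC{3},\disjointCliques{2}{3})$ and its directed refinements---is not achieved by the degree and divisibility criteria alone, and requires the observation that a $U$-neighbourhood of an $R$-vertex is an octahedron triangle of ``star'' type (the three $\clique{4}$-edges through a common vertex), so that a constituent inside $K_{2,2,2}$ carrying such a triangle would force $c^{U}_{U,\cdot}$ to take the value $2$ (or $1$, in the directed case) on some $U$-arcs and $0$ on others. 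Getting the bookkeeping of this last step exactly right is the crux.
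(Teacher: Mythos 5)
Your proof is correct, and for items (1)--(5) it is essentially a self-contained reconstruction of the toolkit the paper merely points to: the paper's own proof of these parts is a one-line reference to the computations inside the proof of Lemma~\ref{small-cc:interspace/lem}, whereas you make the mechanism explicit (level sets of $(r,r')\mapsto\abs{rU\cap r'U}$ are unions of basis relations by Property~\ref{coherent-config:wl2}; a matching cannot split; a union of two triangles can only split into $2\overrightarrow{\cycle{3}}$ and its transpose; the matching/orientation splittings of $\cycle{x}$ are killed by non-constant coherence numbers), spelling out details the paper leaves implicit. The genuine divergence is part (6), which you rightly call the crux. The paper bypasses your splitting analysis of $K_{2,2,2}$ entirely: since $U$ is a single basis relation, all neighbourhoods $rU$ induce isomorphic configurations and every constituent of $\inducedCC{rU}$ is regular by coherence, so on three pairwise adjacent points one gets $\clique{3}$ or $\overrightarrow{\cycle{3}}$ on $rU$, and the basis relation of $\inducedCC{B}$ carrying these arcs is then $K_{2,2,2}$ in the first case and $\overrightarrow{\cycle{3}}[\clique{2}]$ in the second; this shortcut makes your delicate exclusion of the $(\cycle{6},3K_2,2K_3)$ alternative unnecessary. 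Your version of that step nevertheless works: the endpoint comparison $\abs{bG\cap rU}=\abs{bG\cap sU}$ forces even out-degree for constituents inside $K_{2,2,2}$, a triangle factor of the octahedron necessarily contains a star triangle, on which the same coherence constant takes the values $2$ and $0$, and $\matchingCC{3}\in\inducedCC{B}$ comes for free from the matching argument, so Table~\ref{small-cc:classificaiton-small-cc/tab} indeed leaves only the two claimed types. One remark on part (4): your argument proves the dichotomy within each fiber separately, which is the correct (and evidently intended) reading; the cross-fiber disjunction as printed can fail, e.g.\ for the Schurian configuration built from two copies of a directed triangle on three $R$-points joined completely to an undirected triangle on three $B$-points, with the two copies interchanged by an involution, where $\disjointCliques{2}{3,3}\in\interspace{R}{B}$ but $\inducedCC{R}$ contains only directed triangles and $\inducedCC{B}$ only undirected ones.
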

\begin{proof}
    The first five claims follow from Property~\ref{coherent-config:wl2} and from the proof of Lemma~\ref{small-cc:interspace/lem}.
    Consider the last claim and suppose~$U \in \interspace{R}{B}$ with~$(R \disjointUnion B, U) \cong \interspaceFourSix$.
    Then~$\clique{4} \in \inducedCC{R}$ since otherwise~$R$ violates Property~\ref{coherent-config:wl2}.
    For all~$r \in R$ the neighborhood~$rU$ induces up to isomorphism the same subgraph in~$\inducedCC{B}$.
    Thus either~$\clique{3} \in \inducedCC{rU}$ or~$\overrightarrow{\cycle{3}} \in \inducedCC{rU}$.
    In the first case~$\clique{2,2,2} \in \inducedCC{B}$, in the latter case~$\overrightarrow{\cycle{3}}[\clique{2}]$.
\end{proof}

\begin{theorem}
    \label{critical:small-cc/thm}
    Let~$\coherentConfig$ be a critical coherent configuration.
    If all fibers of~$\coherentConfig$ are small, then either
    \begin{enumerate}
        \item all fibers have the same size, or
        \item for all~$R \in \fibers{\coherentConfig}$ we have~$\abs{R} \in \{4,6\}$.
    \end{enumerate}
\end{theorem}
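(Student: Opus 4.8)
The plan is to derive the dichotomy purely from the interspace classification for small fibers together with connectedness of the quotient graph, so that no new combinatorics is needed. First I would extract from Lemma~\ref{small-cc:interspace/lem} the following \emph{size-compatibility} statement: if $(R,B)$ is an edge of $\quotientGraph{\coherentConfig}$ and both $R$ and $B$ are small, then, ordering them so that $\abs{R}\le\abs{B}$, the only possibilities are $(\abs{R},\abs{B})\in\{(4,4),(4,6),(5,5),(6,6),(7,7)\}$. Indeed, parts~(3)--(6) of that lemma say exactly that a size-$4$ fiber can only be quotient-adjacent to fibers of size $4$ or $6$, a size-$5$ fiber only to size-$5$ fibers, a size-$6$ fiber only to size-$6$ fibers, and a size-$7$ fiber only to size-$7$ fibers; in particular the combination ``size $6$ adjacent to size $7$'' is excluded because, applying part~(5) with the size-$6$ fiber in the role of $R$ (so that $\abs{R}\le\abs{B}$ holds), the other fiber would be forced to have size $6$.

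Next I would turn this into a connectedness argument. Consider the partition $\{4,6\}\,\disjointUnion\,\{5\}\,\disjointUnion\,\{7\}$ of the admissible small-fiber sizes, and colour each fiber $R$ of $\coherentConfig$ by the block containing $\abs{R}$. By the size-compatibility statement, any two fibers adjacent in $\quotientGraph{\coherentConfig}$ receive the same colour, so the colouring is constant on each connected component of $\quotientGraph{\coherentConfig}$. Since $\coherentConfig$ is critical, Lemma~\ref{crictial:quotientGraph-connected/lem} gives that $\quotientGraph{\coherentConfig}$ is connected (this also subsumes the degenerate case of a configuration with a single fiber), hence the colouring is globally constant. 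Therefore either every fiber has size $5$, or every fiber has size $7$, or every fiber has size in $\{4,6\}$: the first two alternatives are instances of conclusion~(1), and the last is exactly conclusion~(2).

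I do not expect a genuine obstacle here, as the whole argument is a two-step deduction. The only points requiring a little care are invoking Lemma~\ref{small-cc:interspace/lem} in the correct orientation (it is stated for $\abs{R}\le\abs{B}$, and an edge of the quotient graph already encodes $\abs{\interspace{R}{B}}>1$, which is precisely the hypothesis it uses), and making the ``size $6$ versus size $7$'' exclusion explicit as above rather than leaving it implicit in the phrase ``only to fibers of the same size''.
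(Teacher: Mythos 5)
Your proposal is correct and follows essentially the same route as the paper, whose proof is exactly the combination of the connectedness of the quotient graph for critical configurations (Lemma~\ref{critical:disjoint-union/lem}) with the size constraints from the interspace classification (Lemma~\ref{small-cc:interspace/lem}). You merely spell out the details (the colouring by size blocks and the orientation issue in part~(5)) that the paper leaves implicit.
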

\begin{proof}
    This follows directly from Lemmas~\ref{critical:disjoint-union/lem} and~\ref{small-cc:interspace/lem}.
\end{proof}


\section{Interspaces between large and small fibers}
\label{interspace-large-small/sec}

Our next goal is to study the possible interspaces between a small fiber~$S$ and a fiber~$L$ of arbitrary size.
Since~$L$ is not bounded in size, there is of course an infinite number of isomorphism types that appear.
The challenge is to capture the all relevant information about the structure between the fiber~$L$ and the small fiber~$S$ using only a finite amount of data.

Our solution turns out to be rather technical, but the idea is simple. If we forget the structure inside the fiber~$L$, vertices in~$L$ will partition into equivalence class depending on their neighborhood in~$S$ (taking basis relations of arcs into account). We now record the isomorphism type of the interspace we obtain by just retaining one representative from each class. However, we also capture, for one vertex~$\ell$ of~$L$ what structure the neighborhoods with respect to each basis relation induce in~$S$. Due to coherence and the small size of~$S$, this structure is the same (up to isomorphism) for every vertex of~$L$. Because~$S$ has size at most~$6$, it turns out that it is essentially sufficient to check the degrees of~$\ell$ under the basis relations and check which neighborhoods induce monochromatic cliques, in order to capture almost all relevant information.

To formally define the concept described above, one auxiliary notation is needed:
for a binary relation~$\arcs$, we define~$\ul(\arcs)$ to be~$\arcs \cup \arcs^\star$.
We extend this notation to graphs and coherent configurations to obtain their \emph{underlying undirected structure}.
For a coherent configuration~$\coherentConfig$, we set~$\ul(\coherentConfig)$ to be~$(\vertices(\coherentConfig), \{\ul(\arcs) \mid \arcs \in \relations(\coherentConfig)\})$.
We remark that~$\ul(\coherentConfig)$ is not necessarily coherent.
For a graph~$G$, we set~$\ul(G)$ to be~$(\vertices(G), \ul(\arcs(G)))$ and call it the \emph{underlying undirected graph} of~$G$.

We now introduce interspace patterns, before commenting on peculiarities of their definition:

\begin{definition}
\label{large-small-interspace:interspace-pattern/def}
    Let~$L,S$ be distinct fibers of a coherent configuration~$\coherentConfig$.
    We say the interspace~$\interspace{L}{S}$ has the \emph{interspace pattern} $\interspacePattern{G^1, d^1_1, d^1_2, \dots, d^1_{t_1}; \dots; G^k, d^k_1, d^k_2, \dots, d^k_{t_k}}$ if the three following conditions hold:
    $\sum_{j = 1}^{k} t_j = \abs{\interspace{L}{S}} - 1$, there are distinct basis relations~$\arcs^1, \dots, \arcs^k \in \inducedCC{S}$ such that for each~$i \in \{1,\dots,k\}$ we have~$\ul(S,\arcs^i) \cong G^i$, and there are distinct~$U^i_1, \dots, U^i_{t_i} \in \interspace{L}{S}$ such that for all~$\ell \in L$ and~$j \in \{1, \dots, t_i\}$ the set~$\ell U^i_j$ induces a~$d^i_j$-clique in~$\ul(S,A^i)$.

    If~$d^i_j = 3$ and for all~$S' \subseteq S$ that induce a~$3$-clique in~$\ul(S,A^i)$ the common neighborhood~$\bigcap_{s \in S'} s {U^i_j}^\star$ is not empty, then we mark the entry in the interspace pattern by~$^\dag$.
    If~$d^i_j =  3$ and for all~$S' \subseteq S$ inducing a~$3$-clique in~$\ul(S,A^i)$ we have that~$S \setminus S'$ is a also~$3$-clique and exactly one of~$\bigcap_{s \in S'} s {U^i_j}^\star$ or~$\bigcap_{s \in S \setminus S'} s {U^i_j}^\star$ is not empty, then we mark the entry by~$^\ddag$.
\end{definition}

We should emphasize that the sum of the~$t_i$ is~$\abs{\interspace{L}{S}} - 1$ and thus we omit exactly one basis relation of the interspace in the pattern.
In fact, in general the missing interspace will not satisfy a clique condition. Also, it is a priori not clear that an interspace always has a pattern of this form we just defined.

Regarding the subpatterns, if entry~$d^i_j$ is marked by~$^\dag$, then for every~$S' \subseteq S$ which induces a~$3$-clique in~$\ul(S,A^i)$ there is a vertex in~$\ell\in L$ such that~$\ell U^i_j= S'$.
However, if it is marked~$^\ddag$, then only for either~$S'$ or~$S \setminus S'$ such an~$\ell$ exists.
For example, the interspace pattern~$(\clique{2,2,2},3)$ has two subpatterns, namely~$\ipsixMatchingComplement$ and~$\ipsixMatchingComplementD$.
In principle, there could be other subcases, but it will follow from our classification that only these two subcases arise.

\begin{theorem}
\label{global-argument:large-small-interspace:classification}
    Let~$\coherentConfig$ be a critical coherent configuration and~$(L,S)$ be an edge in~$\quotientGraph{\coherentConfig}$.
    If~$\abs{S} \in \{4,6\}$, then~$\interspace{L}{S}$ has one of the following interspace patterns:
    \begin{multicols}{3}
        \begin{enumerate}
            \item \ipfourClique
            \item \ipfourMatching
            \item \ipfourCycle
            \item \ipsixCliqueTwo
            \item \ipsixCliqueTwoTwice
            \item \ipsixMatching
            \item \ipsixMatchingTwice
            \item \ipsixMatchingAndCycle
            \item \ipsixMatchingMatching
            \item \ipsixMatchingAndComplement
            \item \ipsixTriangleComplement
            \item \ipsixTriangleComplementTwice
            \item \ipsixCliqueThree
            \item \ipsixCliqueThreeD
            \item \ipsixTriangle
            \item \ipsixMatchingComplement
            \item \ipsixMatchingComplementD.
        \end{enumerate}
    \end{multicols}
\end{theorem}
\begin{proof}
    Since~$\abs{S} \in \{4,6\}$, we have~$\minimalDegree{L}{S} \in \{2,3\}$ and~$\abs{\interspace{L}{S}} \in \{2,3\}$.

    Let us first observe that for every basis relation~$U\in \interspace{L}{S}$ with~$|\ell U|\in \{2,3\}$ and every~$\ell \in L$ the set~$lU$ forms a clique in some undirected underlying graph of a constituent of~$\inducedCC{S}$. Indeed, from coherence, it follows that every constituent of $\inducedCC{\ell U}$ is a regular graph. Thus, since~$|\ell U|\in \{2,3\}$, there is only a single constituent, which must be a~$2$-clique or a~$3$-clique. Note that the size of the clique is independent of the choice of~$\ell$ due to coherence. Also note that, again due to coherence, the constituent that contains the clique $\inducedCC{\ell U}$ must be the same for different choices of~$\ell$.

    Finally observe that, since~$\abs{S} \in \{4,6\}$, the interspace~$\interspace{L}{S}$ contains at most one basis relation~$U \in \interspace{L}{S}$ with~$\intDegree{U} >3$.
    Thus we conclude that~$\interspace{L}{S}$ has some interspace pattern of the form $\interspacePattern{G^1, d^1_1, d^1_2, \dots, d^1_{t_1}; \dots; G^k, d^k_1, d^k_2, \dots, d^k_{t_k}}$.

    First assume~$\abs{S} = 4$, and thus we have~$\abs{\interspace{L}{S}} = 2$.
    Then~$k = 1$, $t_1 = 1$, and~$d^1_1 = 2$.
    If there are~$\ell \in L$ and~$U \in \interspace{L}{S}$ such that~$\ell U$ induces a clique in~$\ul(G)$ where~$G$ is constituent of~$\inducedCC{S}$ isomorphic to~$\overrightarrow{C_4}$, then~$\inducedCC{\ell U}$ is not regular.
    Therefore by Lemma~\ref{small-cc:induced-cc/lem}, each constituent~$G$ of~$\inducedCC{S}$ (and thus the undirected graph~$G^1$) is isomorphic to one of the graphs
    $\clique{4}$,~$\cycle{4}$, or~$\disjointCliques{2}{2}$.
    Observe that for all distinct~$s,s' \in S$ which are adjacent in~$G$, the two vertices in~$S \setminus \{s,s'\}$ are also adjacent in~$G$.
    Thus~$\interspace{L}{S}$ has one of the following interspace patterns
    $\ipfourClique$,~$\ipfourCycle$, or~$\ipfourMatching$.

    Now assume~$\abs{S} = 6$.
    Let~$\ell \in L$,~$U \in \interspace{L}{S}$, and~$G$ be constituent of~$\inducedCC{S}$ such that~$\ell U$ induces a clique in~$\ul(G)$.
    By Lemma~\ref{small-cc:induced-cc/lem} the constituent~$G$ is isomorphic to one of the graphs $K_6$,~$2 K_3$,~$2\overrightarrow{C_3}$,~$K_{3,3}$,~$3 K_2$,~$K_{2,2,2}$,~$\overrightarrow{C_3}[K_2]$,
    $C_6$, or~$\overrightarrow{C_6}$.
    We distinguish cases according to~$G$.

    \textit{(Case~$G \cong C_6$)}.
    By Lemma~\ref{small-cc:induced-cc/lem} there must exist constituents~$G', G''$ of~$\inducedCC{S}$ such that~$G' \cong 3K_2$ and~$G'' \cong 2K_3$.
    Observe~$\intDegree{U} \neq 3$ since~$\cycle{6}$ does not contain a~$3$-clique.
    Thus~$\intDegree{U} = 2$.
    Assume~$\abs{\interspace{L}{S}} = 2$ and let~$\overline{U} \in \interspace{L}{S}$ with~$\overline{U} \neq U$.
    Then exactly two vertices in~$\ell \overline{U}$ have distance~$3$ in~$G$, which contradicts coherence.
    Thus we know that~$\abs{\interspace{L}{S}} = 3$.  So let~$\{U,\overline{U},\widetilde{U}\} = \interspace{L}{S}$.
    If both~$\ell \overline{U}$ and~$\ell \widetilde{U}$ induce $2$ cliques in~$G$, then there is exactly one pair~$(\overline{s},\widetilde{s}) \in \ell \overline{U} \times \ell \widetilde{U}$ such that~$\overline{s}\widetilde{s} \in \arcs(G')$.
    Thus by coherence~$\overline{U}$ cannot be a basis relation.
    A similar argument applies if both~$\ell \overline{U}$ and~$\ell \widetilde{U}$ induce $2$ cliques in~$G''$.
    In the last possible case,~$\ell \overline{U}$ induces a~$2$-clique in~$\ul(G')$ and~$\ell \widetilde{U}$ induces a~$2$-clique in~$\ul(G)$.
    Then~$\interspace{L}{S}$ has the interspace pattern~$\ipsixMatchingAndCycle$.

    \textit{(Case~$G \cong \overrightarrow{C_6}$)}.
    Once again, we observe that~$\intDegree{U} \neq 3$ since~$\cycle{6}$ does not contain a~$3$-clique.
    Thus~$\intDegree{U} = 2$.
    However~$\inducedCC{\ell U}$ is not regular.

    \textit{(Case~$G \cong 2 K_3$)}.
    If~$\intDegree{U} = 2$, then~$G-\ell U$ is not regular, which contradicts coherence.
    Thus~$\intDegree{U} = 3$. Then~$\interspace{L}{S}$ has the interspace pattern~$\interspacePattern{2K_3,3}$.
    To be precise, the interspace~$\interspace{L}{S}$ has the interspace subpattern~$\ipsixTriangle$:
    there are exactly two distinct vertex sets~$S',S \setminus S' \subseteq S$ inducing~$3$-cliques in~$G$ and for each vertex~$s \in S$ either~$s \in S'$ or~$s \in S \setminus S'$.
    Without loss of generality, assume~$\ell U  = S'$.
    Thus by Property~\ref{coherent-config:wl2} there is~$\ell' \in L$ such that~$\ell' U = S \setminus S'$.

    \textit{(Case~$G \cong 2\overrightarrow{C_3}$)}.
    This case is analogous to the previous case.

    \textit{(Case~$G \cong K_{3,3}$)}.
    Observe~$\intDegree{U} \neq 3$ since~$K_{3,3}$ does not contain a~$3$-clique.
    Thus~$\intDegree{U} = 2$.
    By Lemma~\ref{small-cc:induced-cc/lem} there is a constituent~$G'$ in~$\inducedCC{S}$ isomorphic to either~$2K_3$ or~$2\overrightarrow{C_3}$.
    If~$|\interspace{L}{S}| = 2$, then~$\interspace{L}{S}$ has the interspace pattern~$\ipsixTriangleComplement$ and~$G' \cong 2K_3$.
    Now we assume that~$|\interspace{L}{S}| = 3$.
    Let~$\{U,\overline{U},\widetilde{U}\} = \interspace{L}{S}$.
    If~$\ell \overline{U}$ induces a~$2$-clique in~$\ul(G')$, then~$G' - \ell \overline{U}$ is not regular, which contradicts coherence.
    If~$\ell \overline{U}$ induces a~$2$-clique in~$\ul(G)$, then~$\ell \widetilde{U}$ induces a~$2$-clique in~$\ul(G)$ as well.
    Hence, the interspace~$\interspace{L}{S}$ has the interspace pattern~$\ipsixTriangleComplementTwice$.

    \textit{(Case~$G \cong 3 K_2$)}.
    So~$\intDegree{U} = 2$.
    If~$|\interspace{L}{S}| = 2$, then~$\interspace{L}{S}$ has the interspace pattern~$\ipsixMatching$.
    Assume~$|\interspace{L}{S}| = 3$, and let~$\{U,\overline{U},\widetilde{U}\} = \interspace{L}{S}$.
    If~$\ell \overline{U}$ induces a~$2$-clique in~$\ul(G)$, then~$\ell \widetilde{U}$ induces a~$2$-clique in~$\ul(G)$ as well.
    Hence~$\interspace{L}{S}$ has the interspace pattern~$\ipsixMatchingTwice$.

    Let~$G'$ be another constituent of~$\inducedCC{S}$ such that~$\ell \overline{U}$ and~$\ell \widetilde{U}$ induce a~$2$-clique in~$\ul(G')$.
    Since we have already covered the previous cases, we can assume~$G'$ is not isomorphic to~$2K_3$, $2\overrightarrow{C_3}$, $C_6$, or~$\overrightarrow{C_6}$.
    If~$G' \cong \overrightarrow{C_3}[K_2]$, then~$\coherentConfig[\ell \overline{U}]$ is not regular.
    If~$G' \cong K_{2,2,2}$, then~$\interspace{L}{S}$ has the interspace pattern~$\ipsixMatchingAndComplement$.

    Finally, let $G', G''$ be two other constituent of~$\inducedCC{S}$ such that~$\ell \overline{U}$ (respectively~$\ell \widetilde{U}$) induces a~$2$-clique in~$\ul(G')$ (respectively~$\ul(G'')$).
    Since we have already covered the previous cases, we can assume~$G'$ is not isomorphic to~$2K_3$, $2\overrightarrow{C_3}$, $C_6$, or~$\overrightarrow{C_6}$.
    If~$G'$ and~$G''$ are both isomorphic to~$3K_2$, then~$\interspace{L}{S}$ has the interspace pattern~$\ipsixMatchingMatching$.

    \textit{(Case~$G \cong K_{2,2,2}$)}.
    By Lemma~\ref{small-cc:induced-cc/lem} there is one other constituent~$G'$ in~$\inducedCC{S}$ isomorphic to~$3K_2$.

    Assume~$|\interspace{L}{S}| = 2$ and~$\intDegree{U} = 2$.
    Let~$\overline{U} \in \interspace{L}{S}$ with~$\overline{U} \neq U$.
    Hence there are exactly two vertices~$s,s' \in \ell \overline{U}$ such that~$ss' \in \arcs(G)$.
    By coherence~$\overline{U}$ cannot be a basis relations of~$\interspace{L}{S}$.

    Assume~$|\interspace{L}{S}| = 3$ and let~$\{U,\overline{U},\widetilde{U}\} = \interspace{L}{S}$.
    Suppose both~$\ell \overline{U}$ and~$\ell \widetilde{U}$ induce a~$2$-clique in~$\ul(G)$.
    Then there is exactly one pair~$(\overline{s},\widetilde{s}) \in \ell \overline{U} \times \ell \widetilde{U}$ such that~$\overline{s}\widetilde{s} \in \arcs(G)$.
    By coherence~$\overline{U}$ cannot be a basis relation.
    The cases where~$\ell \overline{U}$ or~$\ell \widetilde{U}$ induce a~$2$-clique in~$\ul(G')$ are covered by the previous case.

    Finally, assume~$|\interspace{L}{S}| = 2$ and~$\intDegree{U} = 3$.
    If for all induced $3$-cliques~$S'$ in~$G$ there is an~$\ell \in L$ with~$\ell U = S'$, then~$\interspace{L}{S}$ has the interspace subpattern~$\ipsixMatchingComplement$.
    Suppose that there is an induced $3$-clique~$S'$ in~$G$ such that for all~$\ell' \in L$ we have~$\ell' U \neq S'$ and~$\ell' U \neq S\setminus S'$.
    Let~$s \in S' \cap \ell U$,~$s' \in \ell U \cap (S \setminus S')$, and~$s'' \in S' \cap (S \setminus \ell U)$.
    Then~$s$ and~$s'$ have a different number of common neighbors in~$L$ under~$U$ than~$s$ and~$s''$.
    By coherence~$G$ is not a constituent of~$\inducedCC{S}$.
    If for all induced $3$-cliques~$S'$ in~$G$ there is an~$\ell \in L$ with either~$\ell U = S$ or~$\ell U = S \setminus S'$, then~$\interspace{L}{S}$ has the interspace subpattern~$\ipsixMatchingComplementD$.

    \textit{(Case~$G \cong \overrightarrow{C_3}[K_2]$)}.
    This case works similar to the previous case.

    \textit{(Case~$G \cong K_6$)}.
    If~$|\interspace{L}{S}| = 2$ and~$\intDegree{U} = 2$, then~$\interspace{L}{S}$ has the interspace pattern~$\ipsixCliqueTwo$.
    If~$|\interspace{L}{S}| = 3$, then~$\interspace{L}{S}$ has the interspace pattern~$\ipsixCliqueTwoTwice$.
    Finally assume~$|\interspace{L}{S}| = 2$ and~$\intDegree{U} = 3$.
    If for all induced $3$-cliques~$S'$ in~$G$ there is an~$\ell \in L$ with~$\ell U = S'$, then~$\interspace{L}{S}$ has the interspace subpattern~$\ipsixCliqueThree$.
    Suppose that there is an induced $3$-clique~$S'$ in~$G$ such that for all~$\ell' \in L$ we have~$\ell' U \neq S'$ and~$\ell' U \neq S\setminus S'$.
    Let~$s \in S' \cap \ell U$,~$s' \in \ell U \cap (S \setminus S')$, and~$s'' \in S' \cap (S \setminus \ell U)$.
    Then~$s$ and~$s'$ have a different number of common neighbors in~$L$ under~$U$ than~$s$ and~$s''$.
    Hence~$G$ is not a constituent of~$\inducedCC{S}$.
    If for all induced $3$-cliques~$S'$ in~$G$ there is an~$\ell \in L$ with either~$\ell U = S$ or~$\ell U = S \setminus S'$, then~$\interspace{L}{S}$ has the interspace subpattern~$\ipsixCliqueThreeD$.
\end{proof}

We can observe that the interspace patterns in the theorem are mutually exclusive, and we conclude the following.

\begin{corollary}
\label{large-small-interspace:classification:uniqueness/cor}
    Let~$\coherentConfig$ be a critical coherent configuration, and let~$(L,S)$ be an edge in~$\quotientGraph{\coherentConfig}$.
    If~$\abs{S} \in \{4,6\}$, then~$\interspace{L}{S}$ has \underline{exactly} one of the interspace patterns listed in Theorem~\ref{global-argument:large-small-interspace:classification}.
\end{corollary}
\begin{proof}
    The proof follows from the previous theorem as follows.
    Assuming~$\interspace{L}{S} = \{U_1,\ldots,U_u\}$, let~$\mathcal{G}$ be the multiset of constituents~$G_i$ of~$\inducedCC{S}$ such that for all~$i \in \{1,\ldots,u\}$ and~$\ell \in L$ the set~$\ell U_i$ induces a clique in~$\ul(G_i)$.
    Then each possible coherent configuration from Table~\ref{small-cc:classificaiton-small-cc/tab} together with~$\mathcal{G}$ uniquely determines the interspace pattern of~$\interspace{L}{S}$.

\end{proof}

We should stress that this classification of the interspaces between fibers~$L$ and~$S$ based on interspace patterns is only possible because we consider small fibers with~$S$ of size~$4$ or~$6$.

For an interspace pattern~$\mathfrak{P}$ of~$\interspace{L}{S}$ we write~$A^i(\interspace{L}{S})$ and~$U^i_j(\interspace{L}{S})$ to denote~$A^i$ and~$U^i_j$ as given in the definition of the interspace pattern. While the interspace pattern~$\mathfrak{P}$ is unique for interspace~$\interspace{L}{S}$, the choice for~$A^i$ is only determined up to isomorphism. For a given coherent configuration we will always choose an arbitrary but fixed~$A^i$ and matching~$U^i_j$.

\begin{lemma}
\label{global-argument:k4-exclusion/lem}
    Let~$\coherentConfig$ be a critical coherent configuration and~$(R,Y,B)$ a path in~$\quotientGraph{\coherentConfig}$.
    If either~$\clique{4} \in \inducedCC{Y}$ or~$\clique{6} \in \inducedCC{Y}$, then~$\interspace{R}{B}$ is not homogeneous.
\end{lemma}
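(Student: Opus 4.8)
The plan is to argue by contradiction, combining a coherence (double-counting) argument inside the fiber~$Y$ with the divisibility identity of Theorem~\ref{interspace-divisor/thm}. First I would note that, since~$\clique{4}\in\inducedCC{Y}$ or~$\clique{6}\in\inducedCC{Y}$ and every constituent of~$\inducedCC{Y}$ spans~$Y$, we have~$\abs{Y}\in\{4,6\}$; Lemma~\ref{small-cc:induced-cc/lem} and Table~\ref{small-cc:classificaiton-small-cc/tab} then leave only the rank-$2$ possibility~$\inducedCC{Y}=\{1_Y,Q\}$, where~$Q=Y^2\setminus 1_Y$ is the basis relation joining every ordered pair of distinct vertices of~$Y$. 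Assuming, for contradiction, that~$\interspace{R}{B}$ is homogeneous, $R$ and~$B$ are non-adjacent in~$\quotientGraph{\coherentConfig}$, so~$(R,Y,B)$ is an \emph{induced} path in~$\quotientGraph{\coherentConfig}$.

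I would then fix basis relations~$U\in\interspace{R}{Y}$ and~$U'\in\interspace{B}{Y}$ of minimum degree, i.e.\ with~$\intDegree{U}=\minimalDegree{R}{Y}$ and~$\intDegree{U'}=\minimalDegree{B}{Y}$. Since~$Y$ is adjacent to both~$R$ and~$B$, each of~$\interspace{R}{Y}$ and~$\interspace{B}{Y}$ has at least two basis relations whose degrees partition~$\abs{Y}$, so these minimum degrees are at most~$\abs{Y}/2$; and by criticality and Lemma~\ref{critical:star/lem} neither is~$1$. Hence~$\intDegree{U},\intDegree{U'}\in\{2,3\}$, with the value forced to~$2$ when~$\abs{Y}=4$.

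The key step — and the part I expect to be the main obstacle — is the claim that every two-element subset~$\{y,y'\}\subseteq Y$ satisfies~$\{y,y'\}\subseteq rU$ for some~$r\in R$. The plan for this is to apply Property~\ref{coherent-config:wl2} to the basis relations~$U^\star\in\interspace{Y}{R}$,~$U\in\interspace{R}{Y}$ and~$Q\in\inducedCC{Y}$: as~$Q$ joins all distinct pairs of~$Y$, it produces a single constant~$c$ with~$\abs{yU^\star\cap y'U^\star}=c$ for all distinct~$y,y'\in Y$, and~$yU^\star\cap y'U^\star=\{r\in R\mid\{y,y'\}\subseteq rU\}$. Double counting, over all~$r\in R$, the ordered pairs of distinct vertices of~$Y$ lying in~$rU$ gives~$\abs{R}\intDegree{U}(\intDegree{U}-1)=c\,\abs{Y}(\abs{Y}-1)$, so~$c\geq 1$ because~$\intDegree{U}\geq 2$; this is exactly the claim.

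Finally I would combine everything. Since~$(R,Y,B)$ is an induced path, Theorem~\ref{interspace-divisor/thm}, applied with the roles of~$R$,~$B$,~$Y$ there taken by~$R$,~$Y$,~$B$ here, gives~$\intDegree{U}\intDegree{U'}=\abs{Y}\cdot\abs{rU\cap bU'}$ for all~$r\in R$,~$b\in B$; in particular~$\abs{rU\cap bU'}$ is independent of~$r$ and~$b$. If~$\abs{Y}=6$ and~$\intDegree{U}=\intDegree{U'}$, then~$\intDegree{U}\intDegree{U'}\in\{4,9\}$ is not a multiple of~$6$, contradicting integrality of~$\abs{rU\cap bU'}$. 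In every remaining case — namely~$\abs{Y}=4$ with~$\intDegree{U}=\intDegree{U'}=2$, or~$\abs{Y}=6$ with~$\{\intDegree{U},\intDegree{U'}\}=\{2,3\}$ — one has~$\intDegree{U}\intDegree{U'}=\abs{Y}$, so~$\abs{rU\cap bU'}=1$ for all~$r,b$; but fixing any~$b\in B$, the set~$bU'$ has~$\intDegree{U'}\geq 2$ elements and hence contains a two-element subset~$P$, for which the claim supplies~$r\in R$ with~$P\subseteq rU$, whence~$\abs{rU\cap bU'}\geq\abs{P}=2$ — a contradiction. This would establish that~$\interspace{R}{B}$ is not homogeneous.
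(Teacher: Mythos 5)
Your proof is correct, and it takes a genuinely different route from the paper's. The paper proves this by invoking the heavy classification of interspaces between a fiber and a small fiber (Theorem~\ref{global-argument:large-small-interspace:classification}): since $\inducedCC{Y}$ has rank~$2$, each of $\interspace{R}{Y}$ and $\interspace{B}{Y}$ must have one of the four clique patterns $\ipfourClique$, $\ipsixCliqueTwo$, $\ipsixCliqueThree$, $\ipsixCliqueThreeD$, and a short inspection of those patterns shows that for a fixed $r\in R$ the intersection size $|bU_B\cap rU_R|$ varies with~$b\in B$, forcing $\interspace{R}{B}$ to refine. Your argument avoids the classification entirely: you exploit the rank-$2$ structure of $\inducedCC{Y}$ via a single coherence constant $c=|yU^\star\cap y'U^\star|$ on the complete constituent, double-count to get $c\geq 1$ (every pair of $Y$ is co-covered by some $rU$), and then hit this against the divisibility identity of Theorem~\ref{interspace-divisor/thm} for the induced path $(R,Y,B)$ to force $|rU\cap bU'|=1$ in the non-immediately-contradictory cases, which the coverage claim refutes. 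The trade-off is clear: the paper's proof is shorter given that the classification theorem is proved anyway and reused throughout, while yours is self-contained, needs only the coherence axiom, Lemma~\ref{critical:star/lem}, and Theorem~\ref{interspace-divisor/thm}, and makes transparent exactly which structural facts are doing the work. One detail that is implicit but correct in your write-up and worth making fully explicit: the path $(R,Y,B)$ is only assumed to be a path, not induced, in the lemma statement; you reduce to the induced case exactly when you assume $\interspace{R}{B}$ homogeneous, which is precisely the case where Theorem~\ref{interspace-divisor/thm} is applicable.
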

\begin{proof}
    Suppose~$r \in R$, $U_R = U^1_1(\interspace{R}{Y})$, and~$U_B = U^1_1(\interspace{B}{Y})$.
    By Theorem~\ref{global-argument:large-small-interspace:classification} we know that both~$\interspace{R}{Y}$ and~$\interspace{B}{Y}$ have one of the following interspace patterns:~$\ipfourClique$, $\ipsixCliqueTwo$, $\ipsixCliqueThree$, $\ipsixCliqueThreeD$. Thus, in all cases there are~$b, b' \in B$ such that~$|b U_B \cap r U_R|\neq |b' U_B \cap r U_R|$. Thus~$\interspace{R}{B}$ is not homogeneous.
\end{proof}

Let~$\interspacePattern{G^1, d^1_1, d^1_2, \dots, d^1_{t_1}; \dots; G^k, d^k_1, d^k_2, \dots, d^k_{t_k}}$ be the interspace pattern of~$\interspace{L}{S}$, and let~$i \in \{1,\ldots,k\}$ and~$j \in \{1,\ldots,t_i\}$.
For~$U = U^i_j(\interspace{L}{S})$, two vertices~$\ell, \ell' \in L$ are called \emph{equivalent} with respect to~$S$ and~$U$ if~$\ell U =\ell'U$.
We denote the set of all equivalence classes with respect to~$S$ and~$U^i_j(\interspace{L}{S})$ by~$\partitionRel{i}{j}{L,S}$.
Further, we define~
\[
    \equivalenceClasses{L,S} \coloneqq \bigwedge_{\substack{i \in \{1,\ldots,k\} \\ j \in \{1,\ldots,t_i\}}}  \partitionRel{i}{j}{L,S},
\]
that is,~$\equivalenceClasses{L,S}$ is the meet of all partitions~$\partitionRel{i}{j}{L,S}$.
In other words,~$\equivalenceClasses{L,S}$ is the coarsest partition which is still finer than all~$\partitionRel{i}{j}{L,S}$. Also note that~$\equivalenceClasses{L,S} = \partition{L,S}$ if~$|\interspace{L}{S}|=2$.
For a union of small fibers~$\mathcal{S}$, we define~$\equivalenceClasses{L,\mathcal{S}} \coloneqq \bigwedge_{S \in \mathcal{S}} \equivalenceClasses{L,S}$.

\begin{lemma}
\label{interspace-pattern:partition-size/lem}
    Let~$\coherentConfig$ be a critical coherent configuration, and let~$(L,S)$ be an edge in~$\quotientGraph{\coherentConfig}$ such that~$|S|\in \{4,6\}$.
    Further, assume that interspace~$\interspace{L}{S}$ has the interspace pattern~$\interspacePattern{G^1, d^1_1, \dots, d^1_{t_1}; \dots; G^k, d^k_1, \dots, d^k_{t_k}}$.
    Let~$x$ be the number of $3$-cliques in~$G^1$.

    Then~$\abs{\partition{L,S}} = \abs{A^1(\interspace{L}{S})}$ if~$d^1_1 = 2$, $\abs{\equivalenceClasses{L,S}} = x$ if~$d^1_1 = 3^\dag$, and~$\abs{\equivalenceClasses{L,S}} = \frac{x}{2}$ if~$d^1_1 = 3^\ddag$.
\end{lemma}
\begin{proof}
    This follows by inspecting the interspaces described in Theorem~\ref{global-argument:large-small-interspace:classification}.
\end{proof}

Table~\ref{interspace-pattern:partition-size/tab} gives an explicit overview of how the partition size~$\abs{\partition{L,S}}$ depends on the interspace patterns of Theorem~\ref{global-argument:large-small-interspace:classification}.


\begin{table}[tbp]
    \centering\def\arraystretch{1.2}%
    \begin{tabular}{|c|l|}
        \hline
        $\abs{\partition{L,S}}$ & interspace pattern of~$\interspace{L}{S}$                   \\ \hline
        $2$                     & $\ipfourMatching$, $\ipsixTriangle$                                   \\ \hline
        $3$                     & $\ipsixMatching$, $\ipsixMatchingTwice$, $\ipsixMatchingMatching$     \\ \hline
        $4$                     & $\ipfourCycle$, $\ipsixMatchingComplementD$                           \\ \hline
        $6$                     & $\ipsixMatchingAndCycle$, $\ipfourClique$                             \\ \hline
        $8$                     & $\ipsixMatchingComplement$                                            \\ \hline
        $9$                     & $\ipsixTriangleComplement$, $\ipsixTriangleComplementTwice$           \\ \hline
        $10$                    & $\ipsixCliqueThreeD$                                                  \\ \hline
        $12$                    & $\ipsixMatchingAndComplement$                                         \\ \hline
        $15$                    & $\ipsixCliqueTwo$, $\ipsixCliqueTwoTwice$                             \\ \hline
        $20$                    & $\ipsixCliqueThree$                                                   \\ \hline
    \end{tabular}
    \caption{Overview of~$\abs{\partition{L,S}}$ depending on the interspace pattern in~$\interspace{L}{S}$.}
    \label{interspace-pattern:partition-size/tab}
\end{table}

\begin{lemma}
\label{global-argument:partition:fully-intersecting/lem}
    Let~$\coherentConfig$ be a coherent configuration, and let~$(S_1, L, S_2)$ be a path in~$\quotientGraph{\coherentConfig}$.
    If~$\abs{\partition{L,S_1}}$ and~$\abs{\partition{L,S_2}}$ are coprime, then partitions~$\partition{L,S_1}$ and~$\partition{L, S_2}$ are \emph{fully intersecting}, that is, $P_1 \cap P_2 \neq \emptyset$ for all~$P_1 \in \partition{L,S_1}, P_2 \in \partition{L,S_2}$.
\end{lemma}
\begin{proof}
    Due to coherence all the non-trivial intersections of a part from~$\partition{L,S_1}$ with a part from~$\partition{L,S_2}$ have the same size, say~$q$. For each~$i\in\{1,2\}$ set~$p_i= |\partition{L,S_i}|$
    and set~$x_i$ to be the size of the parts in the equipartition~$\partition{L,S_i}$.
    We have~$|L|= x_1p_1=x_2p_2$.
    Each part of~$\partition{L,S_i}$ intersects~$x_i/q$ parts of~$\partition{L,S_{3-i}}$. We want to show that~$x_1/q= p_2$, so assume~$x_1/q<p_2$. We have~$x_1/q\cdot p_1 = x_2/q\cdot p_2$.
    But~$x_1/q< p_2$, which means~$p_1$ and~$p_2$ must have a common divisor.
\end{proof}

Let~$\mathcal{S}$ be a union of small fibers each adjacent to a large fiber~$L$ in the quotient graph of a coherent configuration~$\coherentConfig$.
For~$U,U' \in \interspace{L}{\mathcal{S}}$, we define a function~$\eta_{U,U'} \colon \equivalenceClasses{L,\mathcal{S}}^2 \longrightarrow \Nat$ with~$(P,P') \mapsto | p U \cap p' U'|$ where~$p \in P,p' \in P'$ are arbitrary.
We define~$\mathcal{Q}$ to be the set of equivalence classes on~$\equivalenceClasses{L,S}^2$ such that for all~$Q \in \mathcal{Q}$ we have
\[
    (P_1,P_2),(P'_1,P'_2) \in Q \text{ if and only if } \eta_{U,U'}(P_1,P_2) = \eta_{U,U'}(P'_1,P'_2) \text{ for all } U,U' \in \interspace{L}{\mathcal{S}}.
\]
We call the coherent configuration~$(\equivalenceClasses{L,\mathcal{S}}, \mathcal{Q})$ the~\emph{partition structure} and denoted it by~$\partitionStructure{L,\mathcal{S}}$.
Intuitively speaking, the vertices of the partition structure correspond to the parts of $\equivalenceClasses{L,\mathcal{S}}$. For the arcs between a pair of parts, the relations represent what the arcs in~$L$ ``know'' about the intersection of common neighborhoods of their end vertices in~$\mathcal{S}$ (with respect to basis relations in~$\interspace{L}{\mathcal{S}}$).
Observe that~$\partitionStructure{L,\mathcal{S}}$ is indeed a coherent configuration itself.


\begin{table}[tb]
    \centering\def\arraystretch{1.5}%
    \begin{tabular}{|l|l|l|}
        \hline
        \makecell{Interspace pat-\\tern of~$\interspace{L}{S}$} & $\type{\inducedCC{S}}$                                                                                                                                                                    & $\type{\partitionStructure{L,S}}$                                     \\ \hline
        $\ipfourClique$                                         & $(K_4)$                                                                                                                                                                                   & $(3K_2,K_{2,2,2})$                                                    \\ \hline
        $\ipfourMatching$                                       & $(2K_2,2K_2,2K_2)$, $(C_4,2K_2)$, $(\overrightarrow{C_4},2K_2)$                                                                                                                           & $(K_2)$                                                              \\ \hline
        $\ipfourCycle$                                          & $(2K_2,C_4)$                                                                                                                                                                              & $(2K_2,C_4)$                                                          \\ \hline
        $\ipsixMatching$                                        & $(3K_2,K_{2,2})$, $(C_6,2C_3,3K_2)$,                                                                                                                                                      & $(K_3)$                                                              \\ \hline
        \multirow{2}{*}{$\ipsixMatchingTwice$}                  & $(3K_2,K_{2,2,2})$, $(C_6,2C_3,3K_2)$,                                                                                                                                                      & $(3K_2,3K_2,2\overrightarrow{C_3},3K_2)$                                                    \\ \cline{2-3}
                                                                & $(3K_2,\overrightarrow{C_3}[K_2])$, $(\overrightarrow{C_6},2\overrightarrow{C_3},3K_2)$                                                                                                   & $(\overrightarrow{C_3})$                                              \\ \hline
        $\ipsixMatchingAndCycle$                                & $(C_6,2C_3,3K_2)$                                                                                                                                                                         & $(C_6,2C_3,3K_2)$                                                     \\ \hline
        $\ipsixTriangleComplement    $                          & $(2K_3,K_{3,3})$                                                                                                                                                                          & $(\rookGraph{3},\rookGraph{3})$                                       \\ \hline
        \multirow{2}{*}{$\ipsixTriangle$}                       & $(2C_3,K_{3,3})$, $(2\overrightarrow{C_3},K_{3,3})$, $(C_6,2C_3,3K_2)$,                                                                                                                   & \multirow{2}{*}{$(K_2)$}                                             \\
                                                                & $(\overrightarrow{C_6},2\overrightarrow{C_3},3K_2)$, $(3K_2,3K_2,\overrightarrow{2C_3},3K_2)$                                                                                             &                                                                       \\ \hline
        $\ipsixMatchingMatching$                                & $(3K_2,3K_2,\overrightarrow{2C_3},3K_2)$                                                                                                                                                  & $(C_3)$                                                               \\ \hline
        $\ipsixMatchingComplement    $                          & $(3K_2,K_{2,2,2})$, $(3K_2,\overrightarrow{C_3}[K_2])$                                                                                                                                    & $(2K_4,4K_2,K_{4,4}-4K_2)$                                            \\ \hline
        $\ipsixMatchingComplementD    $                         & $(3K_2,K_{2,2,2})$, $(3K_2,\overrightarrow{C_3}[K_2])$                                                                                                                                    & $(K_4)$                                                               \\ \hline
    \end{tabular}
    \caption{Isomorphism type of the partition structure~$\partitionStructure{L,S}$ uniquely determined by the isomorphism type of~$\inducedCC{S}$ and the interspace pattern of~$\interspace{L}{S}$ in a critical coherent configuration~$\coherentConfig$.}
    \label{interspace-pattern:partition-structure/tab}
\end{table}

\begin{lemma}
\label{interspace-pattern:partition-structure/lem}
    Let~$\coherentConfig$ be a critical coherent configuration, and let~$L,S$ be fibers of~$\coherentConfig$.
    For~$|\partition{L,S}| \leq 8$, the coherent configuration~$\inducedCC{S}$ together with the interspace pattern of~$\interspace{L}{S}$ uniquely determines the partition structure~$\partitionStructure{L,S}$.
\end{lemma}
\begin{proof}
    Recall that the isomorphism type of~$\inducedCC{S}$ is determined by the set of constituents. For number of
    parts in~$\equivalenceClasses{L,S}$ is determined by the interspace pattern.
    On top of that, for all interspace pattern, the set~$\ell U^1_1$ determines all sets~$\ell U^i_j$.
    It follows that up to isomorphism there is a unique partition structure.
\end{proof}

We should comment that the interspace pattern and the isomorphism type of subconfiguration induced by the small fiber~$S$ does not uniquely determine the partition structure~$\partitionStructure{L,S}$ if~$|\partition{L,S}| \geq 9$.
Further, we remark that in a critical coherent configuration the interspace pattern might restrict the possible constituents in the small fiber, as we have shown in the proof of Theorem~\ref{large-small-interspace:classification:uniqueness/cor}:
For example, if an interspace~$\interspace{L}{S}$ has the interspace pattern~$\ipfourCycle$ or~$\ipsixMatchingAndCycle$, then by coherence~$\overrightarrow{C_4} \notin \inducedCC{S}$ (respectively~$\overrightarrow{C_6} \notin \inducedCC{S}$) since~$U^1_1(\interspace{L}{S})$ would not be a basis relation.
For similar reasons~$2\overrightarrow{C_3} \notin \inducedCC{S}$ if~$\interspace{L}{S}$ has the interspace pattern~$\ipsixTriangleComplement$.
For each interspace pattern and isomorphism type of configuration~$\inducedCC{S}$, the possible partition structures in a critical coherent configuration~$\coherentConfig$ are listed in Table~\ref{interspace-pattern:partition-structure/tab}.


\section{Restorable fibers}
\label{critical:restorable/sec}

Equipped with the classification of the interspaces between a small fiber and a fiber of arbitrary size, we present several claims that examine induced coherent subconfigurations with respect to criticality and restorability in particular.
With the exception of Lemmas~\ref{critical:adjacent-interspace-cycle/lem},~\ref{6-cc:implied-interspace:DUC-DUC/lem}, and~\ref{critical:7-cc:leviFano/lem}, the proofs of these claims follow a similar pattern:
we start with an induced coherent subconfigurations consisting of multiple fibers and interspaces with a small non-dominating fiber~$S$ (and in rare cases multiple non-dominating fibers) at its center.
The adjacency of~$S$ to other fibers is described by the interspace patterns of its incident interspaces in quotient graph~$\quotientGraph{\coherentConfig}$.
The homogeneous coherent configuration induced by~$S$ together with the interspace patterns determines the intersection of neighborhoods of vertices of possible different fibers, for example say~$L$, which are adjacent to~$S$ in~$\quotientGraph{\coherentConfig}$.
The~\emph{partition structures} $\partitionStructure{L,S}$ captures this information.
Furthermore, the automorphisms of~$\partitionStructure{L,S}$ are a subset of the automorphisms of~$\inducedCC{L}$, and each automorphism of~$\partitionStructure{L,S}$ is induced by an automorphism of~$\inducedCC{S}$.
Altogether, the automorphisms of~$\inducedCC{L}$ extends to automorphisms of~$\inducedCC{L \cup S}$.
Thus~$S$ is restorable, which contradicts Lemma~\ref{critical:restorable/lem}.

\begin{lemma}
    \label{critical:adjacent-interspace-cycle/lem}
    Let~$\coherentConfig$ be a critical coherent configuration, and let~$(R,S,Y)$ be a path in~$\quotientGraph{\coherentConfig}$ with~$S$ being small.
    \begin{enumerate}
        \item If both~$\interspace{R}{S}$ and~$\interspace{Y}{S}$ have the interspace pattern~$\ipfourCycle$, then there is a union~$\arcs$ of basis relations in~$\interspace{R}{Y}$ such that~$(R \disjointUnion Y,\arcs) \cong \disjointCliques{4}{\frac{\abs{R}}{4},\frac{\abs{Y}}{4}}$.
        \item If both~$\interspace{R}{S}$ and~$\interspace{Y}{S}$ have the interspace pattern~$\ipsixMatchingAndCycle$, then there is a union~$\arcs$ of basis relations in~$\interspace{R}{Y}$ such that $(R \disjointUnion Y,\arcs) \cong \disjointCliques{6}{\frac{\abs{R}}{6},\frac{\abs{Y}}{6}}$.
    \end{enumerate}
    In particular, fibers~$R$ and~$Y$ are large.
\end{lemma}
\begin{proof}
    By coherence, there is a unique basis relation~$\arcs \in \inducedCC{S}$ with~$\cycle{\abs{S}} \cong (S,\arcs)$.
    For~$U = U^1_1(\interspace{R}{S})$ and~$U' = U^1_1( \interspace{Y}{S})$, it holds that for all~$r \in R$ (respectively~$y \in Y$) the set~$r U$ (respectively~$y U'$) induces a~$2$-clique within~$(S,\arcs)$.
    Thus for all~$P \in \partition{R,S}$ there is a unique~$P' \in \partition{Y,S}$ such that for all~$v \in P, w \in P'$ we have~$r U = y U'$.
    Therefore, the interspace~$\interspace{R}{Y}$ contains a union of basis relations~$U''$ such that~$(R \disjointUnion Y, U'') \cong \disjointCliques{\abs{\arcs}}{\nicefrac{\abs{R}}{\abs{\arcs}},\nicefrac{\abs{Y}}{\abs{\arcs}}}$.

    In particular, if~$R$ or~$Y$ are small, then~$\interspace{R}{Y}$ contains a constituent isomorphic to a matching or star, which violates Lemma~\ref{critical:star/lem}.
\end{proof}

\begin{lemma}
    \label{critical:4cc:restorable:2,C4/lem}
    Let~$\coherentConfig$ be a critical coherent configuration, and let~$S \in \fibers{\coherentConfig}$ such that~$\abs{S} = 4$.
    If~$\{S\}$ is not dominating and~$|\ul(\inducedCC{S})| = 3$, then there are fibers~$R,R'$ adjacent to~$S$ in~$\quotientGraph{\coherentConfig}$ such that~$\interspace{R}{S}$ has the interspace pattern~$\ipfourCycle$ and~$\interspace{R'}{S}$ has the interspace pattern~$\ipfourMatching$.
\end{lemma}
\begin{proof}
    Towards a contradiction, suppose that for all fibers~$R$ adjacent to~$S$ in~$\quotientGraph{\coherentConfig}$ the interspace~$\interspace{R}{S}$ has the interspace pattern~$\ipfourCycle$.
    By coherence, there is a constituent in~$\inducedCC{S}$ isomorphic to~$\cycle{4}$.
    This constituent is unique in~$\inducedCC{S}$ due to Lemma~\ref{small-cc:induced-cc/lem}.
    Thus for all distinct fibers~$R,R' \in \fibers{\coherentConfig}$ adjacent to~$S$, the fiber~$R$ takes care of~$R'$ (with regard to restorability of~$S$).
    Together with Lemma~\ref{critical:restorable:take-care/lem}, we may assume that~$\colorDeg{S} = 1$, when we show in the following that~$S$ is restorable.
    Let~$R$ be a fiber adjacent to~$S$ in~$\quotientGraph{\coherentConfig}$.
    Lemma~\ref{interspace-pattern:partition-structure/lem} provides the partition structure of~$\interspace{R}{S}$, and its isomorphism type is determined by the isomorphism types of its constituents.
    Hence, we have~$\partitionStructure{L,S} \cong (C_4,2K_2)$.
    All automorphisms of~$\partitionStructure{L,S}$ are induced by an automorphism of~$\inducedCC{S}$.
    Thus each automorphism of~$\inducedCC{R}$ extends to an automorphism of~$\inducedCC{R \cup S}$.
    This contradicts to Lemma~\ref{critical:restorable/lem}.

    Suppose that, for all fibers~$R$ adjacent to~$S$ in~$\quotientGraph{\coherentConfig}$, the interspace~$\interspace{R}{S}$ has the interspace pattern~$\ipfourMatching$.
    Thus there is a constituent in~$\inducedCC{S}$ isomorphic to~$\disjointCliques{2}{2}$.
    By a reasoning similar to the one above, we may assume~$\colorDeg{S} = 1$.
    By Lemma~\ref{interspace-pattern:partition-structure/lem}, we have~$\partitionStructure{R,S} \cong K_2$.
    Since all automorphisms of~$\partitionStructure{L,S}$ are induced by an automorphism of~$\inducedCC{S}$, each automorphism of~$\inducedCC{R}$ extends to an automorphism of~$\inducedCC{R \cup S}$.
    Once again, a contradiction to Lemma~\ref{critical:restorable/lem} occurs.
\end{proof}

Let~$\coherentConfig_0$ and~$\coherentConfig_1$ be a coherent configuration.
We call~$(\vertices_0 \disjointUnion \vertices_1, \relations(\coherentConfig_0) \cup \relations(\coherentConfig_1) \cup \relations_{0,1} \cup \relations_{1,0})$ where
\[
    \relations_{0,1} = \{\Delta_0 \times \Delta_1 \mid (\Delta_0,\Delta_1) \in \fibers{\coherentConfig_0} \times \fibers{\coherentConfig_1} \}
\]
and~$\relations_{1,0} = \relations^\star_{0,1}$ the~\emph{direct sum} of~$\coherentConfig_0$ and~$\coherentConfig_1$ and denote it by~$\coherentConfig_0 \boxplus \coherentConfig_1$.
By~\cite{CC}, if~$\coherentConfig_0$ and~$\coherentConfig_1$ are a coherent configuration, then~$\coherentConfig_0 \boxplus \coherentConfig_1$ is a coherent configuration as well.
In the terms of colored graphs, the direct sum of two configurations is their color-disjoint union.
Note that for all~$i \in \{1,2\}$ we have~$\vertices(\coherentConfig_i) \in \fibers{\coherentConfig_1 \boxplus \coherentConfig_2}^\cup$.
Thus there is no (color-preserving) automorphism of~$\coherentConfig_1 \boxplus \coherentConfig_2$ which maps~$\vertices(\coherentConfig_1)$ to~$\vertices(\coherentConfig_2)$ and vice versa.

\begin{lemma}
\label{critical:4cc:restorable:cycle/lem}
    Let~$\coherentConfig$ be a critical coherent configuration, and let~$R_0,R_1 \in \fibers{\coherentConfig}$ be distinct.
    If~$\{R_0 , R_1\}$ is not dominating, then~$\cycle{8} \notin \interspace{R_0}{R_1}$.
\end{lemma}
\begin{proof}
    Towards a contradiction, suppose first that $\cycle{8} \in \interspace{R_0}{R_1}$.
    By assumption, neither~$\{R_0\}$ nor~$\{R_1\}$ are dominating.
    Due to Lemma~\ref{critical:4cc:restorable:2,C4/lem}, both~$R_0$ and~$R_1$ must have quotient degree at least~$2$.
    Recall by combining Lemmas~\ref{small-cc:induced-cc/lem} and~\ref{small-cc:interspace-implies-cc/lem}, there are two unique constituents in~$\inducedCC{R_0}$ isomorphic to~$\cycle{4}$ and~$\disjointCliques{2}{2}$, respectively.
    By Lemma~\ref{critical:adjacent-interspace-cycle/lem}, every non-homogeneous interspace between~$R_0$ and a fiber other than~$R_1$ has the interspace pattern~$\ipfourMatching$.
    Therefore, there is a fiber~$B_0$ adjacent to~$R_0$ which takes care (regarding restorability) of all other fibers neighboring~$R_0$ in~$\quotientGraph{\coherentConfig}$.
    So, by Lemma~\ref{critical:restorable:take-care/lem}, we may assume~$\colorDeg{R_0} = 2$.
    The same reasoning applies to~$R_1$ as well.

    Let~$(B_0,R_0,R_1,B_1)$ be a possibly closed path in~$\quotientGraph{\coherentConfig}$ such that~$\interspace{B_i}{R_i}$ has the interspace pattern~$\ipfourMatching$.
    By Lemma~\ref{interspace-pattern:partition-structure/lem}, both partitions structures~$\partitionStructure{B_0,R_0}$ and~$\partitionStructure{B_1,R_1}$ are isomorphic to~$K_2$.
    If~$B_0 \neq B_1$, then we set~$\mathfrak{S} \coloneqq \partitionStructure{B_0,R_0} \boxplus \partitionStructure{B_1,R_1}$.
    If~$B_0 = B_1$ and~$\equivalenceClasses{B_0,R_0} = \equivalenceClasses{B_1,R_1}$, then~$\mathfrak{S} \coloneqq \partitionStructure{B_0,R_0 \cup R_1}$ and~$\mathfrak{S} \cong K_2$.
    Finally, if~$B_0 = B_1$ and~$\equivalenceClasses{B_0,R_0} \neq \equivalenceClasses{B_1,R_1}$, then we set~$\mathfrak{S} \coloneqq \partitionStructure{B_0,R_0 \cup R_1}$, $\equivalenceClasses{B_0,R_0}$ and~$\equivalenceClasses{B_1,R_1}$ are fully intersecting, and the isomorphism type of~$\mathfrak{S}$ is determined by the isomorphism types of its constituents~$(2K_2,2K_2,2K_2)$.
    All automorphisms of~$\mathfrak{S}$ are induced by an automorphism of~$\inducedCC{R_0 \cup R_1}$.
    Thus each automorphism of~$\inducedCC{ B_0 \cup B_1 }$ extends to an automorphism of~$\inducedCC{ B_0 \cup R_0 \cup R_1 \cup B_1 }$. Thus $R_0 \cup R_1$ is restorable.
    Since~$\{R_0, R_1\}$ is not dominating, this contradicts Lemma~\ref{critical:restorable/lem}.
\end{proof}

\begin{lemma}
\label{critical:4-cc:restorable:DUC/lem}
    Let~$\coherentConfig$ be a critical coherent configuration, and let~$S \in \fibers{\coherentConfig}$ such that~$\{S\}$ is not dominating.
    For every~$\arcs \in \inducedCC{S}$ with~$(S,\arcs) \cong \disjointCliques{2}{2}$ there are~$R \in \fibers{\coherentConfig}$ and~$U \in \interspace{R}{S}$ such that for all~$r \in R$ the set~$r U$ induces a~$2$-clique in~$(S,\arcs)$.

    In particular, if~$\abs{S} = 4$ and~$\ul(\inducedCC{S})| = 4$, then~$\colorDeg{S} \geq 3$.
\end{lemma}
\begin{proof}
    First assume that there are two constituents in~$\inducedCC{S}$ whose underlying graphs are isomorphic~$\cycle{4}$ and~$2K_2$ respectively.
    Thus~$|\ul(\inducedCC{S})| = 3$ and by Lemma~\ref{critical:4cc:restorable:2,C4/lem} the claim holds.

    Next assume that~$\inducedCC{S}$ does not contain a constituent whose underlying graph is isomorphic to~$\cycle{4}$.
    Thus, by Lemma~\ref{small-cc:induced-cc/lem}, we may assume that there are three constituents~$G_0,G_1,G_2$ in~$\inducedCC{S}$ each of which is isomorphic to~$\disjointCliques{2}{2}$.
    Suppose~$J \subseteq \{0,1,2\}$ is the largest set such that for each~$j \in J$ there is a fiber~$R_j \in \fibers{\coherentConfig}$ and a basis relation~$U_j \in \interspace{R_j}{S}$ such that for all~$r_j \in R_j$ the set~$r_j U$ induces a~$2$-clique in~$G_j$.

    Let~$R,R'$ be fibers adjacent to~$S$ in~$\quotientGraph{\coherentConfig}$,~$U \in \interspace{R}{S}$, and~$U' \in \interspace{R'}{S}$.
    If there is an~$i \in J$ such that for all~$r \in R$ and~$r' \in R'$ the set~$rU$ (respectively~$r'U'$) induces a~$2$-clique in~$G_i$, then~$R$ takes care of~$R'$  with regard to restorability of~$S$.
    Hence, by Lemma~\ref{critical:restorable:take-care/lem}, we may assume that~$R_j$ is unique with respect to~$G_j$ when we show in the following that~$S$ is restorable if~$|J| < 3$.

    Suppose~that~$|J| < 3$.
    For all~$j \in J$, Lemma~\ref{interspace-pattern:partition-structure/lem} determines~$\partitionStructure{R_j,S}$, all of which are isomorphic to~$K_2$.
    Define~$\mathfrak{S} \coloneqq \boxplus_{j \in J} \partitionStructure{R_j,S}$.
    All automorphisms of~$\mathfrak{S}$ are induced by an automorphism of~$\inducedCC{S}$.
    Thus each automorphism of~$\inducedCC{ \bigcup_{j \in J} R_j}$ extends to an automorphism of~$\inducedCC{S \cup \bigcup_{j \in J} R_j}$.
    Overall, since~$S$ is not dominating but restorable, this contradicts Lemma~\ref{critical:restorable/lem}.
\end{proof}

\begin{lemma}
\label{6-cc:implied-interspace:DUC-DUC/lem}
    Let~$\coherentConfig$ be a coherent configuration, and let~$(R,B,Y)$ be path in~$\quotientGraph{\coherentConfig}$ with~$\abs{B} = 6$.
    \begin{enumerate}
        \item
        If~$\disjointCliques{3}{\frac{\abs{R}}{3},2} \in \interspace{R}{B}$ and~$\disjointCliques{3}{\frac{\abs{Y}}{3},2} \in \interspace{Y}{B}$,
        then there is a union~$U$ of basis relations in~$\interspace{R}{Y}$ such that~$(R \disjointUnion Y, U) \cong \disjointCliques{3}{\frac{\abs{R}}{3},\frac{\abs{Y}}{3}}$.

        \item
        If both~$\interspace{R}{B}$ and~$\interspace{Y}{B}$ have interspace pattern~$\ipsixTriangle$,
        then there is a union~$U$ of basis relations in~$\interspace{R}{Y}$ such that~$(R \disjointUnion Y, U) \cong \disjointCliques{2}{\frac{\abs{R}}{2},\frac{\abs{Y}}{2}}$.
    \end{enumerate}
\end{lemma}
\begin{proof}
    Let~$x \in \{2,3\}$.
    Let~$U_B \in \interspace{R}{B}$ and~$U_Y \in \interspace{Y}{B}$ such that~$(R \disjointUnion B,U_B) \cong \disjointCliques{x}{\nicefrac{\abs{R}}{x},\nicefrac{6}{x}}$ and~$(Y \disjointUnion B,U_Y) \cong \disjointCliques{x}{\nicefrac{\abs{Y}}{x},\nicefrac{6}{x}}$ respectively.
    Choose~$r\in R$ and~$y\in Y$ so that~$z \coloneqq \abs{rU_B \cap yU_Y}>0$.
    Observe that~$\abs{r U_B} = \nicefrac{6}{x}$.

    If~$z = \frac{6}{x}$, then for all~$y' \in Y,r' \in R$ either~$y'U_Y = r'U_B$ or~$y'U_Y = B \setminus r'U_B$.
    Thus, by coherence, there is a union of basis relations~$U$ such that~$(R \disjointUnion Y, U) \cong \disjointCliques{x}{\nicefrac{\abs{R}}{x},\nicefrac{\abs{Y}}{x}}$.

    Now assume~$x = 2$ and~$z = 2$.
    There is a constituent~$G$ in~$\ul(\inducedCC{B})$ isomorphic to~$\disjointCliques{2}{3}$, which, by Lemma~\ref{small-cc:induced-cc/lem}, is unique within~$\inducedCC{B}$.
    Let~$B'$ be a connected component of~$G$.
    There is a fiber~$F \in \{R,Y\}$ and a vertex~$f \in F$ such that~$fU_F \neq B'$ and~$fU_F \neq B \setminus B'$.
    However, the graph~$G - fU_F$ is not regular, which contradicts coherence.

    The case where~$x = 2$ and~$z=1$ works in a similar fashion.

    Finally, suppose~$x = 3$ and~$z = 1$.
    Let~$\mathcal{R}$ (respectively~$\mathcal{Y}$) be the twin classes of~$R$ (respectively~$Y$) with respect to~$B$ and~$U_R$ (respectively~$U_Y$).
    Note that $\mathcal{R}$ and~$\mathcal{Y}$ are both equipartitions due to coherence.
    For all~$R' \in \mathcal{R}$ there is exactly one~$Y' \in\mathcal{Y}$ such that the following holds:
    for all~$r' \in R', y' \in Y'$ we have~$y'U_Y \cap r'U_R = \emptyset$ and for all~$r' \in R', y' \in Y \setminus Y'$ we have~$\abs{y'U_Y \cap r'U_R} = 1$.
    Thus each part in~$\mathcal{R}$ has a unique corresponding part in~$\mathcal{Y}$.
    Hence there is a union~$U$ of basis relation in~$\interspace{R}{Y}$ such that~$(R \disjointUnion Y, U) \cong \disjointCliques{3}{\nicefrac{\abs{R}}{3},\nicefrac{\abs{Y}}{3}}$.
\end{proof}

For the convenience of the reader, Table~\ref{6-cc:implied-interspace:DUC-DUC/tab} gives an overview of the results of Lemmas~\ref{critical:adjacent-interspace-cycle/lem} and~\ref{6-cc:implied-interspace:DUC-DUC/lem} for size~$6$ fibers.


\begin{table}[tbp]
    \centering\def\arraystretch{1.2}%
    \begin{tabular}{|ll|lll|}
        \hline
        \multicolumn{2}{|l|}{\multirow{2}{*}{}}                   & \multicolumn{3}{c|}{$\interspace{B}{Y}$}                                                         \\
        \multicolumn{2}{|l|}{}                                    & \multicolumn{1}{l|}{\cycle{12}} & \multicolumn{1}{l|}{\disjointCliques{3}{2,2}} &~$\disjointCliques{2}{3,3}$ \\ \hline
        \multicolumn{1}{|l}{\multirow{3}{*}{$\interspace{R}{B}$}} &~$\cycle{12}$                    & \multicolumn{1}{l|}{\matching{6}}             & \multicolumn{1}{l|}{\disjointCliques{3}{2,2}}   & $R \times B$ \\ \cline{2-5}
        \multicolumn{1}{|l}{}                                     &~$\disjointCliques{3}{2,2}$      & \multicolumn{1}{l|}{\disjointCliques{3}{2,2}} & \multicolumn{1}{l|}{$\disjointCliques{3}{2,2}$} & $R \times B$ \\ \cline{2-5}
        \multicolumn{1}{|l}{}                                     &~$\disjointCliques{2}{3,3}$      & \multicolumn{1}{l|}{$R \times B$}                  & \multicolumn{1}{l|}{$R \times B$}                    &~$\disjointCliques{2}{3,3}$ \\ \hline
    \end{tabular}
    \caption{For a path~$(R,B,Y)$ of size~$6$ fibers, the isomorphism type of a constituent in~$\interspace{R}{Y}$ depending on the isomorphism types of the constituents contained in~$\interspace{R}{B}$ and~$\interspace{B}{Y}$.}
    \label{6-cc:implied-interspace:DUC-DUC/tab}
\end{table}

\begin{lemma}
\label{critical:6-cc:restorable:DUC:deg1/lem}
    Let~$\coherentConfig$ be a critical coherent configuration, and let~$S \in \fibers{\coherentConfig}$ such that~$\abs{S} = 6$.
    If~$\{S\}$ is not dominating, then there is no interspace pattern~$\mathfrak{P}$ in the following list such that for all fibers~$R$ adjacent to~$S$ in~$\quotientGraph{\coherentConfig}$ the interspace~$\interspace{R}{S}$ has this interspace pattern~$\mathfrak{P}$:
    \begin{multicols}{3}
        \begin{enumerate}
            \item $\ipsixTriangle$
            \item $\ipsixMatching$
            \item $\ipsixMatchingTwice$
            \item $\ipsixMatchingAndCycle$
            \item $\ipsixMatchingMatching$
            \item $\ipsixTriangleComplement$
        \end{enumerate}
    \end{multicols}
    In particular, if~$\{S\}$ is not dominating, then there is at least one fiber~$R$ adjacent to~$S$ in~$\quotientGraph{\coherentConfig}$ such that~$\interspace{R}{S}$ has neither interspace pattern~$\ipsixMatching$ nor~$\ipsixMatchingTwice$.
\end{lemma}
\begin{proof}
    We will show that~$S$ is restorable if all interspace incident to~$S$ have the same interspace pattern chosen from the list.
    Since by assumption fiber~$S$ is not dominating, this contradicts Lemma~\ref{critical:restorable/lem}.

    Before we start, we first argue that by Lemma~\ref{critical:restorable:take-care/lem} we may assume~$\colorDeg{S} = 1$.
    Let~$R$ be a fiber adjacent to~$S$ in~$\quotientGraph{\coherentConfig}$ and assume that~$\interspace{R}{S}$ has the interspace pattern~$\mathfrak{P}$ where~$\mathfrak{P}$ is a pattern of the list above.
    If~$\mathfrak{P}$ is~$\ipsixMatchingAndCycle$ or~$\ipsixTriangleComplement$, then by coherence there is a constituent isomorphic to~$\cycle{6}$ or~$\clique{3,3}$ respectively.
    By Lemma~\ref{small-cc:induced-cc/lem} this constituent is unique within~$S$.
    If~$\mathfrak{P}$ is~$\ipsixTriangle$, then there is a constituent isomorphic to~$\disjointCliques{2}{3}$ or~$2\overrightarrow{C_3}\clique{3,3}$.
    By Lemma~\ref{small-cc:induced-cc/lem} this constituent is unique in~$\ul(\inducedCC{S})$.
    Assume that there are at least two constituents~$G,G'$ in~$\inducedCC{S}$ isomorphic to~$\disjointCliques{3}{2}$, and let~$r \in R$.
    Suppose that~$\mathfrak{P}$ is~$\ipsixMatching$.
    Let~$U$ be a basis relation of~$\interspace{R}{S}$ other than~$U^1_1(\interspace{R}{S})$.
    If~$A(G) \neq A^1(\interspace{R}{S})$, then there are exactly two vertices in~$rU$ which are adjacent in~$G$.
    This implies that~$U$ cannot be a basis relation.
    Suppose that~$\mathfrak{P}$ is~$\ipsixMatchingTwice$.
    Let~$U,U'$ be distinct basis relations of~$\interspace{R}{S}$ other than~$U^1_1(\interspace{R}{S})$.
    If~$A(G) \neq A^1(\interspace{R}{S})$, then there is exactly one vertex of~$rU$ which is adjacent to a vertex of~$rU'$ in~$G$.
    This implies that~$U$ cannot be a basis relation, giving a contradiction.
    Thus, if~$\mathfrak{P}$ is~$\ipsixMatching$ or~$\ipsixMatchingTwice$, then there is exactly one constituent in~$\inducedCC{S}$ isomorphic to~$\disjointCliques{3}{2}$.
    Finally, if~$\mathfrak{P}$ is~$\ipsixMatchingMatching$, then there are exactly three constituents in~$\inducedCC{S}$ each isomorphic to~$\disjointCliques{3}{2}$ by Lemma~\ref{small-cc:induced-cc/lem}.
    In each case, we conclude due to the uniqueness of the constituents in~$\inducedCC{S}$: if~$R'$ is another fiber adjacent to~$S$ other than~$R$ such that~$\interspace{R'}{S}$ has the interspace pattern~$\mathfrak{P}$, then~$R$ takes care of~$R'$ with regard to restorability of~$S$.
    By Lemma~\ref{critical:restorable:take-care/lem}, we may therefore assume that~$\colorDeg{S} = 1$.

    Let~$R$ be the only fiber adjacent to~$S$ in~$\quotientGraph{\coherentConfig}$.
    For each case, Lemma~\ref{interspace-pattern:partition-structure/lem} determines the partition structure of~$\interspace{R}{S}$ as follows.
    If~$\mathfrak{P}$ is~$\ipsixTriangle$, then~$\partitionStructure{L,S} \cong K_2$.
    If~$\mathfrak{P}$ is~$\ipsixMatching$, then~$\partitionStructure{L,S} \cong K_3$.
    If~$\mathfrak{P}$ is~$\ipsixMatchingTwice$, then the isomorphism type of~$\partitionStructure{L,S}$ is determined by the isomorphism types of its constituents~$(3K_2,3K_2,2\overrightarrow{C_3},3K_2)$ or~$(\overrightarrow{C_3})$.
    If~$\mathfrak{P}$ is~$\ipsixMatchingAndCycle$, then the isomorphism type of~$\partitionStructure{L,S}$ is determined by the isomorphism types of its constituents~$(C_6,2C_3,3K_2)$.
    If~$\mathfrak{P}$ is~$\ipsixMatchingMatching$, then~$\partitionStructure{L,S} \cong C_3$.
    If~$\mathfrak{P}$ is~$\ipsixTriangleComplement$, then the isomorphism type of~$\partitionStructure{L,S}$ is determined by the isomorphism types of its constituents~$(\rookGraph{3},\rookGraph{3})$.
    In all cases, all automorphisms of~$\partitionStructure{L,S}$ are induced by an automorphism of~$\inducedCC{S}$.
    Thus each automorphism of~$\inducedCC{R}$ extends to an automorphism of~$\inducedCC{R \cup S}$.

    Finally, let~$R,R' \in \fibers{\coherentConfig}$ such that~$\interspace{R}{S}$ has the interspace pattern~$\ipsixMatchingTwice$ and~$\interspace{R'}{S}$ has the interspace pattern~$\ipsixMatching$.
    Then~$R$ is taken care of by~$R'$.
    By Lemma~\ref{critical:restorable:take-care/lem} and the previous reasoning, fiber~$S$ is restorable if for all fibers~$R$ adjacent to~$S$ in~$\quotientGraph{\coherentConfig}$ the interspace~$\interspace{R}{S}$ has the interspace pattern~$\ipsixMatching$ or~$\ipsixMatchingTwice$.
    Since~$\{S\}$ is not dominating, this contradicts Lemma~\ref{critical:restorable/lem}.
\end{proof}

\begin{figure}[tbp]
    \centering
    \begin{subfigure}{.6\textwidth}
        \centering

\begin{tikzpicture}[scale=0.8]
    \begin{scope}[shift = {(0,0)}]
        \draw[black] (-1.5,-1.25) rectangle ++(3,2.5);
        \node at (-1.25,-0.95) {$S$};
        \foreach \x in {0,...,5}
        {
            \node[vertex,lightgray,fill=lightgray] (p\x) at (\x*60+60:1) {};
        }
        \draw[edge,red] (p0) -- (p1);
        \draw[edge,red] (p1) -- (p2);
        \draw[edge,red] (p2) -- (p3);
        \draw[edge,red] (p3) -- (p4);
        \draw[edge,red] (p4) -- (p5);
        \draw[edge,red] (p5) -- (p0);

        \draw[edge,blue] (p0) -- (p2);
        \draw[edge,blue] (p2) -- (p4);
        \draw[edge,blue] (p4) -- (p0);
        \draw[edge,blue] (p1) -- (p3);
        \draw[edge,blue] (p3) -- (p5);
        \draw[edge,blue] (p5) -- (p1);

        \draw[edge,darkyellow] (p0) -- (p3);
        \draw[edge,darkyellow] (p1) -- (p4);
        \draw[edge,darkyellow] (p2) -- (p5);
    \end{scope}

    \begin{scope}[shift = {(-3.5,-4)}]
        \filldraw[draw=black,fill=lightred] (-1.6,-2) rectangle ++(3.2,3);
        \node at (0,-1) {$R$};
        \node at (0,0) {$\ipsixMatchingAndCycle$};
        \coordinate (largeRed) at (0,1) {};
    \end{scope}

    \begin{scope}[shift = {(0,-4)}]
        \filldraw[draw=black,fill=lightblue]   (-1.6,-2) rectangle ++(3.2,3);
        \node at (0,-1) {$B$};
        \node at (0,0) {$\ipsixTriangle$};
        \coordinate (largeBlue) at (0,1) {};
    \end{scope}

    \begin{scope}[shift = {(3.5,-4)}]
        \filldraw[draw=black,fill=lightyellow] (-1.6,-2) rectangle ++(3.2,3);
        \node at (0,-1) {$Y$};
        \node at (0,0.25) {$\ipsixMatching$};
        \node at (0,-0.25) {$\ipsixMatchingTwice$};
        \coordinate (largeYellow) at (0,1) {};
    \end{scope}
    \draw[edge,black,thick] (largeBlue) -- (0,-1.25);
    \draw[edge,black,thick] (largeRed) -- (-1,-1.25);
    \draw[edge,black,thick] (largeYellow) -- (1,-1.25);

\end{tikzpicture}
        \subcaption{$|\ul(\inducedCC{S})| = 4$.}
        \label{restorable:6cc:monochormatic-cycle/fig}
    \end{subfigure}
    \hfil
    \begin{subfigure}{.39\textwidth}
        \centering

\begin{tikzpicture}[scale=0.8]

    \begin{scope}[shift = {(0,0)}]
        \draw[black] (-1.5,-1.25) rectangle ++(3,2.5);
        \node at (-1.25,-0.95) {$S$};
        \foreach \x in {0,...,5}
        {
            \node[vertex,lightgray,fill=lightgray] (p\x) at (\x*60+60:1) {};
        }
        \draw[edge,darkred] (p0) -- (p1);
        \draw[edge,red] (p1) -- (p2);
        \draw[edge,darkred] (p2) -- (p3);
        \draw[edge,red] (p3) -- (p4);
        \draw[edge,darkred] (p4) -- (p5);
        \draw[edge,red] (p5) -- (p0);

        \draw[arrow,thick,blue] (p0) -- (p2);
        \draw[arrow,thick,blue] (p2) -- (p4);
        \draw[arrow,thick,blue] (p4) -- (p0);
        \draw[arrow,thick,blue] (p1) -- (p5);
        \draw[arrow,thick,blue] (p5) -- (p3);
        \draw[arrow,thick,blue] (p3) -- (p1);

        \draw[edge,darkyellow] (p0) -- (p3);
        \draw[edge,darkyellow] (p1) -- (p4);
        \draw[edge,darkyellow] (p2) -- (p5);
    \end{scope}

    \begin{scope}[shift = {(-1.75,-4)}]
        \filldraw[draw=black,fill=lightred] (-1.6,-2) rectangle ++(3.2,3);
        \node at (0,-1) {$R$};
        \node at (0,0) {$\ipsixMatchingMatching$};
        \coordinate (largeRed) at (0,1) {};
    \end{scope}

    \begin{scope}[shift = {(1.75,-4)}]
        \filldraw[draw=black,fill=lightblue]   (-1.6,-2) rectangle ++(3.2,3);
        \node at (0,-1) {$B$};
        \node at (0,0) {$\ipsixTriangle$};
        \coordinate (largeBlue) at (0,1) {};
    \end{scope}

    \draw[edge,black,thick] (largeRed) -- (-0.5,-1.25);
    \draw[edge,black,thick] (largeBlue) -- (0.5,-1.25);

\end{tikzpicture}
        \subcaption{$|\ul(\inducedCC{S})| = 5$.}
        \label{restorable:6cc:alternating-cycle/fig}
    \end{subfigure}
    \caption[Visualisation of the partition of the neighborhood.]{Visualisation of the partition of the neighborhood in Lemma~\ref{critical:6-cc:restorable:large-neighborhood/lem}.}
    \label{restorable:6cc/fig}
\end{figure}

\begin{lemma}
\label{critical:6-cc:restorable:large-neighborhood/lem}
    Let~$\coherentConfig$ be a critical coherent configuration.
    Let~$S \in \fibers{\coherentConfig}$ such that~$\abs{S} = 6$, $\abs{\ul(\inducedCC{S})} > 3$, and~$\{S\}$ is not dominating.
    There exist disjoint subsets~$\mathcal{R},\mathcal{B}, \mathcal{Y}$ partitioning the neighborhood of~$S$ in~$\quotientGraph{\coherentConfig}$ with~$\mathcal{R},\mathcal{B}$ non-empty (and~$\mathcal{Y}$ possibly empty) such that
    \begin{enumerate}
        \item for all~$R \in \mathcal{R}$ the interspace~$\interspace{R}{S}$ has either the interspace pattern~$\ipsixMatchingAndCycle$ or the interspace pattern~$\ipsixMatchingMatching$,
        \item for all~$B \in \mathcal{B}$ the interspace~$\interspace{B}{S}$ has the interspace pattern~$\ipsixTriangle$,
        \item for all~$Y \in \mathcal{Y}$ the interspace~$\interspace{Y}{S}$ has either the interspace pattern~$\ipsixMatching$ or the interspace pattern~$\ipsixMatchingTwice$.
    \end{enumerate}
    In particular, $\colorDeg{S} \geq 2$. (See Figure~\ref{restorable:6cc/fig})
\end{lemma}
\begin{proof}
    Let~$\mathcal{N}$ be the neighborhood of the fiber~$S$ in~$\quotientGraph{\coherentConfig}$.
    Due to Lemma~\ref{small-cc:induced-cc/lem}, Corollary~\ref{large-small-interspace:classification:uniqueness/cor}, and~$\abs{\ul(\inducedCC{S})} > 3$, for all~$R \in \mathcal{N}$ the interspace~$\interspace{R}{S}$ has one of the following interspace patterns:
    $\ipsixMatchingAndCycle$, $\ipsixMatchingMatching$, $\ipsixMatching$, $\ipsixMatchingTwice$, or~$\ipsixTriangle$.
    Thus~$\mathcal{N}$ partitions into~$\{\mathcal{R},\mathcal{B}, \mathcal{Y}\}$ as defined in the statement.
    Before we prove that~$S$ is restorable if~$\mathcal{R}$ or~$\mathcal{B}$ are empty, we first argue that by Lemma~\ref{critical:restorable:take-care/lem} we may assume the each part of~$\{\mathcal{R},\mathcal{B}, \mathcal{Y}\}$ has at most size~$1$.

    Assume that~$|\ul(\inducedCC{S})| = 4$.
    For all fibers~$R \in \mathcal{R}$ the interspace~$\interspace{R}{S}$ has the interspace pattern~$\ipsixMatchingAndCycle$.
    By coherence, there are the constituents in~$\inducedCC{S}$ isomorphic to~$\cycle{6}$, $\disjointCliques{3}{2}$, or~$\disjointCliques{2}{3}$, which by Lemma~\ref{small-cc:induced-cc/lem} are unique in~$S$.
    Hence for all~$\mathcal{F} \in \{\mathcal{R},\mathcal{B}, \mathcal{Y}\}$ the following holds:
    for all distinct~$F,F' \in \mathcal{F}$ the fiber~$F$ takes care of~$F'$ with regard to restorability of~$S$.
    Thus by Lemma~\ref{critical:restorable:take-care/lem} we may assume~$|\mathcal{F}| \leq 1$.
    Furthermore, if~$R \in \mathcal{R}$ exists, then~$R$ takes care of all~$Y \in \mathcal{Y}$, and thus by Lemma~\ref{critical:restorable:take-care/lem} we may assume~$\mathcal{Y} = \emptyset$ if~$|\mathcal{R}| = 1$.

    Assume that~$|\ul(\inducedCC{S})| = 5$.
    For all fibers~$R \in \mathcal{R}$ the interspace~$\interspace{R}{S}$ has the interspace pattern~$\ipsixMatchingMatching$, and hence~$\mathcal{Y} = \emptyset$ due to coherence.
    In~$\ul(\inducedCC{S})$ the constituent isomorphic to~$\disjointCliques{2}{3}$ is unique, and thus by Lemma~\ref{critical:restorable:take-care/lem} we may assume~$|\mathcal{B}| \leq 1$.
    Further, since for all constituents~$G$ isomorphic to~$\disjointCliques{3}{2}$ there is a basis relation~$U \in \interspace{R}{S}$ such that for all~$r \in R$ the set~$rU$ induces a~$2$-clique in~$G$, for any distinct~$R,R' \in \mathcal{R}$ the fiber~$R$ takes care of~$R'$ with regard to restorability of~$S$.
    Thus we may assume~$|\mathcal{R}| \leq 1$ because of Lemma~\ref{critical:restorable:take-care/lem}.

    In the following, we show that~$S$ is restorable if~$\mathcal{R}$ or~$\mathcal{B}$ are empty.
    Assume that there is exactly one non-empty~$\mathcal{F} \in \{\mathcal{R},\mathcal{B}, \mathcal{Y}\}$.
    Hence, by the reasoning above, we may assume that there is only one fiber adjacent to~$S$ in~$\quotientGraph{\coherentConfig}$ which is not taken care of by some other fiber.
    Since Lemma~\ref{critical:6-cc:restorable:DUC:deg1/lem} implies that there are at least two incident interspace having different interspace patterns, a contradiction arises.
    Together with the reasoning above, in the last remaining case, the part~$\mathcal{R}$ is empty while~$|\mathcal{B}| = |\mathcal{Y}| = 1$.

    Let~$(Y,S,B)$ be a path in~$\quotientGraph{\coherentConfig}$ such that~$Y \in \mathcal{Y}$ and~$B \in \mathcal{B}$.
    Lemma~\ref{interspace-pattern:partition-structure/lem} determines~$\partitionStructure{B,S}$ and~$\partitionStructure{Y,S}$ as follows.
    While~$\partitionStructure{B,S}$ is isomorphic to~$K_2$, the isomorphism type of the partition structure~$\partitionStructure{Y,S}$ is determined by isomorphism types of its constituents~$(3K_2,3K_2,2\overrightarrow{C_3},3K_2)$ or~$(K_3)$ or~$(\overrightarrow{C_3})$.
    We define~$\mathfrak{S} \coloneqq \partitionStructure{B,S} \boxplus \partitionStructure{Y,S}$.
    All automorphisms of~$\mathfrak{S}$ are induced by an automorphism of~$\inducedCC{S}$.
    Thus all automorphisms of~$\inducedCC{B \cup Y}$ extend to an automorphism of~$\inducedCC{B \cup Y \cup S}$.
\end{proof}

\begin{lemma}
    \label{critical:6-cc:restorable:cycle/lem}
    Let~$\coherentConfig$ be a critical coherent configuration, and let~$R,S$ be distinct fibers of~$\coherentConfig$ such that~$\abs{R} = \abs{S} =6$, $\cycle{12} \in \interspace{R}{S}$, and~$\{R,S\}$ is not dominating.
    There exist disjoint, non-empty subsets~$\mathcal{B}, \mathcal{Y}$ of the neighborhood of~$S$ without~$R$ in~$\quotientGraph{\coherentConfig}$ such that
    \begin{enumerate}
        \item for all~$B \in \mathcal{B}$ the interspace~$\interspace{B}{S}$ has the interspace pattern~$\ipsixTriangle$ and
        \item for all~$Y \in \mathcal{Y}$ the interspace~$\interspace{Y}{S}$ has either the interspace pattern~$\ipsixMatching$ or the interspace pattern~$\ipsixMatchingTwice$.
    \end{enumerate}
    In particular, $\colorDeg{S} \geq 3$.
\end{lemma}
\begin{proof}
    Lemma~\ref{critical:adjacent-interspace-cycle/lem} implies the uniqueness of~$R$ with respect to~$S$ and vice versa.
    Since the set~$\{R,S\}$ is not dominating, Lemma~\ref{critical:6-cc:restorable:large-neighborhood/lem} applies to both~$R$ and~$S$:
    hence for both~$R$ and~$S$ there is a fiber~$B$ such that~$\interspace{B}{S}$ (respectively~$\interspace{B}{R}$) has the interspace pattern~$\ipsixTriangle$.
    Furthermore, we only need to consider~$\mathcal{Y} = \emptyset$.
    Recall that by Lemma~\ref{6-cc:implied-interspace:DUC-DUC/lem} for every fiber~$Y$ for which~$\interspace{Y}{R}$ has the interspace pattern~$\ipsixMatching$ or~$\ipsixMatchingTwice$ the interspace~$\interspace{Y}{S}$ has the interspace pattern~$\ipsixMatching$ or~$\ipsixMatchingTwice$ as well.
    Therefore, we only need to argue restorability in the following case, which will then contradict Lemma~\ref{critical:restorable/lem}:

    Let~$(B,R,S,B')$ be a possibly closed path in~$\quotientGraph{\coherentConfig}$ such that~$\cycle{12} \in \interspace{R}{S}$ and both~$\interspace{B}{R}$ and~$\interspace{B'}{S}$ have the interspace pattern~$\ipsixTriangle$, and let~$U \in \interspace{R}{S}$.
    By arguments similar to the ones which appeared in the proof of Lemma~\ref{critical:6-cc:restorable:large-neighborhood/lem}, we may assume that~$B$ (respectively~$B'$) are unique regarding~$R$ (respectively~$S$).
    If~$B \neq B'$, then let~$\mathfrak{S}$ be the direct sum of~$\partitionStructure{B,R}$ and~$\partitionStructure{B',S}$, both of which are isomorphic to~$K_2$.
    If~$B = B'$ and~$\equivalenceClasses{B,R} = \equivalenceClasses{B',S}$, then we set~$\mathfrak{S} \coloneqq \partitionStructure{B,R \cup S}$ and~$\mathfrak{S} \cong K_2$.
    If~$B = B'$ and~$\equivalenceClasses{B,R} \neq \equivalenceClasses{B',S}$, then we set~$\mathfrak{S} \coloneqq \partitionStructure{B,R \cup S}$,~$\equivalenceClasses{B,R}$ and~$\equivalenceClasses{B',S}$ are fully intersecting, and the isomorphism type of~$\mathfrak{S}$ is determined by the isomorphism types of its constituents~$(2K_2,2K_2,2K_2)$.
    All automorphisms of~$\mathfrak{S}$ are induced by an automorphism of~$\inducedCC{S \cup R}$.
    Thus each automorphism of~$\inducedCC{B \cup B'}$ extends to an automorphism of~$\inducedCC{B \cup R \cup S \cup B'}$.
    Overall~$\{R,S\}$ is not dominating, but~$R \cup S$ is restorable.
\end{proof}

Next we deal with interspaces between fibers~$R,B$ with~$|R| < |B|$, that contain a constituent isomorphic to~$\interspaceFourSix$.
The interspace pattern of such an interspace depends on its direction:
the interspace~$\interspace{R}{B}$ has the interspace pattern~$\ipsixMatchingComplementD$, while interspace~$\interspace{B}{R}$ has the interspace pattern~$\ipfourClique$.

\begin{lemma}
\label{critical:4cc-6cc/lem}
    Let~$\coherentConfig$ be a critical coherent configuration, and let~$R,S \in \fibers{\coherentConfig}$ such that the interspace~$\interspace{R}{S}$ has the interspace pattern~$\ipsixMatchingComplementD$.
    \begin{enumerate}
        \item\label{4cc-6cc:part1}
        If~$\colorDeg{R} = 1$, then~$\{S\}$ is dominating and for all fibers~$Y$ adjacent to~$S$ in~$\quotientGraph{\coherentConfig}$ other than~$R$ the interspace~$\interspace{Y}{S}$ has either the interspace pattern~$\ipsixMatching$ or the interspace pattern~$\ipsixMatchingTwice$.
        \item
        If~$\interspaceFourSix \in \interspace{R}{S}$, then~$\colorDeg{R} = 1$.
        \item
        If all fibers of~$\coherentConfig$ are small, then~$\wldim{\coherentConfig} = 2$.
    \end{enumerate}
\end{lemma}
\begin{proof}
    We begin by proving the first statement.
    Set~$U \coloneqq U^1_1(\interspace{R}{S})$.
    Due to the interspace pattern of~$\interspace{R}{S}$, we have either~$\clique{2,2,2} \in \inducedCC{S}$ or~$\overrightarrow{C_3}[K_2] \in \inducedCC{S}$.
    Thus by Lemma~\ref{small-cc:induced-cc/lem} we have~$\disjointCliques{3}{2} \in \inducedCC{S}$.
    Hence~$|\ul(\inducedCC{S})| = 3$.

    Next we show that all interspaces incident to~$S$ in~$\quotientGraph{\coherentConfig}$ (except~$\interspace{R}{S}$) have either the interspace pattern~$\ipsixMatching$ or the interspace pattern~$\ipsixMatchingTwice$.
    Towards a contradiction, suppose that there is a fiber~$Y$ adjacent to~$S$ in~$\quotientGraph{\coherentConfig}$ other than~$R$ such that~$\interspace{Y}{S}$ has an interspace pattern other than~$\ipsixMatching$ or~$\ipsixMatchingTwice$.
    Since~$\clique{2,2,2} \in \inducedCC{S}$ or~$\overrightarrow{C_3}[K_2] \in \inducedCC{S}$, the interspace~$\interspace{Y}{S}$ has the interspace pattern~$\ipsixMatchingComplement$ or~$\ipsixMatchingComplementD$.
    We define~$U' \coloneqq U^1_1(\interspace{Y}{S})$.
    For all~$P \in \equivalenceClasses{Y,S}$ there is exactly one~$Q\in \equivalenceClasses{R,S}$ such that for all~$p \in P,q\in Q$ we have either~$p U = q U'$ or~$p U \cap q U' = \emptyset$.
    Thus interspace~$\interspace{Y}{R}$ is not homogeneous, which contradicts~$\colorDeg{R} = 1$.

    Before we proof that the set~$R \cup S$ is restorable, we argue that by Lemma~\ref{critical:restorable:take-care/lem} we may assume that~$\colorDeg{S} = 2$:
    let~$\mathcal{Y}$ the set of all fibers adjacent to~$B$ other than~$R$.
    By the previous reasoning for all~$Y \in \mathcal{Y}$ the interspace~$\interspace{Y}{S}$ has the interspace pattern~$\ipsixMatching$ or~$\ipsixMatchingTwice$.
    Since~$|\ul(\inducedCC{B})| = 3$, there is~$Y \in \mathcal{Y}$ which takes care of all~$Y' \in \mathcal{Y}$ (other than~$Y$).
    Hence we may assume that~$\mathcal{Y} = \{Y\}$ and~$\colorDeg{S} = 2$.

    Lemma~\ref{interspace-pattern:partition-structure/lem} determines the partition structure of~$\interspace{R}{S}$ and~$\interspace{Y}{S}$ as follows.
    While the partition structure~$\partitionStructure{R,S}$ is isomorphic to~$K_4$, either~$\partitionStructure{Y,S}$ is isomorphic to~$K_3$ or~$\overrightarrow{C_3}$ or the isomorphism type of~$\partitionStructure{Y,S}$ is determined by the isomorphism types of its constituents~$(3K_2,3K_2,2\overrightarrow{C_3},3K_2)$.
    All automorphisms of~$\partitionStructure{Y,S}$ are induced by an automorphism of~$\inducedCC{S}$ and all automorphisms of~$\inducedCC{S}$ are induced by an automorphism of~$\partitionStructure{R,S}$.
    Thus all automorphisms of~$\inducedCC{Y}$ extend to an automorphism of~$\inducedCC{Y \cup S \cup R}$, and hence~$R \cup B$ is restorable.
    By Lemma~\ref{critical:restorable/lem} the set~$\{R,S\}$ is dominating.
    Since~$\colorDeg{R} = 1$, we conclude that~$\{S\}$ is dominating.

    Next we prove the second statement.
    So let~$R,S$ be fibers of~$\coherentConfig$ such that~$\interspaceFourSix \in \interspace{R}{S}$.
    Observe that the interspace~$\interspace{R}{S}$ has the interspace pattern~$\ipsixMatchingComplementD$.
    Set~$U \coloneqq U^1_1(\interspace{R}{S})$.
    By Lemma~\ref{small-cc:interspace-implies-cc/lem} it holds that~$\clique{4} \in \inducedCC{R}$.
    Towards a contradiction, suppose~$\colorDeg{R} > 1$ and thus there is a fiber~$B$ other than~$R$ such that~$\interspace{B}{R}$ has interspace pattern~$\ipfourClique$.
    Let~$U' = U^1_1(\interspace{B}{R})$.
    For all~$s \in S$ there is exactly one~$P \in \equivalenceClasses{B,R}$ such that for all~$p \in P$ we have~$pU' = sU^\star$.
    Hence~$\disjointCliques{6}{\frac{\abs{B}}{6},1} \in \interspace{B}{S}$.
    This violates Lemma~\ref{critical:star/lem}.

    Finally, we show the third statement with the help the previous two.
    Let~$\mathcal{Y}$ be the set of fibers adjacent to~$S$ other than~$R$.
    Since all fibers, including those in~$\mathcal{Y}$, are small, by the second statement of the lemma, we have~$\colorDeg{R} = 1$.
    Together with the first statement, we conclude~$\{S\}$ is dominating and for all fibers~$Y \in \mathcal{Y}$ we have~$\disjointCliques{3}{2,2} \in \interspace{Y}{S}$.
    Therefore, for all distinct~$Y,Y' \in \mathcal{Y}$, Lemma~\ref{6-cc:implied-interspace:DUC-DUC/lem} implies that~$\disjointCliques{3}{2,2} \in \interspace{Y}{Y'}$.
    Note that for all~$Y \in \mathcal{Y}$ the interspace~$\interspace{Y}{R}$ is homogeneous.
    Therefore for all~$Y \in \mathcal{Y}$ there is no interspace incident to~$Y$ in~$\quotientGraph{\coherentConfig}$ containing a constituent isomorphic to~$\disjointCliques{2}{3,3}$.
    Furthermore, there are no dominating fibers in~$\mathcal{Y}$.
    Lemma~\ref{critical:6-cc:restorable:cycle/lem} implies that~$\cycle{12} \notin \interspace{Y}{Y'}$ for all~$Y,Y' \in \mathcal{Y}$.
    By Lemma~\ref{critical:6-cc:restorable:DUC:deg1/lem} we have~$\mathcal{Y} = \emptyset$ and~$\vertices(\coherentConfig) = R \cup S$.
    Recall that~\wltwo identifies cliques, and complements of matchings.
    Furthermore on small fibers, the set of constituent graphs defines the coherent configuration.
    Since for all distinct~$rr' \in R^2$ we have~$\abs{r U \cap r' U} = 1$, \wltwo distinguishes~$\interspaceFourSix$ from~$\disjointCliques{2}{2,3}$.
    This yields~$\wldim{\coherentConfig} = 2$.
\end{proof}

\begin{lemma}
\label{critical:7-cc:leviFano/lem}
    If~$\coherentConfig$ is a critical coherent configuration, then there is no path~$(R,B,Y)$ in~$\quotientGraph{\coherentConfig}$ such that~$\leviFano \in \interspace{R}{B}$ and~$\leviFano \in \interspace{B}{Y}$.
\end{lemma}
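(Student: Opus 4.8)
The plan is to assume such a path $(R,B,Y)$ exists and to derive a contradiction with the criticality of $\coherentConfig$ by a short counting argument. First I would record the basic data: the three fibers are pairwise distinct (as $(R,B,Y)$ is a path), $\leviFano$ is bipartite with both sides of size~$7$, so $\abs{R}=\abs{B}=\abs{Y}=7$, and the path gives $R\sim B\sim Y$ in $\quotientGraph{\coherentConfig}$. I fix a basis relation $U\in\interspace{R}{B}$ with $(R\disjointUnion B,U)\cong\leviFano$ and, by transposing a $\leviFano$-constituent of $\interspace{B}{Y}$, a basis relation $U'\in\interspace{Y}{B}$ with $(Y\disjointUnion B,U')\cong\leviFano$; in particular $\intDegree{U}=\intDegree{U'}=3$ and, for every $b\in B$, exactly three $y\in Y$ satisfy $b\in yU'$. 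The incidence property of $\leviFano$ (any two elements on one side have a unique common neighbour) gives that distinct $y,y'\in Y$ satisfy $yU'\neq y'U'$. A double count over $B$ then yields, for every fixed $r\in R$,
\[
    \sum_{y\in Y}\abs{rU\cap yU'}=\sum_{b\in rU}\abs{\{y\in Y:b\in yU'\}}=3\cdot 3=9 .
\]

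Next I split on whether $R$ and $Y$ are adjacent in $\quotientGraph{\coherentConfig}$. If they are not, then $(R,B,Y)$ is an induced path, so $\abs{rU\cap yU'}$ is independent of $y$, and the identity forces $7\mid 9$, which is absurd; this is exactly Theorem~\ref{interspace-divisor/thm} (whose conclusion that $\abs{B}$ is not prime is violated by $\abs{B}=7$). If $R$ and $Y$ are adjacent, then by Lemma~\ref{small-cc:interspace/lem}(6) either $\matching{7}\in\interspace{R}{Y}$ or $\leviFano\in\interspace{R}{Y}$. In the first case $\interspace{R}{Y}$ has a basis relation of interspace degree~$1$, contradicting Lemma~\ref{critical:star/lem}. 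In the second case, Table~\ref{small-cc:classificaiton-small-interspaces/tab} tells us that $\interspace{R}{Y}$ consists of exactly the $\leviFano$-constituent $W$, with $\intDegree{W}=3$, and its complement $\bar W=(R\times Y)\setminus W$, with $\intDegree{\bar W}=4$. By coherence $\abs{rU\cap yU'}$ equals a constant $c_W$ for all $y\in rW$ and a constant $c_{\bar W}$ for all $y\in r\bar W$, both in $\{0,1,2,3\}$, so the identity becomes $3c_W+4c_{\bar W}=9$; the only solution is $c_W=3$, $c_{\bar W}=0$. Hence $rU\subseteq yU'$, and therefore $rU=yU'$, for each of the three vertices $y\in rW$ — contradicting that distinct vertices of $Y$ have distinct neighbourhoods under~$U'$.

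The step I expect to need the most care is the structural input that $\interspace{R}{Y}=\{W,\bar W\}$, i.e.\ that in a critical configuration the complement of a Fano incidence between two size-$7$ fibers does not refine further; I would take this from Table~\ref{small-cc:classificaiton-small-interspaces/tab}, but it can also be re-derived on the spot (such an interspace cannot contain $\matching{7}$ by Lemma~\ref{critical:star/lem}, hence contains a $\leviFano$-constituent by Lemma~\ref{small-cc:interspace/lem}(6), and the coarsest coherent configuration having such a constituent in that interspace already has rank two there). The only other things to be careful about are the elementary check that $3c_W+4c_{\bar W}=9$ has a unique solution in $\{0,1,2,3\}$ and keeping the transposes straight when passing between $\interspace{B}{Y}$ and $\interspace{Y}{B}$.
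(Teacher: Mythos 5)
Your proof is correct, and it takes a genuinely different route from the paper's. The paper argues directly on the multiset of values $\abs{rU\cap yU'}$: by the Fano incidence structure these values are either $(3,1,1,1,1,1,1)$ or $(2,2,2,1,1,1,0)$, and in either distribution Property~\ref{coherent-config:wl2} turns the unique $y$ with value~$3$ (respectively value~$0$) into a $\matching{7}$ in $\interspace{R}{Y}$, at which point Lemma~\ref{critical:star/lem} finishes. You instead record the double-count $\sum_{y}\abs{rU\cap yU'}=9$ once, then dispatch the induced-path case wholesale via Theorem~\ref{interspace-divisor/thm} ($\abs{B}=7$ prime), and handle the adjacent case by pinning down $\interspace{R}{Y}=\{W,\bar W\}$ (via Lemma~\ref{small-cc:interspace/lem}(6), Lemma~\ref{critical:star/lem}, and Lemma~\ref{critical:cycle/lem} to rule out degrees~$1$ and~$2$) and solving $3c_W+4c_{\bar W}=9$. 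Both are sound; yours leans more heavily on already-established general machinery and is shorter as a result, whereas the paper's is self-contained and makes the underlying Fano-plane counting explicit. Your flagged worry about the rank-two structure of $\interspace{R}{Y}$ is the right thing to double-check, and your on-the-spot re-derivation of it is correct and avoids any circular dependence on Table~\ref{small-cc:classificaiton-small-interspaces/tab}.
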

\begin{proof}
    Towards a contradiction, suppose that such a path exists.
    Let~$U \in \interspace{R}{B}$ and~$U' \in \interspace{Y}{B}$ be such that~$(R \disjointUnion B, U) \cong \leviFano$ and~$(Y \disjointUnion B, U') \cong \leviFano$.
    Suppose~$r \in R$.
    \begin{itemize}
        \item
        If there is~$y \in Y$ such that~$\abs{r U \cap y U'} = 3$, then~$\matching{7} \in \interspace{R}{Y}$ due to Property~\ref{coherent-config:wl2}.

        \item
        Assume there is~$y \in Y$ such that~$\abs{r U \cap y U'} \in \{1,2\}$.
        Therefore, there are exactly three vertices~$y'\in \mathcal{Y}$ such that~$\abs{r U \cap y' U'} = 2$, and there are exactly three vertices~$y''\in \mathcal{Y}$ such that~$\abs{r U \cap y'' U'} = 1$.
        Hence, there is exactly one vertex~$y''' \in \mathcal{Y}$ such that~$\abs{r U \cap y''' U'} = 0$.
        Thus~$\matching{7} \in \interspace{R}{Y}$.

        \item
        Assume there is~$y \in Y$ such that~$\abs{r U \cap y U'} = 0$.
        Suppose that there is~$y' \in  Y \setminus \{y\}$ such that~$\abs{r U \cap y' U'} = 0$.
        Thus there are~$\{b,b'\} \subset B \setminus r U$ such that~$\abs{b U'^\star \cap b' U'^\star} = 2$.
        Since this violates the properties of~$\leviFano$, we have~$\matching{7} \in \interspace{R}{Y}$.
    \end{itemize}
    By Lemma~\ref{critical:star/lem}, the coherent configuration~$\coherentConfig$ is not critical in any of these cases.
\end{proof}

\begin{lemma}
\label{restorable:two-3K2,2-ip:fully-intersecting/lem}
    Let~$\coherentConfig$ be a critical coherent configuration and~$(S,L,S')$ be a path in~$\quotientGraph{\coherentConfig}$ such that interspaces~$\interspace{L}{S}$ and~$\interspace{L}{S'}$ both have the interspace pattern~$\ipsixMatching$.
    If~$|L| = 9$, $\colorDeg{L} = 2$, $|\ul(\inducedCC{L})| = 4$, and~$\equivalenceClasses{L,S}$ and~$\equivalenceClasses{L,S'}$ are fully intersecting, then~$\{L\}$ is dominating.
\end{lemma}
\begin{proof}
    Let the vertex sets~$S_0,S_1,S_2$ and~$S'_0,S'_1,S'_2$ induce the $2$-cliques in~$(S,\arcs^1(\interspace{L}{S}))$ and~$(S',\arcs^1(\interspace{L}{S'}))$ respectively.
    Let~$U  = U^1_1(\interspace{L}{S})$ and~$U' = U^1_1(\interspace{L}{S'})$.
    To write the proof in a concise manner, we rename the vertices of~$L$ as follows.
    Assume that~$L = \{\ell^r_c \mid r,c \in \{0,1,2\}\}$.
    Further, assume that~$\ell^r_c U = S_c$ and~$\ell^r_c U' = S'_r$ for all~$r,c \in \{0,1,2\}$.
    Let~$\psi$ be an automorphism of~$\inducedCC{S \cup S'}$.
    Due to the interspace pattern of~$\interspace{L}{S}$ and~$\interspace{L}{S'}$, we are only interested in the permutation of~$\{S_0,S_1,S_2\}$ and~$\{S'_0,S'_1,S'_2\}$.
    Thus, let~$\psi'$ be the permutation of~$\{S_0,S_1,S_2\}$ and~$\{S'_0,S'_1,S'_2\}$ induced by~$\psi$, let~$\pi$ be projection mapping~$S_r$ to~$r$, and~$\mu$ be the projection of~$S'_c$ to~$c$.
    We now define~$\varphi$ to be an automorphism on~$L \cup S$:
    \[
        \varphi(v) \coloneqq
        \begin{cases}
            \ell^{(\pi \circ \psi' \circ \reverseFunction{\pi})(r)}_{(\mu \circ \psi' \circ \reverseFunction{\mu})(c)}, &\text{if } v = \ell^r_c, \\
            \psi(v),                                                                                                    &\text{if } v \in S \cup S'.
        \end{cases}
    \]
    Thus~$\varphi$ is a extension of~$\psi$ and~$L$ is restorable.
    Therefore by Lemma~\ref{critical:restorable/lem} the set~$\{L\}$ is dominating.
\end{proof}


\section{WL-dimension of graphs with small fibers}
\label{wldim-small/sec}

In this section, we prove an upper bound of~$\frac{n}{20} + o(n)$ on the WL-dimension of a coherent configuration~$\coherentConfig$ on~$n$ vertices under the condition that all fibers of~$\coherentConfig$ are small.
Recall that by Theorem~\ref{critical:small-cc/thm}, if all fibers a of critical coherent configuration~$\coherentConfig$ are small, then either all fibers have the same size, or for all~$R \in \fibers{\coherentConfig}$ we have~$\abs{R} \in \{4,6\}$.
Utilizing this theorem, we separately deal with each case and prove upper bounds on the WL-dimension depending on the fiber size of~$\coherentConfig$.

\begin{lemma}
\label{4-cc:cdeg-decrease/lem}
    Let~$\coherentConfig$ be a critical coherent configuration such that all fibers of~$\coherentConfig$ have size~$4$.
    A fiber~$R$ of~$\coherentConfig$ and all fibers adjacent to it in~$\quotientGraph{\coherentConfig}$ split entirely into tiny fibers in~$\coherentConfig_r$ where~$r \in R$.

    In particular, if~$\colorDeg{R} \geq 4$, then~$\wldim{\coherentConfig}=2$ or  there is a critical coherent configuration~$\coherentConfig'$ for which $\wldim{\coherentConfig} \leq  1 + \wldim{\coherentConfig_r} = 1 + \wldim{\coherentConfig'}$ with~$\abs{\vertices(\coherentConfig)} \leq \abs{\vertices(\coherentConfig')} - 20$.
\end{lemma}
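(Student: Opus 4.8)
I would prove the two assertions separately, beginning with the structural claim about $\coherentConfig_r$. Individualize $r\in R$. Then $\{r\}$ is a fiber of $\coherentConfig_r$ and $R\setminus\{r\}$ is covered by fibers of $\coherentConfig_r$ of size at most $3$, so that part of the splitting of $R$ is immediate. Now fix a fiber $B$ adjacent to $R$ in $\quotientGraph{\coherentConfig}$; by hypothesis $\abs{B}=4$. Applying Theorem~\ref{global-argument:large-small-interspace:classification} and Corollary~\ref{large-small-interspace:classification:uniqueness/cor} with $L=R$ and $S=B$, the interspace $\interspace{R}{B}$ contains exactly one of the patterns $\ipfourClique$, $\ipfourMatching$, $\ipfourCycle$; the pattern $\ipfourClique$ is impossible here, since by Table~\ref{interspace-pattern:partition-size/tab} it would force $\abs{\partition{R,B}}=6$, whereas $\partition{R,B}$ is an equipartition of the $4$-element set $R$. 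Hence, equivalently by Table~\ref{small-cc:classificaiton-small-interspaces/tab}, $\interspace{R}{B}$ is of type $(\cycle{8},\cycle{8})$ or $(\disjointCliques{2}{2,2},\disjointCliques{2}{2,2})$, and in either case there is a basis relation $U\in\interspace{R}{B}$ with $\intDegree{U}=2$. In $\coherentConfig_r$ the set $rU$ and its complement $B\setminus rU$ each become unions of fibers of size $2$, so $B$ splits entirely into tiny fibers, which completes the structural claim (note this part uses no assumption on $\colorDeg{R}$).

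For the recursive consequence, assume $\colorDeg{R}\geq 4$. By the structural claim, in $\coherentConfig_r$ the fiber $R$ together with its at least four neighbors — at least five distinct size-$4$ fibers of $\coherentConfig$, hence at least $20$ vertices — all lie in tiny fibers. By Theorem~\ref{preliminaries:wldim-individualizations/thm}, $\wldim{\coherentConfig}\leq 1+\max\{2,\wldim{\coherentConfig_r}\}$. Suppose first that $\wldim{\coherentConfig_r}\geq 2$. Among all coherent configurations of the form $\coherentConfig_r-\mathcal{M}$ with $\mathcal{M}$ a union of fibers and $\wldim{\coherentConfig_r-\mathcal{M}}=\wldim{\coherentConfig_r}$ (a nonempty family, as $\mathcal{M}=\emptyset$ qualifies), let $\coherentConfig'$ be one with the fewest vertices. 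If $\coherentConfig'$ were not critical then, since $\wldim{\coherentConfig'}=\wldim{\coherentConfig_r}\geq 2$, some nonempty union of fibers could be removed from $\coherentConfig'$ without changing its dimension, yielding a strictly smaller member of the family, a contradiction; so $\coherentConfig'$ is critical. By Lemma~\ref{critical:tiny-CC/lem} it has no tiny fiber, and a tiny fiber of $\coherentConfig_r$ is either deleted in passing to $\coherentConfig'$ or survives as a tiny fiber of $\coherentConfig'$; hence all of the $\geq 20$ vertices above are deleted, so $\coherentConfig'$ has at least $20$ fewer vertices than $\coherentConfig$ (equivalently $\abs{\vertices(\coherentConfig')}\leq\abs{\vertices(\coherentConfig)}-20$). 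Together with $\wldim{\coherentConfig}\leq 1+\wldim{\coherentConfig_r}=1+\wldim{\coherentConfig'}$ this gives the recursive statement.

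It remains to treat the case $\wldim{\coherentConfig_r}\leq 1$, where the argument above only yields $\wldim{\coherentConfig}\leq 3$, and one must upgrade this to $\wldim{\coherentConfig}=2$ (criticality already gives $\wldim{\coherentConfig}\geq 2$). I expect this to be the main obstacle. The plan here is to exploit how severely the hypotheses constrain $\coherentConfig$: every interspace incident to $R$ is a $\cycle{8}$ or a $\disjointCliques{2}{2,2}$, all of which WL-$2$ identifies, while Lemmas~\ref{critical:star/lem}, \ref{critical:cycle/lem}, and~\ref{critical:restorable/lem} sharply limit how the remaining size-$4$ fibers can attach; the goal is to argue either that such a configuration is already WL-$2$-identified, or directly that $\wldim{\coherentConfig_r}\leq 1$ is incompatible with $\wldim{\coherentConfig}=3$, reducing the case to a finite check. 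Apart from this, the only routine facts used are that restricting a coherent configuration to a union of fibers again yields a coherent configuration whose fibers are exactly the surviving ones, and that the refinement triggered by individualizing $r$ separates $rU$ from $B\setminus rU$ in every neighbor $B$ of $R$.
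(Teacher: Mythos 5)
Your structural claim and your main recursive branch are correct and essentially reproduce the paper's argument. Two small remarks on the first part: the paper reaches the classification of a size-$4$/size-$4$ interspace directly from Lemma~\ref{small-cc:interspace/lem}(3) together with Lemma~\ref{critical:star/lem} (which kills $\matching{4}$), rather than through Theorem~\ref{global-argument:large-small-interspace:classification}, which is formulated for a \emph{large} fiber $L$ against a small $S$; your detour works (and you also cite Table~\ref{small-cc:classificaiton-small-interspaces/tab}, which is the clean reference), but the direct citation avoids stretching that theorem beyond its intended setting. Your treatment of the case $\wldim{\coherentConfig_r}\geq 2$ — individualize $r$, observe that $R$ and its $\geq 4$ neighbors ($\geq 20$ vertices) lie in tiny fibers of $\coherentConfig_r$, then pass to a vertex-minimal union-of-fibers restriction with the same dimension and invoke Lemma~\ref{critical:tiny-CC/lem} to see the tiny fibers are gone — is a careful spelling-out of what the paper compresses into ``restore criticality''; note also that the paper's displayed inequality $\abs{\vertices(\coherentConfig)} \leq \abs{\vertices(\coherentConfig')} - 20$ is a typo with the roles of $\coherentConfig$ and $\coherentConfig'$ swapped, exactly as you read it.

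Where you diverge is the final case. You split on $\wldim{\coherentConfig_r}$ and conclude that for $\wldim{\coherentConfig_r}\leq 1$ you must upgrade $\wldim{\coherentConfig}\leq 3$ to $\wldim{\coherentConfig}=2$, which you leave as a plan (``reduce to a finite check''). The paper does not do this: its proof splits on $\wldim{\coherentConfig}$ instead — if $\wldim{\coherentConfig}=2$ the first alternative of the conclusion holds outright (criticality already gives $\geq 2$), and if $\wldim{\coherentConfig}>2$ it takes the recursive branch, implicitly using that one individualization drops the dimension by at most one so that $\wldim{\coherentConfig_r}\geq \wldim{\coherentConfig}-1\geq 2$. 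The corner case you worry about ($\wldim{\coherentConfig}=3$ while $\wldim{\coherentConfig_r}\leq 1$, where Theorem~\ref{preliminaries:wldim-individualizations/thm} only yields $\wldim{\coherentConfig}\leq 1+\max\{2,\wldim{\coherentConfig_r}\}=3$) is simply not addressed in the paper's proof either, and it is immaterial for how the lemma is used: Corollary~\ref{4-cc:wldim/cor} applies it inside a bound of the form $2+k+\max\{4,\cdot\}$, which absorbs any such constant. So your proposal is complete relative to the paper's own standard once you adopt the paper's case split; the ``main obstacle'' you flag is a genuine sharpening that the paper never proves, and you should not plan to prove it as part of this lemma.
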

\begin{proof}
    Let~$r \in R$ and~$\{B_1, \dots, B_{\colorDeg{R}}\}$ be the neighborhood of~$R$ in~$\quotientGraph{\coherentConfig}$.
    Since~$|R| = 4$, fiber~$R$ splits into tiny fibers in~$\coherentConfig_r$.
    By Lemma~\ref{small-cc:interspace/lem}, all interspaces incident to~$R$ in~$\quotientGraph{\coherentConfig}$ contain a constituent isomorphic to~$\cycle{8}$ or~$\disjointCliques{2}{2,2}$.
    Hence in~$\coherentConfig_r$ all fibers adjacent to~$R$ in~$\quotientGraph{\coherentConfig}$ split in half.
    This means~$\coherentConfig_r- \{R,B_1,\ldots,B_{\colorDeg{R}}\}$ has WL-dimension at least~$\wldim{\coherentConfig}-1$.
    If~$\wldim{\coherentConfig}>2$, we obtain a critical coherent configuration~$\coherentConfig'$ such that~$\vertices(\coherentConfig') \subseteq \vertices(\coherentConfig) \setminus (R \cup \bigcup_{i=1}^{\colorDeg{R}}B_i)$.
    Observe that~$|R \cup \bigcup_{i=1}^{\colorDeg{R}} B_i| \geq 20$ if~$\colorDeg{R} \geq 4$.
\end{proof}

The lemma essentially allows us to reduce the degree of the quotient graph. For graphs with maximum degree at most~$3$, we can make use of a bound by Fomin and H{\o}ie on the pathwidth of the graph.

\begin{theorem}[see \cite{pathwidthCubicGraphs}]
\label{tw-cubic-graph/lem}
    Every graph~$G$ on~$n$ vertices with maximum degree at most~$3$ satisfies~$\treewidth(G) \leq \pathwidth(G) \leq \frac{n}{6} + o(n)$.
\end{theorem}

We use the theorem to bound the WL-dimension for certain graphs as follows.

\begin{theorem}
\label{4-cc:cfi-wldim/thm}
    Let~$\coherentConfig$ be a critical coherent configuration on~$n$ vertices such that all fibers of~$\coherentConfig$ have size~$4$ and there is no interspace containing a constituent isomorphic to~$\cycle{8}$.
    If all fibers of~$\coherentConfig$ have at most~$3$ neighbors in~$\quotientGraph{\coherentConfig}$, then~$\wldim{\coherentConfig} \leq \frac{n}{24} + o(n)$.
\end{theorem}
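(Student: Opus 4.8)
The plan is to recognise $\coherentConfig$ as a Cai--F\"{u}rer--Immerman structure over the quotient graph $H \coloneqq \quotientGraph{\coherentConfig}$ and then to bound $\wldim{\coherentConfig}$ by the treewidth of $H$ up to an additive constant. First I would pin down the interspaces: since all fibers have size $4$, Lemma~\ref{small-cc:interspace/lem}(3) forces every interspace between adjacent fibers to contain a constituent isomorphic to $\cycle{8}$, $\matching{4}$, or $\disjointCliques{2}{2,2}$, and the first is excluded by hypothesis while the second is a star excluded by Lemma~\ref{critical:star/lem}; hence by Table~\ref{small-cc:classificaiton-small-interspaces/tab} each interspace has type $(\disjointCliques{2}{2,2},\disjointCliques{2}{2,2})$ and induces a pairing of each of its two fibers (a $2K_2$-constituent inside the fiber, by Lemma~\ref{small-cc:interspace-implies-cc/lem}(2)). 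Because a critical non-homogeneous configuration has no small fiber that splits into at most three modules (Lemma~\ref{critical:small-cc:module/lem}), the pairings of a fiber coming from its different neighbours cannot all coincide, which by Table~\ref{small-cc:classificaiton-small-cc/tab} is possible only if $\inducedCC{R}\cong(2K_2,2K_2,2K_2)$ and $R$ has exactly three neighbours realising its three distinct pairings. So $H$ is cubic, connected (Lemma~\ref{crictial:quotientGraph-connected/lem}), on $n/4$ vertices, and each fiber together with its three interspaces is precisely the inner part of a degree-$3$ CFI gadget (with the edge vertices contracted away); i.e.\ $\coherentConfig$ is a CFI-structure over $H$.

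Next I would bound the treewidth: applying Theorem~\ref{tw-cubic-graph/lem} to $H$ with a fixed small $\epsilon$ (and noting that below its threshold $\wldim{\coherentConfig}\le n=\mathcal{O}(1)$ trivially) gives $\treewidth(H)\le\pathwidth(H)\le \tfrac16\cdot\tfrac n4 + \mathcal{O}(1)=\tfrac{n}{24}+\mathcal{O}(1)$; fix a tree decomposition of $H$ of width $w\coloneqq\treewidth(H)$.

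It then remains to show $\wldim{\coherentConfig}\le w+\mathcal{O}(1)$, and this is where I would run the classical CFI pebble-game argument. Let $\coherentConfig'$ satisfy $\coherentConfig'\simeq_{w+\mathcal{O}(1)}\coherentConfig$; then $\coherentConfig'$ has the same coherent configuration, hence the same cubic base graph $H$ and the same gadgets, so $\coherentConfig'$ is itself a CFI-structure over $H$, which we may assume is not isomorphic to $\coherentConfig$. In the bijective pebble game Spoiler sweeps through the fixed tree decomposition, keeping just one pebble on one vertex of the fiber of each base vertex of the current bag (at most $w+1$ pebbles, using that a fiber has size $4=2^2$, so a single pebble inside it already fixes the gadget up to its order-two CFI automorphism --- playing the role of two edge-vertex pebbles in the classical picture) together with $\mathcal{O}(1)$ spare pebbles to inspect the interspaces crossing the current separator. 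The $\disjointCliques{2}{2,2}$-rigidity enters via the standard CFI separator lemma: a fully pebbled bag forces Duplicator's bijection to respect the twist pattern on either side of the corresponding separator of $H$, so once the sweep encloses a non-trivially twisted interspace inside an already processed region, the forced twist pattern contradicts the type of some pebbled interspace and Spoiler wins (whereas if $\coherentConfig'\cong\coherentConfig$ no contradiction ever arises). Hence $\wldim{\coherentConfig}\le w+\mathcal{O}(1)\le \tfrac{n}{24}+\mathcal{O}(1)$.

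The hard part will be this last step: carrying out the sweep with only $w+\mathcal{O}(1)$ pebbles in total --- one per bag vertex, not one per gadget vertex --- so that the final bound is genuinely $\tfrac{n}{24}+\mathcal{O}(1)$ rather than a constant multiple of it. This hinges on the precise shape of the size-$4$ gadgets (a single pebble carrying ``two coordinate bits'') and on the exact bookkeeping of the CFI separator lemma, namely identifying which pebbled interspace becomes inconsistent and verifying that Duplicator is left with no valid twist pattern once the twisted part has been enclosed.
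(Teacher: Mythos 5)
Your approach is essentially the paper's: classify the interspaces (only $\disjointCliques{2}{2,2}$ survives once $\cycle{8}$ is excluded by hypothesis and $\matching{4}$ by Lemma~\ref{critical:star/lem}), conclude $\inducedCC{R}\cong(2K_2,2K_2,2K_2)$, bound $\treewidth(\quotientGraph{\coherentConfig})\le\frac{n/4}{6}+\varepsilon$ via Theorem~\ref{tw-cubic-graph/lem}, and then win the bijective pebble game by placing one pebble per fiber appearing in the current bag. A few details are worth flagging. First, your module-based argument (Lemma~\ref{critical:small-cc:module/lem}) does show that at least two distinct $2K_2$-pairings are induced and hence $\inducedCC{R}\cong(2K_2,2K_2,2K_2)$, but it does \emph{not} by itself establish that $R$ has exactly three neighbours realising all three pairings; the paper gets $\colorDeg{R}\geq 3$ from Lemma~\ref{critical:4-cc:restorable:DUC/lem} instead, and also rules out $|\ul(\inducedCC{R})|=3$ via Lemma~\ref{critical:4cc:restorable:2,C4/lem}. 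In fact $3$-regularity is not needed at all: Theorem~\ref{tw-cubic-graph/lem} only requires maximum degree $\leq 3$, which is your hypothesis. Second, your parenthetical ``a single pebble fixes the gadget up to its order-two CFI automorphism'' is incorrect in a direction that helps you: with $\inducedCC{R}\cong(2K_2,2K_2,2K_2)$ the three distinct pairings separate the remaining three vertices, so individualizing one vertex of $R$ makes $R$ \emph{discrete}, not merely cut its automorphism group in half. This is precisely the observation the paper uses, and it is what makes the bookkeeping you label ``the hard part'' go through without a separate ``CFI separator lemma'': one cop on a fiber in the cops-and-robbers game translates to one pebble in the bijective game that fully pins down the fiber, so the Otachi--Schweitzer correspondence gives $\wldim{\coherentConfig}\leq\treewidth(\quotientGraph{\coherentConfig})+2+\varepsilon$ directly, exactly as the paper argues.
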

\begin{proof}[Proof sketch]
    If there is a dominating fiber~$R$ for which~$\colorDeg{R} \leq 3$, then~$n\leq 16$.
    Thus $\wldim{\coherentConfig} \leq 3$.
    So assume there is no dominating fiber in~$\coherentConfig$.
    Since there are no interspaces containing constituents isomorphic to~$\cycle{8}$, Lemma~\ref{critical:4cc:restorable:2,C4/lem} implies~$|\ul(\inducedCC{R})| = 4$ for all~$R \in \fibers{\coherentConfig}$.
    By Lemma~\ref{critical:4-cc:restorable:DUC/lem}, the quotient graph~$\quotientGraph{\coherentConfig}$ is~$3$-regular.
    Furthermore, whenever we individualize a vertex~$r$ in a fiber~$R$, $R$ splits into singletons in~$\coherentConfig_r$.

    By Lemma~\ref{tw-cubic-graph/lem} we have~$\treewidth(\quotientGraph{\coherentConfig}) \leq \frac{\abs{\fibers{\coherentConfig}}}{6} + o(\abs{\fibers{\coherentConfig}})$.
    Let~$\coherentConfig'$ be another coherent configuration satisfying the same assumptions.
    Thus in the cops and robbers game played on the quotient graphs~$\quotientGraph{\coherentConfig}$ and~$\quotientGraph{\coherentConfig'}$ there is a winning strategy for~$t$ cops, where~$t \leq \frac{\abs{\fibers{\coherentConfig}}}{6} + o(\abs{\fibers{\coherentConfig}})$.
    We mimic this winning strategy now on the corresponding coherent configurations~$\coherentConfig$ and~$\coherentConfig'$ in the bijective pebble game as follows (see e.g.~\cite{DBLP:conf/swat/OtachiS14} for more information on the connection between the two games).
    Whenever we place a pebble on fiber~$R$ in~$\quotientGraph{\coherentConfig}$ (respectively~$\quotientGraph{\coherentConfig'}$), we now place the corresponding pebble on some vertex~$r$ of the corresponding fiber~$R$ in~$\coherentConfig$ (respectively~$\coherentConfig'$).
    By the reasoning above this fiber splits into singletons after applying the coherent closure.
    Therefore, Spoiler has a winning strategy if we add two additional pebbles to the game.

    We conclude~$\wldim{\coherentConfig} \leq \frac{\abs{\fibers{\coherentConfig}}}{6} + o(\fibers{\coherentConfig}) \leq \frac{n}{24} + o(n)$.
\end{proof}

\begin{corollary}
\label{4-cc:wldim/cor}
    Let~$\coherentConfig$ be a critical coherent configuration on~$n$ vertices such that all fibers of~$\coherentConfig$ have size~$4$.
    Then~$\wldim{\coherentConfig} \leq \frac{n}{20} + o(n)$.
\end{corollary}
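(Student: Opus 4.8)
The plan is to establish the bound by induction on the order $n = \abs{\vertices(\coherentConfig)}$, splitting the inductive step according to whether $\quotientGraph{\coherentConfig}$ contains a fiber of large color degree. The base case, in which $n$ is bounded by an absolute constant, is trivial since always $\wldim{\coherentConfig} \leq n = \calO{1}$. For the inductive step I would distinguish two cases based on the maximum color degree occurring in $\coherentConfig$.

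First suppose some fiber $R$ satisfies $\colorDeg{R} \geq 4$. Here I would invoke Lemma~\ref{4-cc:cdeg-decrease/lem}: individualizing a vertex $r \in R$ either forces $\wldim{\coherentConfig} = 2$ (and we are done), or produces a critical coherent configuration $\coherentConfig'$ with $\wldim{\coherentConfig} \leq 1 + \wldim{\coherentConfig'}$ and $\abs{\vertices(\coherentConfig')} \leq n - 20$, obtained from $\coherentConfig_r$ by deleting fibers. To close the induction I need $\coherentConfig'$ to again have all fibers of size $4$, which is the crux of the argument. Since every fiber of $\coherentConfig$ has size exactly $4$, after individualization each fiber of $\coherentConfig_r$ is either an original, unsplit fiber of size $4$ or a proper part of one, hence of size at most $3$; thus every fiber of $\coherentConfig_r$ is of size $4$ or tiny. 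As the fibers of $\coherentConfig'$ form a subcollection of those of $\coherentConfig_r$, and $\coherentConfig'$ being critical has no tiny fiber by Lemma~\ref{critical:tiny-CC/lem}, all of its fibers have size $4$, so the induction hypothesis applies and gives $\wldim{\coherentConfig'} \leq (n-20)/20 + \calO{1}$. The vertex budget "$-20$" is exactly calibrated so that $1 + (n-20)/20 = n/20$, yielding $\wldim{\coherentConfig} \leq n/20 + \calO{1}$.

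Second suppose every fiber has color degree at most $3$. If some interspace contains a constituent isomorphic to $\cycle{8}$, say $\cycle{8} \in \interspace{R_0}{R_1}$, then the contrapositive of Lemma~\ref{critical:4cc:restorable:cycle/lem} tells us that $\{R_0, R_1\}$ is dominating; since $R_0$ and $R_1$ each have at most $3$ neighbors in $\quotientGraph{\coherentConfig}$, the configuration has at most $8$ fibers, so $n \leq 32$ and $\wldim{\coherentConfig} = \calO{1}$. Otherwise no interspace contains a $\cycle{8}$ constituent, and then Theorem~\ref{4-cc:cfi-wldim/thm} applies verbatim (all four of its hypotheses — criticality, all fibers of size $4$, no $\cycle{8}$ constituent, and at most $3$ neighbors per fiber — are in force), giving $\wldim{\coherentConfig} \leq n/24 + \calO{1} \leq n/20 + \calO{1}$.

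The main obstacle I anticipate is the bookkeeping in the first case: ensuring that the configuration produced by the degree-reduction step remains inside the class to which the induction hypothesis applies, i.e.\ that all its fibers still have size $4$. The key is the trivial-but-essential observation that a $4$-element fiber can only break into tiny pieces when it splits, combined with the fact that critical configurations contain no tiny fibers; and then checking that the vertex saving of $20$ and the unit loss from the individualization combine to give exactly the coefficient $1/20$ rather than a slightly larger constant. Everything else is a direct appeal to the already-established structural results.
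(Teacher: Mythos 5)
Your proof is correct and takes essentially the same route as the paper: reduce the quotient-graph degree to at most $3$ via Lemma~\ref{4-cc:cdeg-decrease/lem} while restoring criticality (you phrase this as an explicit induction on~$n$, the paper as iterating $k$ times and analyzing $k$ at the end — the bookkeeping is equivalent, since each application trades one individualization for at least $20$ saved vertices); handle the $\cycle{8}$ case by dominance via Lemma~\ref{critical:4cc:restorable:cycle/lem}; and invoke Theorem~\ref{4-cc:cfi-wldim/thm} for the remaining degree-$\leq 3$, $\cycle{8}$-free case. Your identification of the key invariant — that splitting a size-$4$ fiber produces only tiny pieces, which criticality (Lemma~\ref{critical:tiny-CC/lem}) then excludes, so the class ``all fibers size $4$'' is closed under the reduction — is exactly the point the paper flags at the end of the first paragraph of its proof.
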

\begin{proof}
    First, we apply Lemma~\ref{4-cc:cdeg-decrease/lem} repeatedly and restore the criticality after each application until the preconditions of Lemma~\ref{4-cc:cdeg-decrease/lem} are no longer satisfied. Say this happens~$k$ times.
    We then have a critical coherent configuration~$\coherentConfig'$ with~$\wldim{\coherentConfig} \leq k + \wldim{\coherentConfig}$ and whose quotient graph~$\quotientGraph{\coherentConfig'}$ has maximum degree~$3$.
    Observe that~$\abs{\vertices(\coherentConfig')} \leq n - 20 \cdot k$. Note that in the proof of Lemma~\ref{4-cc:cdeg-decrease/lem} we repeatedly take the coherent closure after individualizations and restore criticality, thus all fibers of $\coherentConfig'$ have size exactly~$4$.

    If there are distinct fibers~$R,R'$ in~$\coherentConfig'$ such that~$\cycle{8} \in \interspace{R}{R'}$, then by Lemma~\ref{critical:4cc:restorable:cycle/lem} the set~$\{R,R'\}$ is dominating.
    Thus~$\abs{\vertices(\coherentConfig')} \leq 24$ and~$\wldim{\coherentConfig'} \leq 4\in O(1)$.

    Now assume there is no interspace in~$\coherentConfig'$ containing a constituent isomorphic to~$\cycle{8}$.
    Then we apply Theorem~\ref{4-cc:cfi-wldim/thm} and obtain~$\wldim{\coherentConfig'} \leq \frac{\abs{\vertices(\coherentConfig')}}{24} + o(n)$.
    Altogether we have
    \[
        \wldim{\coherentConfig} \leq 2+k + \max \{4, \wldim{\coherentConfig'}\} \leq 2 + k + \max\left\{ 4, \frac{n - 20 \cdot k}{24} + o(n) \right\}.
    \]
    Observe that~$k \leq \left\lceil \frac{n}{20} \right\rceil$. Thus~$\wldim{\coherentConfig} \leq \frac{n}{20} + o(n)$.
\end{proof}

\begin{lemma}
\label{5-cc:wldim/lem}
    Let~$\coherentConfig$ be a critical coherent configuration such that all fibers of~$\coherentConfig$ have size~$5$.
    Then~$\abs{\fibers{\coherentConfig}} = 1$ and~$\wldim{\coherentConfig} \leq 2$.
\end{lemma}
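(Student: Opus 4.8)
The plan is to first show that $\coherentConfig$ has exactly one fiber, and then read off $\wldim{\coherentConfig}\le 2$ from the classification of small homogeneous coherent configurations.

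For the first part I would argue by contradiction. Suppose $\abs{\fibers{\coherentConfig}}\ge 2$. By Lemma~\ref{crictial:quotientGraph-connected/lem} the quotient graph $\quotientGraph{\coherentConfig}$ is connected, hence it contains an edge $(R,B)$. Since all fibers have size $5$ we have $2\le\abs{R}=\abs{B}\le 7$, so Lemma~\ref{small-cc:interspace/lem}(4) applies and yields $\matching{5}\in\interspace{R}{B}$, i.e.\ a basis relation of interior degree $1$ and thus $\minimalDegree{R}{B}=1$. This is exactly the configuration forbidden by Lemma~\ref{critical:star/lem}, a contradiction. Hence $\fibers{\coherentConfig}$ consists of a single fiber of size $5$, and $\coherentConfig$ is homogeneous.

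For the second part, $\coherentConfig$ is now a homogeneous coherent configuration of order $5$, so by Lemma~\ref{small-cc:induced-cc/lem} (equivalently, Table~\ref{small-cc:classificaiton-small-cc/tab}) it is, up to isomorphism, one of $(\clique{5})$, $(\cycle{5},\cycle{5})$, or $(\overrightarrow{\cycle{5}},\overrightarrow{\cycle{5}})$; equivalently, it is the coherent closure of $\clique{5}$, of $\cycle{5}$, or of $\overrightarrow{\cycle{5}}$. In each case \wltwo identifies $\coherentConfig$, hence $\wldim{\coherentConfig}\le 2$: for $\clique{5}$ and $\cycle{5}$ this is immediate from the fact that \wltwo identifies every uncolored graph on fewer than $16$ vertices, and for $\overrightarrow{\cycle{5}}$ it follows from an analogous finite check, using that \wltwo detects directed cycles and that on small fibers the isomorphism type of a homogeneous configuration is determined by its constituents.

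I do not anticipate a genuine obstacle; both halves are short applications of results already available. The only point that needs a word of care is the bound $\wldim{\coherentConfig}\le 2$ in the directed case $(\overrightarrow{\cycle{5}},\overrightarrow{\cycle{5}})$, which is not literally an instance of the quoted ``order less than $16$'' statement for undirected graphs; but any structure not distinguished from it by \wltwo must share its structure constants, and is therefore, by the classification of order-$5$ homogeneous configurations, isomorphic to it. Finally, since $\coherentConfig$ is critical we also have $\wldim{\coherentConfig}\ge 2$, so in fact $\wldim{\coherentConfig}=2$ and $\coherentConfig$ is one of the two non-complete options.
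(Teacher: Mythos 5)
Your proof is correct and takes essentially the same approach as the paper: use Lemma~\ref{small-cc:interspace/lem}(4) to show that any interspace between two size-$5$ fibers contains a matching, then use criticality (you via Lemma~\ref{critical:star/lem}, the paper via Lemma~\ref{critical:cycle/lem}, which itself reduces to Lemma~\ref{critical:star/lem}) to conclude there is only one fiber, and finally read the dimension bound off the classification in Table~\ref{small-cc:classificaiton-small-cc/tab}. One small point in your favor: the paper's proof asserts only ``$\clique{5}\in\inducedCC{R}$ or $\cycle{5}\in\inducedCC{R}$'' and then invokes the ``order $<16$'' fact for undirected graphs, silently passing over the $(\overrightarrow{\cycle{5}},\overrightarrow{\cycle{5}})$ case; your version handles that case explicitly (via matching structure constants plus the classification), which is more careful than the paper's wording, though the conclusion is of course the same.
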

\begin{proof}
    By Lemmas~\ref{critical:cycle/lem} and~\ref{small-cc:interspace/lem}, there is only one fiber~$R$ in~$\coherentConfig$.
    By Lemma~\ref{small-cc:induced-cc/lem}, either~$\clique{5} \in \inducedCC{R}$ or~$\cycle{5} \in \inducedCC{R}$.
    Thus~$\wldim{\coherentConfig} \leq 2$.
\end{proof}

\begin{lemma}
\label{6-cc:wldim/lem}
    If~$\coherentConfig$ is a critical coherent configuration whose fibers each have size~$4$ or~$6$, then~$\wldim{\coherentConfig} \leq \frac{n}{20}+ o(n)$.
\end{lemma}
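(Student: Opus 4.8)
The plan is to follow the strategy used for the all--size--$4$ case in Corollary~\ref{4-cc:wldim/cor}, now keeping the size--$6$ fibers under control by means of the structural results of Sections~\ref{small-cc/sec}--\ref{critical:restorable/sec}. As a preliminary observation, if $\interspaceFourSix\in\interspace{R}{S}$ for some edge of $\quotientGraph{\coherentConfig}$, then, since every fiber of $\coherentConfig$ is small, the second part of Lemma~\ref{critical:4cc-6cc/lem} yields $\wldim{\coherentConfig}=2$ and we are done. So from now on assume this does not occur; every relevant interspace then falls into one of the types classified in Lemma~\ref{small-cc:interspace/lem} (and, on size--$6$ fibers, Lemma~\ref{critical:small-cc:module/lem} and the restorability lemmas of Section~\ref{critical:restorable/sec} restrict the possibilities further).

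The first main step is a degree reduction, mirroring Lemma~\ref{4-cc:cdeg-decrease/lem}. While some fiber $R$ has color degree at least $4$, I would individualize a vertex $r\in R$: by Lemma~\ref{small-cc:interspace/lem} every interspace incident to $R$ forces a split, so in $\coherentConfig_r$ the fiber $R$ together with all of its neighbours breaks into tiny fibers, which are removed when criticality is restored without decreasing the WL--dimension. Since every fiber has at least $4$ vertices, one such step removes at least $4+4\cdot 4=20$ vertices. Iterating this $k$ times leaves a critical configuration $\coherentConfig'$, still with all fibers of size $4$ or $6$, on $n'\le n-20k$ vertices and with $k\le\lceil n/20\rceil$, whose quotient graph has maximum degree at most $3$; by Theorem~\ref{preliminaries:wldim-individualizations/thm} we have $\wldim{\coherentConfig}\le 2+k+\wldim{\coherentConfig'}$.

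It remains to bound $\wldim{\coherentConfig'}$ by roughly $n'/24$. If $\coherentConfig'$ has a dominating fiber, then, the color degrees being bounded, $n'=\mathcal{O}(1)$ and $\wldim{\coherentConfig'}=\mathcal{O}(1)$. Otherwise Lemma~\ref{critical:disjoint-union/lem} makes $\quotientGraph{\coherentConfig'}$ connected, Lemma~\ref{critical:small-cc:module/lem} excludes all homogeneous configurations of order $6$ that are unions of at most three modules, and Lemmas~\ref{critical:6-cc:restorable:DUC:deg1/lem}, \ref{critical:6-cc:restorable:large-neighborhood/lem}, and~\ref{critical:6-cc:restorable:cycle/lem} rule out various uniform neighbourhood patterns around size--$6$ fibers unless they dominate; in particular a size--$6$ fiber of color degree at most $2$ has its two neighbours joined by a union of basis relations obtained via Lemmas~\ref{critical:adjacent-interspace-cycle/lem} and~\ref{6-cc:implied-interspace:DUC-DUC/lem}, so such fibers can be contracted to virtual edges. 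The resulting quotient graph has maximum degree at most $3$, so by Theorem~\ref{tw-cubic-graph/lem} it has a tree decomposition with bags of size at most $\abs{\fibers{\coherentConfig'}}/6+\epsilon$. Simulating the corresponding winning cop strategy in the bijective pebble game on $\coherentConfig'$ — a cop on a fiber being replaced by pebbles on a bounded number of its vertices, chosen so that the fiber splits into singletons — then gives $\wldim{\coherentConfig'}\le n'/24+\mathcal{O}(1)$, and substituting this into the bound above exactly as in the proof of Corollary~\ref{4-cc:wldim/cor} yields
\[
    \wldim{\coherentConfig}\ \le\ 2+k+\max\Bigl\{\mathcal{O}(1),\ \tfrac{n-20k}{24}+\mathcal{O}(1)\Bigr\}\ \le\ \tfrac{n}{20}+\mathcal{O}(1).
\]

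The hard part is the interplay of the size--$6$ fibers with both steps. Unlike a size--$4$ fiber with the three--matchings structure, a size--$6$ fiber — most awkwardly one inducing $(\cycle{6},3K_2,2K_3)$ — need not collapse to singletons after a single individualization, so the pebble--game simulation may have to spend extra pebbles there, and one must verify that this does not spoil the $n'/24$ estimate in the base case. This is precisely where the restriction of the induced configuration from Lemma~\ref{critical:small-cc:module/lem}, the neighbourhood restrictions of the restorability lemmas of Section~\ref{critical:restorable/sec}, and the contraction of color--degree--$\le 2$ size--$6$ fibers to virtual edges are needed, so that in $\coherentConfig'$ only boundedly many fibers of the bad type survive and the remaining ones (inducing a directed configuration) do split on one pebble. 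Dually, one has to make sure the degree reduction only ever individualizes inside a fiber of color degree at least $4$, so that each individualization ``pays for itself'' by removing at least $20$ vertices.
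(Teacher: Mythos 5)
There is a genuine gap, and it sits at the heart of your plan: the degree-reduction step does not carry over from the size-$4$ case to size-$6$ fibers. In Lemma~\ref{4-cc:cdeg-decrease/lem} the collapse works because every interspace incident to a size-$4$ fiber contains a constituent isomorphic to~$\cycle{8}$ or~$\disjointCliques{2}{2,2}$, so every neighbour splits into halves of size~$2$, i.e.\ into tiny fibers. For size-$6$ fibers this fails: if~$\interspace{R}{B}$ contains~$\disjointCliques{3}{2,2}$ (interspace patterns of type~$\ipsixMatching$/$\ipsixMatchingTwice$), individualizing~$r\in R$ only splits a size-$2$ piece off~$B$, leaving a size-$4$ fiber behind; and~$R$ itself (e.g.\ with~$\type{\inducedCC{R}}=(3K_2,K_{2,2,2})$) need not become discrete either. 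So the claim that ``$R$ together with all of its neighbours breaks into tiny fibers'' and the resulting ``$4+4\cdot 4=20$ vertices removed per individualization'' accounting are simply false in the mixed case, and your later base-case assertion that ``only boundedly many fibers of the bad type survive'' is not justified by anything you cite; nothing prevents linearly many size-$6$ fibers of color degree~$3$ with pattern~$\ipsixMatching$ attachments from surviving, and a cop placed on such a fiber may cost two pebbles, which already spoils a $\abs{\fibers{\coherentConfig'}}/6$-type bound.

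The paper avoids this by never running a uniform degree reduction on size-$6$ fibers. Instead it first rules out $\abs{\ul(\inducedCC{S})}=3$ (via Lemma~\ref{critical:6-cc:restorable:DUC:deg1/lem}, since such an~$S$ would be restorable or dominating), and then, for $\abs{\ul(\inducedCC{S})}\in\{4,5\}$, uses Lemmas~\ref{critical:6-cc:restorable:large-neighborhood/lem}, \ref{critical:6-cc:restorable:cycle/lem}, \ref{critical:adjacent-interspace-cycle/lem} and~\ref{6-cc:implied-interspace:DUC-DUC/lem} to exhibit a \emph{specific} constellation of neighbours ($\mathcal{R}$, $\mathcal{B}$, $\mathcal{Y}$ with patterns $\ipsixMatchingAndCycle$/$\ipsixMatchingMatching$, $\ipsixTriangle$, $\ipsixMatching$) such that one carefully chosen individualization makes at least~$20$ designated vertices (the fiber~$S$ plus whole neighbours or size-$2$ chunks of two neighbours) split into tiny fibers. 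Iterating this removes \emph{all} size-$6$ fibers at a cost of one individualization per~$20$ vertices, and only then is Corollary~\ref{4-cc:wldim/cor} applied to the remaining pure size-$4$ configuration (which internally does the degree reduction and the treewidth/pebble argument). To repair your proof you would need either to redo this structural case analysis for size-$6$ fibers, or to prove the missing claims that (a) every high-degree fiber's neighbourhood collapses to tiny fibers and (b) the surviving size-$6$ fibers in the base case are few and one-pebble-splittable; as stated, neither holds.
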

\begin{proof}
    If there is an interspace in~$\coherentConfig$ containing the constituent~$\interspaceFourSix$, then by Lemma~\ref{critical:4cc-6cc/lem} we have~$\wldim{\coherentConfig} = 2$.
    If~$|\fibers{\coherentConfig}| = 1$, then we also have~$\wldim{\coherentConfig} \leq 2$.

    Let~$S \in \fibers{\coherentConfig}$ with~$\abs{S} = 6$ be incident to a non-homogeneous interspace in~$\quotientGraph{\coherentConfig}$, and let~$s \in S$.
    Thus by Lemmas~\ref{small-cc:interspace/lem} and~\ref{small-cc:interspace-implies-cc/lem} this yields~$\clique{6} \notin \inducedCC{S}$.
    By Lemma~\ref{small-cc:induced-cc/lem} we have~$\abs{\ul(\inducedCC{S})} \geq 3$.

    Assume that there are fibers~$F,F'$ such that the set~$\{F,F'\}$ is dominating in~$\quotientGraph{\coherentConfig}$.
    Since~$\abs{\ul(\inducedCC{F})} \geq 3$ and~$\abs{\ul(\inducedCC{F'})} \geq 3$, both fibers~$F$ and~$F'$ split into singletons if we individualize~$8$ carefully chosen vertices in~$F \cup F'$.
    Therefore all fibers adjacent to~$F$ or~$F'$ split into tiny fibers.
    Thus~$\wldim{\coherentConfig} \leq 8 + 2 \in \mathcal{O}(1)$.

    We may thus assume for the rest of the proof that there is are no distinct fibers~$F,F'$ in~$\coherentConfig$ such that~$\{F,F'\}$ is dominating.

    Assume that~$\abs{\ul(\inducedCC{S})} = 3$.
    By Lemma~\ref{small-cc:induced-cc/lem}, the configuration~$\inducedCC{S}$ contains two constituents whose underlying graphs are isomorphic either to~$\disjointCliques{3}{2}$ and~$\clique{2,2,2}$, respectively, or to~$\disjointCliques{2}{3}$ and~$\clique{3,3}$, respectively.
    Since all fibers are small, by Lemma~\ref{small-cc:interspace-implies-cc/lem} all interspaces incident to~$S$ in~$\quotientGraph{\coherentConfig}$ contain a constituent isomorphic to~$\disjointCliques{3}{2,2}$ but none isomorphic to~$\cycle{12}$ in the first case and one isomorphic to~$\disjointCliques{2}{3,3}$ in the second case.
    Since~$\{S\}$ is not dominating by assumption, this contradicts Lemma~\ref{critical:6-cc:restorable:DUC:deg1/lem}, and hence~$\abs{\ul(\inducedCC{S})} > 3$.

    Assume that~$\abs{\ul(\inducedCC{S})} = 5$.
    By Lemma~\ref{small-cc:induced-cc/lem} there are the following constituents in~$\inducedCC{S}$: three graphs isomorphic to~$\disjointCliques{3}{2}$ and two isomorphic to~$2\overrightarrow{C_3}$.
    By Lemma~\ref{small-cc:interspace-implies-cc/lem} there is no interspace incident to~$S$ in~$\quotientGraph{\coherentConfig}$ that contains a constituent isomorphic to~$\cycle{12}$.
    By Lemma~\ref{critical:6-cc:restorable:large-neighborhood/lem} there is a small fiber~$R$ adjacent to~$S$ such that~$\interspace{R}{S}$ has the interspace pattern~$\ipsixMatchingMatching$.
    Thus~$\interspace{S}{R}$ has the interspace pattern~$\ipsixMatchingTwice$.
    If~$|\ul(\inducedCC{R})| = 3$, then either~$\disjointCliques{3}{2},\clique{2,2,2} \in \inducedCC{R}$ or~~$\disjointCliques{3}{2},\overrightarrow{C_3}[K_2] \in \inducedCC{R}$.
    Therefore for all fibers~$R'$ adjacent to~$R$ in~$\quotientGraph{\coherentConfig}$ the interspace~$\interspace{R'}{R}$ has the interspace pattern~$\ipsixMatching$ or~$\ipsixMatchingTwice$.
    Since~$\{S\}$ is not dominating, this contradicts Lemma~\ref{critical:6-cc:restorable:DUC:deg1/lem}.
    So~$|\ul(\inducedCC{R})| > 3$.
    Thus Lemma~\ref{critical:6-cc:restorable:large-neighborhood/lem} applies and the neighborhood of~$B'$ partitions into~$\{\mathcal{R},\mathcal{Y},\mathcal{B}\}$ with the properties as described in Lemma~\ref{critical:6-cc:restorable:large-neighborhood/lem}.
    Since~$S \in \mathcal{Y}$ (follows from the interspace pattern of~$\interspace{S}{R}$), none of parts of~$\{\mathcal{R},\mathcal{Y},\mathcal{B}\}$ are empty.
    Let~$R' \in \mathcal{R}$ and~$B' \in \mathcal{B}$.
    Consider~$\coherentConfig_r$ where~$r \in R$.
    Then fibers~$R'$, $B'$, $R$, and~$S$ split into tiny fibers in~$\coherentConfig_r$.
    After restoring criticality, we obtain a critical coherent configuration~$\coherentConfig'$ such that~$\vertices(\coherentConfig') \subseteq \vertices(\coherentConfig) \setminus (R' \cup S \cup B' \cup Y)$.
    Note that~$\abs{\vertices(\coherentConfig')} \leq \abs{\vertices(\coherentConfig)} - 20$ since~$\abs{B} \in \{4,6\}$.

    Assume that~$|\ul(\inducedCC{S})| = 4$.
    By Lemma~\ref{small-cc:induced-cc/lem}, there is exactly one constituent in~$\inducedCC{S}$ which is isomorphic to~$\disjointCliques{3}{2}$.
    Hence there is no interspace incident to~$S$ which has the interspace pattern~$\ipsixMatchingMatching$.
    Lemma~\ref{critical:6-cc:restorable:large-neighborhood/lem} applies, and therefore the neighborhood of~$S$ in~$\quotientGraph{\coherentConfig}$ partitions in~$\{\mathcal{R},\mathcal{B},\mathcal{Y}\}$ such that
    \begin{itemize}
        \item
        for all~$R \in \mathcal{R}$ we have~$\cycle{12} \in \interspace{S}{R}$,
        \item
        for all~$B \in \mathcal{B}$ we have~$\disjointCliques{2}{3,\frac{\abs{B}}{2}} \in \interspace{S}{B}$, and
        \item
        for all~$Y \in \mathcal{Y}$ the number of constituents in~$\interspace{S}{Y}$ isomorphic to~$\disjointCliques{3}{2,2}$ is exactly~1 or~3.
    \end{itemize}
    By Lemma~\ref{critical:6-cc:restorable:cycle/lem}, all sets~$\mathcal{R}, \mathcal{B}, \mathcal{Y}$ are non-empty and~$\mathcal{R}$ contains a single fiber unique to~$S$.
    Thus let~$R\in\mathcal{R}$, $B \in \mathcal{B}$, and~$Y \in \mathcal{Y}$.
    Now we have two cases:

    If there are exactly three constituents in~$\interspace{S}{Y}$ which are isomorphic to~$\disjointCliques{3}{2,2}$, then fibers~$R$, $B$, $Y$, and~$S$ split into tiny fibers in~$\coherentConfig_s$.
    After restoring criticality, we obtain a critical coherent configuration~$\coherentConfig'$ such that~$\vertices(\coherentConfig') \subseteq \vertices(\coherentConfig) \setminus (R \cup S \cup B \cup Y)$.
    Further note that~$\abs{\vertices(\coherentConfig')} \leq \abs{\vertices(\coherentConfig)} - 20$ since~$\abs{B} \in \{4,6\}$.

    Now assume that there is exactly one constituent in~$\interspace{S}{Y}$ which is isomorphic to~$\disjointCliques{3}{2,2}$.
    This implies~$|\ul(\inducedCC{Y})| \leq 4$.
    By Lemma~\ref{critical:small-cc:module/lem} we conclude that~$|\ul(\inducedCC{Y})| = 4$.
    Therefore there is no fiber~$R'$ adjacent to~$Y$ such that~$\interspace{R'}{Y}$ has the interspace pattern~$\ipsixMatchingMatching$.
    Hence, Lemmas~\ref{critical:6-cc:restorable:large-neighborhood/lem} and additionally Lemma~\ref{critical:6-cc:restorable:cycle/lem} apply to~$Y$ since (otherwise we obtain a contradiction by reasoning similar to the one above; either~$\{Y\}$ would be dominating or~$|\ul(\inducedCC{Y})|=3$).
    Thus there is an~$F \in \fibers{\coherentConfig}$ such that~$\cycle{12} \in \interspace{Y}{F}$.
    Since~$\coherentConfig$ is critical, Lemma~\ref{critical:adjacent-interspace-cycle/lem} implies~$F \neq S$ and~$F \neq R$.
    Lemma~\ref{6-cc:implied-interspace:DUC-DUC/lem} implies~$\disjointCliques{3}{2,2} \in \interspace{S}{F}$.
    Observe that it is impossible in a critical coherent configuration for an interspace to contain both a constituent isomorphic to~$\disjointCliques{2}{3,3}$ and a constituent isomorphic to~$\disjointCliques{3}{2,2}$ (compare Table~\ref{small-cc:classificaiton-small-interspaces/tab}).
    Thus~$F \neq B$ and~$|\mathcal{Y}| \geq 2$.
    In~$\coherentConfig_s$ fibers~$R$, $S$, and~$B$ split into tiny fibers while size-$2$ fibers~$Y'$ and~$F'$ are split off from~$Y$ and~$F$, respectively.
    After we restore criticality, we obtain a critical coherent configuration~$\coherentConfig'$ such that~$\vertices(\coherentConfig') \subseteq \vertices(\coherentConfig) \setminus (R \cup Y' \cup S \cup B \cup F')$.
    Further note that~$\abs{\vertices(\coherentConfig')} = \abs{\vertices(\coherentConfig)} - 20$ since~$\abs{B} \in \{4,6\}$.

    We summarize the proof up to this point:
    in several edge cases, we have~$\wldim{\coherentConfig} \in \mathcal{O}(1)$.
    Since~$\coherentConfig$ is critical, there is no size-$6$ fiber with~$\abs{\ul(\inducedCC{S})} = 3$.
    If there is a size-$6$ fiber with~$\abs{\ul(\inducedCC{S})} > 3$, then $\wldim{\coherentConfig} \leq   1 + \wldim{\coherentConfig'}$ for a critical coherent configuration with~$\abs{\vertices(\coherentConfig')} \leq \abs{\vertices(\coherentConfig)} - 20$ also satisfying the assumptions of the lemma.

    We iterate the process above until no fiber of size~$6$ exists anymore and define~$k$ to be the number of iterations.
    What remains is a coherent configuration~$\coherentConfig''$ whose fibers all have size~$4$.
    Thus we apply Corollary~\ref{4-cc:wldim/cor}.
    Altogether we have
    \[
        \wldim{\coherentConfig} \leq 2+ k + \max \{8, \wldim{\coherentConfig''}\} \leq 2 + k + \max \left\{8, \frac{n - 20 \cdot k}{20} + o(n) \right\}.
    \]
    Observe that~$k \leq \left\lceil \frac{n}{20} \right\rceil$.
    We conclude~$\wldim{\coherentConfig} \leq \frac{n}{20} + o(n)$.
\end{proof}

\begin{lemma}
\label{7-cc:wldim/lem}
    Let~$\coherentConfig$ be a critical coherent configuration such that all fibers of~$\coherentConfig$ have size~$7$.
    Then~$\abs{\fibers{\coherentConfig}} \leq 2$ and~$\wldim{\coherentConfig} \leq 3$.
\end{lemma}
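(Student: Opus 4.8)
The plan is to first determine the shape of the quotient graph and then dispatch the very few remaining configurations by a bounded number of individualizations.

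For the bound $\abs{\fibers{\coherentConfig}}\le 2$ I would argue as follows. Let $RB$ be any edge of $\quotientGraph{\coherentConfig}$; since $\abs{R}=\abs{B}=7$, Lemma~\ref{small-cc:interspace/lem}(6) gives $\matching{7}\in\interspace{R}{B}$ or $\leviFano\in\interspace{R}{B}$. The first option is excluded, because a perfect matching has minimal interspace degree $1$ and hence contradicts Lemma~\ref{critical:star/lem}. So \emph{every} edge of $\quotientGraph{\coherentConfig}$ carries $\leviFano$. By Lemma~\ref{critical:7-cc:leviFano/lem} no induced path $(R,B,Y)$ has $\leviFano$ on both edges, so $\quotientGraph{\coherentConfig}$ contains no path on three vertices; equivalently every fiber has at most one neighbor in $\quotientGraph{\coherentConfig}$. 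Since $\quotientGraph{\coherentConfig}$ is connected by Lemma~\ref{crictial:quotientGraph-connected/lem}, it is a single vertex or a single edge, which is exactly $\abs{\fibers{\coherentConfig}}\le 2$.

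It then remains to bound the dimension in the two cases. If $\abs{\fibers{\coherentConfig}}=1$, then $\coherentConfig$ is a homogeneous coherent configuration on $7$ vertices, hence by Lemma~\ref{small-cc:induced-cc/lem} one of $(\clique{7})$, $(\cycle{7},\cycle{7},\cycle{7})$, $(PTr(7))$, $(\overrightarrow{\cycle{7}},\overrightarrow{\cycle{7}},\overrightarrow{\cycle{7}})$ from Table~\ref{small-cc:classificaiton-small-cc/tab}. For $\clique{7}$ and $(\cycle{7},\cycle{7},\cycle{7})$ the dimension is at most $2$, since $\wld{2}$ identifies cliques and cycles. For the two tournament-type configurations I would individualize one vertex $v$: the coherent closure $\coherentConfig_v$ then splits into fibers of size at most $3$ (for $PTr(7)$: a singleton together with the two $\overrightarrow{\cycle{3}}$'s on the in- and out-neighborhoods, linked by a degree-$1$ interspace) whose interspaces are completely forced, so $\coherentConfig_v$ is identified by $\wld{2}$ and Theorem~\ref{preliminaries:wldim-individualizations/thm} gives $\wldim{\coherentConfig}\le 1+2=3$. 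If $\abs{\fibers{\coherentConfig}}=\{R,B\}$, then $\leviFano\in\interspace{R}{B}$ and Lemma~\ref{small-cc:interspace-implies-cc/lem}(5) forces each of $\inducedCC{R}$, $\inducedCC{B}$ to be $(\clique{7})$ or $(PTr(7))$; this leaves at most four explicit $14$-vertex configurations (all refinements of the coherent closure of the Heawood graph). For each I would individualize a vertex $v\in R$; the fibers of $\coherentConfig_v$ then have size at most $4$ and, because of the point–line incidence rigidity of the Fano plane, their interspaces (including several degree-$1$ ones identifying the size-$3$ sub-fibers across the $\leviFano$-interspace) are uniquely determined, so $\coherentConfig_v$ is identified by $\wld{2}$ and Theorem~\ref{preliminaries:wldim-individualizations/thm} again yields $\wldim{\coherentConfig}\le 1+\max\{2,2\}=3$.

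The main obstacle is precisely the last step: checking that after a single individualization the residual configuration is identified by $\wld{2}$. This is a finite verification rather than a conceptual difficulty, but it does require the single-individualization argument: individualizing two vertices would make the closure discrete and only give $\wldim{\coherentConfig}\le 2+\max\{2,1\}=4$, which is too weak, so the rigidity argument above (common-neighborhood counts across the $\leviFano$-interspace plus the induced $\overrightarrow{\cycle{3}}$-structure, leaving no $\wld{2}$-equivalent non-isomorphic partner) is essential to reach the claimed bound of $3$.
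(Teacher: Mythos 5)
Your overall strategy matches the paper's: from Lemma~\ref{small-cc:interspace/lem}(6) every non-homogeneous interspace between two size-$7$ fibers contains $\matching{7}$ or $\leviFano$; $\matching{7}$ is killed by criticality, so every edge of the quotient graph carries $\leviFano$, and Lemma~\ref{critical:7-cc:leviFano/lem} then forces the connected quotient graph to have at most two vertices. Appealing to Lemma~\ref{critical:star/lem} (degree~$1$) rather than Lemma~\ref{critical:cycle/lem} here is fine. The single-fiber case is then harmless regardless of whether one cites the ``\wltwo identifies everything of order $<16$'' fact directly (as the paper essentially does) or pays one individualization for the tournament configurations as you do; either yields the bound~$3$.

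The gap is in the two-fiber case. You claim that after individualizing $r\in R$ all fibers of $\coherentConfig_r$ have size at most~$4$, and you then wave at ``point--line incidence rigidity'' and a ``finite verification.'' The size claim is false: when $\inducedCC{R}=(K_7)$ the six points of $R\setminus\{r\}$ are all equivalent under the point stabilizer in $\Aut(\leviFano)$, which is transitive on them, so $R\setminus\{r\}$ remains a single fiber of size~$6$ in $\coherentConfig_r$ (it splits into two size-$3$ fibers only in the $PTr(7)$ case). This is exactly the part the paper does not leave to an unstated verification: it observes that the interspace between $R\setminus\{r\}$ (size~$6$) and $B\setminus rU$ (size~$4$) is precisely $\interspaceFourSix$ --- the subdivided $K_4$ one gets by deleting a Fano point and the three lines through it --- and then invokes Lemma~\ref{critical:4cc-6cc/lem}(2), which was proved earlier specifically for the situation ``$\interspaceFourSix$ present, all fibers small'' and gives $\wldim{\coherentConfig_r}=2$. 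You would need either to spot this $\interspaceFourSix$ structure and reuse that lemma, or to actually carry out the $14$-vertex check you defer to; as written, the incorrect intermediate claim suggests neither was done, so the final step is not yet a proof.
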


\begin{proof}
    By Lemmas~\ref{critical:cycle/lem} and~\ref{critical:7-cc:leviFano/lem}, we have~$\abs{\fibers{\coherentConfig}} \in\{1,2\}$:
    assuming~$\fibers{\coherentConfig} = \{R\}$, then by Lemma~\ref{small-cc:induced-cc/lem} there is constituent in~$\inducedCC{R}$ whose underlying graph is isomorphic to~$\clique{7}$ or~$\cycle{7}$.
    Hence~$\wldim{\coherentConfig} \leq 2$.
    So we assume that~$\{R,B\} = \fibers{\coherentConfig}$ and~$U \in \interspace{R}{B}$ such that~$(R \disjointUnion B, U) \cong \leviFano$.
    By Lemma~\ref{small-cc:induced-cc/lem} we have~$\clique{7} \in \inducedCC{R}$ and~$\clique{7} \in \inducedCC{B}$ and let~$r \in R$.
    Thus~$\interspaceFourSix \in \coherentConfig_r[R \setminus \{r\},B \setminus rU]$.
    By Lemma~\ref{critical:4cc-6cc/lem}, we have~$\wldim{\coherentConfig} \leq 3$.
\end{proof}

\begin{theorem}
\label{small-cc:wldim/thm}
    Let~$\coherentConfig$ be a critical coherent configuration such that all fibers of~$\coherentConfig$ are small.
    Then~$\wldim{\coherentConfig} \leq \frac{n}{20} + o(n)$.
\end{theorem}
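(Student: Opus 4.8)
The plan is to reduce immediately to the structural dichotomy for critical configurations with only small fibers given by Theorem~\ref{critical:small-cc/thm}, and then to invoke the size-specific bounds established earlier. Concretely, since $\coherentConfig$ is critical, Lemma~\ref{critical:tiny-CC/lem} guarantees that no fiber is tiny, so every fiber has size in $\{4,5,6,7\}$. Theorem~\ref{critical:small-cc/thm} then leaves exactly two situations: either all fibers share a common size $s \in \{4,5,6,7\}$, or every fiber has size in $\{4,6\}$ (with both sizes possibly occurring).

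First I would dispatch the homogeneous-size cases. If $s = 4$, Corollary~\ref{4-cc:wldim/cor} gives $\wldim{\coherentConfig} \le n/20 + \mathcal{O}(1)$ directly. If $s = 5$, Lemma~\ref{5-cc:wldim/lem} shows $\abs{\fibers{\coherentConfig}} = 1$ and $\wldim{\coherentConfig} \le 2$. If $s = 7$, Lemma~\ref{7-cc:wldim/lem} shows $\abs{\fibers{\coherentConfig}} \le 2$ and $\wldim{\coherentConfig} \le 3$. The case $s = 6$ (all fibers of size exactly $6$) is a special case of configurations whose fibers each have size $4$ or $6$, so it needs no separate argument. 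Thus the only genuinely substantive remaining case is that every fiber has size $4$ or $6$, for which I would apply Lemma~\ref{6-cc:wldim/lem} verbatim to get $\wldim{\coherentConfig} \le n/20 + \mathcal{O}(1)$.

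Combining the bounds from all cases — the constant bounds $2$ and $3$ from the size-$5$ and size-$7$ cases being absorbed into the $\mathcal{O}(1)$ term — yields $\wldim{\coherentConfig} \le n/20 + \mathcal{O}(1)$, completing the proof.

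I do not expect a genuine obstacle: this theorem is an assembly step, and all the heavy lifting has already been carried out — the treewidth/pathwidth argument for cubic quotient graphs in Theorem~\ref{4-cc:cfi-wldim/thm}, the degree-reduction of Lemma~\ref{4-cc:cdeg-decrease/lem}, and the extensive restorability case analysis for size-$6$ fibers feeding into Lemma~\ref{6-cc:wldim/lem}. The only point requiring care is to confirm that the dichotomy of Theorem~\ref{critical:small-cc/thm} is exhaustive once tiny fibers are excluded, and that the ``all fibers of size $6$'' subcase is legitimately subsumed by the size-$\{4,6\}$ lemma rather than demanding its own treatment.
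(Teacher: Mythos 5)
Your proposal is correct and follows essentially the same route as the paper, which concludes the theorem directly from Theorem~\ref{critical:small-cc/thm}, Corollary~\ref{4-cc:wldim/cor}, and Lemmas~\ref{5-cc:wldim/lem}, \ref{6-cc:wldim/lem}, and~\ref{7-cc:wldim/lem}. Your observation that the all-size-$6$ case is legitimately subsumed by Lemma~\ref{6-cc:wldim/lem} (which covers fibers of size $4$ or $6$) is exactly the right way to make the case analysis exhaustive.
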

\begin{proof}
    We conclude the theorem from Theorem~\ref{critical:small-cc/thm}, Corollary~\ref{4-cc:wldim/cor}, and Lemmas~\ref{5-cc:wldim/lem}, \ref{6-cc:wldim/lem}, and~\ref{7-cc:wldim/lem}.
\end{proof}


\section{Limiting fiber sizes}
\label{sec:limit:fiber:sizes}

In this section, we bound the fiber size in a coherent configuration by a constant in exchange for a sublinear cost.
This allows us to find an upper bound on the WL-dimension of coherent configurations using the treewidth of the quotient graph.
Recall that this is the second base case that we mentioned at the beginning of this chapter.
To prove the bound on the fiber size, we first adopt Zemlyachenko's argument for limited color valency.

For each~$\interspace{R}{B}$ where fibers~$R$ and~$B$ are not necessarily distinct, we arbitrarily choose a basis relation whose degree is maximal.
A \emph{non-maximal} basis relation is basis relation other than the chosen one.
Define~$\overline{\minimalDegree{R}{B}}$ to be the maximum degree among all degrees of non-maximal basis relations in~$\interspace{R}{B}$.
Set~$\overline{\minimalDegree{R}{B}}=0$ if~$|\interspace{R}{B}|=1$.

Define the \emph{max-modules} of a coherent configuration to be the connected components of the graph in which the edges are formed by all non-maximal basis relations.
(The max-modules are sections in the sense of~\cite{DBLP:journals/dam/Goldberg83} and colored modules in the sense of~\cite{DBLP:journals/mst/Schweitzer17}.)
It is not difficult to see that the WL-dimension of a coherent configuration is at most the maximum of~$2$ and the maximum WL-dimension of its max-modules.

\begin{lemma}[Color valence limit, (Zemlyachenko, see~\cite{DBLP:conf/fct/Babai81})]
\label{lem:bound-on-min:degree}
    Let~$\coherentConfig$ be a critical coherent configuration, let~$d \in \Nat$ be a positive integer.
    There is a set of vertices~$S$ of size at most~$\frac{2n}{d}$  so that in the components of
    $\coherentConfig_S$ we have~$\overline{\minimalDegree{R}{B}}\leq d$
     for all (not necessarily distinct) fibers~$R,B \in \fibers{\coherentConfig_S}$.
\end{lemma}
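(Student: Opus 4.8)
The plan is to build $S$ greedily, individualizing one vertex at a time and controlling the process by the potential
\[
    \Phi(\coherentConfig') \coloneqq \sum_{F \in \fibers{\coherentConfig'}} \max\{0,\abs{F}-d\},
\]
a non-negative integer that is at most~$n$. The crucial elementary fact is that $\Phi$ is monotone under refinement: if a fiber of size~$s$ splits into parts of sizes~$s_1,\dots,s_k$ then $\sum_i \max\{0,s_i-d\}\le \max\{0,s-d\}$ (if every $s_i\le d$ the left side is~$0$; otherwise it equals $\sum_{i\colon s_i>d}(s_i-d)\le(\sum_i s_i)-d$). Consequently $\Phi(\coherentConfig_{v_1,\dots,v_\ell})$ is non-increasing in~$\ell$. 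The loop is: set $\coherentConfig^{(0)}\coloneqq\coherentConfig$; given $\coherentConfig^{(\ell)}=\coherentConfig_{v_1,\dots,v_\ell}$, stop and output $S=\{v_1,\dots,v_\ell\}$ if every component already satisfies $\overline{\minimalDegree{B}{R}}\le d$ for all fibers~$R,B$; otherwise fix fibers~$R,B$ in some component with $\overline{\minimalDegree{B}{R}}>d$, individualize a suitable vertex of~$R$ (chosen below), and repeat.

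For the per-step choice: since $\overline{\minimalDegree{B}{R}}>d\ge 1$ we have $\abs{\interspace{R}{B}}\ge 2$, and there is a non-maximal basis relation $U\in\interspace{R}{B}$ with $m\coloneqq\intDegree{U}>d$; as the chosen maximal basis relation $U_{\max}\in\interspace{R}{B}$ satisfies $\intDegree{U_{\max}}\ge m$, we get $\abs{B}=\sum_{W\in\interspace{R}{B}}\intDegree{W}\ge 2m>2d$. Moreover $R$ cannot be a singleton fiber, because a singleton fiber has a homogeneous interspace to every fiber (the colour of the arc to its single vertex is constant on each fiber, so a non-homogeneous interspace would split~$B$), and that would force $\overline{\minimalDegree{B}{R}}=0$. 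Hence we may take $v_{\ell+1}=r\in R$ and put $\coherentConfig^{(\ell+1)}\coloneqq\coherentConfig^{(\ell)}_r$.

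It remains to check that this step decreases $\Phi$ by at least~$d$. In $\coherentConfig^{(\ell+1)}$ the vertex~$r$ is its own fiber, so the colour of the arc $(r,b)$ refines its colour in~$\coherentConfig^{(\ell)}$; hence every fiber of $\coherentConfig^{(\ell+1)}$ contained in~$B$ lies inside a single set $rW$, $W\in\interspace{R}{B}$. Grouping the new fibers inside~$B$ by the relation~$W$ they come from and applying monotonicity on each $rW$, the contribution of~$B$ to~$\Phi$ becomes at most
\[
    \sum_{W\in\interspace{R}{B}}\max\{0,\intDegree{W}-d\}
    \le (\intDegree{U}-d)+(\intDegree{U_{\max}}-d)+\sum_{W\ne U,U_{\max}}\intDegree{W}
    = \abs{B}-2d,
\]
whereas before it was $\abs{B}-d$; no other fiber's contribution can increase. (When $B=R$ the identical computation applies inside $\inducedCC{R}$, the extra singleton fiber~$\{r\}$ only helping.) Thus each iteration drops the non-negative integer~$\Phi$ by at least~$d$, so the loop terminates after at most $n/d\le 2n/d$ iterations, and the resulting $S$ has $\abs{S}\le 2n/d$ and the required property by construction. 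The proof is short; the only points needing care — besides the routine monotonicity of~$\Phi$ — are the two structural observations above (non-maximality forces $\abs{B}>2d$, and a singleton fiber is never the source~$R$ of a bad pair), both immediate from coherence. The real content is choosing~$\Phi$ so that individualizing a vertex of a fiber feeding a large non-maximal relation is guaranteed to pay for itself.
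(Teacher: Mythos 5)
Your proof is correct, and it takes a genuinely different route from the paper's. The paper reduces the degree bound geometrically in stages, arguing that one can pass from a bound of~$2d'$ to~$d'$ at a cost of~$n/d'$ individualizations and then summing $n/d + n/(2d) + \dotsb \le 2n/d$; at each stage the implicit potential is the number of vertices contained in fibers of size more than~$2d'$, and the per-step drop of more than~$d'$ relies on the stage assumption to bound the split-off piece by~$2d'$. You instead use the single potential $\Phi = \sum_F \max\{0,\abs{F}-d\}$, whose superadditivity $\sum_i\max\{0,s_i-d\}\le\max\{0,\sum_i s_i-d\}$ makes it monotone under arbitrary refinement and, crucially, makes it drop by at least~$d$ whenever a fiber $B$ with $\abs{B}>2d$ splits along a pair $U,U_{\max}$ with $\intDegree{U},\intDegree{U_{\max}}>d$ --- no level-dependent bookkeeping is needed. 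Both of your auxiliary observations are sound: the non-maximality of $U$ together with $\intDegree{U}>d$ forces $\abs{B}\ge\intDegree{U}+\intDegree{U_{\max}}>2d$, and a singleton fiber has a homogeneous interspace to every fiber (by coherence, $c^{1_B}_{U^\star,U}=\mathbbm{1}[b\in rU]$ would otherwise vary over~$b\in B$), so a bad pair always has $\abs{R}\ge 2$ and the individualization is not wasted. The net effect is that your single-pass greedy needs only $n/d$ individualizations instead of $2n/d$, a factor-two improvement over the bound stated; in either case the lemma holds.
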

\begin{proof}[Proof sketch]
    Let us observe that it suffices to argue that we can decrease the degree bound from~$2d'$ to~$d'$ by individualizing~$n/d'$ vertices as follows. Indeed, if this is possible then to reduce from a degree bound of~$n$ to~$d$ we use at most~$n/d+n/(2d)+n/(4d) + \ldots \leq 2n/d$ individualizations.

    Now to reduce from~$2d'$ to~$d'$ suppose~$\overline{\minimalDegree{R}{B}}>d$.
    We individualize a vertex in~$B$ and refine.
    Then some subset of vertices in~$R$ of size more than~$d$ but size at most~$2d$ is separated from~$R$. Thus the number of vertices contained in fibers of size more than~$2d$ reduces by more than~$d$. This can only happen at most~$n/d$ times.
\end{proof}

\begin{lemma}[Fiber size limit]
\label{lem:bound-on-cc-size}
    Let~$\coherentConfig$ be a critical coherent configuration, and let~$c, d \in \Nat$ be positive integers.
    There is a set~$S$ of vertices of size at most~$\frac{2n}{d} + \frac{dn}{c}$ such that for each max-module~$\overline{\coherentConfig}$ of~$\coherentConfig_S$ we have~$\abs{R} \leq c$ and~$\overline{\minimalDegree{R}{B}}\leq d$
     for all (not necessarily distinct) fibers~$R,B \in \fibers{\overline{\coherentConfig}}$.
\end{lemma}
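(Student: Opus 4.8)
The plan is to construct $S$ in two stages, $S=S_1\cup S_2$, dealing with the valence bound first and the fiber-size bound second. For the first stage I would directly invoke Lemma~\ref{lem:bound-on-min:degree} with the given integer~$d$, obtaining $S_1$ with $\abs{S_1}\le\frac{2n}{d}$ such that in every component of $\coherentConfig_{S_1}$ one has $\overline{\minimalDegree{R}{B}}\le d$ for all fibers $R,B$. The remaining task is then to individualize a further set $S_2$ with $\abs{S_2}\le\frac{dn}{c}$ so that in $\coherentConfig_{S_1\cup S_2}$ every max-module has all fibers of size at most~$c$; along the way one must check that the valence bound is not destroyed, which is the point where it matters that we only ever split fibers: when a fiber is refined, $\intDegree{U}$ does not change for the pieces of a basis relation~$U$, and the basis relation of maximal degree of an interspace stays maximal on the side that is being refined, so the non-maximal basis relations that can arise as edges of max-modules of $\coherentConfig_{S_1\cup S_2}$ still have degree at most~$d$.

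For the second stage I would proceed greedily, maintaining the potential $\Phi$ equal to the number of vertices lying in a fiber of size more than~$c$ that belongs to some max-module of the current configuration; initially $\Phi\le n$, and the process stops once $\Phi=0$. It therefore suffices to show that, as long as $\Phi>0$, there is a vertex whose individualization decreases $\Phi$ by at least $\frac{c}{d}$: this bounds the number of rounds, and hence $\abs{S_2}$, by $\frac{dn}{c}$. Concretely, pick a max-module $\overline{\coherentConfig}$ with an oversized fiber and let $R$ be a fiber of maximum size in it, so $\abs{R}>c$. If $\inducedCC{R}$ is not homogeneous, individualizing a point $r\in R$ cuts $R$ into the neighborhoods $rA$ over the basis relations $A$ of $\inducedCC{R}$; the non-maximal internal constituents have degree $\le d$, and using that every interspace incident to $R$ inside $\overline{\coherentConfig}$ is governed by relations of degree $\le d$, together with criticality (Lemma~\ref{critical:star/lem}, excluding degree-$1$ attachments), one argues that the piece that stays large has lost at least the required number of vertices. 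If $\inducedCC{R}$ is homogeneous, then $R$ is joined inside $\overline{\coherentConfig}$ to a fiber $B$ by a non-maximal $U\in\interspace{R}{B}$ with $\intDegree{U}\le d$; individualizing a point $b\in B$ splits off $bU^\star\subseteq R$, and via the identity $\abs{R}\intDegree{U}=\abs{B}\intDegree{U^\star}$ together with a Fisher-type count on the design that the vertices of~$B$ induce on the homogeneous fiber~$R$ (whose blocks, once degree-$1$ interspaces and the exceptional small cases of Lemmas~\ref{critical:star/lem} and~\ref{critical:cycle/lem} are excluded, are forced to be neither too small nor almost all of~$R$), one obtains that this split also reduces $\Phi$ by at least $\frac{c}{d}$.

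Combining the two stages yields $\abs{S}\le\abs{S_1}+\abs{S_2}\le\frac{2n}{d}+\frac{dn}{c}$, and by construction every max-module $\overline{\coherentConfig}$ of $\coherentConfig_S$ has $\abs{R}\le c$ and $\overline{\minimalDegree{R}{B}}\le d$ for all its fibers $R,B$. The main obstacle is exactly the per-round progress estimate of the second stage: one has to prove that in a critical coherent configuration all of whose non-maximal basis relations already have degree $\le d$, any max-module containing an oversized fiber admits an individualization that strips at least a $1/d$-fraction of some oversized fiber down to fibers of size $\le c$. This forces one to combine the regularity and Fisher-type inequalities coming from coherence with the structural restrictions imposed by criticality, and the delicate case is that of a homogeneous fiber, which cannot be broken up efficiently from the inside and must be attacked through its (necessarily not-too-large) neighbors; it is this detour that is responsible for the factor~$d$, rather than a constant, in the bound.
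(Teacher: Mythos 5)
Your two-stage skeleton matches the paper: stage one applies Lemma~\ref{lem:bound-on-min:degree}, and stage two is a greedy potential argument aiming at $dn/c$ further individualizations, so each round must decrease $\Phi$ by at least $c/d$. The gap is precisely the per-round progress estimate that you flag as the main obstacle, and the mechanism you sketch does not yield it. In your non-homogeneous case, individualizing $r\in R$ immediately separates off only $\{r\}$ together with the sets $rA$ for the non-maximal constituents $A$ of $\inducedCC{R}$, each of size at most $d$; nothing in coherence or criticality forces the remaining maximal piece to shed $c/d$ vertices after a single individualization. In your homogeneous case you split off $bU^\star$ for a non-maximal $U\in\interspace{R}{B}$, but $U^\star$ is then also non-maximal in $\interspace{B}{R}$ (degrees transpose by the factor $|R|/|B|$, so the maximal-degree relation stays maximal under transpose), hence $|bU^\star|=\intDegree{U^\star}\le d$ by the stage-one bound regardless of $|R|$ and $|B|$. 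A Fisher-type inequality bounds $|B|$ against $|R|$ but does not enlarge a single split. In both cases one round gains only $O(d)$, which would need $\Theta(n/d)$ rounds rather than $O(dn/c)$; for $c\gg d^2$ this produces the wrong bound.

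The missing idea is the paper's chain through non-maximal relations. Fix $r$ in a max-module containing an oversized fiber and follow $r,\ rU_1,\ rU_1U_2,\dots$, each $U_i$ a union of non-maximal basis relations, so that $|rU_1\cdots U_{i+1}|\le d\cdot|rU_1\cdots U_i|$, until a set of size more than $c$ is reached; a pigeonhole then locates an index $i$ with $c/d\le|rU_1\cdots U_i|\le c$. After individualizing $r$ and taking the coherent closure, this intermediate set is a union of fibers of size at most $c$, and that is where $c/d$ vertices are stripped from an oversized fiber in one round. Your two cases are exactly length-one and length-two chains, which are too short; the refinement generally has to propagate through many non-maximal fibers and interspaces, and the sweet-spot size $\approx c/d$ is found by the pigeonhole, not by the local algebra of a single fiber or interspace. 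One further remark: your assertion that stage-two refinements automatically preserve the stage-one degree bound is not justified as stated -- a piece of a former maximal-degree relation can become a non-maximal relation of degree exceeding $d$ -- though the argument survives because one can interleave degree-repair individualizations and still charge them to the $2n/d$ budget, since the stage-one potential is monotone under any further refinement.
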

\begin{proof}
    Using the previous lemma we will assume that~$ \overline{\minimalDegree{R}{B}}\leq d$ for all fibers~$B$ and~$R$ and then argue that there is a set of size~$\frac{dn}{c}$ that achieves our goal. If~$d\geq c$ we can set~$S=\vertices(\coherentConfig)$ so we may assume that~$d<c$.
    Choose, if it exists, a vertex~$r$ and a sequence  of~$U_1,\ldots,U_t$, with each~$U_i$ a union of non-maximal basis relations  in some interspace or fiber,
    such that~$|rU_1U_2\cdots U_{i+1}|\leq d\cdot |rU_1U_2\cdots U_{i}|$ and~$|rU_1U_2\cdots U_{t}|>c$.
    Then there is an~$i$ such that~$|rU_1U_2\cdots U_{i+1}|> c$ and~$c\geq |rU_1U_2\cdots U_{i}|\geq \frac{c}{d}$.

    By individualizing~$r$ and refining, at least~$\frac{c}{d}$ vertices contained in a fiber of size at least~$c$ split off. Afterwards these vertices are in a fiber of size less than~$c$. This can happen at most~$\frac{nd}{c}$ times.
    We apply the individualizations one after the other and each time apply the 2-dimensional Weisfeiler-Leman algorithm.

    At some point, no sequence~$U_1,\ldots,U_t$ with the desired properties exists. At this point, the max-modules of the graph have fibers of size at most~$c$.
\end{proof}

\begin{lemma}
\label{lem:bd:tw:and:fibre:size:bd:WL}
    Let~$\coherentConfig$ be a coherent configuration.
    If all fibers of~$\coherentConfig$ have size at most~$t$, then~$\wldim{\coherentConfig}\leq t\cdot \treewidth(\quotientGraph{\coherentConfig})$, where~$\treewidth(\quotientGraph{\coherentConfig})$ is the treewidth of the quotient graph.
\end{lemma}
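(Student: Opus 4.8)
The plan is to argue via the bijective pebble game. Recall from Cai, F\"urer and Immerman~\cite{DBLP:journals/combinatorica/CaiFI92} that $\wldim{\coherentConfig}\le k$ if and only if, for every colored complete graph $(G',\coloring')$ not isomorphic to $(G_\coherentConfig,\coloring_\coherentConfig)$, Spoiler has a winning strategy in the bijective $(k+1)$-pebble game on the two graphs. Setting $w\coloneqq\treewidth(\quotientGraph{\coherentConfig})$, I want to supply such a strategy that uses essentially $t\cdot w$ pebbles. Fix a tree decomposition of $\quotientGraph{\coherentConfig}$ of width $w$, taken to be nice so that adjacent bags differ by a single fiber; equivalently this is a monotone winning strategy for $w+1$ cops in the cops-and-robbers game on $\quotientGraph{\coherentConfig}$ (see~\cite{DBLP:conf/swat/OtachiS14} for the correspondence between that game and the pebble game). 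Spoiler will mirror this cop strategy, where a cop occupying a fiber $R$ is simulated by pebbles on all of $R$; since $|R|\le t$ and the last unpebbled vertex of a fiber is forced, this costs at most $t-1$ pebbles per fiber.

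Two observations make the simulation faithful. First, the loop color $\coloring_\coherentConfig(vv)$ records the fiber of $v$, so every bijection Duplicator may offer maps each fiber $R$ of $\coherentConfig$ onto the corresponding color class of $(G',\coloring')$; hence once Spoiler has pebbled all but one vertex of $R$, the image of $R$ is completely fixed, and we call $R$ \emph{pinned} (in the language of Theorem~\ref{preliminaries:wldim-individualizations/thm}, pinning $R$ amounts to individualizing its vertices). Second, a homogeneous interspace $\interspace{R}{B}$ consists of a single basis relation, so all arcs between $R$ and $B$ carry one fixed color and constrain neither player; thus the only informative adjacencies of $G_\coherentConfig$ are the edges of $\quotientGraph{\coherentConfig}$, blown up by the configurations $\inducedCC{R}$ and the non-homogeneous interspaces. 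Spoiler now plays the cop strategy: she keeps the fibers of the current bag pinned and, to advance to a neighboring bag, unpins one fiber and pins the at most one new one, spending at most $t-1$ pebble placements and reusing the freed pebbles. Since each bag is a separator of $\quotientGraph{\coherentConfig}$, once its fibers are pinned every interspace joining distinct components of the quotient graph minus that bag is homogeneous by Property~\ref{coherent-config:wl2}; hence Duplicator's surviving partial strategies factor into independent strategies on the pieces cut off by the bag, so if $(G',\coloring')\not\cong(G_\coherentConfig,\coloring_\coherentConfig)$ some piece already witnesses a discrepancy and Spoiler descends into it. The descent ends at a leaf, where the remaining piece spans only the fibers of one bag; Spoiler pins them, a color mismatch appears, and Spoiler wins.

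What is left is to count the pebbles. At any instant the pinned fibers lie inside a single bag, and in fact only the at most $w$ fibers of the separator between the already-processed part and the rest need stay pinned, while at most one further fiber is momentarily in flux during a transition; a fiber of size $s$ is pinned with $s-1\le t-1$ pebbles, and the pebbles released when a fiber is unpinned are exactly those reused to pin the incoming fiber. A careful count of the pebbles in play, together with a constant budget for the local consistency checks at the leaves (as in the proof of Theorem~\ref{4-cc:cfi-wldim/thm}), gives $\wldim{\coherentConfig}\le t\cdot w$; the degenerate case of an edgeless quotient graph, where $\coherentConfig$ is a color-disjoint union of homogeneous configurations with $\wldim{\coherentConfig}\le 2$, is immediate. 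I expect the real difficulty to be the decoupling step rather than the bookkeeping: one must verify rigorously, in the style of a Feferman--Vaught decomposition, that pinning a separating bag of fibers makes Duplicator's prospects on the several pieces of the quotient graph truly independent, using that every cross-piece arc lies in a homogeneous interspace and hence, by coherence, has a color already determined by the pinned bag and imposing no additional constraint.
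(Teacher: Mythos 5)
Your approach is genuinely different from the paper's. The paper constructs from~$\coherentConfig$ an auxiliary uncolored graph~$\widehat{G}$ (via the doubly-subdivided adjacency graph of~$G'$ with attached leaves to encode colors) whose WL-dimension agrees with that of~$\coherentConfig$ and whose treewidth is bounded in terms of~$t\cdot \treewidth(\quotientGraph{\coherentConfig})$, then invokes the known treewidth-to-WL bound for graphs. You instead replay the cops-and-robbers/pebble-game argument directly on~$\coherentConfig$. Notably, the paper's own footnote flags your route: the direct cops-and-robbers argument "would give a bound of $t\cdot\treewidth(\quotientGraph{\coherentConfig})+3$ by simple dynamic programming." What the paper's reduction buys is that one does not have to re-verify, in the coherent-configuration setting, that a monotone cop strategy translates into a Spoiler strategy.

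There is a genuine gap in your pebble count. To pin a fiber of size~$s$ costs $s-1\le t-1$ pebbles, and during a transition you have at least $w$ separator fibers plus one fiber in flux pinned simultaneously, giving a peak of at least $(w+1)(t-1)=tw+t-w-1$ pebbles; on top of this you invoke an extra "constant budget" for leaf checks. For $t>w+1$ this already exceeds the $tw+1$ pebbles that would be available if $\wldim{\coherentConfig}\le tw$, so your bookkeeping does not support the clean $t\cdot w$ bound — you land somewhere around $tw+O(t)$, matching the footnote's $+3$-style slack rather than the lemma statement. (The paper's own proof sketch is also silently loose — blowing up a bag of $w+1$ fibers by at most $t$ vertices gives width $t(w+1)-1=tw+t-1$, not $tw$ — but since Lemma~\ref{lem:the:newest:global-argument} only uses a $3t$ slack the discrepancy never bites.) Separately, the decoupling step you flag as the real difficulty is actually routine: once a separating bag of fibers is pinned setwise, interspaces between fibers in different pieces are homogeneous by the definition of $\quotientGraph{\coherentConfig}$, not by Property~\ref{coherent-config:wl2}, and because Duplicator's bijections must respect the loop colors the fiber-to-fiber correspondence is already forced, so cross-piece arcs carry a single color determined by the endpoints' fibers and constrain nothing. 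The hard part was never there; it is the constant, and your "degenerate edgeless case" hand-wave also glosses over the fact that the stated inequality is literally false when $\treewidth(\quotientGraph{\coherentConfig})=0$ and $\coherentConfig$ is a single fiber with $\wldim{\coherentConfig}\ge 2$.
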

\begin{proof}[Proof sketch]
    We first reduce the problem to a similar problem on graphs\footnote{We should remark that there are also simpler, more direct ways to show the statement with only slightly worse bounds. For example using~\cite{DBLP:conf/swat/OtachiS14} would give a bound of~$t\cdot \treewidth(\quotientGraph{\coherentConfig})+3$ by simple dynamic programming. That bound is also sufficient for our purposes.}.
    Let~$G'$ be the graph on the same vertex set as~$\coherentConfig$ where two vertices are adjacent if they are in the same fiber or in fibers adjacent in~$\quotientGraph{\coherentConfig}$.
    From this, we construct a vertex colored graph~$(G,\chi)$ by subdividing edges twice as of~$G$ as follows. The graph~$(G,\chi)$ contains a vertex for every pair~$(v,u)$ of (not necessarily distinct) vertices of~$\coherentConfig$ which are contained in the same fiber or fibers adjacent in~$\quotientGraph{\coherentConfig}$.
    Vertex~$(v,u)$ is adjacent to~$(v,v)$ and to~$(u,v)$.
    The vertex~$(u,v)$ of~$G$ is colored with the color~$(F_u,F_v,U)$ where~$F_u$ is the fiber containing~$u$,~$F_v$ is the fiber containing~$v$ and~$U$ is the interspace containing~$(u,v)$.

    Since subdivision of edges does not increase the treewidth, the graph~$(G,\chi)$ has tree\-width at most~$t\cdot \treewidth(\quotientGraph{\coherentConfig})$.
    Moreover, if~$\wldim{\coherentConfig}\geq 2$ then, the graph~$G$ has the same WL-dimension as~$G$ since~$\wltwo$ can recover~$\coherentConfig$ from~$G$.

    To remove the vertex coloring of~$G$ we can transform~$G$ into a graph~$\widehat{G}$ by attaching leaves to every vertex of~$G$. Here, the number of leaves attached is so that each vertex gets at least two leaves and two vertices gain the same number of leaves exactly if the have the same color. The treewidth and the WL-dimension remain unchanged.

    Now it remains to to show that~$\treewidth(\widehat{G})\leq t\cdot  \treewidth(\quotientGraph{\coherentConfig})$ since then~$\wldim{\coherentConfig}\leq  \WLdim(\widehat{G})\leq t\cdot \treewidth(\quotientGraph{\coherentConfig})$.
    For this we simply observe that if~$\beta\colon T\rightarrow \mathcal{P}(V(\coherentConfig))$ is a tree decomposition for~$\quotientGraph{\coherentConfig}$ of width~$w$, then~$\beta'\colon T\rightarrow \mathcal{P}(V(\widehat{G}))$ with~$\beta'(r) = \{v\mid v\in \beta(t)\}$ is a tree decomposition~$\widehat{G}$ of width~$t\cdot w$.
\end{proof}


\section{A potential function}
\label{sec:potential:func}

For a coherent configuration~$\coherentConfig$ let~$\parameters(\coherentConfig)$ be the triple~$(n_\ell, k_\ell,n_s)$, where~$n_\ell$ is the number of vertices in large fibers,~$k_\ell$ is the number of large fibers and~$n_s$ is the number of small fibers.
Define the function~
\[
    \tau \colon \Nat^3 \rightarrow \Rel \text{ with }
    (n_\ell, k_\ell, n_s) \mapsto \frac{3n_\ell + n_s - 8k_\ell}{20}.
\]
Note that the parameters and function~$\tau$ are additive in the sense that we can compute the parameters separately for a configuration induced by an arbitrary partition of the fibers and then add them up.

Let~$\widehat{\f} \colon \Nat^3 \longrightarrow \Nat$ be the function that assigns to the triple~$(n_\ell, k_\ell,n_s)$ the maximum WL-dimension of a coherent configuration~$\coherentConfig$ with~$\parameters(\coherentConfig)=(n_\ell, k_\ell,n_s)$.
We define~$\f$ to be the function that assigns to a coherent configuration $\coherentConfig$ the maximum WL-dimension among all coherent configurations $\coherentConfig'$ for which~$\tau(\parameters(\coherentConfig'))\leq \tau(\parameters(\coherentConfig))$. (In some sense~$f$ is a monotonization of~$\widehat{f}$.)
Using this notation, note that the choice of potential function guarantees that if~$\coherentConfig'$ is finer than~$\coherentConfig$ then~$\f(\coherentConfig')\leq \f(\coherentConfig)$.
Finally, we define the function
\[
    \widetilde{\f} \colon \Rel \longrightarrow \Rel \text{ with } r \mapsto \max \{ \f(\coherentConfig) \mid \tau (\parameters(\coherentConfig)) \leq r \}.
\]

Our goal in the upcoming sections will be to show that~$\wldim{\coherentConfig}\leq \tau(\parameters(\coherentConfig))+o(n)$.
A key technique will be reductions that show that~$\wldim{\coherentConfig}\leq t+ \f(\tau (\parameters(\coherentConfig')))$ for some coherent configuration~$\coherentConfig'$ obtained from~$\coherentConfig$ in a controlled manner, which will give a recursive bound of the form~$\wldim{\coherentConfig} \leq t + \widetilde{f} (\tau(\parameters(\coherentConfig)) -t')$ where~$t' \geq t$.

\section{Local reductions}
\label{recursive-argement/sec}

In this section, we provide local reductions that trade progress in the potential with individualizations.
The previous section introduced the control mechanism balancing progress and costs.
However, the local reductions also have structural consequences which eventually allow us to restrict the structure of the coherent configuration in three ways:
first, it limits the number of neighbors a large fiber can have in the quotient graph.
Second, we eliminate certain interspace patterns.
For some other interspace patterns, we show that there can be at most one small neighboring fiber attached with this pattern.
Finally, we reduce the quotient graph degree of fibers of size~$4$ or~$6$.

Each reduction proceeds as follows.
We examine the neighborhood of a fiber.
In particular, we describe the neighborhood with the help of the occurring interspace patterns and the quotient degree.
We individualize one or two carefully chosen vertices, take the coherent closure, and restore criticality.
Afterwards, we weigh the number of individualized vertices against the progress in the terms of the potential function~$\tau$ we made due to fibers that split.

As already implied, many claims and proofs heavily rely on interspace patterns, which we defined in Section~\ref{interspace-large-small/sec}.
In particular, recall that by Corollary~\ref{large-small-interspace:classification:uniqueness/cor} an interspace between a small and a large fiber has exactly one of 17 interspace patterns listed in Theorem~\ref{global-argument:large-small-interspace:classification}.

To write the proofs of this section more concisely, we introduce an auxiliary notation as follows.
For triples~$(a,b,c),(a',b',c')\in \Nat^3$, we will use~$(a,b,c)\preceq (a',b',c')$ to denote that~$\tau(a,b,c)\leq \tau(a',b',c')$.
Define a function~$\hfunction\colon \Rel^+\rightarrow \Rel$ by setting~$\hfunction(a)= \max\{-0.4,-3\cdot \frac{\lceil 8a\rceil }{20}\}$.

\begin{lemma}
\label{local:progress-in-large/lem}
    Let~$\coherentConfig, \coherentConfig'$ be a coherent configurations so that~$\coherentConfig$ consists of a single large fiber and~$\coherentConfig'\finer \coherentConfig$.
    Let~$R_1,\ldots,R_u$ be the fibers of~$\coherentConfig'$ and~$\{t_i \mid i \in\{1,\ldots,u\}, |R_i|= t_i \cdot|\coherentConfig| \}$ be a set.
    Then~$\tau(\parameters(\coherentConfig'))\leq \tau(\parameters(\coherentConfig'))+\sum_{i = 1}^{u} \hfunction(t_i)+0.4$.
\end{lemma}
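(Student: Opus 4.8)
The plan is to prove the inequality — read with $\tau(\parameters(\coherentConfig))$, not $\tau(\parameters(\coherentConfig'))$, on the right-hand side, as otherwise it is vacuously false — by an additive bookkeeping of the potential over the fibers of $\coherentConfig'$. Write $m \coloneqq \abs{\coherentConfig}$; since $\coherentConfig$ is a single large fiber we have $m \geq 8$ and $\parameters(\coherentConfig) = (m,1,0)$, so $\tau(\parameters(\coherentConfig)) = \frac{3m-8}{20}$. Because $\coherentConfig' \finer \coherentConfig$, the fibers $R_1,\dots,R_u$ partition this single fiber, whence $\sum_{i=1}^u \abs{R_i} = m$ and each $t_i = \abs{R_i}/m$ lies in $(0,1]$. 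First I would split $\{1,\dots,u\}$ into the index sets $\mathcal{L}$, $\mathcal{S}$, $\mathcal{T}$ of large, small, and tiny fibers $R_i$; by additivity of $\parameters$ and the definition of $\tau$,
\[
    \tau(\parameters(\coherentConfig')) = \sum_{i \in \mathcal{L}} \frac{3\abs{R_i} - 8}{20} + \sum_{i \in \mathcal{S}} \frac{1}{20}.
\]

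The key step is to rewrite $\tau(\parameters(\coherentConfig)) = \frac{3m-8}{20} = \sum_{i=1}^u \frac{3\abs{R_i}}{20} - \frac{8}{20}$ using $m = \sum_i \abs{R_i}$, and then subtract the two expressions term by term. Grouping the difference $\tau(\parameters(\coherentConfig')) - \tau(\parameters(\coherentConfig))$ by the index $i$ leaves a contribution of $\frac{-8}{20}$ for each $i \in \mathcal{L}$, of $\frac{1 - 3\abs{R_i}}{20}$ for each $i \in \mathcal{S}$, of $\frac{-3\abs{R_i}}{20}$ for each $i \in \mathcal{T}$, plus one leftover summand $\frac{8}{20} = 0.4$. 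So it suffices to show that each per-fiber contribution is at most $\hfunction(t_i)$; summing over $i$ and adding back the $0.4$ then gives the claim.

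For $i \in \mathcal{L}$ the contribution is $-0.4 \leq \hfunction(t_i)$, since $\hfunction \geq -0.4$ by definition; for $i \in \mathcal{S}$ the contribution is $\frac{1-3\abs{R_i}}{20} \leq \frac{-11}{20} < -0.4 \leq \hfunction(t_i)$. The only case that uses the largeness of $\coherentConfig$ is $i \in \mathcal{T}$: from $m \geq 8$ we get $8 t_i = 8\abs{R_i}/m \leq \abs{R_i}$, hence $\lceil 8 t_i \rceil \leq \abs{R_i}$ because $\abs{R_i}$ is a positive integer, so $\hfunction(t_i) \geq -\frac{3\lceil 8 t_i \rceil}{20} \geq -\frac{3\abs{R_i}}{20}$, which is exactly what is needed.

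There is no genuine obstacle here; the content is careful accounting. The two points that deserve attention are (i) that the $-8/20$ offsets carried by the large fibers of $\coherentConfig'$ and by the single fiber $\coherentConfig$ are precisely what produces the residual $+0.4$, and (ii) that the hypothesis $m \geq 8$ is essential in the tiny-fiber case: without it a size-one fiber would have $\lceil 8t_i\rceil \geq 2$, so $\hfunction(t_i) \leq -0.3$, while its contribution is only $-0.15$, and the bound would break.
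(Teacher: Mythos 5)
Your proof is correct and follows essentially the same route as the paper: you attribute a per-fiber contribution to the potential difference, isolate the $+0.4$ offset coming from the single large fiber being replaced, and bound each per-fiber term by $\hfunction(t_i)$ with the same case split on large/small/tiny; the observation that $\lceil 8t_i\rceil\leq\abs{R_i}$ because $\abs{\coherentConfig}\geq 8$ is also exactly the paper's argument. One small caveat: your line $\tau(\parameters(\coherentConfig'))=\sum_{i\in\mathcal{L}}\frac{3\abs{R_i}-8}{20}+\sum_{i\in\mathcal{S}}\frac{1}{20}$ takes the statement ``$n_s$ is the number of small fibers'' at face value, but everywhere else the paper uses $n_s$ as the number of \emph{vertices} in small fibers (e.g.\ $\Delta\parameters(R_i)=(-\abs{R_i},0,\abs{R_i})$ for small $R_i$ and $\frac{1}{20}n_s$ matching the $n/20$ bound of Theorem~\ref{small-cc:wldim/thm}); under that intended reading the small-fiber term becomes $\frac{\abs{R_i}}{20}$ and the per-fiber contribution $\frac{-2\abs{R_i}}{20}\leq -0.4$, so your conclusion is unaffected.
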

\begin{proof}
    For each  fiber~$R_i$ set~$\Delta \parameters(R_i)=
    (a_i,b_i,c_i)$ where~
    \[
        a_i= \begin{cases}
            0        & \text{if~$|R_i|$ is large}\\
            -|R_i|   & \text{otherwise,}
        \end{cases} \
        b_i=\begin{cases}
            1       & \text{if~$R_i$ is large} \\
            0       & \text{otherwise,}
        \end{cases}
        \ \text{and}\
        c_i =\begin{cases}
            |R_i|   & \text{if~$R_i$ is small}\\
            0       & \text{otherwise.}
        \end{cases}
    \]

    We have~$\parameters(\coherentConfig)- \parameters(\coherentConfig')= (0,-1,0)+\sum_{i=1}^{u} (\Delta \parameters(R_i))$. It thus suffices to show in the following that~$\tau(\Delta \parameters(R_i))\leq \hfunction(t_i)$ for each~$i\in \{1,\ldots,u\}$.
    \begin{itemize}
        \item
        If~$R_i$ is tiny, then~$\Delta \parameters(R_i) = (-|R_i|,0,0)\preceq (-\lceil 8t_i\rceil,0,0)$.
        So~$\tau(\Delta \parameters(R_i))\leq -3\cdot \frac{\lceil 8t_i\rceil }{20}$.
        \item
        If~$R_i$ is small, then~$\Delta \parameters(R_i) = (-|R_i|,0,|R_i|) \preceq (-4,0,4)$.
        So~$\tau(\Delta \parameters(R_i))\leq -0.4$.
        \item
        If~$R_i$ is large, then~$\Delta \parameters(R_i) = (0,1,0)$ independent of~$t_i$.
        Hence~$\tau(\Delta \parameters(R_i))\leq -0.4$.
        \qedhere
    \end{itemize}
\end{proof}

We should remark that while Lemma~\ref{local:progress-in-large/lem} is sufficient to calculate the progress made in~$L$ in many cases, in some other case a closer examination produces better results on the progress.
This is especially true if we know conditions on the size of~$L$ or sizes of the vertex sets into which~$L$ splits.

\begin{lemma}
    \label{lem:local-argument:3-large-neighbors}
    Let~$\coherentConfig$ be a critical coherent configuration, $R \in \fibers{\coherentConfig}$, and~$\{L_1,L_2,L_3\}$ a set of large fibers adjacent to~$R$ in~$\quotientGraph{\coherentConfig}$.
    If~$R$ is a large fiber or~$|\ul(\inducedCC{R})| \geq 3$, then~$\wldim{\coherentConfig} \leq 1 + \widetilde{\f}( \tau(\parameters(\coherentConfig)) - 1)$.
\end{lemma}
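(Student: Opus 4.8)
The plan is to individualize a single vertex $r \in R$, take the coherent closure, restore criticality, and then argue that the resulting configuration has strictly smaller potential — in fact potential at most $\tau(\parameters(\coherentConfig)) - 1$. The key observation is that individualizing a vertex in $R$ forces $R$ to split (since $|\ul(\inducedCC{R})| \geq 3$ if $R$ is small, or $R$ is large, in either case $R$ is not a singleton and individualizing a vertex refines it), and moreover the three large neighbors $L_1, L_2, L_3$ must each split nontrivially. Concretely, each interspace $\interspace{R}{L_i}$ is non-homogeneous; once $r$ is individualized, the basis relations of $\interspace{R}{L_i}$ induce a proper coloring of $L_i$ via the sets $r U$ for $U \in \interspace{r}{L_i}$ after refinement, so $L_i$ splits into at least two fibers. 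Since each $L_i$ is large ($|L_i| \geq 8$), splitting a large fiber into two or more parts strictly decreases the potential: either a large fiber is replaced by two smaller fibers (each contributing either as two large fibers, where $k_\ell$ increases by at least one and hence $-8k_\ell$ drops by at least $8$, i.e. $\tau$ drops by at least $0.4$, or as small/tiny pieces, where by Lemma~\ref{local:progress-in-large/lem}-style accounting the term $3n_\ell$ drops faster than $n_s$ grows).

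The main technical step is to make the potential bookkeeping precise. I would apply (a version of) Lemma~\ref{local:progress-in-large/lem} to each of $L_1, L_2, L_3$ separately: each large fiber $L_i$, viewed as a single large fiber that gets refined by the coherent closure after individualization, contributes a change in $\tau$ of at most $\sum_j \hfunction(t_j^{(i)}) + 0.4$ where the $t_j^{(i)}$ record the relative sizes of the pieces $L_i$ splits into. The crucial point is that since $L_i$ genuinely splits (at least two parts), the bound $\sum_j \hfunction(t_j^{(i)}) + 0.4$ is at most $-0.4$: indeed if $L_i$ splits into two large fibers we gain a factor $(0,1,0) + (0,1,0)$ versus $(0,1,0)$, a net $\tau$-decrease of $0.4$; if at least one piece is small or tiny, $\hfunction$ of that piece is at most $-0.4$ and the other pieces contribute non-positively, and the $+0.4$ from the lemma is cancelled, so again the net change is at most $-0.4$. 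Summing over the three independent large neighbors $L_1, L_2, L_3$ — using that $\parameters$ and $\tau$ are additive over disjoint unions of fibers — gives a total potential decrease of at least $3 \cdot 0.4 = 1.2 \geq 1$ from the splitting of $L_1, L_2, L_3$ alone. (The splitting of $R$ itself only helps further, or at worst is neutral if $R$ is small and splits into tiny pieces.) Hence $\tau(\parameters(\coherentConfig')) \leq \tau(\parameters(\coherentConfig)) - 1$ where $\coherentConfig'$ is the configuration after individualizing $r$ and restoring criticality, so $\f(\coherentConfig_r) \leq \widetilde{\f}(\tau(\parameters(\coherentConfig)) - 1)$, and Theorem~\ref{preliminaries:wldim-individualizations/thm} gives $\wldim{\coherentConfig} \leq 1 + \max\{2, \wldim{\coherentConfig_r}\} \leq 1 + \widetilde{\f}(\tau(\parameters(\coherentConfig)) - 1)$ after checking that the edge case $\wldim{\coherentConfig_r} \leq 2$ is absorbed (the function $\widetilde{\f}$ is at least the constant $2$ by its definition as a monotonization, so the $\max$ with $2$ is harmless).

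The step I expect to be the main obstacle is verifying that each $L_i$ really does split \emph{and} quantifying the split well enough that the sum of the three contributions reaches a full unit of potential rather than, say, only $0.6$. The splitting itself is standard coherence: since $\interspace{R}{L_i}$ is non-homogeneous and $|\ul(\inducedCC{R})| \geq 3$ (or $R$ large), individualizing $r$ and applying $\wltwo$ distinguishes the vertices of $L_i$ according to which basis relation in $\interspace{R}{L_i}$ connects them to $r$, and at least two distinct such classes are non-empty. The subtlety in the accounting is that a priori $L_i$ could split into one large piece plus one tiny piece of size as small as $1$; then $\hfunction(1/8)$ could be only $-3/20 \approx -0.15$ rather than $-0.4$, and the $+0.4$ offset would make that $L_i$'s contribution possibly positive. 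I would handle this exactly as Lemma~\ref{local:progress-in-large/lem} does: the offset $+0.4$ is a \emph{global} $(0,-1,0)$ correction from treating $L_i$ as "one large fiber" in $\parameters(\coherentConfig)$, which is shared — not repeated per neighbor — so only one $+0.4$ enters in total, while each of the three neighbors independently contributes its genuinely negative split term. More carefully, I would write $\parameters(\coherentConfig) - \parameters(\coherentConfig') = \sum_i \Delta\parameters(L_i) + \Delta\parameters(R) + (\text{other fibers, } \preceq 0)$ and show $\tau$ of each $\Delta\parameters(L_i)$ is at most $-0.4$ directly (a large fiber of size $\geq 8$ losing at least one vertex into a fiber of size $<8$, or being partitioned so that $k_\ell$ strictly increases), which immediately yields $\tau(\parameters(\coherentConfig')) \leq \tau(\parameters(\coherentConfig)) - 1.2 \leq \tau(\parameters(\coherentConfig)) - 1$, completing the proof.
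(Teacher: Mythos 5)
Your overall plan — individualize a vertex $r\in R$, track the change in $\tau$ fiber-by-fiber, and invoke Theorem~\ref{preliminaries:wldim-individualizations/thm} — is exactly the paper's approach, but the quantitative accounting has a genuine gap. You claim that each large neighbor $L_i$ contributes a change $t_{L_i}\leq -0.4$ to $\tau$. That is false. The problematic case, which the paper handles by a direct case distinction on $\minimalDegree{R}{L_i}$ rather than via Lemma~\ref{local:progress-in-large/lem}, is $\minimalDegree{R}{L_i}\in\{2,3\}$ with a large remainder: after individualizing $r$, the fiber $L_i$ sheds a tiny piece of size $\minimalDegree{R}{L_i}\geq 2$ (recall $\minimalDegree{R}{L_i}\neq 1$ by Lemma~\ref{critical:star/lem}) and keeps a large fiber. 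Then $\Delta\parameters(L_i)=(-\minimalDegree{R}{L_i},0,0)\preceq(-2,0,0)$, so $t_{L_i}=\tau(-2,0,0)=-6/20=-0.3$, not $-0.4$. Your first resolution attempt fails because $\hfunction$ of that tiny piece is not $\leq -0.4$: for a piece of size $2$ in a fiber of size $\geq 16$ one has $\lceil 8t\rceil=1$ and $\hfunction(t)=-0.15$; even at $|L_i|=8$ one only gets $-0.3$. Your second fix also misreads Lemma~\ref{local:progress-in-large/lem}: the $+0.4$ there is the contribution of the $(0,-1,0)$ term for the single large fiber being refined, so it recurs once per $L_i$ and is \emph{not} shared across the three neighbors. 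Your final fallback, that ``a large fiber of size $\geq 8$ losing at least one vertex into a fiber of size $<8$'' already yields $\tau(\Delta\parameters(L_i))\leq -0.4$, is disproved by the same $(-2,0,0)$ example.

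Consequently $\sum_{i=1}^3 t_{L_i}\leq -0.9$ is the best you can guarantee, and the contribution $t_R\leq -0.1$ from $R$ splitting is \emph{essential}, not the optional bonus you treat it as. This is precisely why the lemma's hypothesis requires $R$ large or $|\ul(\inducedCC{R})|\geq 3$: if $R$ is large, $\Delta\parameters(R)\preceq(-1,0,0)$ gives $t_R\leq -0.15$, and if $R$ is small with $|\ul(\inducedCC{R})|\geq 3$, then at least a singleton and one further size-$\leq 2$ piece split off, giving $t_R\leq -0.1$. Adding $-0.9 + (-0.1) = -1$ exactly closes the bound. So the correct proof has the same skeleton as yours, but you must (a) compute $\tau(\Delta\parameters(L_i))$ directly rather than through $\hfunction$, accepting the weaker bound $-0.3$ in the low-degree case, and (b) use the hypothesis on $R$ to secure the remaining $-0.1$.
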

\begin{proof}
    Let~$r$ be a vertex of~$R$ and~$i \in \{1,2,3\}$.
    We define~$(a_i,b_i,c_i) \coloneqq \parameters(\coherentConfig_r[L_i]) - \parameters(\coherentConfig[L_i])$ and~$t_{L_i} \coloneqq \tau(a_i,b_i,c_i)$.
    Lastly define~$x_i \coloneqq \abs{L_i}-\minimalDegree{R}{L_i}$.
    In the following we examine~$\coherentConfig_r[L_i]$.
    \begin{itemize}
        \item
        First assume that~$\minimalDegree{R}{L_i} \geq 8$.
        Recall that~$\abs{R} \geq 2\minimalDegree{R}{L_i}$.
        The fiber~$L_i$ splits into two large fibers (or unions of fibers).
        Hence~$(a_i,b_i,c_i) \preceq (0,1,0)$ and~$t_{L_i} \leq -0.4$.

        \item
        Next assume that~$4 \leq \minimalDegree{R}{L_i} < 8$.
        If~$x_i \geq 8$, then a large fiber remains.
        Thus~$(a_i,b_i,c_i) \preceq (-\minimalDegree{R}{L_i},0,\minimalDegree{R}{L_i}) \preceq (-4,0,4)$ and~$t_{L_i} \leq -0.4$.
        If~$4 \leq x_i < 8$, then~$L_i$ splits entirely into small fibers.
        Hence~$(a_i,b_i,c_i) \preceq (-\abs{L_i},-1,\abs{L_i}) \preceq (-8,-1,8)$ and~$t_{L_i} \leq -0.4$.

        \item
        Finally assume that~$\minimalDegree{R}{L_i} < 4$.
        If~$x_i \geq 8$, then a large fiber remains.
        Thus $(a_i,b_i,c_i) \preceq (-\minimalDegree{R}{L_i},0,0) \preceq (-2,0,0)$ and~$t_{L_i} \leq -0.3$.
        If~$x_i < 8$, then fiber~$L_i$ splits entirely into small and tiny fibers with at least two vertices in tiny fibers.
        Hence~$(a_i,b_i,c_i) \preceq (-\abs{L_i},-1,\abs{L_i} - \minimalDegree{R}{L_i}) \preceq (-8,-1,6)$ and~$t_{L_i} \leq -0.5$.
    \end{itemize}
    Summarizing the case distinction, we obtain~$t_{B_j} \leq -0.3$ for all~$j \in \{1,2,3\}$.

    If~$R$ is large, we also obtain~$t_R\leq 0.15$ since one vertex in~$R$ is individualized.
    If~$R$ is small and~$|\ul(\inducedCC{R})| \geq 3$, then by Lemma~\ref{small-cc:induced-cc/lem} we obtain~$t_r\leq 0.1$ since at least one size-$2$ fiber splits from~$B$ in~$\coherentConfig_b$.

    Altogether, we conclude
    \[
        \tau(\parameters(\coherentConfig_r)) \leq \tau(\parameters(\coherentConfig)) + t_R + \sum_{i=1}^{3} t_{B_i} \leq \tau(\parameters(\coherentConfig)) - 1.
        \qedhere
    \]
\end{proof}

For a coherent configuration~$\coherentConfig$, we define~$\quotientGraphLarge{\coherentConfig}$ and~$\quotientGraphSmall{\coherentConfig}$ to be the subgraph of~$\quotientGraph{\coherentConfig}$ induced by the set of all large fibers and small fibers, respectively.

\begin{lemma}\label{lem:max:degree:2:means:tw:3}
    If~$\coherentConfig$ is a coherent configuration in which no large fiber has at least 3 large neighbors, then the connected components of~$\quotientGraphLarge{\coherentConfig}$ have treewidth at most~$2$.
\end{lemma}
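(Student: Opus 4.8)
The claim is that if $\coherentConfig$ is a coherent configuration in which no large fiber has at least $3$ large neighbors (i.e.\ in $\quotientGraphLarge{\coherentConfig}$ every vertex has degree at most $2$), then each connected component of $\quotientGraphLarge{\coherentConfig}$ has treewidth at most $2$. This is purely a graph-theoretic statement about $\quotientGraphLarge{\coherentConfig}$: a graph of maximum degree at most $2$ is a disjoint union of paths and cycles, and such graphs have treewidth at most $2$.

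**Plan.** The plan is to forget the coherent configuration entirely after one sentence and argue about the abstract graph $H \coloneqq \quotientGraphLarge{\coherentConfig}$. By hypothesis, every vertex of $H$ has degree at most $2$. First I would recall the standard fact that a graph with maximum degree at most $2$ is a disjoint union of simple paths and simple cycles: each connected component, being connected with all degrees at most $2$, is either a path (if it contains a vertex of degree at most $1$, or is a single vertex or single edge) or a cycle (if all vertices have degree exactly $2$). Then I would invoke the elementary bounds $\treewidth(P) \leq 1$ for a path $P$ and $\treewidth(\cycle{k}) \leq 2$ for a cycle $\cycle{k}$ with $k \geq 3$. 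Since treewidth is the maximum of the treewidths of the connected components, and each component of $H$ is a path or a cycle, every connected component of $\quotientGraphLarge{\coherentConfig}$ has treewidth at most $2$.

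**Key steps in order.** (1) Set $H \coloneqq \quotientGraphLarge{\coherentConfig}$ and observe that ``no large fiber has at least $3$ large neighbors'' is exactly the statement that $H$ has maximum degree at most $2$. (2) Deduce that every connected component of $H$ is a path or a cycle. (3) Exhibit explicit tree (in fact path) decompositions: for a path on vertices $v_1,\dots,v_m$, the decomposition with bags $\{v_i, v_{i+1}\}$ arranged along a path has width $1$; for a cycle on vertices $v_1,\dots,v_m$ ($m \geq 3$), the decomposition with bags $\{v_1,v_i,v_{i+1}\}$ for $i = 2,\dots,m-1$ arranged along a path has width $2$. (4) Conclude, using that the treewidth of a graph is the maximum treewidth among its connected components.

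**Main obstacle.** Honestly there is no real obstacle here — this is a warm-up lemma feeding into the global argument, and every ingredient is textbook. The only thing to be careful about is handling degenerate components correctly (isolated vertices, single edges, and the fact that in a general graph a ``cycle'' component could a priori involve multi-edges or loops, which is not the case here since $\quotientGraph{\coherentConfig}$ is defined as a simple undirected graph). Once the degree-$\leq 2$ classification of components into simple paths and simple cycles is in place, the treewidth bounds follow from the explicit decompositions above, so the write-up is short.
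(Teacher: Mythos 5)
Your proposal is correct and follows essentially the same route as the paper, which simply observes that the hypothesis makes $\quotientGraphLarge{\coherentConfig}$ have maximum degree at most $2$ and hence treewidth at most $2$; you merely spell out the path/cycle classification and the explicit decompositions that the paper leaves implicit.
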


\begin{proof}
    If no large fiber has 3 large neighbors, then~$\quotientGraphLarge{\coherentConfig}$ has maximum degree at most 2 and thus treewidth at most 2.
\end{proof}

\begin{theorem}
\label{local:L-S/thm}
    Let~$\coherentConfig$ be a critical coherent configuration and suppose~$L,S \in \fibers{\coherentConfig}$ with~$L$ large and~$S$ small.
    If the set~$\{S\}$ is not dominating and the interspace~$\interspace{L}{S}$ has one of the interspace patterns~$\ipsixMatchingAndCycle$, $\ipsixMatchingMatching$, $\ipsixMatchingTwice$, $\ipsixMatchingComplement$, $\ipsixMatchingAndComplement$, $\ipsixTriangleComplement$, or $\ipsixTriangleComplementTwice$,
    then~$\wldim{\coherentConfig} \leq 1 + \widetilde{\f}( \tau(\parameters(\coherentConfig)) - 1)$.
\end{theorem}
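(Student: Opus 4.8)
The plan is a one-vertex reduction of the kind announced at the start of this section: I would individualize one carefully chosen vertex $v$, pass to $\coherentConfig_v$, restore criticality, and verify the potential drop $\tau(\parameters(\coherentConfig_v))\leq\tau(\parameters(\coherentConfig))-1$. Granting this, the bound follows from Theorem~\ref{preliminaries:wldim-individualizations/thm} (which gives $\wldim{\coherentConfig}\leq 1+\max\{2,\wldim{\coherentConfig_v}\}$), from the fact that restoring criticality only deletes fibers and hence does not increase $\tau$ (deleting a large fiber decreases $3n_\ell-8k_\ell$ by at least $16$, deleting a small fiber decreases $n_s$ by at least $4$, and deleting a tiny fiber changes nothing), and from monotonicity of $\widetilde{\f}$; here $\tau(\parameters(\coherentConfig))\geq\tfrac{26}{20}$ because $\coherentConfig$ has a large fiber, the size-$6$ fiber $S$, and --- since $\{S\}$ is not dominating --- a third fiber, so $\widetilde{\f}$ dominates the constant $2$ on the relevant argument range.

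Each of the seven listed patterns has $\abs{S}=6$, so $\minimalDegree{L}{S}\in\{2,3\}$, $\abs{\interspace{L}{S}}\in\{2,3\}$, and by Lemma~\ref{small-cc:induced-cc/lem} the configuration $\inducedCC{S}$ is one of the size-$6$ homogeneous configurations whose underlying undirected structure contains $3K_2$ (for the ``matching'' patterns, as $3K_2=\overline{K_{2,2,2}}$) or $2K_3$ (for $\ipsixTriangleComplement$ and $\ipsixTriangleComplementTwice$, as $2K_3=\overline{K_{3,3}}$). I would take $v=s\in S$. Individualizing $s$ separates $\{s\}$ and refines the rest of $S$ according to $\inducedCC{S}$: in each relevant case this breaks $S$ entirely into tiny fibers, except when $\inducedCC{S}\in\{(3K_2,K_{2,2,2}),(3K_2,\overrightarrow{C_3}[K_2])\}$, where it leaves $\{s\}$, the $3K_2$-partner of $s$, and one size-$4$ fiber; this contributes at least $-6/20$ to $\tau$ (respectively $-4/20$ in the exceptional case). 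Moreover every fiber $R$ adjacent to $S$ splits in $\coherentConfig_s$: for a non-trivial $U\in\interspace{R}{S}$ the set $\{r\in R:s\in rU\}$ has size $\intDegree{U^\star}=\abs{R}\,\intDegree{U}/6\geq 2$ (using $\intDegree{U}\geq 2$, from criticality and Lemma~\ref{critical:star/lem}) and is a proper nonempty subset of $R$, hence a separate fiber; more precisely $R$ refines according to the partition of $S$ into tiny fibers, with granularity as recorded by $\equivalenceClasses{R,S}$ restricted to that partition (cf.\ Table~\ref{interspace-pattern:partition-size/tab}). By Lemma~\ref{local:progress-in-large/lem} a large neighbour that splits off at least one fiber contributes at least $-8/20$, and strictly more once any resulting piece has size below $8$.

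The remaining, and main, task is to show the total drop always reaches $1$. The crude estimate $-6/20$ from $S$ plus $-8/20$ from $L$ only yields $-14/20$, so in the tight patterns one must add the progress of a second neighbour of $S$. This is where the structural results of Section~\ref{critical:restorable/sec} enter: since $\{S\}$ is not dominating, when $\abs{\ul(\inducedCC{S})}=3$ Lemma~\ref{critical:6-cc:restorable:DUC:deg1/lem} forbids all neighbours of $S$ from carrying the same listed pattern (and Lemma~\ref{critical:small-cc:module/lem} rules out the degenerate module sub-cases), while when $\abs{\ul(\inducedCC{S})}>3$ Lemma~\ref{critical:6-cc:restorable:large-neighborhood/lem} supplies the non-empty parts $\mathcal R,\mathcal B$ of the neighbourhood of $S$; Lemmas~\ref{critical:adjacent-interspace-cycle/lem} and~\ref{6-cc:implied-interspace:DUC-DUC/lem} then force the fibers attached with the relevant patterns to be large, or pin down the few possible small size-$6$ neighbours. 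One then runs a case analysis over the seven patterns and the sub-types of $\inducedCC{S}$: for patterns with a large $\abs{\partition{L,S}}$ such as $\ipsixMatchingAndComplement$ or $\ipsixMatchingComplement$, $L$ alone fragments finely enough (into several pieces, some tiny) to cross $-1$ even with the weaker progress on $S$; for the others, such as $\ipsixTriangleComplement$, one combines the progress of $S$, of $L$ (which here splits only into two pieces), and of a second neighbour --- e.g.\ a size-$6$ fiber with pattern $\ipsixTriangle$ splitting into its two triangles, giving exactly $-6/20-8/20-6/20=-1$. The substance --- and the main obstacle --- is precisely this bookkeeping: several of the inequalities are essentially tight, so the argument must in each case locate the correct second (or further) splitting neighbour and the exact split granularity from the partition-size and partition-structure tables, rather than relying on any single uniform bound.
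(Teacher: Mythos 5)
Your overall skeleton (individualize one vertex, take the closure, restore criticality, invoke Theorem~\ref{preliminaries:wldim-individualizations/thm}, and do the potential bookkeeping via Lemma~\ref{local:progress-in-large/lem}) is the same as the paper's, but your key choice of vertex is where the plan breaks. You always individualize $s\in S$; the paper individualizes $\ell\in L$ in five of the seven patterns, and this is not a cosmetic difference. Individualizing $\ell$ splits $L$ into the classes indexed by $\abs{vU\cap\ell U}$ (and $\abs{vU''\cap \ell U''}$, etc.), i.e.\ into $3$--$6$ parts of prescribed fractions of $\abs{L}$, which alone yields $t_L\leq -0.7$ to $-1$ and, together with $t_S\leq -0.3$, crosses $-1$ with no second neighbor needed. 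Individualizing $s$ only splits $L$ according to how $S\setminus\{s\}$ splits, which in general is just ``$s\in vU$'' versus not (two or three classes), giving $t_L\leq -0.4$ to $-0.8$; your own estimate $-6/20-8/20=-14/20$ shows you then depend on a second neighbor of $S$. But the rescue you cite, Lemma~\ref{critical:6-cc:restorable:DUC:deg1/lem} (and Lemma~\ref{critical:6-cc:restorable:large-neighborhood/lem}), only covers the patterns on its list, which excludes exactly $\ipsixMatchingComplement$, $\ipsixMatchingAndComplement$ and $\ipsixTriangleComplementTwice$. For instance, for $\ipsixMatchingComplement$ with $\inducedCC{S}=(3K_2,K_{2,2,2})$ nothing prevents $\colorDeg{S}=1$; individualizing $s$ then splits $S$ only into $\{s\}$, its matched partner, and a size-$4$ set (so $t_S=\tau(0,0,-2)=-0.1$, not the $-4/20$ you claim), and splits $L$ only into the two halves determined by whether $s$ or its partner lies in $vU$, so $t_L\leq-0.4$ and the total is about $-0.5$ with no further neighbor available. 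Your assertion that for large $\abs{\partition{L,S}}$ ``$L$ alone fragments finely enough'' is precisely what fails when the individualized vertex sits in $S$; the fine fragmentation is a consequence of individualizing in $L$.

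Even for patterns that are on the list, your plan is not yet a proof: the second neighbor's contribution must be checked against every pattern compatible with $\inducedCC{S}$, and it can fall short (e.g.\ for $\ipsixMatchingTwice$ with the other neighbor attached via $\ipsixMatchingComplementD$, individualizing $s$ gives roughly $-0.1-0.4-0.4=-0.9<1$). You explicitly defer this ``bookkeeping'', but it is the substance of the theorem; the paper resolves it by choosing $\ell$ (so that $L$ itself carries the drop) and, in the one genuinely two-neighbor case $\ipsixTriangleComplement$, by a case split on $\abs{L}$ with exact fiber sizes. As written, the proposal has a genuine gap for at least the three unlisted patterns and an unverified case analysis for the rest.
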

\begin{proof}
    Choose~$\ell \in L$ and~$s\in S$ arbitrarily.
    Let~$Z$ be a union of fibers of~$\coherentConfig$ and~$V' \subseteq \vertices(\coherentConfig)$.
    We define~$(a_Z,b_Z,c_Z) \coloneqq \parameters(\coherentConfig_{V'}[Z]) - \parameters(\coherentConfig[Z])$ and~$t_Z \coloneqq \tau(a_Z,b_Z,c_Z)$.

    (\textit{Case~$\ipsixTriangleComplement$}).
    Assume that~$\interspace{L}{S}$ has the interspace pattern~$\ipsixTriangleComplement$.
    This implies that~$9$ divides~$|L|$.
    By Lemma~\ref{critical:6-cc:restorable:DUC:deg1/lem}, there is a fiber~$R \in \fibers{\coherentConfig}\setminus\{L\}$ such that~$\interspace{R}{S}$ has the interspace pattern~$\ipsixTriangle$.
    \begin{itemize}
        \item
        If~$|L| \leq 18$, we set~$V' \coloneqq \{s\}$ and consider~$\coherentConfig_s$.
        Fiber~$L$ splits into fibers of size~$\frac{|L|}{3}$ and~$\frac{2|L|}{3}$.
        If~$|L| = 9$, then~$t_L \leq \tau(-9,-1,6) \leq -0.65$.
        If~$|L| = 18$, then~$t_L \leq \tau(-6,0,0) \leq -0.9$.
        Fiber~$R$ splits into two fibers of equal size.
        If~$|R| \geq 16$, then~$t_R \leq \tau(0,1,0)\leq-0.4$, and
        if~$|R| < 16$, then~$t_R \leq \tau(-8,-1,8) \leq -0.4$.
        Fiber~$S$ splits into tiny fibers and thus~$t_S \leq -0.3$.

        \item
        If~$|L| \geq 27$,  we set~$V' \coloneqq \{\ell\}$ and consider~$\coherentConfig_\ell$.
        Fiber~$L$ splits into three fibers~$L'$,~$L''$, and~$L'''$ of sizes~$\frac{|L|}{9}$, $\frac{4|L|}{9}$, and~$\frac{4|L|}{9}$ respectively (or something finer).
        If~$|L| = 27$, then~$t_L \leq \tau(-3,1,0) \leq -0.85$.
        If~$36 \leq |L| \leq 63$, then~$t_L \leq \tau(-4,1,4) \leq -0.8$.
        If~$72 \leq |L|$, then~$t_L \leq \tau(0,2,0) \leq -0.8$.
        Fiber~$S$ splits into a size~$2$ fiber and a fiber~$S'$ of size~$4$.
        Observe that either~$\coherentConfig_\ell[L'',S']$ or~$\coherentConfig_\ell[L''',S']$ contains a basis relation (or a union of basis relation) isomorphic to a star.
        Thus~$t_S \leq \tau(0,0,-6) \leq -0.3$.
    \end{itemize}
    We conclude that~$\tau(\parameters(\coherentConfig_{V'})) \leq \tau(\parameters(\coherentConfig)) - 1.1$.

    (\textit{Case~$\ipsixMatchingMatching$}).
    Assume that the interspace~$\interspace{L}{S}$ has the interspace pattern~$\ipsixMatchingMatching$.
    This implies that~$3$ divides~$|L|$.
    We set~$V' \coloneqq \{s\}$ where~$s \in S$.
    There are only singletons in~$\fibers{\coherentConfig_{s}[S]}$ and~$t_S \leq \tau(0,0,-6) \leq -0.3$.
    Since~$\coherentConfig_s[S]$ is discrete and~$|\partition{L,S}| = 3$, fiber~$L$ splits into three fibers (or unions of fibers) in~$\coherentConfig_{s}$.
    Each of these fibers has size~$\frac{\abs{L}}{3}$.
    By Lemma~\ref{local:progress-in-large/lem} we obtain~$t_L \leq -0.8$.
    We conclude that~$\tau(\parameters(\coherentConfig_{s})) \leq \tau(\parameters(\coherentConfig))- 1.1$.

    (\textit{Case~$\ipsixMatchingAndCycle$}).
    Assume that the interspace~$\interspace{L}{S}$ has the interspace pattern~$\ipsixMatchingAndCycle$.
    This implies that~$6$ divides~$|L|$.
    Let~$U' = U^1_1(\interspace{L}{S})$ and~$U'' = U^2_1(\interspace{L}{S})$.
    We set~$V' \coloneqq \{\ell\}$.
    There are only fibers of size at most~$2$ in~$\fibers{\coherentConfig_\ell[S]}$.
    Thus~$t_S \leq \tau(0,0,-6) \leq -0.3$.
    There are the following fibers (or unions of fibers) in~$\fibers{\coherentConfig_\ell[L]}$:
    $\{v \in L \mid vU' = \ell U'\}$ and
    $\{v \in L \mid vU' \not\subseteq \ell U' \cup \ell U''\}$, both of which have size~$\frac{\abs{L}}{6}$, as well as
    $\{v \in L \mid \abs{vU' \cap \ell U' } = 1 \}$ and
    $\{v \in L \mid \abs{vU' \cap \ell U''} = 1 \wedge vU' \cap \ell U' = \emptyset \}$, both of which have size~$\frac{\abs{L}}{3}$.
    By Lemma~\ref{local:progress-in-large/lem} we obtain~$t_L \leq -1$.
    We conclude that~$\tau(\parameters(\coherentConfig_\ell)) \leq \tau(\parameters(\coherentConfig))- 1.3$.

    (\textit{Case~$\ipsixMatchingTwice$}).
    Assume that~$\interspace{L}{S}$ has the interspace pattern~$\ipsixMatchingTwice$. This implies that~$3$ divides~$|L|$.
    We set~$U' = U^1_1(\interspace{L}{S})$ and~$U'' = U^1_2(\interspace{L}{S})$.
    We set~$V' \coloneqq \{\ell\}$.
    There are only size-$2$ fibers in~$\fibers{\coherentConfig_\ell[S]}$, and thus~$t_S \leq \tau(0,0,-6) \leq -0.3$.
    In~$\fibers{\coherentConfig_\ell[L]}$, there are the following fibers (or unions of fibers):
    $\{v \in L \mid vU' = \ell U' \}$,
    $\{v \in L \mid vU' = \ell U''\}$, and
    $\{v \in L \mid vU' \not\subseteq \ell U' \cup \ell U''\}$,
    all of which have size~$\frac{\abs{L}}{3}$.
    By Lemma~\ref{local:progress-in-large/lem} we obtain~$t_L \leq -0.8$.
    We conclude that~$\tau(\parameters(\coherentConfig_\ell)) \leq \tau(\parameters(\coherentConfig))- 1.1$.

    (\textit{Case~$\ipsixMatchingAndComplement$}).
    Assume that the interspace~$\interspace{L}{S}$ has the interspace pattern~$\ipsixMatchingAndComplement$.
    This implies that~$12$ divides~$|L|$.
    Let~$U' = U^1_1(\interspace{L}{S})$ and~$U'' = U^1_2(\interspace{L}{S})$.
    We set~$V' \coloneqq \{\ell\}$.
    There are only size-$2$ fibers in~$\fibers{\coherentConfig_\ell[S]}$ and thus~$t_S \leq \tau(0,0,-6) \leq -0.3$.
    In~$\fibers{\coherentConfig_\ell[L]}$, there are the following fibers (or unions of fibers):
    $\{v \in L \mid vU' = \ell U'\wedge v U'' = \ell U''\}$ and
    $\{v \in L \mid vU' = \ell U'\wedge v U '' \cap \ell U'' = \emptyset\}$, both of which have size~$\frac{\abs{L}}{12}$, as well as
    $\{v \in L \mid vU' = \ell U'\wedge \abs{v U'' \cap \ell U''} = 1\}$, which has size~$\frac{\abs{L}}{6}$, as well as
    $\{v \in L \mid vU' \neq \ell U''\wedge \abs{v U'' \cap \ell U''} = 1\}$ and
    $\{v \in L \mid vU' \neq \ell U''\wedge v U'' \cap \ell U'' = \emptyset\}$, both of which have size~$\frac{|L|}{3}$.
    By Lemma~\ref{local:progress-in-large/lem} we obtain~$t_L \leq -1$.
    We conclude that~$\tau(\parameters(\coherentConfig_\ell)) \leq \tau(\parameters(\coherentConfig))- 1.3$.

    (\textit{Case~$\ipsixMatchingComplement$}).
    Assume that the interspace~$\interspace{L}{S}$ has the interspace pattern~$\ipsixMatchingComplement$.
    We set~$V' \coloneqq \{\ell\}$.
    Let~$U' = U^1_1(\interspace{L}{S})$.
    There are two size-$3$ fibers in~$\fibers{\coherentConfig_\ell[S]}$  and thus~$t_S \leq \tau(0,0,-6) \leq -0.3$.
    In~$\fibers{\coherentConfig_\ell[L]}$, there are the following fibers (or unions of fibers):
    $\{v \in L \mid vU' = \ell U'\}$ and
    $\{v \in L \mid vU' \cap \ell U' = \emptyset\}$, both of which have size~$\frac{\abs{L}}{8}$,
    $\{v \in L \mid \abs{vU' \cap \ell U'} = 1\}$ and
    $\{v \in L \mid \abs{vU' \cap \ell U'} = 2\}$, both of which have size~$\frac{3\abs{L}}{8}$.
    By Lemma~\ref{local:progress-in-large/lem} we obtain~$t_L \leq -0.7$.
    We conclude that~$\tau(\parameters(\coherentConfig_\ell)) \leq \tau(\parameters(\coherentConfig))- 1$.

    (\textit{Case~$\ipsixTriangleComplementTwice$}).
    Assume that the interspace~$\interspace{L}{S}$ has the interspace pattern~$\ipsixTriangleComplementTwice$.
    This implies that~$9$ divides~$|L|$.
    Let~$U' = U^1_1(\interspace{L}{S})$ and~$U'' = U^2_1(\interspace{L}{S})$.
    We set~$V' \coloneqq \{\ell\}$.
    There are three size-$2$ fibers in~$\fibers{\coherentConfig_\ell[S]}$, and thus~$t_S \leq \tau(0,0,-6) \leq -0.3$.
    Let~$S' \coloneqq S \setminus (\ell U' \cup \ell U'')$.
    In~$\fibers{\coherentConfig_\ell[L]}$, there are the following fibers (or unions of fibers):
    $\{v \in L \mid v U' = \ell U',  v U'' \in \{\ell U'', S'      \} \}$,
    $\{v \in L \mid v U' = \ell U'', v U'' \in \{\ell U',  S'      \} \}$, and
    $\{v \in L \mid v U' = S',       v U'' \in \{\ell U',  \ell U''\} \}$, all three of which have~$\frac{\abs{L}}{9}$, as well as
    $\{v \in L \mid v U' = \ell U',  v U'' \notin \{\ell U'', S'      \} \}$,
    $\{v \in L \mid v U' = \ell U'', v U'' \notin \{\ell U',  S'      \} \}$, and
    $\{v \in L \mid v U' = S',       v U'' \notin \{\ell U',  \ell U''\} \}$, all three  of which have~$\frac{2\abs{L}}{9}$.
    By Lemma~\ref{local:progress-in-large/lem} we obtain~$t_L \leq -0.95$.
    We conclude that~$\tau(\parameters(\coherentConfig_\ell)) \leq \tau(\parameters(\coherentConfig))- 1.25$.
\end{proof}

\begin{theorem}
\label{local:S-L-S/thm}
    Let~$\coherentConfig$ be a critical coherent configuration and let~$(S,L,S')$ be a path in~$\quotientGraph{\coherentConfig}$ with~$L$ large and~$S,S'$ small.
    If~$\interspace{L}{S}$ has the interspace pattern~$\ipsixMatchingComplementD$ and~$\interspace{L}{S'}$ has one of the interspace patterns~$\ipsixMatchingComplementD$, $\ipfourCycle$, $\ipsixMatching$, $\ipfourMatching$, or~$\ipsixTriangle$,
    then~$\wldim{\coherentConfig} \leq 1 + \widetilde{\f}( \tau(\parameters(\coherentConfig)) - 1.1)$.
\end{theorem}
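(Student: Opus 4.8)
The plan is to individualize a single, carefully chosen vertex~$v$, pass to the coherent closure~$\coherentConfig_v$, and restore criticality. Since one individualization costs at most~$1$ in the Weisfeiler--Leman dimension (Theorem~\ref{preliminaries:wldim-individualizations/thm}) and refining a configuration never increases~$\tau(\parameters(\cdot))$, it is enough to exhibit~$v$ with~$\tau(\parameters(\coherentConfig_v))\leq\tau(\parameters(\coherentConfig))-1.1$, after which the claim follows from the definition of~$\widetilde{\f}$. I would account for the drop of~$\tau$ region by region, writing~$t_Z$ for the amount by which~$\tau$ decreases on the fibers into which a union of fibers~$Z$ splits, and bound~$t_L+t_S+t_{S'}$ (all three quantities being non-positive). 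The starting observation is that, because~$\interspace{L}{S}$ has interspace pattern~$\ipsixMatchingComplementD=(\clique{2,2,2},3^\ddag)$, the set~$\ell U^1_1(\interspace{L}{S})$ is a triangle of the~$\clique{2,2,2}$-constituent of~$\inducedCC{S}$ whose complement in~$S$ is again a triangle; hence~$S$ splits into two size-$3$ fibers and~$t_S\leq\tau(0,0,-6)=-0.3$, no matter which~$\ell\in L$ is individualized. Moreover~$|\partition{L,S}|=4$ and~$\partitionStructure{L,S}\cong K_4$ (Tables~\ref{interspace-pattern:partition-size/tab} and~\ref{interspace-pattern:partition-structure/tab}), so individualizing~$\ell\in L$ splits~$L$ at least into ``$\ell$'s~$\partition{L,S}$-part versus the rest'', of relative sizes~$\tfrac14$ and~$\tfrac34$. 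I would then distinguish cases by the pattern of~$\interspace{L}{S'}$, individualizing a vertex~$\ell\in L$ throughout.

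For~$\interspace{L}{S'}\in\{\ipsixTriangle,\ipfourMatching,\ipfourCycle\}$ one has~$|\partition{L,S'}|\in\{2,4\}$ with partition structure~$K_2$ (for the first two) or~$(\disjointCliques{2}{2},\cycle4)$ (for~$\ipfourCycle$), so from~$\interspace{L}{S'}$ the individualization refines~$L$ into pieces of relative sizes~$\tfrac12,\tfrac12$ respectively~$\tfrac14,\tfrac14,\tfrac12$; intersecting with the~$\tfrac14,\tfrac34$ split coming from~$\interspace{L}{S}$, the worst case leaves at least pieces~$\tfrac14,\tfrac14,\tfrac12$, so by Lemma~\ref{local:progress-in-large/lem} we get~$t_L\leq0.4+2\hfunction(\tfrac14)+\hfunction(\tfrac12)=-0.6$. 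As~$S'$ splits into fibers of size at most~$3$ (two size-$3$ fibers in the~$\ipsixTriangle$ case, two size-$2$ fibers otherwise), we have~$t_{S'}\leq-0.3$ resp.\ $-0.2$, and in each sub-case~$t_L+t_S+t_{S'}\leq-1.1$. For~$\interspace{L}{S'}=\ipsixMatching$ we have~$|\partition{L,S'}|=3$, which is coprime to~$4$, so Lemma~\ref{global-argument:partition:fully-intersecting/lem} makes~$\partition{L,S}$ and~$\partition{L,S'}$ fully intersecting; individualizing~$\ell$ then splits~$L$ into four pieces of relative sizes~$\tfrac1{12},\tfrac16,\tfrac14,\tfrac12$, giving~$t_L\leq0.4+\hfunction(\tfrac1{12})+\hfunction(\tfrac16)+\hfunction(\tfrac14)+\hfunction(\tfrac12)=-0.75$, and with~$t_{S'}\leq\tau(0,0,-2)=-0.1$ (here~$S'$ splits into a size-$2$ and a size-$4$ fiber) this yields~$t_L+t_S+t_{S'}\leq-1.15$.

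The remaining case~$\interspace{L}{S'}=\ipsixMatchingComplementD$ is the one I expect to be the main obstacle: individualizing~$\ell$ gives only a~$\tfrac14,\tfrac34$ split of~$L$ from each interspace, the coprimality lemma is unavailable, and a priori~$\partition{L,S}$ and~$\partition{L,S'}$ could coincide, in which case the split of~$L$ would be far too coarse. The decisive step I would prove is that~$\partition{L,S}\neq\partition{L,S'}$: were they equal, a coherence computation (the colour of an arc~$ss'\in S\times S'$ determines how many~$\ell\in L$ satisfy both~$s\in\ell U^1_1(\interspace{L}{S})$ and~$s'\in\ell U^1_1(\interspace{L}{S'})$) would force a perfect matching to be a union of basis relations of~$\interspace{S}{S'}$, contradicting criticality by Lemma~\ref{critical:star/lem}. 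Therefore the common refinement~$\equivalenceClasses{L,\{S,S'\}}=\partition{L,S}\wedge\partition{L,S'}$ is a balanced partition into at least~$8$ parts, and~$\partitionStructure{L,\{S,S'\}}$ carries two perfect matchings on its parts (``same~$\partition{L,S}$-part'' and ``same~$\partition{L,S'}$-part''); since individualizing~$\ell$ is at least as strong as individualizing~$\ell$'s part inside~$\partitionStructure{L,\{S,S'\}}$, following these two matchings splits off at least four parts of relative size at most~$\tfrac18$, leaving a remainder of relative size at most~$\tfrac12$. Hence~$t_L\leq0.4+4\hfunction(\tfrac18)+\hfunction(\tfrac12)=-0.6$ (and strictly better if~$\equivalenceClasses{L,\{S,S'\}}$ has more than~$8$ parts), and since~$S'$ splits into two size-$3$ fibers we get~$t_L+t_S+t_{S'}\leq-0.6-0.3-0.3=-1.2$. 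In every case~$\tau(\parameters(\coherentConfig_\ell))\leq\tau(\parameters(\coherentConfig))+t_L+t_S+t_{S'}\leq\tau(\parameters(\coherentConfig))-1.1$, which proves the theorem. The remaining work is routine: verifying that the asserted splits of~$L$,~$S$, and~$S'$ actually occur (via Lemma~\ref{small-cc:induced-cc/lem}, the partition-structure classification, and Lemma~\ref{interspace-pattern:partition-structure/lem}) and evaluating~$\hfunction$; the substantive content lies in the non-coincidence of the two partitions in the last case and the propagation it triggers through~$\partitionStructure{L,\{S,S'\}}$.
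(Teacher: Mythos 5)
Your overall strategy---individualize a single~$\ell\in L$, take the coherent closure, and account for the drop in~$\tau$ fiber by fiber via Lemma~\ref{local:progress-in-large/lem}---is exactly the paper's, and your treatment of the cases~$\interspace{L}{S'}\in\{\ipfourCycle,\ipsixMatching,\ipfourMatching,\ipsixTriangle\}$ is sound. (For~$\ipfourMatching$ and~$\ipsixTriangle$ the paper additionally proves full intersection of~$\partition{L,S}$ and~$\partition{L,S'}$ and obtains~$t_L\leq-0.7$, but your worst-case split of~$L$ into pieces of relative sizes~$\tfrac14,\tfrac14,\tfrac12$ already gives~$t_L\leq-0.6$, which with~$t_S+t_{S'}\leq-0.5$ just reaches~$-1.1$.)

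The~$(\ipsixMatchingComplementD,\ipsixMatchingComplementD)$ case, however, has a genuine gap. Ruling out~$\partition{L,S}=\partition{L,S'}$---which you do correctly, by essentially the paper's matching argument---only shows that the common refinement of~$\partition{L,S}$ and~$\partition{L,S'}$ has at least~$8$ parts. You never exclude the case of exactly~$8$ parts, i.e.\ the two partitions being distinct yet not fully intersecting; the paper eliminates this possibility by a further coherence argument showing that~$S$ would then fail to be a single fiber. Without that step your numerics break down: in the eight-part scenario individualizing~$\ell$ splits~$L$ only into pieces of relative sizes~$\tfrac18,\tfrac18,\tfrac18,\tfrac58$, giving~$t_L\leq 0.4+3\hfunction(\tfrac18)+\hfunction(\tfrac58)=-0.45$, and with~$t_S+t_{S'}\leq-0.6$ the total drop is only~$-1.05$, short of the required~$-1.1$. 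Your ``two perfect matchings'' heuristic does not repair this, for two reasons. First, the~$\ddag$ decoration in~$\ipsixMatchingComplementD$ means that for each triangle~$T$ of the~$\clique{2,2,2}$-constituent of~$\inducedCC{S}$ exactly one of~$T$ and~$S\setminus T$ is ever realized as some~$\ell U$; hence~$\{v\in L\mid vU=S\setminus\ell U\}=\emptyset$, so the ``complement'' part you propose to split off does not exist. Second, even in the fully-intersecting sixteen-part case the claimed shape ``four parts of relative size~$\leq\tfrac18$ plus a remainder~$\leq\tfrac12$'' is false: the actual split is~$\tfrac1{16},\tfrac3{16},\tfrac3{16},\tfrac9{16}$, whose largest piece exceeds~$\tfrac12$. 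It is precisely this finer split---available only once full intersection has been established---that yields~$t_L\leq-0.75$ and makes the case go through.
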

\begin{proof}
    Let~$\ell \in L$ be arbitrary and let~$U = U^1_1(\interspace{L}{S})$.
    Let~$Z$ be a union of fibers of~$\coherentConfig$.
    We define~$(a_Z,b_Z,c_Z) \coloneqq \parameters(\coherentConfig_{\ell}[Z]) - \parameters(\coherentConfig[Z])$ and~$t_Z \coloneqq \tau(a_Z,b_Z,c_Z)$.

    (\textit{Case~$\ipsixMatchingComplementD$ and~$\ipfourCycle$}).
    Assume that the interspace~$\interspace{L}{S'}$ has the interspace pattern~$\ipfourCycle$, and let~$U' = U^1_1(\interspace{L}{S'})$.
    Note that there are only tiny fibers in~$\fibers{\coherentConfig_\ell[S \cup S']}$.
    Thus~$(a_{S \cup S'},b_{S \cup S'}, c_{S \cup S'}) \preceq (0,0,-10)$ and~$t_{S \cup S'} \leq - 0.5$.
    Now consider~$\coherentConfig_\ell[L]$:
    with respect to~$S'$, the fiber~$L$ splits into the fibers (or unions of fibers)~$L'_1 \coloneqq \{v \in L \mid |v U' \cap \ell U'| \in \{0,2\}\}$
    and~$L'_2 \coloneqq \{ v \in L \mid |v U' \cap \ell U'| = 1\}$,
    both of which have size~$\frac{\abs{L}}{2}$.
    With respect to~$S $, the fiber~$L$ splits into fibers (or unions of fibers)~$L_1 \coloneqq \{v \in L \mid vU = \ell U\}$, which has size~$\frac{\abs{L}}{4}$, and~$L_2 \coloneqq \{v \in L \mid \abs{vU \cap \ell U} = 1\}$, which has size~$\frac{3\abs{L}}{4}$.

    If~$L_1 \not\subseteq L'_i$ for both~$i \in \{1,2\}$, then~$\equivalenceClasses{L,S}$ and~$\equivalenceClasses{L,S'}$ are fully intersecting.
    Hence, fiber~$L$ splits into two fibers of size~$\frac{3\abs{L}}{8}$ and two fibers of size~$\frac{\abs{L}}{8}$.
    By Lemma~\ref{local:progress-in-large/lem} we obtain~$t_L \leq -0.7$.
    We conclude that~$\tau(\parameters(\coherentConfig_\ell)) \leq \tau(\parameters(\coherentConfig))- 1.2$.
    If~$L_1 \subseteq L'_i$ for some~$i \in \{1,2\}$, then~$L$ splits into three fibers of size~$\frac{\abs{L}}{2}$, $\frac{\abs{L}}{4}$,and  $\frac{\abs{L}}{4}$ respectively.
    By Lemma~\ref{local:progress-in-large/lem} we obtain~$t_L \leq -0.6$.
    Overall, we obtain~$\tau(\parameters(\coherentConfig_\ell)) \leq \tau(\parameters(\coherentConfig))- 1.1$.

    (\textit{Case~$\ipsixMatchingComplementD$ and~$\ipsixMatching$}).
    Assume that the interspace~$\interspace{L}{S'}$ has the interspace pattern~$\ipsixMatching$.
    Due to Lemma~\ref{global-argument:partition:fully-intersecting/lem} the partitions~$\equivalenceClasses{L,S}$ and~$\equivalenceClasses{L,S'}$ are fully intersecting.
    This implies that~$12$ divides~$|L|$.
    Let~$U' = U^1_1(\interspace{L}{S'})$.
    There are only one size-$4$ fiber (or union of fibers) and multiple tiny fibers in~$\fibers{\coherentConfig_\ell[S \cup S']}$.
    Thus~$(a_{S \cup S'},b_{S \cup S'}, c_{S \cup S'}) \preceq (0,0,-8)$ and~$t_{S \cup S'} \leq - 0.4$.
    There are the following fibers (or unions of fibers) in~$\fibers{\coherentConfig_\ell[L]}$:
    $\{v \in L \mid v U' = \ell U'                    ,v U =    \ell U  \}$, which has size~$\frac{\abs{L}}{12}$,
    $\{v \in L \mid v U' = \ell U'                    ,v U \neq \ell U  \}$, which has size~$\frac{3\abs{L}}{12}$,
    $\{v \in L \mid v U' \cap \ell U' = \emptyset     ,v U =    \ell U  \}$, which has size~$\frac{2\abs{L}}{12}$, and
    $\{v \in L \mid v U' \cap \ell U' = \emptyset     ,v U \neq \ell U  \}$, which has size~$\frac{6\abs{L}}{12}$.
    By Lemma~\ref{local:progress-in-large/lem} we obtain~$t_L \leq -0.75$.
    We conclude that~$\tau(\parameters(\coherentConfig_\ell)) \leq \tau(\parameters(\coherentConfig))- 1.15$.

    (\textit{Case~$\ipsixMatchingComplementD$ and~$\ipsixMatchingComplementD$}).
    Assume that the interspace~$\interspace{L}{S'}$ has the interspace pattern~$\ipsixMatchingComplementD$ and let~$U' = U^1_1(\interspace{L}{S'})$.
    There are only tiny fibers in~$\fibers{\coherentConfig_\ell[S \cup S']}$.
    We conclude that $(a_{S \cup S'},b_{S \cup S'}, c_{S \cup S'}) \preceq (0,0,-12)$ and~$t_{S \cup S'} \leq - 0.6$.

    Towards a contradiction, assume~$\equivalenceClasses{L,S} = \equivalenceClasses{L,S'}$.
    Observe that for all~$P,Q \in \equivalenceClasses{L,S}$ there is exactly one~$s \in S$ such that~$sU^\star = P \cup Q$ and exactly one~$s' \in S'$ such that~$s'U'^\star = P \cup Q$.
    Thus~$\matching{6} \in \interspace{S}{S'}$, a contradiction.

    Assume that~$\equivalenceClasses{L,S} \neq \equivalenceClasses{L,S'}$ but also that~$\equivalenceClasses{L,S}$ and~$\equivalenceClasses{L, S'}$ are not fully intersecting.
    By coherence, for all~$P \in \equivalenceClasses{L,S}$ and~$P' \in \equivalenceClasses{L,S'}$ their intersection~$P \cap P'$ has size~$|L|/8$ or it is empty.
    We show that~$S$ is union of fibers, which contradicts the assumptions, as follows.
    Suppose that there are~$\{P,Q\} \subseteq \equivalenceClasses{L,S}$ and~$\{P',Q'\} \subseteq \equivalenceClasses{L,S'}$ such that~$P \cup Q = P' \cup Q'$.
    Let~$s_1 \in S$ and~$s'_1 \in S'$ such that~$s_1 U^\star = P \cup Q = s'_1 U'^\star$, and thus~$|s_1 U^\star \cap s'_1 U'^\star| = |L|/2$.
    However there is also~$s_2 \in S$ such that~$(P' \cup Q') \cap s_2 U^\star = P$.
    Hence all for all~$s'_2 \in S'$ we have~$|s_2 U^\star \cap s'_2 U'^\star| = |L|/4$.
    By coherence, vertices~$s_1$ and~$s_2$ are not elements of the same fiber.
    Suppose that there are~$\{P,T,Q\} \subseteq \equivalenceClasses{L,S}$ and~$\{P',Q'\} \subseteq \equivalenceClasses{L,S'}$ such that~$T \subseteq P' \cup Q'$ and~$|P \cap P'| = |Q \cap Q'| = |L|/4$.
    Let~$s_1,s_2 \in S$ and~$s'_1 \in S'$ such that~$s_1 U^\star = P \cup T$,~$s_2 U^\star = P \cup Q$, and~$s'_1 U'^\star = P' \cup Q'$.
    Observe that~$|s_1 U^\star \cap s'_1 U'^\star| = 3|L|/8$.
    However for all~$s'_2 \in S'$ we have~$|s_2 U^\star \cap s'_2 U'^\star| = |L|/4$.
    By coherence, vertices~$s_1$ and~$s_2$ are not elements of the same fiber.

    Assume~$\equivalenceClasses{L,S}$ and~$\equivalenceClasses{L, S'}$ are fully intersecting.
    This implies that~$16$ divides~$|L|$.
    There are the following fibers (or unions of fibers) in~$\coherentConfig_\ell[L]$:
    $\{v \in L \mid vU' =    \ell U'\wedge vU =    \ell U \}$, which has size~$\frac{\abs{L}}{16}$,
    $\{v \in L \mid vU' \neq \ell U'\wedge vU \neq \ell U \}$, which has size~$\frac{9\abs{L}}{16}$, as well as
    $\{v \in L \mid vU' \neq \ell U'\wedge vU =    \ell U \}$ and
    $\{v \in L \mid vU' =    \ell U'\wedge vU \neq \ell U \}$, both of which have size~$\frac{3\abs{L}}{16}$.
    By Lemma~\ref{local:progress-in-large/lem} we obtain~$t_L \leq -0.75$.
    We conclude that~$\tau(\parameters(\coherentConfig_\ell)) \leq \tau(\parameters(\coherentConfig))- 1.35$.

    (\textit{Case~$\ipsixMatchingComplementD$ and~$\ipfourMatching$}).
    Assume that the interspace~$\interspace{L}{S'}$ has the interspace pattern~$\ipfourMatching$ and let~$U' = U^1_1(\interspace{L}{S'})$.

    Towards a contradiction, assume that~$\equivalenceClasses{L,S}$ and~$\equivalenceClasses{L, S'}$ are not fully intersecting.
    By coherence, for all~$P \in \equivalenceClasses{L,S}$ and~$P' \in \equivalenceClasses{L,S'}$ their intersection~$P \cap P'$ has the same size or is empty.
    Therefore there are~$\{P, Q\} = \equivalenceClasses{L,S}$ and~$P' \in \equivalenceClasses{L,S'}$ such that~$P \cup Q = P'$.
    There is a vertex~$s_1 \in S$ and~$s'_1 \in S'$ such that~$s_1 U^\star = P'$ and~$s'_1 U'^\star = P'$ respectively.
    Observe that~$|s_1 U^\star \cap s'_1 U'^\star| = |L|/2$.
    Further there is also a vertex~$s_2 \in S$ such that~$s_2 U^\star \cap P' = P$.
    Thus for all~$s'_2 \in S'$ we have~$|s_2 U^\star \cap s'_2 U'^\star| = |L|/4$.
    By coherence, vertices~$s_1$ and~$s_2$ are not elements of the same fiber.
    This contradicts~$S$ being a fiber.

    We assume that~$\equivalenceClasses{L,S}$ and~$\equivalenceClasses{L, S'}$ are fully intersecting.
    This implies that~$16$ divides~$|L|$.
    There are only tiny fibers in~$\fibers{\coherentConfig_\ell[S \cup S']}$.
    Thus~$(a_{S \cup S'},b_{S \cup S'}, c_{S \cup S'}) \preceq (0,0,-10)$ and~$t_{S \cup S'} \leq - 0.5$.
    Now consider the fibers (or unions of fibers) in~$\coherentConfig_\ell[L]$:
    $\{ v\in L \mid v U =    \ell U, vU' =    \ell U'\}$ and
    $\{ v\in L \mid v U =    \ell U, vU' \neq \ell U'\}$, both of which have size~$\frac{\abs{L}}{8}$, as well as
    $\{ v\in L \mid v U \neq \ell U, vU' =    \ell U'\}$ and
    $\{ v\in L \mid v U \neq \ell U, vU' \neq \ell U'\}$, both of which have size~$\frac{3\abs{L}}{8}$.
    By Lemma~\ref{local:progress-in-large/lem} we obtain~$t_L \leq -0.7$.
    We conclude that~$\tau(\parameters(\coherentConfig_\ell)) \leq \tau(\parameters(\coherentConfig))- 1.2$.

    (\textit{Case~$\ipsixMatchingComplementD$ and~$\ipsixTriangle$}).
    Assuming that the interspace~$\interspace{L}{S'}$ has the interspace pattern~$\ipsixTriangle$, the proof follows the same steps as the previous case.
\end{proof}

\begin{theorem}
\label{local:S-L-S:rest/thm}
    Let~$\coherentConfig$ be a critical coherent configuration, and let~$(S,L,S')$ be a path in~$\quotientGraph{\coherentConfig}$ with~$L$ large and~$S,S'$ small.
    If~$\{S\}$ and~$\{S'\}$ are not dominating and both interspaces~$\interspace{L}{S}$ and~$\interspace{L}{S'}$ have one of the interspace patterns~$\ipfourCycle$, $\ipsixMatching$, $\ipfourMatching$, or~$\ipsixTriangle$,
    then~$\wldim{\coherentConfig} \leq 1 + \widetilde{\f}( \tau(\parameters(\coherentConfig)) - 1)$.
\end{theorem}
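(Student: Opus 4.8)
The plan is to mimic the proofs of Theorems~\ref{local:L-S/thm} and~\ref{local:S-L-S/thm}: individualize a single well-chosen vertex~$\ell\in L$, pass to~$\coherentConfig_\ell$, restore criticality, and estimate the drop~$\tau(\parameters(\coherentConfig_\ell))-\tau(\parameters(\coherentConfig))=t_L+t_S+t_{S'}$, where for a union of fibers~$Z$ we abbreviate~$(a_Z,b_Z,c_Z)\coloneqq\parameters(\coherentConfig_\ell[Z])-\parameters(\coherentConfig[Z])$ and~$t_Z\coloneqq\tau(a_Z,b_Z,c_Z)$; the fibers outside~$L\cup S\cup S'$ contribute~$0$ by additivity of~$\parameters$ and~$\tau$. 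Once~$t_L+t_S+t_{S'}\le -1$ is established, the claim follows from Theorem~\ref{preliminaries:wldim-individualizations/thm} together with the monotonicity built into~$\widetilde{\f}$.

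The contributions~$t_S$ and~$t_{S'}$ are immediate. For each of the four admissible patterns, individualizing~$\ell$ fixes~$\ell U$ with~$U=U^1_1(\interspace{L}{S})$, and since~$\ell U$ is a clique in the unique constituent of~$\inducedCC{S}$ containing it, the fiber~$S$ decomposes entirely into tiny fibers in~$\coherentConfig_\ell$. Hence~$t_S\le\tau(0,0,-|S|)$, that is~$t_S\le -0.2$ for~$|S|=4$ (patterns~$\ipfourCycle$,~$\ipfourMatching$) and~$t_S\le -0.3$ for~$|S|=6$ (patterns~$\ipsixMatching$,~$\ipsixTriangle$); the same holds for~$t_{S'}$.

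The work lies in bounding~$t_L$. By Lemma~\ref{interspace-pattern:partition-size/lem} we have~$|\partition{L,S}|\in\{2,3,4\}$ (namely~$4,3,2,2$ for the four patterns), and in~$\coherentConfig_\ell$ the fiber~$L$ refines at least along~$\equivalenceClasses{L,S}\wedge\equivalenceClasses{L,S'}$ --- in fact slightly finer, since the color of an arc of~$\inducedCC{L}$ already records the intersection sizes of the corresponding~$\interspace{L}{S}$- and~$\interspace{L}{S'}$-neighborhoods. I would case on the mutual position of~$\equivalenceClasses{L,S}$ and~$\equivalenceClasses{L,S'}$. If~$|\partition{L,S}|$ and~$|\partition{L,S'}|$ are coprime (the pairs~$\{2,3\}$ and~$\{3,4\}$), Lemma~\ref{global-argument:partition:fully-intersecting/lem} forces the two partitions to be fully intersecting, so~$L$ splits into~$\approx|\partition{L,S}|\cdot|\partition{L,S'}|$ roughly equal pieces and Lemma~\ref{local:progress-in-large/lem} gives a comfortable drop. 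A short coherence argument (all non-empty pairwise intersections of parts have a common size) settles the non-coprime pair~$\{4,2\}$ --- either the coarse partition refines into the fine one, so~$L$ splits into four pieces, or the two are fully intersecting, so~$L$ splits into eight --- and also shows that in the equal-size cases~$\{3,3\}$ and~$\{4,4\}$ the fiber~$L$ always splits into at least~$3$, resp.~$4$, pieces. In all of these situations a direct count of~$t_L$, distinguishing whether the pieces of~$L$ stay large or become small/tiny, yields~$t_L+t_S+t_{S'}\le -1$; when~$|\partition{L,S}|=|\partition{L,S'}|=2$ with both~$S,S'$ of size~$6$ (pattern~$\ipsixTriangle$ on both side) the count is tight but still gives exactly~$-1$.

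The main obstacle is the remaining degenerate subcase~$\equivalenceClasses{L,S}=\equivalenceClasses{L,S'}$ with~$|\partition{L,S}|=2$ and at least one of~$S,S'$ a size-$4$ fiber attached by pattern~$\ipfourMatching$ (paired with~$\ipfourMatching$ or~$\ipsixTriangle$): there~$L$ splits only into two halves and the naive count falls just short of~$-1$. I expect this configuration to be excluded by criticality. Coinciding partitions pin down the type of~$\interspace{S}{S'}$ (a prescribed correspondence between the~$\disjointCliques{2}{2}$-constituent of~$\inducedCC{S}$ and the constituents of~$\inducedCC{S'}$ along~$\equivalenceClasses{L,\cdot}$), after which the structure theory of size-$4$ fibers in critical non-dominating configurations applies: Lemma~\ref{critical:4cc:restorable:2,C4/lem} (when~$|\ul(\inducedCC{S})|=3$) produces a further neighbor of~$S$ with pattern~$\ipfourCycle$, and Lemma~\ref{critical:4-cc:restorable:DUC/lem} (when~$|\ul(\inducedCC{S})|=4$) produces a third neighbor realizing the remaining~$\disjointCliques{2}{2}$-constituent. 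Feeding this extra neighbor into the individualization of~$\ell$ either supplies the missing split of~$L$ (recovering the~$0.1$--$0.2$ of potential) or exhibits a restorable non-dominating union of fibers, contradicting Lemma~\ref{critical:restorable/lem}. Making this elimination watertight, and checking every pattern pairing against the exact~$\tau$-arithmetic, is where the bulk of the bookkeeping sits; the slack~$-1$ (rather than the~$-1.1$ of Theorem~\ref{local:S-L-S/thm}) is precisely what the tight~$\ipsixTriangle$--$\ipsixTriangle$ case forces.
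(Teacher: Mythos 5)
Your overall framework (individualize, compute the potential drop, feed into~$\widetilde{\f}$) is the paper's, but two of your structural premises are false and they break exactly the hardest case. First, for the pattern~$\ipsixMatching$ the small fiber does \emph{not} decompose into tiny fibers in~$\coherentConfig_\ell$: individualizing~$\ell$ only splits off the $2$-clique~$\ell U$, and the complementary four vertices may remain a single small fiber, so the safe bound is~$t_S\leq\tau(0,0,-2)=-0.1$, not~$-0.3$. Second, $L$ does not in general refine along~$\equivalenceClasses{L,S}\wedge\equivalenceClasses{L,S'}$ after individualizing~$\ell$: vertices of~$L$ are only separated according to their intersection pattern with~$\ell U$ and~$\ell U'$ (i.e.\ whether they lie in~$\ell$'s class), because the unsplit remainder of~$S$ gives no handle to distinguish the other classes from one another. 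The paper's own counts reflect this, e.g.\ in the $\ipsixMatching$--$\ipsixMatching$ fully intersecting case~$L$ splits into parts of sizes~$\frac{|L|}{9},\frac{2|L|}{9},\frac{2|L|}{9},\frac{4|L|}{9}$, not into nine pieces.

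With the corrected numbers your "equal-size case~$\{3,3\}$" collapses: if~$\equivalenceClasses{L,S}=\equivalenceClasses{L,S'}$ (which is \emph{not} excluded for two $\ipsixMatching$-attachments; it just means~$\disjointCliques{3}{2,2}\in\interspace{S}{S'}$), individualizing~$\ell$ yields only~$t_L+t_S+t_{S'}\leq -0.4-0.1-0.1=-0.6$. The paper handles this by individualizing~$s\in S$ instead and invoking non-domination of~$\{S\}$ via Lemma~\ref{critical:6-cc:restorable:DUC:deg1/lem} to pull a third neighboring fiber~$R$ into the accounting, with a case analysis on the pattern of~$\interspace{R}{S}$. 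Likewise, in the fully intersecting $\ipsixMatching$--$\ipsixMatching$ subcase with~$|L|=9$, the fiber~$L$ splits only into pieces of sizes~$1,2,2,4$ and the size-$4$ piece may survive as a small fiber, so the drop again falls short of~$1$; the paper needs Lemma~\ref{restorable:two-3K2,2-ip:fully-intersecting/lem} (forcing~$\{L\}$ dominating) plus a bespoke argument that the residual configuration has Weisfeiler--Leman dimension~$2$. Your proposal flags degenerate subcases only for the partition-size-$2$ patterns (and there your fix via Lemmas~\ref{critical:4cc:restorable:2,C4/lem} and~\ref{critical:4-cc:restorable:DUC/lem} and an extra neighbor is essentially what the paper does), but it declares the $\ipsixMatching$--$\ipsixMatching$ degeneracies settled by a "direct count," which they are not; this is a genuine gap, and closing it requires the additional restorability machinery rather than better bookkeeping of~$\tau$.
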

\begin{proof}
    Assume that Theorems~\ref{local:L-S/thm} and~\ref{local:S-L-S/thm} are not applicable.
    Let~$\ell \in L$ be arbitrary, and let~$U = U^1_1(\interspace{L}{S})$.
    Let~$Z$ be a union of fibers of~$\coherentConfig$ and~$V' \subseteq \vertices(\coherentConfig)$.
    We define~$(a_Z,b_Z,c_Z) \coloneqq \parameters(\coherentConfig_{V'}[Z]) - \parameters(\coherentConfig[Z])$ and~$t_Z \coloneqq \tau(a_Z,b_Z,c_Z)$.

    For the first four cases, assume that~$\interspace{L}{S}$ has the interspace pattern~$\ipfourCycle$.

    (\textit{Case~$\ipfourCycle$ and~$\ipfourCycle$}).
    Assume that~$\interspace{L}{S'}$ has the interspace pattern~$\ipfourCycle$ and let~$U' = U^1_1(\interspace{L}{S'})$.
    We set~$V' \coloneqq \{\ell\}$ and examine~$\coherentConfig_\ell$.
    There are only tiny fibers in~$\fibers{\coherentConfig_\ell[S \cup S']}$.
    Thus~$(a_{S \cup S'},b_{S \cup S'}, c_{S \cup S'}) \preceq (0,0,-8)$ and~$t_{S \cup S'} \leq - 0.4$.

    If~$\equivalenceClasses{L,S} = \equivalenceClasses{L,S'}$, then~$\matching{4}\in\interspace{S}{S'}$, a contradiction.

    Assume that there is a~$P \in \equivalenceClasses{L,S}$ and distinct~$P',Q' \in \equivalenceClasses{L,S'}$ such that~$\abs{P \cap P'} = \abs{P \cap Q'} = \frac{\abs{L}}{8}$.
    Then there are the following fibers (or unions of fibers) in~$\coherentConfig_\ell[L]$:
    $P \cap P'$,
    $P \cap Q'$,
    $P' \setminus P$, and
    $Q' \setminus P$, all four of which have size~$\frac{\abs{L}}{8}$, as well as
    $L \setminus (P' \cup Q')$, which has size~$\frac{\abs{L}}{2}$.
    By Lemma~\ref{local:progress-in-large/lem} we obtain~$t_L \leq -0.6$.

    Assume that~$\equivalenceClasses{L,S}$ and~$\equivalenceClasses{L, S'}$ are fully intersecting.
    This implies that~$16$ divides~$|L|$.
    Furthermore, there are the following fibers (or unions of fibers) in~$\coherentConfig_\ell[L]$:
    $\{ v \in L \mid vU =               \ell U,  vU' =              \ell U' \}$,
    $\{ v \in L \mid vU = S \setminus   \ell U,  vU' =              \ell U' \}$,
    $\{ v \in L \mid vU =               \ell U,  vU' = S' \setminus \ell U' \}$, and
    $\{ v \in L \mid vU = S \setminus   \ell U,  vU' = S' \setminus \ell U' \}$, all four of which have size~$\frac{\abs{L}}{16}$, as well as
    $\{ v \in L \mid \abs{vU \cap \ell U} = 1 ,  vU' =              \ell U' \}$,
    $\{ v \in L \mid \abs{vU \cap \ell U} = 1 ,  vU' = S' \setminus \ell U' \}$,
    $\{ v \in L \mid vU =               \ell U,  \abs{vU' \cap \ell U'} = 1 \}$, and
    $\{ v \in L \mid vU = S \setminus   \ell U,  \abs{vU' \cap \ell U'} = 1 \}$, all four of which have size~$\frac{2\abs{L}}{16}$, and finally
    $\{ v \in L \mid \abs{vU \cap \ell U} = 1 ,  \abs{vU' \cap \ell U'} = 1 \}$, which has size~$\frac{4\abs{L}}{16}$.
    By Lemma~\ref{local:progress-in-large/lem} we obtain~$t_L \leq -1.4$.

    Overall, we obtain~$\tau(\parameters(\coherentConfig_\ell)) \leq \tau(\parameters(\coherentConfig))- 1.0$.

    (\textit{Case~$\ipfourCycle$ and~$\ipsixMatching$}).
    Assume that the interspace~$\interspace{L}{S'}$ has the interspace pattern~$\ipsixMatching$ and let~$U' = U^1_1(\interspace{L}{S'})$.
    We set~$V' \coloneqq \{\ell\}$ and examine~$\coherentConfig_\ell$.
    Apart from tiny fibers there is at most one size-$4$ fiber (or union of fibers) in~$\fibers{\coherentConfig_\ell[S \cup S']}$.
    Thus~$(a_{S \cup S'},b_{S \cup S'}, c_{S \cup S'}) \preceq (0,0,-6)$ and~$t_{S \cup S'} \leq - 0.3$.
    Due to Lemma~\ref{global-argument:partition:fully-intersecting/lem}, there are the following fibers (or unions of fibers) in~$\fibers{\coherentConfig_\ell[L]}$:
    $\{ v \in L \mid vU =               \ell U,  vU' =    \ell U' \}$ and
    $\{ v \in L \mid vU = S \setminus   \ell U,  vU' =    \ell U' \}$, both of which have size~$\frac{\abs{L}}{12}$, as well as
    $\{ v \in L \mid \abs{vU \cap \ell U} = 1 ,  vU' =    \ell U' \}$,
    $\{ v \in L \mid vU =               \ell U,  vU' \neq \ell U' \}$,
    $\{ v \in L \mid vU = S \setminus   \ell U,  vU' \neq \ell U' \}$, all three of which have size~$\frac{2\abs{L}}{12}$, and finally
    $\{ v \in L \mid \abs{vU \cap \ell U} = 1 ,  vU' \neq \ell U' \}$, which has size~$\frac{4\abs{L}}{12}$.
    By Lemma~\ref{local:progress-in-large/lem} we obtain~$t_L \leq -1.2$.
    Overall, we obtain~$\tau(\parameters(\coherentConfig_\ell)) \leq \tau(\parameters(\coherentConfig))- 1.5$.

    (\textit{Case~$\ipfourCycle$ and~$\ipfourMatching$}).
    Assume that interspace~$\interspace{L}{S'}$ has the interspace pattern~$\ipfourMatching$ and let~$U' = U^1_1(\interspace{L}{S'})$.
    We set~$V' \coloneqq \{\ell\}$ and examine~$\coherentConfig_\ell$.
    There are only tiny fibers in~$\fibers{\coherentConfig_\ell[S \cup S']}$.
    Thus~$(a_{S \cup S'},b_{S \cup S'}, c_{S \cup S'}) \preceq (0,0,-8)$ and~$t_{S \cup S'} \leq - 0.4$.

    Assume that~$\equivalenceClasses{L,S}$ and~$\equivalenceClasses{L, S'}$ are not fully intersecting.
    Since~$\equivalenceClasses{L,S}$ is an equipartition, for each~$P \in \equivalenceClasses{L,S}$ there is~$P' \in \equivalenceClasses{L,S'}$ such that~$P \subseteq P'$.
    In this case the following fibers (or unions of fibers) occur in~$\fibers{\coherentConfig_\ell[L]}$:
    $\{ v \in L \mid v U' =    \ell  U'\wedge v U =             \ell U \}$ and
    $\{ v \in L \mid v U' =    \ell  U'\wedge v U = S \setminus \ell U \}$, both of which have size~$\frac{\abs{L}}{4}$, as well as
    $\{ v \in L \mid v U' \neq \ell  U' \}$, which has size~$\frac{\abs{L}}{2}$.
    By Lemma~\ref{local:progress-in-large/lem} we obtain~$t_L \leq -0.6$.

    Now assume that~$\equivalenceClasses{L,S}$ and~$\equivalenceClasses{L, S'}$ are fully intersecting.
    The following fibers (or unions of fibers) appear in~$\fibers{\coherentConfig_\ell[L]}$:
    $\{ v \in L \mid vU =               \ell U\wedge  vU' =    \ell U' \}$,
    $\{ v \in L \mid vU = S \setminus   \ell U\wedge  vU' =    \ell U' \}$,
    $\{ v \in L \mid vU =               \ell U\wedge  vU' \neq \ell U' \}$, and
    $\{ v \in L \mid vU = S \setminus   \ell U\wedge  vU' \neq \ell U' \}$, all of which have size~$\frac{\abs{L}}{8}$, as well as
    $\{ v \in L \mid \abs{vU \cap \ell U} = 1 \wedge  vU' =    \ell U' \}$ and
    $\{ v \in L \mid \abs{vU \cap \ell U} = 1 \wedge  vU' \neq \ell U' \}$, both of which have size~$\frac{\abs{L}}{4}$.
    By Lemma~\ref{local:progress-in-large/lem} we obtain~$t_L \leq -0.8$.

    Overall, we obtain~$\tau(\parameters(\coherentConfig_\ell)) \leq \tau(\parameters(\coherentConfig))- 1$.

    (\textit{Case~$\ipfourCycle$ and $\ipsixTriangle$}).
    Assuming that the interspace~$\interspace{L}{S'}$ has the interspace pattern~$\ipsixTriangle$, the proof follows the same steps as the previous case.

    For the next three cases, assume that~$\interspace{L}{S'}$ has the interspace pattern~$\ipsixMatching$.
    This implies that~$3$ divides~$|L|$.

    (\textit{Case~$\ipsixMatching$ and~$\ipsixMatching$}).
    Assume that the interspace~$\interspace{L}{S'}$ has the interspace pattern~$\ipsixMatching$.
    Let~$U' = U^1_1(\interspace{L}{S'})$.

    Assume that~$\equivalenceClasses{L,S} = \equivalenceClasses{L,S'}$.
    Thus~$\disjointCliques{3}{2,2} \in \interspace{S}{S'}$.
    We set~$V' \coloneqq \{s\}$ where~$s \in S$ and consider~$\coherentConfig_s$:
    the fiber~$L$ splits into two fibers which have sizes~$|L|/3$ and~$2|L|/3$ respectively.
    By Lemma~\ref{local:progress-in-large/lem} we obtain~$t_L \leq -0.4$.
    Since~$\{S\}$ is not dominating, by Lemma~\ref{critical:6-cc:restorable:DUC:deg1/lem} there is a fiber~$R$ adjacent to~$S$ in~$\quotientGraph{\coherentConfig}$ such that~$\interspace{R}{S}$ neither has the interspace pattern~$\ipsixMatching$ nor~$\ipsixMatchingTwice$.
    If~$R$ is large, then by Theorem~\ref{local:L-S/thm} the interspace~$\interspace{R}{S}$ has the interspace pattern~$\ipsixMatchingComplementD$ or $\ipsixTriangle$.
    If~$R$ is small, then by Lemma~\ref{critical:4cc-6cc/lem} interspace~$\interspace{R}{S}$ does not have the interspace pattern~$\ipsixMatchingComplementD$ since~$\{S\}$ is not dominating.
    Therefore, if~$R$ is small, then~$\interspace{R}{S}$ has the interspace pattern~$\ipsixMatchingAndCycle$, $\ipsixMatchingMatching$, or $\ipsixTriangle$.
    \begin{itemize}
        \item
        If~$\interspace{R}{S}$ has the interspace pattern~$\ipsixMatchingComplementD$, then~$R$ is large and~$R$ splits into two fibers of equal size in~$\coherentConfig_s$.
        By Lemma~\ref{local:progress-in-large/lem} we obtain~$t_R \leq -0.4$.
        Fibers~$S$ and~$S'$ each split into a size-$2$ and a size-$4$ fiber and thus~$t_{S \cup S'} \leq -0.2$.
        Together~$t_{S \cup S' \cup R} \leq -0.6$.

        \item
        If~$\interspace{R}{S}$ has the interspace pattern~$\ipsixTriangle$, then~$\disjointCliques{2}{3} \in \inducedCC{S}$.
        Hence fiber~$S$ splits entirely into tiny fibers while~$S'$ splits into a size-$2$ and a size-$4$ fiber and thus~$t_{S \cup S'} \leq -0.4$.
        Fiber~$R$ splits into fibers of equal size in~$\coherentConfig_s$.
        Thus by Lemma~\ref{local:progress-in-large/lem} we obtain~$t_R \leq -0.4$ if~$R$ is large, and~$t_R \leq \tau(0,0,-4) \leq -0.2$ if~$R$ is small.
        Together~$t_{S \cup S' \cup R} \leq -0.6$.

        \item
        If~$\interspace{R}{S}$ has the interspace pattern~$\ipsixMatchingMatching$, then~$R$ is small and there are three constituents in~$\inducedCC{S}$ isomorphic to~$\disjointCliques{3}{2}$.
        Hence fiber~$S$ splits entirely into singletons while~$S'$ and~$R$ split entirely into tiny fibers.
        Thus~$t_{S \cup S' \cup R} \leq \tau(0,0,-18) \leq -0.9$.

        \item
        If~$\interspace{R}{S}$ has the interspace pattern~$\ipsixMatchingAndCycle$, then~$R$ is small and~$\cycle{6} \in \inducedCC{S}$.
        Hence fibers~$S$ and~$R$ split entirely into tiny fibers while~$S'$ splits into a size-$2$ and a size-$4$ fiber
        Thus~$t_{S \cup S' \cup R} \leq \tau(0,0,-14) \leq -0.7$.
    \end{itemize}
    We conclude~$t_{L \cup S \cup S' \cup R} \leq -1$ and thus~$\tau(\parameters(\coherentConfig_s)) \leq \tau(\parameters(\coherentConfig))- 1.15$.

    If~$\equivalenceClasses{L,S}$ and~$\equivalenceClasses{L, S'}$ are fully intersecting, then~$9$ divides~$|L|$.

    Assume that~$\equivalenceClasses{L,S}$ and~$\equivalenceClasses{L, S'}$ are fully intersecting and~$|L| \geq 18$.
    We set~$V' \coloneqq \{\ell\}$ and examine~$\coherentConfig_\ell$.
    Besides tiny fibers, there are at most two size-$4$ fibers (or unions of fibers) in~$\fibers{\coherentConfig_\ell[S \cup S']}$.
    Thus~$(a_{S \cup S'},b_{S \cup S'}, c_{S \cup S'}) \preceq (0,0,-4)$ and~$t_{S \cup S'} \leq - 0.2$.
    There are the following fibers (or unions of fibers) in~$\fibers{\coherentConfig_\ell[L]}$:
    $ \{ v \in L \mid v U =    \ell U, v U' =    \ell U'\}$, which has size~$\frac{\abs{L}}{9}$,
    $ \{ v \in L \mid v U =    \ell U, v U' \neq \ell U'\}$ and
    $ \{ v \in L \mid v U \neq \ell U, v U' =    \ell U'\}$, both of which have size~$\frac{2\abs{L}}{9}$, as well as
    $ \{ v \in L \mid v U \neq \ell U, v U' \neq \ell U'\}$, which has size~$\frac{4\abs{L}}{9}$.
    \begin{itemize}
        \item
        If~$18 \leq \abs{L} < 36$, then~$L$ splits entirely into one large fiber and small or tiny fibers.
        Furthermore~$\frac{\abs{L}}{9}$ of the vertices of~$L$ end up in tiny fibers while~$\frac{4\abs{L}}{9}$ of the vertices of~$L$ form small fibers.
        Thus~$(a_L,b_L, c_L) \preceq (-\frac{5\abs{L}}{9},0,\frac{4\abs{L}}{9}) \preceq (-10,0,8)$ and~$t_L \leq - 1.1$.
        \item
        If~$36 \leq \abs{L} < 72$, then $L$ splits into at least three large fibers while~$\frac{\abs{L}}{9}$ of the vertices of~$L$ end up small fibers.
        Hence~$(a_L,b_L,c_L) \preceq (-\frac{\abs{L}}{9},2,\frac{\abs{L}}{9}) \preceq (-4,2,4)$ and~$t_L \leq - 1.2$.
        \item
        If~$72 \leq \abs{L}$, then $L$ splits into at least four large fibers.
        Hence~$(a_L,b_L,c_L) \preceq (0,3,0) \preceq (0,3,0)$ and~$t_L \leq - 1.2$.
    \end{itemize}
    Together with~$t_{S \cup S'} \leq - 0.2$, we obtain~$\tau(\parameters(\coherentConfig_\ell)) \leq \tau(\parameters(\coherentConfig))- 1.05$.

    Assume that~$\equivalenceClasses{L,S}$ and~$\equivalenceClasses{L, S'}$ are fully intersecting and~$|L| = 9$.
    Again we set~$V' \coloneqq \{\ell\}$ and examine~$\coherentConfig_\ell$.
    Then~$L$ splits into fibers of sizes~$1$, $2$, $2$, and~$4$.
    We refer to the last of those fibers as~$L'$.
    We distinguish cases according to how~$L'$ splits.
    \begin{itemize}
        \item
        In our first case, vertex set~$L'$ in~$\coherentConfig_\ell[L]$ is a union of tiny fibers.
        This yields $(a_L,b_L, c_L) \preceq (-9,-1,0)$ and~$t_L \leq - 0.95$.
        Together with~$t_{S \cup S'} \leq - 0.2$, we obtain~$\tau(\parameters(\coherentConfig_\ell)) \leq \tau(\parameters(\coherentConfig))- 1.15$.
        \item
        Assume~$L'$ is a small fiber in~$\coherentConfig_\ell[L]$ but~$\colorDeg{L} > 2$.
        Then there is~$R \in \fibers{\coherentConfig} \setminus \{S,S'\}$ adjacent to~$L$ in~$\quotientGraph{\coherentConfig}$.
        Thus at least two vertices split from~$R$ in~$\coherentConfig_\ell$.
        If fiber~$R$ is large, we have~$(a_{L\cup R},b_{L\cup R}, c_{L\cup R}) \preceq (-11,-1,4)$ and~$t_{L\cup R} \leq - 1.05$.
        If fiber~$R$ is small, we have~$(a_{L\cup R},b_{L\cup R}, c_{L\cup R}) \preceq (-9,-1,2)$ and~$t_{L\cup R} \leq - 0.85$.
        Together with~$t_{S \cup S'} \leq - 0.2$, we obtain~$\tau(\parameters(\coherentConfig_\ell)) \leq \tau(\parameters(\coherentConfig))- 1.05$.
        \item
        Assume~$L'$ is a small fiber in~$\coherentConfig_\ell[L]$ and~$\colorDeg{L} = 2$.
        Then by Lemma~\ref{restorable:two-3K2,2-ip:fully-intersecting/lem} the set~$\{L\}$ is dominating and thus~$L \cup S \cup S' = \vertices(\coherentConfig)$.
        Recall that in~$\coherentConfig_\ell$ fibers~$S$ and~$S'$ each split into size-$2$ fiber and a size-$4$ fiber. We refer to the size~$4$ into which~$S$ (respectively~$S'$) splits by~$\overline{S}$ (respectively~$\overline{S'}$).
        By restoring criticality, we obtain a critical coherent configuration~$\coherentConfig'$ with~$|\vertices(\coherentConfig')| = L' \cup \overline{S} \cup \overline{S'}$.

        We show that~$\wldim{\coherentConfig'} \leq 2$ as follows.
        Observe that there is~$\overline{U} \in \coherentConfig'[\overline{S},L']$ and~$\overline{U'} \in \coherentConfig'[\overline{S'},L']$ such that~$(\overline{S} \disjointUnion L',\overline{U})$ and~$(\overline{S'} \disjointUnion L',\overline{U'})$ are isomorphic to~$\disjointCliques{2}{2,2}$.
        Since~$\equivalenceClasses{L,S}$ and~$\equivalenceClasses{L,S'}$ are fully intersecting, for every pair~$(s,s') \in \overline{S} \times \overline{S'}$ there is exactly one vertex~$\ell$ in~$L'$ such that~$\{\ell\} = s \overline{U} \cap s' \overline{U'}$.
        Thus~$L'$ is restorable in~$\coherentConfig'$.
        The interspace~$\coherentConfig'[\overline{S},\overline{S'}]$ either is homogeneous or it contains a constituent~$G$ isomorphic to~$\cycle{8}$ or~$\disjointCliques{2}{2,2}$.
        Since \wltwo identifies cycles and disjoint union of bipartite complete graphs,~$\wldim{\coherentConfig'} \leq 2$.

        Since~$\coherentConfig'$ has WL-dimension at most~$2$, we may replace it by a coherent configuration~$\coherentConfig''$ which has WL-dimension~$2$.
        We choose~$\coherentConfig''$ to be the homogeneous coherent configuration on~$6$ vertices that contains three constituents isomorphic to~$\cycle{6}$, $\disjointCliques{3}{2}$, and~$\disjointCliques{2}{3}$ respectively.
        Thus~$(a_{L\cup S \cup S'},b_{L\cup S \cup S'},c_{L\cup S \cup S'}) \preceq (-9,-1,-6)$ and~$t_{L\cup S \cup S'} \leq -1.25$.
        Thus~$\tau(\parameters(\coherentConfig_\ell)) \leq \tau(\parameters(\coherentConfig))- 1.25$.
    \end{itemize}

    Overall, we obtain~$\tau(\parameters(\coherentConfig_\ell)) \leq \tau(\parameters(\coherentConfig))- 1.05$.

    (\textit{Case~$\ipsixMatching$ and~$\ipfourMatching$}).
    Assume that the interspace~$\interspace{L}{S'}$ has the interspace pattern~$\ipfourMatching$.
    Let~$U' = U^1_1(\interspace{L}{S'})$.
    Thus by Lemma~\ref{global-argument:partition:fully-intersecting/lem} the partitions~$\equivalenceClasses{L,S}$ and~$\equivalenceClasses{L,S'}$ are fully intersecting.
    This implies that~$6$ divides~$|L|$.
    We set~$V' \coloneqq \{\ell\}$ and examine~$\coherentConfig_\ell$.
    Besides tiny fibers, there is at most one size-$4$ fiber (or union of fibers) in~$\fibers{\coherentConfig_\ell[S \cup S']}$.
    Thus~$(a_{S \cup S'},b_{S \cup S'}, c_{S \cup S'}) \preceq (0,0,-6)$ and~$t_{S \cup S'} \leq - 0.3$.
    There are the following fibers (or unions of fibers) in~$\fibers{\coherentConfig_\ell[L]}$:
    $\{ v \in L \mid vU =    \ell U,  vU' =    \ell U' \}$ and
    $\{ v \in L \mid vU \neq \ell U,  vU' =    \ell U' \}$, both of which have size~$\frac{\abs{L}}{6}$, as well as
    $\{ v \in L \mid vU =    \ell U,  vU' \neq \ell U' \}$,
    $\{ v \in L \mid vU \neq \ell U,  vU' \neq \ell U' \}$, both of which have size~$\frac{2\abs{L}}{6}$.
    By Lemma~\ref{local:progress-in-large/lem} we obtain~$t_L \leq -1$.
    Overall, we obtain~$\tau(\parameters(\coherentConfig_\ell)) \leq \tau(\parameters(\coherentConfig))- 1.3$.

    (\textit{Case~$\ipsixMatching$ and~$\ipsixTriangle$}).
    Assuming that the interspace~$\interspace{L}{S'}$ has the interspace pattern~$\ipsixTriangle$, the proof follows the same steps as the previous case.

    (\textit{Cases~$\interspacePattern{2K_x,x}$ and~$\interspacePattern{2K_y,y}$ where~$x,y \in \{2,3\}$}).
    For the final cases, assume that the interspace~$\interspace{L}{S}$ and~$\interspace{L}{S'}$ have the interspace pattern~$\ipfourMatching$ or interspace pattern~$\ipsixTriangle$.

    Assume that~$\equivalenceClasses{L,S} = \equivalenceClasses{L,S'}$ and that~$S$ and~$S'$ both have size~$6$.
    Thus~$\disjointCliques{2}{3,3} \in \interspace{S}{S'}$.
    We set~$V' \coloneqq \{\ell\}$ and consider~$\coherentConfig_\ell$.
    All fibers split at least in half.
    Thus by Lemma~\ref{local:progress-in-large/lem} we obtain~$t_L \leq -0.4$ and~$t_{S \cup S'} \leq -0.6$.
    Overall, we obtain~$t_{S \cup S'\cup L} \leq -1$.

    Assume that~$\equivalenceClasses{L,S} = \equivalenceClasses{L,S'}$ and that~$S$ has size~$4$.
    Thus~$\disjointCliques{2}{|S|/2,|S'|/2} \in \interspace{S}{S'}$.
    Either by Lemma~\ref{critical:4cc:restorable:2,C4/lem} there is a fiber~$R$ such that~$\interspace{R}{S}$ has the interspace pattern~$\ipfourCycle$ or by Lemma~\ref{critical:4-cc:restorable:DUC/lem} there is a fiber~$R$ such that~$\interspace{R}{S}$ has the interspace pattern~$\ipfourMatching$.
    We set~$V' \coloneqq \{s\}$ where~$s \in S$ and consider~$\coherentConfig_s$.
    All fibers adjacent to~$S$ split at least in half.
    Thus by Lemma~\ref{local:progress-in-large/lem} we obtain~$t_L \leq -0.4$, and, if~$R$ is large, $t_R \leq -0.4$.
    If~$R$ is small, we obtain~$t_R  \leq -0.2$.
    Finally, we have~$t_{S \cup S'} \leq -0.4$.
    Overall, we obtain~$t_{S \cup S' \cup R \cup L} \leq -1$.

    Assume that~$\equivalenceClasses{L, S}$ and~$\equivalenceClasses{L, S'}$ is fully intersecting.
    We set~$V' \coloneqq \{\ell\}$ and consider configuration~$\coherentConfig_\ell$.
    In~$\fibers{\coherentConfig_\ell[S \cup S']}$, there are only tiny fibers.
    It follows that $(a_{S \cup S'},b_{S \cup S'}, c_{S \cup S'}) \preceq (0,0,-8)$ and~$t_{S \cup S'} \leq - 0.4$.
    Fiber~$L$ splits into four equally sized fibers.
    By Lemma~\ref{local:progress-in-large/lem}, we obtain~$t_L \leq -0.8$.

    Overall, we obtain~$\tau(\parameters(\coherentConfig_{V'})) \leq \tau(\parameters(\coherentConfig))- 1$.
\end{proof}

Given~$F \in \fibers{\coherentConfig}$, we denote the number of large (respectively small) fibers adjacent to~$F$ in~$\quotientGraph{\coherentConfig}$ by~$\colorDegLarge{F}$ (respectively~$\colorDegSmall{F}$).

\begin{lemma}
\label{local:4cc:3neighbors/lem}
    Let~$\coherentConfig$ be a critical coherent configuration, and let~$S$ be a size-$4$ fiber of~$\coherentConfig$.
    If~
    \begin{itemize}
        \item $\colorDeg{S} = 3$ and~$\colorDegLarge{S} \geq 1$ or
        \item $\colorDeg{S} \geq 4$,
    \end{itemize}
    then~$\wldim{\coherentConfig} \leq 1 + \widetilde{\f}( \tau(\parameters(\coherentConfig)) - 1)$.
\end{lemma}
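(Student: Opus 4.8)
The plan is to individualize one carefully chosen vertex $s \in S$, bound the drop of the potential $\tau(\parameters(\cdot))$ produced by the subsequent coherent closure, and then conclude with Theorem~\ref{preliminaries:wldim-individualizations/thm} and the monotonicity of $\widetilde{\f}$. Since $\abs{S}=4$, individualizing any $s\in S$ makes $s$ a singleton and leaves $S\setminus\{s\}$ in fibers of size at most $3$, so $S$ falls apart entirely into tiny fibers; as $\parameters$ and $\tau$ are additive over fibers, $S$ alone contributes a drop of $\tau(0,0,-4)=-0.2$. Everything then reduces to understanding, fiber by fiber, how each neighbor $R$ of $S$ in $\quotientGraph{\coherentConfig}$ splits once $S$ has become discrete, and bounding the corresponding $\tau$-contribution $t_R\coloneqq\tau\bigl(\parameters(\coherentConfig_s[R])-\parameters(\coherentConfig[R])\bigr)$.

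\textbf{Splitting of a neighbor.} The key structural input is that, because $\coherentConfig$ is critical and $\clique{4}\notin\inducedCC{S}$, every interspace $\interspace{R}{S}$ with $R$ a neighbor of $S$ has interspace pattern $\ipfourMatching$ or $\ipfourCycle$. Indeed $\abs{S}=4$, so Theorem~\ref{global-argument:large-small-interspace:classification} leaves only $\ipfourClique$, $\ipfourMatching$, $\ipfourCycle$, and the pattern $\ipfourClique$ would force $\clique{4}\in\inducedCC{S}$ (its $G^1\cong\clique 4$ can only be realized on a $4$-element fiber by $\inducedCC{S}=(\clique 4)$). By Lemma~\ref{critical:tiny-CC/lem} no neighbor is tiny, so $\abs{R}\in\{4,6\}$ or $R$ is large. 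After individualizing $s$ and refining, $R$ splits at least as finely as the partition of $R$ by $r\mapsto rU$, $U=U^1_1(\interspace{R}{S})$, which has $\abs{\partition{R,S}}$ classes: this is $2$ for pattern $\ipfourMatching$ and $4$ for pattern $\ipfourCycle$ (Table~\ref{interspace-pattern:partition-size/tab}). I would then read off the contributions. If $\abs{R}=4$, both patterns split $R$ into tiny fibers, so $t_R\le\tau(0,0,-4)=-0.2$. If $\abs{R}=6$, then $4\nmid 6$ forces pattern $\ipfourMatching$, $R$ splits into two size-$3$ fibers, and $t_R\le\tau(0,0,-6)=-0.3$. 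If $R$ is large with pattern $\ipfourMatching$, it splits into two halves of size $\abs R/2\ge 4$, which are either both large (giving $t_R\le\tau(0,1,0)=-0.4$ when $\abs R\ge 16$) or both small (giving $t_R\le\tau(-8,-1,8)=-0.4$ when $\abs R\in\{8,10,12,14\}$); in all cases $t_R\le-0.4$. Finally, if $R$ is large with pattern $\ipfourCycle$, it splits into four parts of size $\abs R/4\ge 2$, and a short case check ($\abs R\in\{8,12\}$ gives tiny parts, $16\le\abs R\le 28$ gives small parts, $\abs R\ge 32$ gives large parts) shows $t_R\le\tau(-8,-1,0)=-0.8$.

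\textbf{Assembling the two cases.} Adding $t_S=-0.2$ to the neighbor contributions: in the first case of the lemma ($\colorDeg{S}=3$ with at least one large neighbor), the worst possibility is one large neighbor together with two size-$4$ neighbors, for a total of at most $-0.2+(-0.4)+2\cdot(-0.2)=-1$; in the second case ($\colorDeg{S}\ge 4$) the worst possibility is four size-$4$ neighbors, for a total of at most $-0.2+4\cdot(-0.2)=-1$. Hence $\tau(\parameters(\coherentConfig_s))\le\tau(\parameters(\coherentConfig))-1$, so $\wldim{\coherentConfig_s}\le\widehat{\f}(\parameters(\coherentConfig_s))\le\widetilde{\f}(\tau(\parameters(\coherentConfig))-1)$, and Theorem~\ref{preliminaries:wldim-individualizations/thm} with $\ell=1$ gives $\wldim{\coherentConfig}\le 1+\max\{2,\wldim{\coherentConfig_s}\}\le 1+\widetilde{\f}(\tau(\parameters(\coherentConfig))-1)$, where the degenerate case in which this quantity is below $2$ is immediate because then $\coherentConfig$ has bounded order.

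\textbf{Main obstacle.} The proof is mechanical once the classification is in hand; the only real friction is that the bound $-1$ is attained \emph{exactly} in the extremal cases, so there is no slack to absorb a mistake. The most error-prone step is the size-regime analysis for a large neighbor attached by a $\ipfourCycle$-type (respectively $\ipfourMatching$-type) interspace: one must check that splitting the large fiber into quarters (respectively halves) always yields at least $0.8$ (respectively $0.4$) of potential progress, which turns on the thresholds $8$, $16$, $32$ separating tiny, small, and large quarters. Beyond that, the need to record that size-$6$ neighbors of a size-$4$ fiber can only carry the $\ipfourMatching$ pattern (a divisibility remark) is the one other place where care is required; everything else is forced by Theorem~\ref{global-argument:large-small-interspace:classification}, Lemma~\ref{critical:tiny-CC/lem}, and the additivity of $\tau$.
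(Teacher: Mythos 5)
Your proof follows essentially the same route as the paper's: individualize a single vertex $s\in S$, charge a drop of $0.2$ to $S$ itself, at least $0.2$ to each small neighbor and at least $0.4$ to each large neighbor, and assemble these contributions to a total potential drop of $1$ in both cases of the hypothesis, finishing with Theorem~\ref{preliminaries:wldim-individualizations/thm}. One intermediate claim, however, is not justified — though it is also not needed. After individualizing a single $s\in S$, the coherent closure does \emph{not} in general refine a neighboring fiber $R$ down to the partition $\partition{R,S}$: for an interspace of pattern $\ipfourCycle$ the two $C_4$-neighbors of $s$ inside $S$ can stay together in one fiber of $\coherentConfig_s$ (they may be exchanged by a symmetry fixing $s$), so the classes $rU=\{s,x\}$ and $rU=\{s,y\}$ need not separate; only the split of $R$ into the two halves $\{r\in R\mid s\in rU\}$ and its complement is guaranteed. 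Hence your bound $t_R\le-0.8$ for a large neighbor attached via a $\ipfourCycle$-pattern is unfounded; the correct guarantee is $t_R\le-0.4$, which is exactly the estimate the paper uses and exactly what your worst-case assembly actually invokes (one large neighbor at $-0.4$ plus small neighbors at $-0.2$ each). With that correction the computation $\tau(\parameters(\coherentConfig_s))\le\tau(\parameters(\coherentConfig))-1$ and the conclusion of the lemma still go through, so the proposal is sound modulo this inessential overstatement.
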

\begin{proof}
    Let~$\{B_1,\dots,B_{\colorDeg{S}}\}$ be the neighborhood of~$S$ in~$\quotientGraph{\coherentConfig}$, $i \in \{1,\dots,\colorDeg{S}\}$, and~$s \in S$.
    For a union of fibers~$Z$  of~$\coherentConfig$, set~$(a_Z,b_Z,c_Z) \coloneqq \parameters(\coherentConfig_{s}[Z]) - \parameters(\coherentConfig[Z])$ and~$t_Z \coloneqq \tau(a_Z,b_Z,c_Z)$.

    We separately consider fibers adjacent to~$S$ in~$\quotientGraph{\coherentConfig}$:
    observe that~$\interspace{S}{B_i}$ has either the interspace pattern~$\ipfourClique$ or the interspace pattern~$\ipfourCycle$ or the interspace pattern~$\ipfourMatching$.
    In~$\coherentConfig_s$ the fiber~$B_i$ splits into two fibers, both of which have size~$\frac{\abs{B_i}}{2}$.
    If~$B_i$ is large, then by Lemma~\ref{local:progress-in-large/lem} we obtain~$t_{B_i} \leq -0.4$.
    If~$B_i$ is small, then~$(a_{B_i},b_{B_i},c_{B_i}) \preceq (0,0,-4) $ and~$t_{B_i} \leq - 0.2$.
    Additionally, we have~$(a_S,b_S,c_S) \preceq (0,0,-4)$ and~$t_S \leq - 0.2$ because we individualize~$s \in S$.
    We conclude:
    if~$\colorDeg{S} = 3$ and~$\colorDegLarge{S} \geq 1$, then we obtain~$\tau(\parameters(\coherentConfig_s)) \leq \tau(\parameters(\coherentConfig))- 1$.
    If~$\colorDeg{S} \geq 4$, then we also obtain~$\tau(\parameters(\coherentConfig_s)) \leq \tau(\parameters(\coherentConfig))- 1$.
\end{proof}

\begin{lemma}
\label{local:K222-3D/lem}
    Let~$\coherentConfig$ be a critical coherent configuration, let~$(L,S)$ be an edge in~$\quotientGraph{\coherentConfig}$ such that~$L$ is large, and the interspace~$\interspace{L}{S}$ has the interspace pattern~$\ipsixMatchingComplementD$.
    If~$\colorDeg{S} \geq 3$ and~$\{L,S\}$ is not dominating, then
    \begin{itemize}
        \item $\wldim{\coherentConfig} \leq 1 + \widetilde{\f}( \tau(\parameters(\coherentConfig)) - 1)$ or
        \item $\wldim{\coherentConfig} \leq 2 + \widetilde{\f}( \tau(\parameters(\coherentConfig)) - 2.1)$
    \end{itemize}
\end{lemma}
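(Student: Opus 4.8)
The plan is to individualize one or two carefully chosen vertices, take the coherent closure, restore criticality, and weigh the individualizations against the drop of the potential $\tau$, in the style of Theorems~\ref{local:L-S/thm}--\ref{local:S-L-S:rest/thm}.

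First I would pin down the local picture around $S$. Since $\interspace{L}{S}$ carries the pattern $\ipsixMatchingComplementD$, Lemma~\ref{small-cc:induced-cc/lem} forces $\abs{S}=6$ and $\inducedCC{S}\in\{(\disjointCliques{3}{2},\clique{2,2,2}),(\disjointCliques{3}{2},\overrightarrow{C_3}[K_2])\}$; in particular $\abs{\ul(\inducedCC{S})}=3$, the constituent $\disjointCliques{3}{2}$ is the unique one whose underlying graph is a perfect matching, and $U:=U^1_1(\interspace{L}{S})$ has $\intDegree{U}=3$ and $\abs{\partition{L,S}}=4$. For every $\ell\in L$ the set $\ell U$ is a triangle of the $\clique{2,2,2}$-constituent meeting every edge of the $\disjointCliques{3}{2}$-matching, and by Corollary~\ref{large-small-interspace:classification:uniqueness/cor} the four values $\{\ell U\mid \ell\in L\}$ form the unique coherent such family (the ``even weight'' transversal of the four complementary triangle pairs), whose residual symmetry group still acts transitively on any two of the four values. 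Next, combining Lemma~\ref{small-cc:induced-cc/lem} and Corollary~\ref{large-small-interspace:classification:uniqueness/cor} with the constituents of $\inducedCC{S}$, every fiber adjacent to $S$ other than $L$ carries one of the patterns $\ipsixMatching$, $\ipsixMatchingTwice$, $\ipsixMatchingComplement$, $\ipsixMatchingComplementD$, or $\ipsixMatchingAndComplement$, the last three forcing that neighbor to be large. Finally $\colorDeg{L}\ge 2$, since $\colorDeg{L}=1$ would make $\{S\}$, hence $\{L,S\}$, dominating by Lemma~\ref{critical:4cc-6cc/lem}.

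The main reduction individualizes a single vertex $s\in S$. Then $S$ splits into $\{s\}$, its $\disjointCliques{3}{2}$-partner and a size-$4$ fiber with constituents $\cycle{4}$ and $\matchingCC{2}$, so $t_S\le\tau(0,0,-2)=-0.1$; the fiber $L$ (and every large neighbor of $S$ whose interspace to $S$ has pattern $\ipsixMatchingComplement$, $\ipsixMatchingComplementD$ or $\ipsixMatchingAndComplement$) splits into two halves according to which endpoint of the now-distinguished matching edge its triangle contains, giving $t\le -0.4$ by Lemma~\ref{local:progress-in-large/lem}; every small $\ipsixMatching$/$\ipsixMatchingTwice$-neighbor splits off a size-$2$ part, contributing $t\le -0.1$; and a large $\ipsixMatching$/$\ipsixMatchingTwice$-neighbor contributes $t\le -0.4$. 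Adding these up: if $S$ has a further large neighbor, or (when $\colorDeg{S}=3$) $\abs{L}=12$ so that $t_L\le\tau(-12,-1,12)=-0.8$, or $\colorDeg{S}$ is large enough, one gets $\tau(\parameters(\coherentConfig_s))\le\tau(\parameters(\coherentConfig))-1$; since restoring criticality only decreases $\tau$, the first alternative $\wldim{\coherentConfig}\le 1+\widetilde{\f}(\tau(\parameters(\coherentConfig))-1)$ follows.

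What remains is the tight case in which all neighbors of $S$ other than $L$ are small and carry pattern $\ipsixMatching$ or $\ipsixMatchingTwice$ (with $\abs{L}\neq 12$ when $\colorDeg{S}=3$); this is the heart of the proof, and one individualization buys only $-0.7$ there, which is why the second, weaker-looking alternative is needed. Here I would individualize a vertex $r\in R$ and a vertex $r'\in R'$ in two of these small neighbors, realizing two distinct edges of the $\disjointCliques{3}{2}$-matching of $\inducedCC{S}$; then all three matching edges become distinguished, so $S$ breaks up entirely into size-$2$ fibers and $t_S\le\tau(0,0,-6)=-0.3$, every remaining small neighbor of $S$ also breaks up completely (three distinguished matching edges), and — crucially — each of the resulting size-$2$ fibers is joined to $L$ by a star, so on restoring criticality these size-$2$ fibers vanish and Lemma~\ref{critical:star/lem} lets us delete $L$ as well, contributing a further $\tau$-drop of at least $\tfrac{3\abs{L}-8}{20}$. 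Summing the contributions of $R,R'$ splitting, of $S$ and the other small neighbors vanishing, and of the deletion of $L$ — distinguishing the few small values of $\abs{L}$ and the possible types of $\inducedCC{R},\inducedCC{R'}$ — one lands at $\tau(\parameters(\coherentConfig_{r,r'}))\le\tau(\parameters(\coherentConfig))-2.1$, giving the second alternative. The main obstacle is exactly this: the rigidity of $\ipsixMatchingComplementD$ (its residual symmetry survives pinning down any two of the four triangle values) makes individualizing vertices of $L$ nearly useless, so the missing $\approx 1.4$ units of potential must be recovered from the forced break-up of $S$ and the star relations it creates towards $L$, and making the constant come out as $2.1$ rather than something too weak is the delicate point; an alternative for the smallest values of $\abs{L}$ would be to argue directly that $S$ is restorable in this configuration, contradicting criticality via Lemma~\ref{critical:restorable/lem}.
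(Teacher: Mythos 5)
Your overall framework — individualize one or two carefully chosen vertices, take the coherent closure, restore criticality, and weigh the $\tau$-drop against the number of individualizations — is exactly the one the paper uses, and your reading of the local structure is correct: $\abs{\ul(\inducedCC{S})}=3$, $\abs{\partition{L,S}}=4$, and $\colorDeg{L}\ge 2$ because $\colorDeg{L}=1$ would make $\{S\}$ and hence $\{L,S\}$ dominating via Lemma~\ref{critical:4cc-6cc/lem}. But your ``tight case'' — which is precisely why the lemma offers the second, weaker-looking alternative — has a genuine gap.

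First, the claim that ``each of the resulting size-$2$ fibers is joined to $L$ by a star, so on restoring criticality these size-$2$ fibers vanish and Lemma~\ref{critical:star/lem} lets us delete $L$ as well'' is unsound. A star $\disjointCliques{2}{1,\abs{L}/2}\in\interspace{e_i}{L}$ lets criticality remove the tiny fiber $e_i$ (the small end of the star), not $L$. After the $e_i$ disappear, $L$ is still present — with $\colorDeg{L}\ge2$ it has another neighbor — so there is no extra $\tau$-credit of $\tfrac{3\abs{L}-8}{20}$. Second, even granting the rest of the bookkeeping, the arithmetic of individualizing $r\in R_1$, $r'\in R_2$ fails for $\abs{L}=8$: then $t_S\le -0.3$, $t_{R_1},t_{R_2}\le -0.3$, and $L$ splits into four size-$2$ parts, so $t_L=\tau(-8,-1,0)=-0.8$; the total is $-1.7$, short of $-2.1$. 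The hinted fallback that ``$S$ is restorable for the smallest $\abs{L}$'' is not developed and is not obviously true.

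The paper partitions cases differently and more robustly. For $\abs{L}\ge16$ it individualizes two vertices $s,s'\in S$ that are non-adjacent in $A^1(\interspace{R_1}{S})$; this makes $S$ discrete, splits $L$ into the four parts of $\equivalenceClasses{L,S}$ giving $t_L\le-1.2$, and sends every small $R_i$ to tiny fibers ($t_{R_i}\le-0.3$), clearing $-2.1$. For $\abs{L}\in\{8,12\}$ it uses a single individualization — of $s\in S$ when some $R_i$ is large, or of $\ell\in L$ (splitting $L$ into sizes $\abs{L}/4$ and $3\abs{L}/4$ and shaving at least two vertices off a second large neighbor $Y$) when both $R_i$ are small and $\colorDeg{L}>1$ — targeting the first alternative with a drop of only $\ge1$. (The configuration $\abs{L}\in\{8,12\}$, $\colorDeg{L}=1$, both $R_i$ small simply cannot occur, exactly as you observed.) Your idea of individualizing inside $R_1,R_2$ rather than inside $S$ is a workable variant for $\abs{L}\ge12$, but without the paper's separate $\abs{L}=8$ treatment — and without a correct replacement for the star-deletion step — the argument does not close.
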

\begin{proof}
    Assume that Theorems~\ref{local:L-S/thm} and~\ref{local:S-L-S/thm} are not applicable.
    For a union of fibers~$Z$ of~$\coherentConfig$ and~$V' \subseteq \vertices(\coherentConfig)$, set~$(a_Z,b_Z,c_Z) \coloneqq \parameters(\coherentConfig_{V'}[Z]) - \parameters(\coherentConfig[Z])$ and~$t_Z \coloneqq \tau(a_Z,b_Z,c_Z)$.
    Note that~$|L|$ is divisible by~$4$.

    Let~$R_1,R_2$ be fibers adjacent to~$S$ in~$\quotientGraph{\coherentConfig}$ other than~$L$.
    Since interspace~$\interspace{L}{S}$ has the interspace pattern~$\ipsixMatchingComplementD$, there is a constituent in~$\inducedCC{S}$ whose underlying graph is isomorphic to~$\clique{2,2,2}$.
    By Lemma~\ref{small-cc:induced-cc/lem}, we have~$|\ul(\inducedCC{S})| = 3$.
    Since Theorem~\ref{local:L-S/thm} is not applicable, there are no interspaces having the interspace pattern~$\ipsixMatchingComplement$.
    Hence for each~$i \in \{1,2\}$ the interspace~$\interspace{R_i}{S}$ has one of the following interspace patterns:~$\ipsixMatching$ and~$\ipsixMatchingComplementD$.

    We first assume that~$\abs{L}\geq 16$.
    We set~$V' \coloneqq \{s,s'\}$ where~$s,s' \in S$ are distinct vertices which are not adjacent in~$A_1(\interspace{R_1}{S})$.
    Consider~$\coherentConfig_{s,s'}$.
    We consider each neighbor~$B$ of~$S$ in~$\quotientGraph{\coherentConfig}$ independent of the others:
    \begin{itemize}
        \item
        If~$\interspace{B}{S}$ has the interspace pattern~$\ipsixMatching$ or~$\ipsixMatchingTwice$, then~$B$ splits into three fibers (or unions of fibers) of equal size.
        If~$B$ is large, then by Lemma~\ref{local:progress-in-large/lem} we obtain~$t_B \leq -0.8$ and if~$B$ is small then~$t_{B} \leq -0.3$.

        \item
        If~$\interspace{B}{S}$ has the interspace pattern~$\ipsixMatchingComplementD$, then~$B$ splits into  fibers (or unions of fibers) of equal size.
        If~$|B| \geq 32$, we have~$t_{B} \leq \tau(0,3,0) \leq -1.2$.
        If~$16 \leq |B| < 32$, we have~$t_B \leq \tau(-16,-1,16) \leq -1.2$.
        If~$|B| < 16$, we have~$t_B \leq \tau(-8,-1,0) \leq -0.8$.
        Observe that if~$B$ is small, then~$B \neq L$ and~$\interspace{L}{B}$ contains a constituent isomorphic to a star, which contradicts Lemma~\ref{critical:star/lem}.
    \end{itemize}
    Therefore~$t_L \leq -1.2$.
    Together with~$t_S \leq \tau(0,0,-6)$, we conclude $\tau(\parameters(\coherentConfig_{s,s'})) \leq \tau(\parameters(\coherentConfig)) - 2.1$.

    Assume that~$\abs{L} \in \{8,12\}$ and $R_1$ or~$R_2$ are large.
    We set~$V' \coloneqq \{s\}$ where~$s \in S$ and consider~$\coherentConfig_{s}$.
    We have~$t_S \leq -0.1$.
    Fiber~$L$ splits in half and thus~$t_L \leq \tau(-8,-1,8) \leq -0.4$.
    If~$\interspace{R_1}{S}$ has the interspace pattern~$\ipsixMatchingComplementD$, then~$R_1$ is large and splits in half.
    Thus by Lemma~\ref{local:progress-in-large/lem} we obtain~$t_{R_1} \leq -0.4$.
    If~$\interspace{R_1}{S}$ has the interspace pattern~$\ipsixMatching$  or~$\ipsixMatchingTwice$, then~$R_1$ splits into two fibers of size~$|R_1|/3$ and~$2|R_1|/3$ (or something finer).
    If~$R_1$ is large, then by Lemma~\ref{local:progress-in-large/lem} we obtain~$t_{R_1} \leq -0.4$ and if~$R_1$ is small, then~$t_{R_1} \leq -0.1$.
    The values of~$R_2$ are computed analogously.
    We conclude $\tau(\parameters(\coherentConfig_{s})) \leq \tau(\parameters(\coherentConfig)) - 1$.

    Assume that~$\abs{L} \in \{8,12\}$, $\colorDeg{L} = 1$, and~$R_1$ and~$R_2$ both are small.
    By Lemma~\ref{critical:4cc-6cc/lem} the set~$\{S\}$ is dominating.
    Since~$|\ul(\inducedCC{S})| = 3$, we have~$\disjointCliques{3}{2,2} \in \interspace{R_1}{R_2}$.
    Since~$\colorDeg{L} = 1$, the set~$\{R_2\}$ is not dominating.
    If~$\interspace{R_1}{R_2}$ has the interspace pattern~$\ipsixMatching$ or~$\ipsixMatchingTwice$, then this contradicts Lemma~\ref{critical:6-cc:restorable:DUC:deg1/lem}.
    Thus~$\interspace{R_1}{R_2}$ has the interspace pattern~$\ipsixMatchingAndCycle$ or~$\ipsixMatchingMatching$, and so~$|\ul(\inducedCC{R_2})| >3$.
    However, by Corollary~\ref{large-small-interspace:classification:uniqueness/cor} there is no interspace incident to~$R_2$ in~$\quotientGraph{\coherentConfig}$ which has the interspace pattern~$\ipsixTriangle$.
    This contradicts Lemma~\ref{critical:6-cc:restorable:large-neighborhood/lem}.

    Assume that~$\abs{L} \in \{8,12\}$, $\colorDeg{L} > 1$, and~$R_1$ and~$R_2$ are both small.
    Hence~let~$Y$ be a fiber adjacent to~$L$ other than~$S$.
    Since Theorems~\ref{local:L-S/thm} and~\ref{local:S-L-S/thm} are not applicable, fiber~$L$ is adjacent to only one small fiber in~$\quotientGraph{\coherentConfig}$.
    Thus~$Y$ is large.
    We set~$V' \coloneqq \{\ell\}$ where~$\ell \in L$, and consider~$\coherentConfig_{\ell}$.
    Fiber~$S$ splits into tiny fibers and~$t_S \leq 0.3$.
    Fiber~$L$ splits into two fibers of size~$\frac{|L|}{4}$ and~$\frac{3|L|}{4}$ respectively.
    Hence~$t_L \leq \tau(-8,-1,6) \leq -0.5$ if~$|L| = 8$ and~$t_L \leq \tau(-3,0,0) \leq -0.45$ if~$|L| = 12$.
    At least two vertices are split from fiber~$Y$ and thus~$t_Y \leq \tau(-2,0,0) \leq -0.3$.
    Altogether, we conclude~ $\tau(\parameters(\coherentConfig_{\ell})) \leq \tau(\parameters(\coherentConfig)) - 0.6 + \max\{-0.5,-0.45\} \leq \tau(\parameters(\coherentConfig)) -1.05$.
\end{proof}

\begin{lemma}
\label{local:alternating-6cycle/lem}
    Let~$\coherentConfig$ be a critical coherent configuration, and let~$(L,S,S')$ be a path in~$\quotientGraph{\coherentConfig}$ such that~$L$ is large, the interspace~$\interspace{L}{S}$ has the interspace pattern~$\ipsixMatchingComplementD$, and the interspace~$\interspace{S}{S'}$ has the interspace pattern~$\ipsixMatchingMatching$.
    If~$\{L,S\}$ is not dominating, then
    \begin{itemize}
        \item $\wldim{\coherentConfig} \leq 1 + \widetilde{\f}( \tau(\parameters(\coherentConfig)) - 1.05)$ or
        \item $\wldim{\coherentConfig} \leq 2 + \widetilde{\f}( \tau(\parameters(\coherentConfig)) - 2)$.
    \end{itemize}
\end{lemma}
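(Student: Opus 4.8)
The plan is to reuse the machinery already developed for interspace pattern~$\ipsixMatchingComplementD$ and for~$\ipsixMatchingMatching$, individualizing a vertex of~$L$ (and possibly a second vertex) and tracking the potential change separately on each affected fiber via~$\tau$. First I would record the structural consequences: since~$\interspace{L}{S}$ has pattern~$\ipsixMatchingComplementD$, Lemma~\ref{small-cc:induced-cc/lem} (via~$\clique{2,2,2}\in\inducedCC{S}$ or~$\overrightarrow{C_3}[K_2]\in\inducedCC{S}$) gives~$\disjointCliques{3}{2}\in\inducedCC{S}$ and hence~$|\ul(\inducedCC{S})|=3$; in particular~$4\mid |L|$, and~$3$ divides the order of any fiber attached to~$S$ by pattern~$\ipsixMatchingMatching$. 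Since~$\interspace{S}{S'}$ has pattern~$\ipsixMatchingMatching$, $S'$ has size~$6$ and~$\inducedCC{S'}$ contains three constituents isomorphic to~$\disjointCliques{3}{2}$. I would then assume (as in Theorems~\ref{local:L-S/thm} and~\ref{local:S-L-S/thm}, and Lemma~\ref{local:K222-3D/lem}) that those local reductions are not applicable, so that~$\interspace{L}{S}$ is the only small-to-$L$ attachment of its type and the other neighbors of~$S$ carry patterns~$\ipsixMatching$, $\ipsixMatchingTwice$, or~$\ipsixMatchingMatching$ (patterns~$\ipsixMatchingComplement$ being excluded by Theorem~\ref{local:L-S/thm}).

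Next I would split on~$|L|$. For~$|L|\ge 16$: individualize~$\ell\in L$. The fiber~$S$ splits entirely into tiny fibers (since~$\ell U$ induces a~$3$-clique of~$\disjointCliques{3}{2}$ and the pattern is~$\ipsixMatchingComplementD$), giving~$t_S\le\tau(0,0,-6)\le -0.3$; $S'$ then also splits into tiny fibers, or at worst into a few size-$2$ fibers plus tiny ones, because~$\interspace{S}{S'}$ is pattern~$\ipsixMatchingMatching$ and~$S$ has become discrete — I would argue~$t_{S'}\le\tau(0,0,-6)\le-0.3$; and~$L$ splits into four fibers (or a union of fibers) of size~$|L|/4$, so Lemma~\ref{local:progress-in-large/lem} gives~$t_L\le -0.8$ (a case analysis on~$|L|\in\{16,\dots,28\}$, $\{32,\dots,60\}$, $\{64,\dots\}$ yields~$t_L\le -0.95$ in the worst case, as in the proof of Theorem~\ref{local:S-L-S/thm}). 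Summing,~$\tau(\parameters(\coherentConfig_\ell))\le\tau(\parameters(\coherentConfig))-1.5$, well inside the first bullet.

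For~$|L|\in\{8,12\}$ the progress in~$L$ alone is weaker ($L$ splits into a size-$|L|/4$ and a size-$3|L|/4$ fiber, yielding~$t_L\le\tau(-8,-1,6)\le-0.5$ when~$|L|=8$ and~$t_L\le\tau(-3,0,0)\le-0.45$ when~$|L|=12$), so I would invoke the two neighbors~$S'$ and, if needed, a further large fiber~$Y$ adjacent to~$L$ or~$S$, exactly as in the final case of Lemma~\ref{local:K222-3D/lem}: individualizing~$\ell$ still sends~$S$ to tiny fibers ($t_S\le-0.3$) and~$S'$ to tiny fibers ($t_{S'}\le-0.3$), and a large neighbor~$Y$ contributes~$t_Y\le\tau(-2,0,0)\le-0.3$, giving total~$\le -0.45-0.3-0.3-0.3<-1.05$; if no such extra large fiber exists, the non-applicability of the earlier local reductions together with Lemmas~\ref{critical:6-cc:restorable:DUC:deg1/lem} and~\ref{critical:6-cc:restorable:large-neighborhood/lem} forces~$\{L,S\}$ to be dominating (contradiction) or forces~$\wldim{\coherentConfig}$ to be bounded, landing in the second bullet with two individualizations (one in~$S$, one in~$S'$) and a cheap restoration of criticality, as in the~$|L|=9$, $\colorDeg{L}=2$ subcase of Theorem~\ref{local:S-L-S/thm}. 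The main obstacle I expect is the small-$|L|$ regime: there the progress in~$L$ by itself does not reach~$-0.55$, so I must carefully argue — using criticality, the classification of interspaces (Table~\ref{small-cc:classificaiton-small-interspaces/tab}), and the restorability lemmas of Section~\ref{critical:restorable/sec} — that enough additional fibers are forced to be present and to split, or else that the whole configuration collapses to bounded Weisfeiler-Leman dimension so that we may replace it by a bounded-size gadget and count its potential contribution as in the closing steps of Lemma~\ref{critical:4cc-6cc/lem} and Theorem~\ref{local:S-L-S/thm}.
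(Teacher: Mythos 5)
The plan goes wrong in the case~$|L|\geq 16$. You individualize only~$\ell\in L$ and claim that~$L$ ``splits into four fibers of size~$|L|/4$'', hence~$t_L\leq -0.8$. This is not what happens: for interspace pattern~$\ipsixMatchingComplementD$, individualizing a single vertex of~$L$ splits~$L$ into \emph{two} parts, of sizes~$|L|/4$ and~$3|L|/4$, not four parts. (This is exactly what the paper records in the analogous step of Lemma~\ref{local:K222-3D/lem}: ``Fiber~$L$ splits into two fibers of size~$\frac{|L|}{4}$ and~$\frac{3|L|}{4}$.'') The reason is that the four parts of~$\equivalenceClasses{L,S}$ are separated by the value of~$vU\subseteq S$, and individualizing~$\ell$ only splits~$S$ into two size-$3$ fibers, so only~$|vU\cap\ell U|\in\{3,\leq 2\}$ is detected by the closure. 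In particular~$S$ does \emph{not} become discrete after individualizing~$\ell$; you state that it does, but it just splits into the two transversal $3$-cliques. With only a two-way split of~$L$, the worst case (e.g.\ $|L|\geq 32$, both halves large) gives~$t_L=\tau(0,1,0)=-0.4$, and together with~$t_S\leq -0.3$ and~$t_{S'}\leq -0.3$ the total drop is about~$-1.0$, which is short of the required~$-1.05$ for the first bullet. This is why the paper's proof, in the~$|L|\geq 16$ regime, individualizes a \emph{second} vertex~$s'\in S'$ (after first establishing via Lemmas~\ref{critical:6-cc:restorable:large-neighborhood/lem} and the non-applicability of the earlier local theorems that~$S'$ has a neighboring fiber~$R$ attached by pattern~$\ipsixTriangle$): with both~$\ell$ and~$s'$ individualized, $S$ becomes genuinely discrete, so~$L$ does split into all four parts of~$\equivalenceClasses{L,S}$, and~$R$ also contributes, landing in the \emph{second} bullet with drop~$-2.0$ and two individualizations.

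For~$|L|\in\{8,12\}$ your sketch is in the right spirit but glosses over the genuinely hard subcase: when~$\colorDeg{L}=1$, there is no additional large fiber~$Y$ to charge, and Lemma~\ref{critical:4cc-6cc/lem} forces~$\{S\}$ to be dominating. The paper's argument there individualizes~$s\in S$, shows the residual configuration~$\coherentConfig'$ collapses to Weisfeiler-Leman dimension~$2$ by a restorability argument, and then replaces it by a fixed bounded-size gadget~$\coherentConfig''$ to extract the required potential drop. You acknowledge this only in passing (``collapses to bounded Weisfeiler-Leman dimension \dots as in \dots''), but the~$\wldim{\coherentConfig'}\leq 2$ claim is not automatic and needs the explicit restorability and constituent-type analysis carried out in the paper.
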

\begin{proof}
    Assume that Theorems~\ref{local:L-S/thm} and~\ref{local:S-L-S/thm} are not applicable.
    For a union of fibers~$Z$ of~$\coherentConfig$ and~$V' \subseteq \vertices(\coherentConfig)$, set~$(a_Z,b_Z,c_Z) \coloneqq \parameters(\coherentConfig_{V'}[Z]) - \parameters(\coherentConfig[Z])$ and~$t_Z \coloneqq \tau(a_Z,b_Z,c_Z)$. Note that~$|L|$ is divisible by 4.

    Assume first~$\abs{L}\geq 16$.
    Since~$\interspace{S}{S'}$ has the interspace pattern~$\ipsixMatchingMatching$, there are three constituents in~$\inducedCC{S'}$ isomorphic to~$\disjointCliques{3}{2}$ and two constituents isomorphic to~$2\overrightarrow{C_3}$.
    Intuitively, two of them form a color-alternating $6$-cycle.
    Thus~$\interspace{L}{S'}$ can only have the interspace pattern~$\ipsixTriangle$ or the interspace pattern~$\ipsixMatchingMatching$.
    Since Theorems~\ref{local:L-S/thm} and~\ref{local:S-L-S/thm} are not applicable, we conclude~$L$ and~$S'$ are homogeneously connected.
    Hence~$S'$ is not dominating and by Lemma~\ref{critical:6-cc:restorable:large-neighborhood/lem} there is a fiber~$R$ such that~$\interspace{R}{S'}$ has the interspace pattern~$\ipsixTriangle$.
    Set~$V' = \{\ell,s'\}$ where~$\ell \in L$ and~$s' \in S'$, and examine~$\coherentConfig_{\ell,s'}$.
    In~$\coherentConfig_{\ell,s'}$, fibers~$S$ and~$S'$ split into singletons and thus~$t_{S\cup S'} \leq 0.6$.
    Fiber~$R$ splits in half.
    If~$R$ is small, then~$t_R \leq \tau(0,0,-4) \leq -0.2$.
    If~$R$ is large, then by Lemma~\ref{local:progress-in-large/lem} we obtain~$t_R \leq -0.4$.
    Since~$S$ becomes discrete in~$\coherentConfig_{\ell,s'}$, fiber~$L$ splits into the parts of~$\equivalenceClasses{L,S}$, which have size~$\frac{|L|}{4}$.
    If~$|L| \geq 32$, we have~$t_{L} \leq \tau(0,3,0) \leq -1.2$.
    If~$16 \leq |L| < 32$, we have~$t_L \leq \tau(-16,-1,16) \leq -1.2$.
    We conclude~$\tau(\parameters(\coherentConfig_{\ell,s'})) \leq \tau(\parameters(\coherentConfig)) - 2.0$.

    Assume that~$\abs{L} \in \{8,12\}$ and~$\colorDeg{L} = 1$.
    Thus by Lemma~\ref{critical:4cc-6cc/lem} the set~$\{S\}$ is dominating.
    We set~$V' \coloneqq \{s\}$ where~$s \in S$.
    Fiber~$S'$ splits entirely into tiny fibers while~$S$ splits into a size-$2$ fiber and one size-$4$ fiber, the latter of which we refer to by~$\overline{S}$.
    Fiber~$L$ splits into two fibers, both of which have size~$4$ or~$6$ and which we refer to by~$L_1$ and~$L_2$, respectively.
    After we restore criticality, we obtain a critical coherent configuration~$\coherentConfig'$ with~$\vertices(\coherentConfig') = L_1 \cup L_2 \cup \overline{S}$.

    We now show that~$\wldim{\coherentConfig'} \leq 2$ as follows.
    Observe that there is~$\overline{U} \in \coherentConfig'[L_1,\overline{S}]$ and~$\overline{U'} \in \coherentConfig'[L_2,\overline{S'}]$ such that~$(L_1 \disjointUnion \overline{S},\overline{U})$ and~$(L_2 \disjointUnion \overline{S'},\overline{U'})$ are isomorphic to~$\disjointCliques{2}{|L|/2,2}$ and~$\coherentConfig'[\overline{S}]$ contains three constituents isomorphic to~$\disjointCliques{2}{2}$.
    Further for every pair~$(\ell_1,\ell_2) \in L_1 \times L_2$ there is exactly one vertex~$s$ in~$\overline{S}$ such that~$\{s\} = \ell_1 \overline{U} \cap \ell_2 \overline{U'}$.
    Thus~$\overline{S}$ is restorable in~$\coherentConfig'$.
    By Lemma~\ref{interspace-pattern:partition-size/lem} all parts of~$\equivalenceClasses{L,S}$ have size~$2$ or~$3$.
    Therefore by coherence each part of~$\equivalenceClasses{L,S}$ induces a clique in~$\coherentConfig[L]$ and~$\disjointCliques{4}{|L|/4} \in \inducedCC{L}$.
    Furthermore, observe that either all remaining constituents in~$\inducedCC{L}$ are isomorphic to~$\clique{1,1,1,1}$ or there is one remaining constituent isomorphic to either~$\clique{2,2,2,2}$ or~$\clique{3,3,3,3}$ in~$\inducedCC{L}$ since otherwise~$L$ cannot be a fiber of~$\coherentConfig$.
    In the first case~$\matching{\frac{|L|}{2}} \in \coherentConfig'[L_1,L_2]$, and in the second case~$\coherentConfig'[L_1,L_2]$ is homogeneous.
    In both cases, $\wldim{\coherentConfig'} \leq 2$.

    Since~$\coherentConfig'$ has WL-dimension at most~$2$, we may replace it by a coherent configuration~$\coherentConfig''$ which has WL-dimension~$2$.
    We choose~$\coherentConfig''$ to be the homogeneous coherent configuration on~$6$ vertices which contains three constituents isomorphic to~$\cycle{6}$, $\disjointCliques{3}{2}$, and~$\disjointCliques{2}{3}$ respectively.
    Therefore~$(a_{L\cup S \cup S'},b_{L\cup S \cup S'},c_{L\cup S \cup S'}) \preceq (-8,-1,-6)$ and~$t_L \leq -1.1$.

    Assume that~$\abs{L} \in \{8,12\}$ and~$\colorDeg{L} > 1$.
    Let~$B$ be a fiber adjacent to~$L$ other than~$S$.
    Since Theorems~\ref{local:L-S/thm} and~\ref{local:S-L-S/thm} are not applicable, fiber~$L$ is adjacent to only one small fiber in~$\quotientGraph{\coherentConfig}$.
    Thus~$B$ is large.
    We set~$V' \coloneqq\{\ell\}$ where~$\ell \in L$, and consider~$\coherentConfig_{\ell}$.
    Fiber~$S$ splits into tiny fibers and~$t_S \leq 0.3$.
    Fiber~$L$ splits into two fibers of size~$\frac{|L|}{4}$ and~$\frac{3|L|}{4}$.
    Hence~$t_L \leq \tau(-8,-1,6) \leq -0.5$ if~$|L| = 8$ and~$t_L \leq \tau(-3,0,0) \leq -0.45$ if~$|L| = 12$.
    At least two vertices are split from fiber~$B$ and hence~$t_B \leq \tau(-2,0,0) \leq -0.3$.
    We conclude~ $\tau(\parameters(\coherentConfig_\ell)) \leq \tau(\parameters(\coherentConfig)) - 0.6 + \max\{-0.5,-0.45\}$.
\end{proof}

\begin{lemma}
\label{local:6cc:3neighbors/lem}
    Let~$\coherentConfig$ be a critical coherent configuration, and let~$S$ be a size~$6$ fiber of~$\coherentConfig$  such that~$\clique{6} \notin\inducedCC{S}$.
    If~$\colorDeg{S} \geq 3$ and~$\{S\}$ is non-dominating, then~$\wldim{\coherentConfig} \leq 1 + \widetilde{\f}( \tau(\parameters(\coherentConfig)) - 1  )$.
\end{lemma}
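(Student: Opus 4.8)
The plan is to mimic the preceding local reductions: individualize one carefully chosen vertex $s\in S$, pass to $\coherentConfig_s$, restore criticality, and show $\tau(\parameters(\coherentConfig_s))\le\tau(\parameters(\coherentConfig))-1$. Together with Theorem~\ref{preliminaries:wldim-individualizations/thm} and the recursive bound of Section~\ref{sec:potential:func} this gives $\wldim{\coherentConfig}\le 1+\widetilde{\f}(\tau(\parameters(\coherentConfig))-1)$. As in the proofs of Lemmas~\ref{local:K222-3D/lem} and~\ref{local:alternating-6cycle/lem}, I would assume at the start that Theorems~\ref{local:L-S/thm}, \ref{local:S-L-S/thm}, \ref{local:S-L-S:rest/thm} and Lemmas~\ref{local:K222-3D/lem}, \ref{local:alternating-6cycle/lem} are not applicable, since otherwise we are done, and I would dispose of the dominating edge cases (where $\{S\}$, or $\{S,B\}$ for a neighbour $B$ carrying pattern $\ipsixMatchingComplementD$, is dominating) separately via Lemma~\ref{critical:4cc-6cc/lem} and a bounded-order argument. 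So assume $\{S\}$ is not dominating.

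The next step is to constrain the interspace patterns at the neighbours of $S$. Since $\clique{6}\notin\inducedCC{S}$, Lemma~\ref{small-cc:induced-cc/lem} gives $\abs{\ul(\inducedCC{S})}\ge 3$. If $\abs{\ul(\inducedCC{S})}=3$, then $\inducedCC{S}$ lies in a short list of types, so by Lemma~\ref{small-cc:interspace-implies-cc/lem} every interspace incident to $S$ contains a prescribed constituent; reading off the partition sizes from Table~\ref{interspace-pattern:partition-size/tab} one sees that every admissible pattern other than $\ipsixMatching$, $\ipsixMatchingTwice$ and $\ipsixTriangle$ either forces the neighbour to be a large fiber (so Theorem~\ref{local:L-S/thm} applies) or equals $\ipsixMatchingComplementD$ (so Lemma~\ref{local:K222-3D/lem} applies); since Lemma~\ref{critical:6-cc:restorable:DUC:deg1/lem} forbids all neighbours of $S$ from carrying a pattern in the list $\{\ipsixTriangle,\ipsixMatching,\ipsixMatchingTwice,\ipsixMatchingAndCycle,\ipsixMatchingMatching,\ipsixTriangleComplement\}$, an earlier reduction always fires in this case. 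If $\abs{\ul(\inducedCC{S})}>3$, Lemma~\ref{critical:6-cc:restorable:large-neighborhood/lem} partitions the neighbourhood of $S$ into non-empty classes $\mathcal R,\mathcal B$ and a possibly empty class $\mathcal Y$, with $\mathcal R$ carrying pattern $\ipsixMatchingAndCycle$ or $\ipsixMatchingMatching$, $\mathcal B$ carrying $\ipsixTriangle$, and $\mathcal Y$ carrying $\ipsixMatching$ or $\ipsixMatchingTwice$; large fibers in $\mathcal R\cup\mathcal Y$ with pattern $\ipsixMatchingMatching$, $\ipsixMatchingAndCycle$ or $\ipsixMatchingTwice$ again trigger Theorem~\ref{local:L-S/thm}.

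Now the potential arithmetic. In the remaining situation $S$ splits entirely into tiny fibers when we individualize $s$, so $t_S=\tau(0,0,-6)=-0.3$. A size-$6$ neighbour splitting as $3+3$ or $2+2+2$ contributes $-0.3$, a large neighbour contributes $\le-0.4$ by Lemma~\ref{local:progress-in-large/lem}, and a size-$6$ neighbour with pattern $\ipsixMatching$ splits only as $2+4$ and may contribute as little as $-0.1$. Since $\colorDeg{S}\ge 3$ and both $\mathcal R$ and $\mathcal B$ are non-empty, at least two neighbours already contribute $-0.3$ each, so $t_S+\sum_B t_B\le-1$ unless some neighbour carries the cheap pattern $\ipsixMatching$. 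To deal with that, I would use that a size-$6$ neighbour $B$ with $\disjointCliques{3}{2,2}\in\interspace{B}{S}$ is symmetric to $S$ — so $\interspace{S}{B}$ also has pattern $\ipsixMatching$ and $\disjointCliques{3}{2}\in\inducedCC{B}$ — and then combine the $\colorDeg{B}$-bookkeeping of Lemmas~\ref{critical:6-cc:restorable:DUC:deg1/lem} and~\ref{critical:6-cc:restorable:large-neighborhood/lem}, Lemma~\ref{6-cc:implied-interspace:DUC-DUC/lem}, and the interspace classification of Table~\ref{small-cc:classificaiton-small-interspaces/tab} to show that either another reduction is applicable at $B$, or the third neighbour of $S$ also contributes $-0.3$, again yielding $t_S+\sum_B t_B\le-1$.

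The hard part is precisely this last step: the interspace pattern $\ipsixMatching$ yields almost no progress in $\tau$, so once all the earlier reductions have been excluded one must rule out its appearing ``too cheaply'' at the neighbours of $S$ — that is, argue that whenever it does occur, $S$ itself or a further neighbour compensates, or yet another reduction fires. Everything else is the routine, if lengthy, case arithmetic with $\tau$ already exemplified in the proofs of Lemmas~\ref{local:K222-3D/lem} and~\ref{local:alternating-6cycle/lem}.
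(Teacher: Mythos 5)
Your macro strategy (exclude the earlier local reductions, individualize one vertex, and run the $\tau$-arithmetic) is indeed the paper's strategy, but two of your steps do not hold up. First, the dominating edge cases cannot be ``disposed of by a bounded-order argument'': a bound of the shape $\wldim{\coherentConfig}\le C+2t$ is not the claimed inequality $\wldim{\coherentConfig}\le 1+\widetilde{\f}(\tau(\parameters(\coherentConfig))-1)$, and nothing in the paper converts the former into the latter (this is exactly why Lemma~\ref{lem:properties:or:t:reduced} later keeps $\wldim{\coherentConfig}\le 10+2t$ as a \emph{separate} disjunct rather than folding it into the $\widetilde{\f}$-bound). With that, your claim that the case $|\ul(\inducedCC{S})|=3$ is vacuous also collapses: it relies on Theorem~\ref{local:L-S/thm} and Lemma~\ref{local:K222-3D/lem} firing, both of which carry non-domination hypotheses you have not secured, and a neighbour attached to $S$ with pattern $\ipsixMatchingComplementD$ can a priori be a small size-$4$ fiber (the $\interspaceFourSix$ situation), where Lemma~\ref{local:K222-3D/lem} says nothing and one needs the argument of Lemma~\ref{critical:4cc-6cc/lem}. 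The paper does not treat this case as empty: it keeps the pattern $\ipsixTriangleComplement$ at a large neighbour alive and beats it by direct arithmetic after individualizing $s$ (contribution $-0.3$ from $S$, $-0.4$ from the $\ipsixTriangleComplement$ neighbour because $9$ divides its size, at least $-0.2$ from each further neighbour, total $\le -1.1$).

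Second, the step you yourself flag as the hard part is genuinely missing, and the paper's resolution is not either of the alternatives you sketch. In the configuration with $|\ul(\inducedCC{S})|>3$, all neighbours small, one attached via $\ipsixMatching$ and none via $\ipsixMatchingTwice$, the paper does not show that the third neighbour contributes $-0.3$; instead it analyses that cheap neighbour $B_3$ (using Lemmas~\ref{critical:small-cc:module/lem}, \ref{critical:6-cc:restorable:large-neighborhood/lem}, \ref{critical:adjacent-interspace-cycle/lem} and~\ref{6-cc:implied-interspace:DUC-DUC/lem}) to force $|\ul(\inducedCC{B_3})|=4$ and to produce a further fiber $Y$ with $\cycle{12}\in\interspace{Y}{B_3}$ and $\disjointCliques{3}{2,2}\in\interspace{Y}{S}$, i.e.\ a \emph{fourth} neighbour of $S$, and only then does the sum reach $-1$. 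Moreover, in the sub-case where the neighbour supplied by Lemma~\ref{critical:6-cc:restorable:large-neighborhood/lem} is attached via $\ipsixMatchingMatching$, the paper abandons individualizing inside $S$ and instead individualizes a vertex of that neighbour $B_1$, harvesting progress from $B_1$'s own neighbourhood (whose structure is pinned down using the exclusion of Lemma~\ref{local:alternating-6cycle/lem} to get $|\ul(\inducedCC{B_1})|>3$). Your proposal commits to individualizing in $S$ and leaves the compensation argument as an unproven hope, so as written the proof has a real gap precisely at its decisive point.
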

\begin{proof}
    We assume that Theorem~\ref{local:L-S/thm} as well as Lemma~\ref{local:K222-3D/lem} and Lemma~\ref{local:alternating-6cycle/lem} are not applicable.
    Let~$\{B_1,\dots,B_{\colorDeg{S}}\}$ be the neighborhood of~$S$ in~$\quotientGraph{\coherentConfig}$,~$i \in \{1,\dots,\colorDeg{S}\}$, and~$s \in S$.
    For a union of fibers~$Z$ of~$\coherentConfig$  and~$V' \subseteq \vertices(\coherentConfig)$, define~$(a_Z,b_Z,c_Z) \coloneqq \parameters(\coherentConfig_{V'}[Z]) - \parameters(\coherentConfig[Z])$ and~$t_Z \coloneqq \tau(a_Z,b_Z,c_Z)$.

    Assume that~$\abs{\ul(\inducedCC{S})} > 3$.
    Lemma~\ref{critical:6-cc:restorable:large-neighborhood/lem} applies. Thus~$\colorDeg{S} \geq 2$ and we can assume (after without loss of generality renaming the~$B_i$) that the  interspace~$\interspace{B_1}{S}$ has either the interspace pattern~$\ipsixMatchingAndCycle$ or the interspace pattern~$\ipsixMatchingMatching$ and the interspace~$\interspace{B_2}{S}$ has the interspace pattern~$\ipsixTriangle$.
    Since Theorem~\ref{local:L-S/thm} is not applicable, fiber~$B_1$ is small.

    Assume that~$\cycle{12} \in \interspace{B_1}{S}$.
    By Lemma~\ref{critical:6-cc:restorable:cycle/lem}, there is a partition~$\{\{B_1\},\mathcal{B}, \mathcal{Y}\}$ of the neighborhood of~$S$ in~$\quotientGraph{\coherentConfig}$ such that
    \begin{enumerate}
        \item for all~$B \in \mathcal{B}$ the interspace~$\interspace{B}{S}$ has the interspace pattern~$\ipsixTriangle$, and
        \item for all~$Y \in \mathcal{Y}$ the interspace~$\interspace{Y}{S}$ has either the interspace pattern~$\ipsixMatching$ or the interspace pattern~$\ipsixMatchingTwice$.
    \end{enumerate}
    Note that~$|\mathcal{B}| \geq 1$, and set~$V' \coloneqq \{s\}$.
    We separately consider~$\coherentConfig_s[B_i]$ for all~$i \in \{2,\dots,\colorDeg{S}\}$.
    \begin{itemize}
        \item
        If~$\interspace{B_i}{S}$ has  the interspace pattern~$\ipsixTriangle$, then~$B_i$ splits into equally sized fibers (or unions of fibers).
        If~$B_i$ is large, then by Lemma~\ref{local:progress-in-large/lem} we obtain~$t_{B_i} \leq -0.4$.
        If~$B_i$ is small, then~$|B_i| \geq 4$ and~$t_{B_i} \leq -0.2$.

        \item
        If~$\interspace{B_i}{S}$ has the interspace pattern~$\ipsixMatchingTwice$, then~$B_i$ splits into three fibers (or unions of fibers) of equal size.
        If~$B_i$ is large, then by Lemma~\ref{local:progress-in-large/lem} we obtain~$t_{B_i} \leq -0.8$.
        If~$B_i$ is small, then~$|B_i| = 6$ and~$t_{B_i} \leq -0.3$.

        \item
        If~$\interspace{B_i}{S}$ has the interspace pattern~$\ipsixMatching$, then~$B_i$ splits into fibers (or unions of fibers) of sizes~$\frac{\abs{B_i}}{3}$ and~$\frac{2\abs{B_i}}{3}$.
        If~$B_i$ is large, then by Lemma~\ref{local:progress-in-large/lem} we obtain~$t_{B_i} \leq -0.4$.
        If~$B_i$ is small, then~$|B_i| = 6$ and~$t_{B_i} \leq - 0.1$.
    \end{itemize}
    Additionally, fibers~$S$ and~$B_1$ split into tiny fibers and thus~$t_{S \cup B_1} \leq -0.6$.
    Furthermore, we have~$\abs{\mathcal{R}} = 1$,~$\abs{\mathcal{Y}} \geq 1$ and~$\abs{\mathcal{B}} \geq 1$.
    We use the reasoning and observations above to deal with the following three cases:

    If~$S$ is adjacent to at least one large fiber in~$\quotientGraph{\coherentConfig}$, then~$\tau(\parameters(\coherentConfig_s)) \leq \tau(\parameters(\coherentConfig))- 1.1$.

    Assume that all neighbors of~$S$ are small and there is at least one fiber~$B_3$ adjacent to~$S$ in~$\quotientGraph{\coherentConfig}$ such that interspace~$\interspace{B_3}{S}$ has the interspace pattern~$\ipsixMatchingTwice$.
    Thus~$B_3 \in \mathcal{Y}$.
    By the reasoning above, we have~$t_{B_3} \leq -0.3$, $t_S \leq -0.3$, $t_{B_1} \leq -0.3$, and~$t_{B_2} \leq -0.2$.
    Altogether we conclude~$\tau(\parameters(\coherentConfig_s)) \leq \tau(\parameters(\coherentConfig))- 1.1$.

    Now assume that all neighbors of~$S$ are small and there is no fiber~$F$ adjacent to~$S$ in~$\quotientGraph{\coherentConfig}$ such that~$\interspace{F}{S}$ has the interspace pattern~$\ipsixMatchingTwice$.
    Without loss of generality, assume that~$B_3$ is a neighbor of~$S$ such that~$\interspace{B_3}{S}$ has the interspace pattern~$\ipsixMatching$.
    By Lemma~\ref{6-cc:implied-interspace:DUC-DUC/lem}, we have~$\disjointCliques{3}{2,2} \in \interspace{B_1}{B_3}$.
    By Lemma~\ref{small-cc:interspace-implies-cc/lem}, we have~$\disjointCliques{3}{2} \in \inducedCC{B_3}$.
    Suppose that~$|\ul(\inducedCC{B_3})| = 3$.
    Note that there is exactly one constituent in~$\interspace{B_3}{S}$ isomorphic to~$\disjointCliques{3}{2,2}$, and so we have~$\clique{2,2,2} \in \inducedCC{B_3}$.
    Thus by Lemma~\ref{critical:small-cc:module/lem} there is a fiber~$Y$ such that~$\interspace{Y}{B_2}$ has the interspace pattern~$\ipsixMatchingComplementD$ or the interspace pattern~$\ipsixMatchingComplement$.
    This contradicts the assumption that Theorem~\ref{local:L-S/thm} and Lemma~\ref{local:K222-3D/lem} are not applicable.
    If~$|\ul(\inducedCC{B_3})| = 5$, then~$\interspace{S}{B_3}$ has the interspace pattern~$\ipsixMatchingMatching$.
    Thus~$\interspace{B_3}{S}$ has the interspace pattern~$\ipsixMatchingTwice$, which contradicts the assumption.
    Therefore~$|\ul(\inducedCC{B_3})| = 4$.
    Lemma~\ref{critical:6-cc:restorable:large-neighborhood/lem} applies to~$B_3$.
    Further, there is fiber~$Y$ such that~$\cycle{12} \in \interspace{Y}{B_3}$. (Recall that since Theorem~\ref{local:L-S/thm} is not applicable, fiber~$Y$ is small.)
    Since~$\coherentConfig$ is critical, Lemma~\ref{critical:adjacent-interspace-cycle/lem} implies~$Y \neq S$ and~$Y \neq B_1$, and Corollary~\ref{large-small-interspace:classification:uniqueness/cor} implies~$Y \neq B_2$.
    By Lemma~\ref{6-cc:implied-interspace:DUC-DUC/lem} we have~$\disjointCliques{3}{2,2} \in \interspace{Y}{S}$.
    Altogether, we have~$|\mathcal{Y}| \geq 2$,~$\abs{\mathcal{R}} = 1$, and~$\abs{\mathcal{B}} \geq 1$.
    From the calculation above, we conclude~$\tau(\parameters(\coherentConfig_s)) \leq \tau(\parameters(\coherentConfig))- 1$.

    Assume that~$\interspace{B_1}{S}$ has the interspace pattern~$\ipsixMatchingMatching$.
    Since~$\interspace{B_1}{B_2}$ is homogeneous,~$\{B_1\}$ is not dominating.
    By Lemma~\ref{critical:6-cc:restorable:DUC:deg1/lem}, we have~$\colorDeg{B_1} \geq 2$.
    If~$|\ul(\inducedCC{B_1})| = 3$, then there is fiber~$F$ adjacent to~$B_1$ in~$\quotientGraph{\coherentConfig}$ such that~$\interspace{R}{S}$ has the interspace pattern~$\ipsixMatchingComplement$ or~$\ipsixMatchingComplementD$.
    However Theorem~\ref{local:L-S/thm} and Lemma~\ref{local:alternating-6cycle/lem} are not applicable, and thus~$|\ul(\inducedCC{B_1})| >3$.
    Thus Lemma~\ref{critical:6-cc:restorable:large-neighborhood/lem} applies to~$B_1$.
    Hence there are~$R_1,R_2$ adjacent to~$B_1$ such that~$\interspace{R_1}{B_1}$ has either the interspace pattern~$\ipsixMatchingAndCycle$ or the interspace pattern~$\ipsixMatchingMatching$ and~$\interspace{R_2}{B_1}$ has the interspace pattern~$\ipsixTriangle$.
    Recall that, since Theorem~\ref{local:L-S/thm} is not applicable, fiber~$R_1$ is small.
    Furthermore~$R_1 \neq S$ and~$R_2 \neq S$ since~$\interspace{S}{B_1}$ has the interspace pattern~$\ipsixMatchingTwice$ (which is determined by the interspace pattern of~$\interspace{B_1}{S}$).
    We set~$V' \coloneqq \{b\}$ where~$b \in B_1$.
    By reasoning similar to the one above, we have~$t_S \leq 0.3$, $t_{R_1} \leq 0.3$, $t_{R_2} \leq 0.2$, and $t_{B_1} \leq 0.3$.
    We conclude~$\tau(\parameters(\coherentConfig_b)) \leq \tau(\parameters(\coherentConfig))- 1$.

    Assume that~$\abs{\ul(\inducedCC{S})} = 3$.
    Hence there is a constituent in~$\inducedCC{S}$ whose underlying graph is isomorphic to~$\disjointCliques{3}{2}$ or~$\disjointCliques{2}{3}$.
    By Lemma~\ref{critical:6-cc:restorable:DUC:deg1/lem} there is at least one fiber~$R$ adjacent to~$S$ such that~$\interspace{R}{S}$ has one of the following interspace patterns:~$\ipsixTriangleComplement$,~$\ipsixTriangleComplementTwice$, $\ipsixMatchingComplement$, or~$\ipsixMatchingComplementD$.
    Since Theorem~\ref{local:L-S/thm} and Lemma~\ref{local:K222-3D/lem} are not applicable, we have~$\disjointCliques{2}{3} \in \inducedCC{S}$ and~$\interspace{R}{S}$ has interspace patter~$\ipsixTriangleComplement$.
    Further, by the size of the equipartition provided by Lemma~\ref{interspace-pattern:partition-size/lem}, we conclude~$R$ is large.
    Thus let~$\{\mathcal{B},\mathcal{R}\}$ be a partition of the fibers adjacent to~$S$ in~$\quotientGraph{\coherentConfig}$ such that
    \begin{enumerate}
        \item for all~$R \in \mathcal{R}$ the interspace~$\interspace{R}{S}$ has the interspace pattern~$\ipsixTriangle$, and
        \item for all~$B \in \mathcal{B}$ the interspace~$\interspace{B}{S}$ has the interspace pattern~$\ipsixTriangleComplement$ or the interspace pattern~$\ipsixTriangleComplementTwice$.
    \end{enumerate}
    We set~$V' \coloneqq \{s\}$ where~$s \in S$, and consider separately~$\coherentConfig_s[B_i]$ for~$i \in \{1,\dots,\colorDeg{S}\}$.
    \begin{itemize}
        \item
        If~$B_i \in \mathcal{R}$, then~$B_i$ splits into two fibers (or unions of fibers) of equal size.
        If~$B_i$ is large, then by Lemma~\ref{local:progress-in-large/lem} we obtain~$t_{B_i} \leq -0.4$.
        If~$B_i$ is small, then~$|B_i| \geq 4$ and~$t_{B_i} \leq -0.2$.
        \item
        If~$B_i \in \mathcal{B}$, then fiber~$B_i$ splits into fibers (or unions of fibers) of sizes~$\frac{\abs{B_i}}{3}$ and~$\frac{2\abs{B_i}}{3}$.
        Since~$9$ divides~$|B_i|$, by Lemma~\ref{local:progress-in-large/lem} we have~$t_{B_i} \leq -0.4$.
    \end{itemize}
    Additionally, the fiber~$S$ splits entirely into tiny fibers and thus~$t_S \leq -0.3$.
    Due to Lemma~\ref{critical:6-cc:restorable:DUC:deg1/lem} we have~$\abs{\mathcal{B}} \geq 1$, and therefore we conclude~$\tau(\parameters(\coherentConfig_s)) \leq \tau(\parameters(\coherentConfig))- 1.1$.
\end{proof}


\section{The structure of reduced coherent configurations}
\label{structure-reduced-cc/sec}

In this section, we aim to show that, if no local reduction of the previous section is applicable to the coherent configuration~$\coherentConfig$, at least one of the following two cases applies:
in the first case, the WL-dimension of~$\coherentConfig$ is bounded, and, in the second one, the coherent configuration~$\coherentConfig$ has a very specific structure, which we describe by collecting 7 properties.
To this end, we require some additional definitions.
We call a small fiber~$S$~\emph{relevant} if~$|S| \in \{4,6\}$ and~$|\ul(\inducedCC{S})| > 2$, and~\emph{irrelevant} otherwise.
For a fiber~$R$ of~$\coherentConfig$, we define~$\colorDegRelevantSmall{R}$ to be number of relevant small fibers adjacent to~$R$ in~$\quotientGraph{\coherentConfig}$.
Further, recall that~$\colorDegLarge{F}$ (respectively~$\colorDegSmall{\coherentConfig}$) are the number of large (respectively small) fibers adjacent to~$F$ in~$\quotientGraph{\coherentConfig}$.

\begin{lemma}
    \label{irrelevant-small-fibers/lem}
    Let~$\coherentConfig$ be a critical coherent configuration such that for all relevant small fibers~$R$ the set~$\{R\}$ is non-dominating.
    If~$S$ is an irrelevant small fiber of~$\coherentConfig$,
    then~$\colorDegSmall{S} \leq 1$,
    the set of fibers adjacent to~$S$ in~$\quotientGraph{\coherentConfig}$ induces a clique in~$\quotientGraph{\coherentConfig}$,
    and~$S$ is not adjacent to a relevant small fiber in~$\quotientGraph{\coherentConfig}$.
\end{lemma}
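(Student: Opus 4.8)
The plan is to split along the reason $S$ is irrelevant. By definition, either $\abs{S}\in\{5,7\}$ (so $\abs{S}$ is prime), or $\abs{S}\in\{4,6\}$ with $\abs{\ul(\inducedCC{S})}=2$, in which case Table~\ref{small-cc:classificaiton-small-cc/tab} leaves only $\inducedCC{S}=\clique{\abs{S}}$, i.e.\ $\clique{\abs{S}}\in\inducedCC{S}$. The clique property follows quickly in either case: if $R,Y$ are distinct neighbours of $S$ with $RY$ not an edge of $\quotientGraph{\coherentConfig}$, then $(R,S,Y)$ is an induced path; for $\abs{S}$ prime this contradicts Theorem~\ref{interspace-divisor/thm}, and for $\abs{S}\in\{4,6\}$ Lemma~\ref{global-argument:k4-exclusion/lem} (applied with middle vertex $S$, using $\clique{\abs{S}}\in\inducedCC{S}$) forces $\interspace{R}{Y}$ to be non-homogeneous, a contradiction.

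For $\colorDegSmall{S}\le 1$ I would first bound the size of a small neighbour $B$ of $S$ using Lemma~\ref{small-cc:interspace/lem} (and Lemma~\ref{critical:tiny-CC/lem} to know $\abs{B}\ge 4$). If $\abs{S}=5$, then $\abs{B}=5$ is forced, giving $\matching{5}\in\interspace{B}{S}$ and $\minimalDegree{S}{B}=1$, contradicting Lemma~\ref{critical:star/lem}; so $\colorDegSmall{S}=0$. If $\abs{S}=6$ and $\inducedCC{S}=\clique{6}$, then $\abs{B}=6$ is forced and the interspace is one of those in Table~\ref{small-cc:classificaiton-small-interspaces/tab}; each non-homogeneous possibility contains $\matching{6}$ (excluded by Lemma~\ref{critical:star/lem}), $\cycle{12}$, $\disjointCliques{3}{2,2}$ or $\disjointCliques{2}{3,3}$, and Lemma~\ref{small-cc:interspace-implies-cc/lem} then forces $\cycle{6}$ or $\disjointCliques{3}{2}$ into $\inducedCC{S}$ (for $\disjointCliques{2}{3,3}$ such a constituent would split the homogeneous fiber $S$ into two triples, which is impossible), contradicting $\inducedCC{S}=\clique{6}$; so again $\colorDegSmall{S}=0$. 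If $\abs{S}=7$, then $\abs{B}=7$ is forced and, $\matching{7}$ being excluded by Lemma~\ref{critical:star/lem}, $\leviFano\in\interspace{B}{S}$; two such neighbours would form a path with $\leviFano$ on both edges, contradicting Lemma~\ref{critical:7-cc:leviFano/lem}, so $\colorDegSmall{S}\le 1$.

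The remaining, and main, case is $\abs{S}=4$ with $\inducedCC{S}=\clique{4}$. Here a small neighbour $B$ must have $\abs{B}=6$ (for $\abs{B}=4$ the interspace would contain $\cycle{8}$, $\matching{4}$ or $\disjointCliques{2}{2,2}$, each of which contradicts Lemma~\ref{critical:star/lem} or forces $\cycle{4}$ or $\matchingCC{2}$ into $\inducedCC{S}=\clique{4}$ via Lemma~\ref{small-cc:interspace-implies-cc/lem}), and then $\interspaceFourSix\in\interspace{S}{B}$ (the alternative $\disjointCliques{2}{2,3}$ forces $\matchingCC{2}\in\inducedCC{S}$). I would then reproduce the computation in the proof of Lemma~\ref{critical:4cc-6cc/lem}: if $S$ had a second neighbour $B'$, then since $\clique{4}\in\inducedCC{S}$, Theorem~\ref{global-argument:large-small-interspace:classification} forces $\interspace{B'}{S}$ to have pattern $\ipfourClique$, so for $U'=U^1_1(\interspace{B'}{S})$ each $b'\in B'$ determines a pair $b'U'\in\binom{S}{2}$, all six pairs occurring; and for the basis relation $U^\star\in\interspace{B}{S}$ of the $\interspaceFourSix$, each $b\in B$ determines the pair $bU^\star\in\binom{S}{2}$ of endpoints of the corresponding $\clique{4}$-edge, this being a bijection between $B$ and $\binom{S}{2}$. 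By coherence, $\{\,b'b: b'U'=bU^\star\,\}$ is a basis relation of $\interspace{B'}{B}$ in which every $b'$ has exactly one neighbour, so $\disjointCliques{6}{1,\abs{B'}/6}\in\interspace{B}{B'}$ and $\minimalDegree{B'}{B}=1$, contradicting Lemma~\ref{critical:star/lem}. Hence $S$ has a unique neighbour, and $\colorDegSmall{S}\le 1$.

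Finally, the added implication: if $\abs{S}\in\{5,6\}$ then $\colorDegSmall{S}=0$; if $\abs{S}=7$ then all small neighbours have size $7$ and hence are irrelevant; and if $\abs{S}=4$ with $\inducedCC{S}=\clique{4}$ then, by the above, the presence of any small neighbour forces $\colorDeg{S}=1$ and so $\colorDegLarge{S}=0$ --- in each case $\colorDegLarge{S}\ge 1$ precludes a relevant neighbour. I expect the $\abs{S}=4$ case to be the main obstacle, since it essentially re-runs the star argument of Lemma~\ref{critical:4cc-6cc/lem}; everything else is a short combination of the small-fiber interspace classification (Lemmas~\ref{small-cc:interspace/lem} and~\ref{small-cc:interspace-implies-cc/lem}) with Theorem~\ref{interspace-divisor/thm}, Lemma~\ref{global-argument:k4-exclusion/lem}, and Lemma~\ref{critical:star/lem}.
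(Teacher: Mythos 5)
Your proof is correct and follows essentially the same decomposition as the paper's: the clique claim via Theorem~\ref{interspace-divisor/thm} (prime order) and Lemma~\ref{global-argument:k4-exclusion/lem} (size $4$ or $6$ with $\clique{\abs{S}}\in\inducedCC{S}$), the odd-size case via Lemmas~\ref{critical:star/lem}, \ref{critical:cycle/lem}, and~\ref{critical:7-cc:leviFano/lem}, and the even-size case via a star in $\interspace{B}{B'}$ contradicting Lemma~\ref{critical:star/lem}. The one place where the paper is slicker is the reduction of the even case to $\abs{S}=4$: instead of your case analysis over Table~\ref{small-cc:classificaiton-small-interspaces/tab} (with the extra ad hoc argument for $\disjointCliques{2}{3,3}$, since Lemma~\ref{small-cc:interspace-implies-cc/lem} alone does not directly yield the needed constituent of $\inducedCC{S}$ there), the paper invokes Lemma~\ref{interspace-pattern:partition-size/lem} / Table~\ref{interspace-pattern:partition-size/tab} to note that every $\ipsixCliqueTwo$-type pattern forces $\abs{\partition{R,S}}\geq 10 > \abs{R}$ for a small $R$, ruling out $\abs{S}=6$ in one stroke.
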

\begin{proof}
    An irrelevant small fiber~$S$ either has odd size or~$|\ul(\inducedCC{S})| = 2$.
    By Theorem~\ref{interspace-divisor/thm} and Lemma~\ref{global-argument:k4-exclusion/lem} the interspace between any two neighbors of~$S$ in~$\quotientGraph{\coherentConfig}$ is non-homogeneous.
    Thus the set of neighbors of~$S$ in~$\quotientGraph{\coherentConfig}$ induces a clique in~$\quotientGraph{\coherentConfig}$.

    Assume that the size of~$S$ is odd.
    By Lemma~\ref{small-cc:interspace/lem}, any non-homogeneous interspace between~$S$ and another small fiber contains a constituent that is isomorphic to a matching,~$\cycle{2|S|}$, or~$\leviFano$.
    By Lemmas~\ref{critical:star/lem} and~\ref{critical:cycle/lem}, fiber~$S$ has size~$7$ and interspaces between~$S$ and another small fibers~$S'$ contain a constituent isomorphic to~$\leviFano$.
    Lemma~\ref{critical:7-cc:leviFano/lem} implies that~$\colorDegSmall{S} \leq 1$.
    Further note that~$S'$ has size~$7$ and therefore is irrelevant as well.

    Next assume~$S$ has even size but~$|\ul(\inducedCC{S})| = 2$.
    Thus there is only one constituent in~$\inducedCC{S}$, which is thus isomorphic to a clique.
    Considering Lemma~\ref{interspace-pattern:partition-size/lem}, interspaces between~$S$ and another small fiber~$R$ must have the interspace pattern~$\ipfourClique$ and~$|S| = 4$ since for all other possible patterns the size of the equipartition is already larger than~$7$.
    By Lemma~\ref{small-cc:interspace-implies-cc/lem}, fiber~$R$ is relevant and by Lemma~\ref{critical:4cc-6cc/lem} the set~$\{R\}$ is dominating.
    This violates the assumption, and hence~$\colorDegSmall{S} = 0$.
\end{proof}

\begin{lemma}
\label{dominating:wldim/lem}
    Let~$\coherentConfig$ be a critical coherent configuration and there is~$t \in \Nat$ such that all fibers of~$\coherentConfig$ have size at most~$t$.
    If there is a set of fibers~$\mathcal{R}$ which is dominating and there are at most~$u \in \Nat$ large fibers adjacent to a fiber of~$\mathcal{R}$, then~$\wldim{\coherentConfig} \leq 2 + \sum_{R \in \mathcal{R}} |R| + u \cdot t$.
\end{lemma}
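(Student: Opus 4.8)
The plan is to individualize, in a single batch, every vertex contained in a fiber of $\mathcal{R}$ together with every vertex contained in a large fiber adjacent to some fiber of $\mathcal{R}$; after passing to the coherent closure all remaining fibers will be tiny, and then the fact that a critical configuration has no tiny fiber (Lemma~\ref{critical:tiny-CC/lem}) pins the Weisfeiler-Leman dimension of the closure down to at most~$1$. Concretely, let $\mathcal{L}$ be the set of large fibers adjacent in $\quotientGraph{\coherentConfig}$ to some fiber of $\mathcal{R}$, so $|\mathcal{L}|\le u$, and let $v_1,\dots,v_\ell$ list all vertices lying in a fiber of $\mathcal{R}\cup\mathcal{L}$. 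Since every fiber of $\coherentConfig$ has size at most~$t$, we get $\ell\le\sum_{R\in\mathcal{R}}|R|+u\cdot t$. Put $\coherentConfig'\coloneqq\coherentConfig_{v_1,\dots,v_\ell}$, so $\coherentConfig'\finer\coherentConfig$. By Theorem~\ref{preliminaries:wldim-individualizations/thm},
\[
\wldim{\coherentConfig}\le\ell+\max\{2,\wldim{\coherentConfig'}\}\le\sum_{R\in\mathcal{R}}|R|+u\cdot t+\max\{2,\wldim{\coherentConfig'}\},
\]
so it suffices to prove $\wldim{\coherentConfig'}\le 2$.

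The heart of the argument is to show that every fiber of $\coherentConfig'$ is tiny. Each fiber of $\coherentConfig'$ is contained in some fiber of $\coherentConfig$; those contained in a fiber of $\mathcal{R}\cup\mathcal{L}$ are singletons by construction. A fiber contained in some $B\in\fibers{\coherentConfig}\setminus(\mathcal{R}\cup\mathcal{L})$ lies in a small fiber $B$ (as $B$ is not large), and, $\mathcal{R}$ being dominating, $B$ is adjacent in $\quotientGraph{\coherentConfig}$ to some $R\in\mathcal{R}$. Because all of $R$ has been individualized and $\coherentConfig'$ is a coherent configuration, any two vertices of $B$ that still share a fiber of $\coherentConfig'$ are attached identically to every vertex of $R$ via the basis relations of $\coherentConfig$; call such a pair \emph{$R$-twins}. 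The key claim is that every $R$-twin class of $B$ has at most three elements. One verifies this by running through the interspaces $\interspace{R}{B}$ that can occur between a fiber and a small fiber of a critical configuration: the case $|B|\in\{4,6\}$ is governed by Lemma~\ref{small-cc:interspace/lem}, Table~\ref{small-cc:classificaiton-small-interspaces/tab} and Theorem~\ref{global-argument:large-small-interspace:classification}; the case $|B|=7$ forces $\leviFano\in\interspace{R}{B}$ by Lemma~\ref{small-cc:interspace/lem}; the case $|B|=5$ cannot arise in a critical configuration with more than one fiber, since by Lemmas~\ref{critical:star/lem} and~\ref{critical:cycle/lem} a size-$5$ fiber has no non-homogeneous interspace; and Lemmas~\ref{critical:star/lem}, \ref{critical:cycle/lem} and~\ref{critical:7-cc:leviFano/lem} discard the remaining possibilities. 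In every surviving case the $R$-twin classes of $B$ are either singletons or exactly the cliques of a constituent of $\inducedCC{B}$ isomorphic to $\disjointCliques{2}{2}$, $\disjointCliques{2}{3}$, or $\disjointCliques{3}{2}$, and hence have size at most~$3$; so every fiber of $\coherentConfig'$ inside $B$ is tiny.

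It remains to deduce $\wldim{\coherentConfig'}\le 1$ (the case where $\coherentConfig$ has a single fiber, in which $\wldim{\coherentConfig}\le 2$ directly, being trivial). If $\wldim{\coherentConfig'}\ge 2$, then repeatedly removing removable unions of fibers without lowering the Weisfeiler-Leman dimension yields a critical configuration all of whose fibers are among those of $\coherentConfig'$, hence all tiny --- contradicting Lemma~\ref{critical:tiny-CC/lem}. Substituting $\wldim{\coherentConfig'}\le 1$ into the displayed inequality gives $\wldim{\coherentConfig}\le\sum_{R\in\mathcal{R}}|R|+u\cdot t+2$, as claimed.

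The main obstacle is the key claim of the second paragraph: that individualizing an entire neighbouring fiber of $\mathcal{R}$ shatters a small fiber into pieces of size at most~$3$. This genuinely needs the full interspace classification together with the criticality lemmas, and it is also why the bound carries the summand $\sum_{R\in\mathcal{R}}|R|$ rather than merely $|\mathcal{R}|$: individualizing a single vertex of $R$ does not in general suffice --- for example, when $\interspace{R}{B}$ has interspace pattern $\ipsixMatching$ and $\inducedCC{B}$ contains a $\clique{2,2,2}$-constituent, a single individualization leaves a fiber of size~$4$.
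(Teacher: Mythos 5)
Your overall strategy is sound and delivers the stated bound, but it is a somewhat heavier variant of the paper's argument. The paper individualizes only the vertices of~$\mathcal{R}$: each small fiber adjacent to a fiber of~$\mathcal{R}$ then shatters into tiny fibers, restoring criticality removes them (Lemma~\ref{critical:tiny-CC/lem}), and whatever remains sits inside the at most~$u$ large neighbors, so it has at most~$u\cdot t$ vertices and its dimension is bounded trivially by its order, giving~$2+\sum_{R\in\mathcal{R}}|R|+u\cdot t$. You instead also individualize every vertex of the large neighbors and then argue that the closure is essentially trivial (all fibers tiny, hence no critical subconfiguration of dimension at least~$2$ can survive, again by Lemma~\ref{critical:tiny-CC/lem}); both accountings charge exactly~$u\cdot t$ for the large neighbors, so the bounds coincide, and your extra step showing~$\wldim{\coherentConfig'}\leq 1$ is correct.

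The one genuine weak point is your verification of the key shattering claim for small fibers of size~$5$ and~$7$. Lemma~\ref{small-cc:interspace/lem} assumes~$2\leq|R|\leq|B|\leq 7$ and Lemma~\ref{critical:cycle/lem} assumes~$|R|=|B|$ odd, so neither shows that a size-$5$ fiber has no non-homogeneous interspace to a \emph{large} fiber of~$\mathcal{R}$, nor that a size-$7$ fiber adjacent to a large fiber forces~$\leviFano\in\interspace{R}{B}$; as stated, these two subcases are unjustified. Fortunately the claim you need follows without any classification: the $R$-twin relation on~$B$ is an equivalence relation that is a union of basis relations of~$\inducedCC{B}$, so all its classes have the same size, which divides~$|B|$; the class size cannot be~$|B|$, since then~$bA$ would be independent of~$b\in B$ for every~$A\in\interspace{B}{R}$, and coherence would force~$bA\in\{\emptyset,R\}$, i.e.~$\interspace{R}{B}$ homogeneous, contradicting adjacency in~$\quotientGraph{\coherentConfig}$. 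Hence the twin classes have size at most~$3$ when~$|B|\in\{4,6\}$ and are singletons when~$|B|\in\{5,7\}$. This one-line observation repairs the gap and also makes the detour through Theorem~\ref{global-argument:large-small-interspace:classification} and the interspace tables unnecessary; your concluding remark about why~$\sum_{R\in\mathcal{R}}|R|$ rather than~$|\mathcal{R}|$ is needed is apt.
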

\begin{proof}
    Let~$V$ be the set of all vertices contained in a fiber of~$\mathcal{R}$.
    In~$\coherentConfig_{V}$ every fiber of~$\mathcal{R}$ splits into singletons.
    Thus all small fibers adjacent to a fiber of~$\mathcal{R}$ in~$\quotientGraph{\coherentConfig}$ split into tiny fibers.
    By restoring criticality, we obtain a coherent configuration with at most~$u \cdot t$ vertices.
    Hence~$\wldim{\coherentConfig} \leq 2 + \abs{V} + u \cdot t$.
\end{proof}

\begin{definition}
\label{new:global-argument:assumption}
    For~$t\in \Nat$, we call a coherent configuration~$\coherentConfig$ \emph{$t$-reduced}, if it satisfies all of the following properties.
    \begin{enumerate}
        \item \label{new:global-argument:assumption:critical}
        $\coherentConfig$ is critical.
        \item \label{new:global-argument:assumption:limited-fiber-size}
        Every fiber~$R$ has at most~$t$ vertices, i.e.,~$|R|\leq t$.
        \item \label{new:global-argument:assumption:large-colorDeg}
        All large fibers~$L$ have at most two large neighbors, that is~$\colorDegLarge{L} \leq 2$.
        \item \label{new:global-argument:assumption:large-one-relevent}
        A large fiber has at most one relevant small neighbor, that is~$\colorDegRelevantSmall{L} \leq 1$.
        \item \label{new:global-argument:assumption:relevant-2-neighbors}
        If a relevant small fiber~$S$ has a large neighbor (i.e., if~$\colorDegLarge{S}\geq 1$), then its quotient degree is at most 2 (i.e., $\colorDeg{S}\leq 2$) and~$|\ul(\inducedCC{S})| = 3$.
        \item \label{new:global-argument:assumption:relevant-3-neighbors}
        For every~$s\in S$ in a relevant small fiber~$S$ neighboring with three small fibers (i.e.,~$\colorDegSmall{S}\geq 3$), we have that~$S$ is discrete in~$\coherentConfig_s$.
        \item \label{new:global-argument:assumption:dominating-relevant-small-fibers}
        Each relevant small fiber is non-dominating in the quotient graph~$\quotientGraph{\coherentConfig}$.
    \end{enumerate}
\end{definition}

\begin{lemma}\label{lem:properties:or:t:reduced}
    Let~$\coherentConfig$ be a critical coherent configuration whose largest fibers have size~$t$.
    Then
    \begin{itemize}
        \item $\wldim{\coherentConfig} \leq z + \widetilde{\f}( \tau(\parameters(\coherentConfig)) - z)$ for some positive integer~$z$,
        \item $\wldim{\coherentConfig} \leq 10 + 2\cdot t$, or
        \item $\coherentConfig$ is $t$-reduced (i.e., satisfies Definition~\ref{new:global-argument:assumption}).
    \end{itemize}
\end{lemma}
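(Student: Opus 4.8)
## Proof Plan

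The plan is to show that if $\coherentConfig$ is critical (Property~\ref{new:global-argument:assumption:critical}) and has all fibers of size at most $t$ (Property~\ref{new:global-argument:assumption:limited-fiber-size}), then either one of the recursive bounds in the first bullet applies, or $\coherentConfig$ is small enough to give the $O(t)$ bound in the second bullet, or the remaining structural properties \ref{new:global-argument:assumption:large-colorDeg}--\ref{new:global-argument:assumption:relevant-3-neighbors} must all hold. The strategy is to go through Properties~\ref{new:global-argument:assumption:large-colorDeg} through~\ref{new:global-argument:assumption:relevant-3-neighbors} one at a time, and argue that whenever one of them fails, one of the local reduction lemmas of Section~\ref{recursive-argement/sec} becomes applicable, which delivers a bound of the form $\wldim{\coherentConfig} \leq z + \widetilde{\f}(\tau(\parameters(\coherentConfig)) - z)$ for some $z \geq 1$ (the first bullet), possibly after first disposing of edge cases via Lemma~\ref{dominating:wldim/lem} (which gives the second bullet).

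Concretely: First I would handle the failure of Property~\ref{new:global-argument:assumption:large-colorDeg}. If some large fiber $L$ has at least three large neighbors $L_1, L_2, L_3$, then since $L$ is large we may apply Lemma~\ref{lem:local-argument:3-large-neighbors} (with $R = L$, which is large), yielding $\wldim{\coherentConfig} \leq 1 + \widetilde{\f}(\tau(\parameters(\coherentConfig)) - 1)$, which is the first bullet with $z = 1$. Next, for Property~\ref{new:global-argument:assumption:large-one-relevent}, suppose a large fiber $L$ has two relevant small neighbors $S, S'$, so $(S, L, S')$ is a path with $S, S'$ relevant (hence $|S|, |S'| \in \{4,6\}$ and $|\ul(\inducedCC{\cdot})| > 2$). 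By Corollary~\ref{large-small-interspace:classification:uniqueness/cor} each of $\interspace{L}{S}, \interspace{L}{S'}$ has exactly one interspace pattern from Theorem~\ref{global-argument:large-small-interspace:classification}; I would then argue that the pattern of such an interspace must fall into the families covered by Theorems~\ref{local:L-S/thm}, \ref{local:S-L-S/thm}, or~\ref{local:S-L-S:rest/thm}, unless one of $\{S\}, \{S'\}$ is dominating (in which case Lemma~\ref{dominating:wldim/lem} gives the $O(t)$ bound). Since relevance excludes the clique-only patterns $\ipfourClique$, $\ipsixCliqueTwo$, $\ipsixCliqueThree$, $\ipsixCliqueThreeD$ (these force $|\ul(\inducedCC{S})|=2$), the remaining patterns are exactly those for which one of those three theorems produces a reduction. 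The same case-analysis machinery handles Properties~\ref{new:global-argument:assumption:relevant-2-neighbors} and~\ref{new:global-argument:assumption:relevant-3-neighbors}: a relevant $S$ with a large neighbor but $|\ul(\inducedCC{S})| \neq 3$ or $\colorDeg{S} \geq 3$ triggers Lemma~\ref{local:6cc:3neighbors/lem} or Lemma~\ref{local:4cc:3neighbors/lem} or Lemma~\ref{local:K222-3D/lem} (noting $\clique{4}, \clique{6} \notin \inducedCC{S}$ by relevance, since those force $|\ul(\inducedCC{S})|=2$); and a relevant $S$ with three small neighbors that is not discrete after individualizing $s \in S$ contradicts, after inspecting which configurations with $\colorDeg{S}\geq 3$ survive the earlier reductions, Lemma~\ref{local:6cc:3neighbors/lem} (for $|S|=6$) or forces, for $|S|=4$, that $\interspace{R}{S}$ has pattern $\ipfourClique$ for some neighbor $R$, contradicting relevance together with Lemma~\ref{local:4cc:3neighbors/lem}.

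I would organize the proof as a sequence of claims, one per property, each beginning ``Suppose Property~(i) fails,'' and deriving one of the first two bullets; if no property fails, $\coherentConfig$ is $t$-reduced by definition. The main obstacle I expect is the bookkeeping in Property~\ref{new:global-argument:assumption:relevant-3-neighbors} (and the $|S|=4$ part more generally): one must track exactly which interspace patterns remain possible after assuming Theorems~\ref{local:L-S/thm}, \ref{local:S-L-S/thm}, \ref{local:S-L-S:rest/thm} and Lemmas~\ref{local:4cc:3neighbors/lem}, \ref{local:K222-3D/lem}, \ref{local:alternating-6cycle/lem}, \ref{local:6cc:3neighbors/lem} are all inapplicable, and verify that the only surviving configurations with $\colorDeg{S}\geq 3$ are precisely those for which individualizing a single vertex of $S$ makes $S$ discrete. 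This is a finite but delicate case check over Table~\ref{small-cc:classificaiton-small-interspaces/tab}, Table~\ref{interspace-pattern:partition-size/tab}, and the interspace-pattern list of Theorem~\ref{global-argument:large-small-interspace:classification}; the key simplifying observation is that relevance ($|\ul(\inducedCC{S})| > 2$) rules out all the ``clique-type'' patterns at the outset, so only the matching- and triangle-type patterns — exactly the ones handled by the local reductions — remain.
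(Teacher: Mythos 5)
Your high-level plan — go through Properties~\ref{new:global-argument:assumption:large-colorDeg}–\ref{new:global-argument:assumption:relevant-3-neighbors} one at a time and, whenever one fails, invoke a local reduction lemma (giving the first bullet) or show a small dominating set exists (giving the second bullet) — matches the paper's overall structure, and your handling of Properties~\ref{new:global-argument:assumption:large-colorDeg} and~\ref{new:global-argument:assumption:large-one-relevent} is essentially correct (including the observation that relevance rules out the clique-type patterns $\ipfourClique$, $\ipsixCliqueTwo$, $\ipsixCliqueTwoTwice$, $\ipsixCliqueThree$, $\ipsixCliqueThreeD$, so that Theorems~\ref{local:L-S/thm}, \ref{local:S-L-S/thm}, \ref{local:S-L-S:rest/thm} cover all remaining cases). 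However, there is a genuine gap in your treatment of Properties~\ref{new:global-argument:assumption:relevant-2-neighbors} and~\ref{new:global-argument:assumption:relevant-3-neighbors}.

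The gap is that you frame the whole argument as ``property fails $\Rightarrow$ local reduction applies (or small dominating set).'' For Properties~\ref{new:global-argument:assumption:relevant-2-neighbors} and~\ref{new:global-argument:assumption:relevant-3-neighbors} this is not enough: the local reduction lemmas are keyed to~$\colorDeg{S}$ and to which interspace pattern appears, not to~$|\ul(\inducedCC{S})|$. To go from ``the local reductions do not apply'' to ``$|\ul(\inducedCC{S})|=3$ when~$S$ has a large neighbor'' and to ``$S$ becomes discrete after individualizing~$s$ when~$\colorDegSmall{S}\geq 3$'' you need the restorability lemmas of Section~\ref{critical:restorable/sec}, which derive structural constraints from criticality alone, producing contradictions rather than reductions. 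Concretely, for a relevant size-$4$ fiber with a large neighbor, the chain is: Lemma~\ref{local:4cc:3neighbors/lem} gives~$\colorDeg{S}\leq 2$; Lemma~\ref{critical:4-cc:restorable:DUC/lem} says~$|\ul(\inducedCC{S})|=4$ would force~$\colorDeg{S}\geq 3$; hence~$|\ul(\inducedCC{S})|=3$. For Property~\ref{new:global-argument:assumption:relevant-3-neighbors} with~$|S|=4$ and only small neighbors, your specific claim — that failure ``forces~$\interspace{R}{S}$ has pattern~$\ipfourClique$ for some neighbor~$R$, contradicting relevance'' — is wrong: $\ipfourClique$ requires~$|\ul(\inducedCC{S})|=2$, which relevance already rules out, so this is not the case that needs handling. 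The actual issue is distinguishing~$|\ul(\inducedCC{S})|=3$ (where individualizing one vertex does \emph{not} discretize~$S$, because a~$C_4$ constituent leaves a surviving pair) from~$|\ul(\inducedCC{S})|=4$ (where it does). Ruling out~$|\ul(\inducedCC{S})|=3$ needs Lemma~\ref{critical:4cc:restorable:2,C4/lem} (producing a~$\ipfourCycle$ neighbor) combined with Lemma~\ref{critical:4cc:restorable:cycle/lem} (such a~$\cycle{8}$ makes~$\{S,S'\}$ dominating, landing in the second bullet). The analogous size-$6$ reasoning similarly rests on Lemmas~\ref{critical:6-cc:restorable:DUC:deg1/lem}, \ref{critical:6-cc:restorable:large-neighborhood/lem}, \ref{critical:6-cc:restorable:cycle/lem}, none of which appear in your proposal. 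Without this third mechanism (contradiction-from-criticality via restorability) alongside the two you do identify, the properties cannot be established.
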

\begin{proof}
    Assume the first conclusion does not hold for~$\coherentConfig$.
    Throughout the proof, we will now repeatedly use that we can assume that the reduction lemmas of the previous sections with conclusions of the form $\wldim{\coherentConfig} \leq z + \widetilde{\f}( \tau(\parameters(\coherentConfig)) - z')$ with~$1\leq z\leq z'$ and~$z\in \Nat$ are not applicable.
    We will argue that~$\coherentConfig$ is~$t$-reduced or~$\wldim{\coherentConfig} \leq 10 + 2\cdot t$.

    We already assume that~$\coherentConfig$ is critical and thus Property~\ref{new:global-argument:assumption:critical} holds.
    Property~\ref{new:global-argument:assumption:limited-fiber-size} is given by the definition of~$t$.
    Lemma~\ref{lem:local-argument:3-large-neighbors} implies that~$\colorDegLarge{R} \leq 2$ for all fibers~$R$.
    Hence Property~\ref{new:global-argument:assumption:large-colorDeg} holds.

    Assume there is a relevant small fiber~$S$ which is dominating.
    Since Lemma~\ref{lem:local-argument:3-large-neighbors} is not applicable, we have~$\colorDegLarge{S} \leq 2$.
    By assumption, these large neighbors have size at most~$t$.
    Thus by Lemma~\ref{dominating:wldim/lem} we have~$\wldim{\coherentConfig} \leq 8 +2\cdot t$.

    For the rest of the proof, we may assume that there are no dominating relevant small fibers in~$\coherentConfig$.
    Hence,  Property~\ref{new:global-argument:assumption:dominating-relevant-small-fibers} holds.

    Since Theorem~\ref{local:L-S/thm} does not apply, all interspaces between large and relevant small fibers have one of the following interspace patterns: $\ipsixTriangle$, $\ipfourCycle$, $\ipsixMatching$, $\ipfourMatching$, or~$\ipsixMatchingComplementD$.
    Theorems~\ref{local:S-L-S/thm} and~\ref{local:S-L-S:rest/thm} imply that~$\colorDegRelevantSmall{L} \leq 1$ for all large fibers~$L$.
    So Property~\ref{new:global-argument:assumption:large-one-relevent} holds.

    Let~$S$ be a relevant size-$4$ fiber.
    \begin{itemize}
        \item
        If there is a small fiber~$S'$ adjacent to~$S$ in~$\quotientGraph{\coherentConfig}$ such that~$\cycle{8} \in \interspace{S}{S'}$, then by Lemma~\ref{critical:4cc:restorable:cycle/lem} the set~$\{S,S'\}$ is dominating.
        Since Lemma~\ref{local:4cc:3neighbors/lem} does not apply, there are either at most~$2$ large fibers, which have at most size~$t$, or at most~$4$ small fibers adjacent to~$S,S'$ in~$\quotientGraph{\coherentConfig}$.
        By Lemma~\ref{dominating:wldim/lem} we have~$\wldim{\coherentConfig} \leq 10+2t$.

        \item
        Assume $S$ has a large neighbor in~$\quotientGraph{\coherentConfig}$.
        Since Lemma~\ref{local:4cc:3neighbors/lem} is not applicable, we have $\colorDeg{S} \leq 2$.
        Since~$S$ is relevant we have~$|\ul(\inducedCC{S})| \in \{3,4\}$ by Lemma~\ref{small-cc:induced-cc/lem}.
        If~$|\ul(\inducedCC{S})| = 4$, then Lemma~\ref{critical:4-cc:restorable:DUC/lem} implies that~$\colorDeg{S} \geq 3$.
        Therefore~$|\ul(\inducedCC{S})| = 3$.
        This yields Property~\ref{new:global-argument:assumption:relevant-2-neighbors}.

        \item
        Assume that all neighbors of~$S$ in~$\quotientGraph{\coherentConfig}$ are small.
        If there is a small fiber~$S'$ adjacent to~$S$ in~$\quotientGraph{\coherentConfig}$ such that~$\cycle{8} \in \interspace{S}{S'}$, then by Lemma~\ref{critical:4cc:restorable:cycle/lem} the set~$\{S,S'\}$ is dominating.
        Thus~$\wldim{\coherentConfig} \leq 4$.
        If no interspace has the interspace pattern~$\ipfourCycle$, then Lemma~\ref{critical:4cc:restorable:2,C4/lem} implies~$|\ul(\inducedCC{S})| = 4$ since~$S$ relevant.
        Due to Lemma~\ref{critical:4-cc:restorable:DUC/lem} we have~$\colorDeg{S} \geq 3$.
        Further there are three constituents in~$S$ isomorphic to~$\disjointCliques{2}{2}$.
        Thus fiber~$S$ splits into singletons in~$\coherentConfig_s$ where~$s \in S$.
        This yields Property~\ref{new:global-argument:assumption:relevant-3-neighbors}.
    \end{itemize}
    Altogether for all relevant small fibers of size~$4$ either~$\wldim{\coherentConfig} < 10 + 2\cdot t$ or Property~\ref{new:global-argument:assumption:relevant-2-neighbors} and Property~\ref{new:global-argument:assumption:relevant-3-neighbors} holds.

    Now we deal with relevant small fibers of size~$6$.

    Recall that the following reductions are not applicable:
    Theorem~\ref{local:L-S/thm} and Lemmas~\ref{local:K222-3D/lem},~\ref{local:alternating-6cycle/lem}, and~\ref{local:6cc:3neighbors/lem}.
    Thus for all relevant size-$6$ fibers~$S$ that are not dominating we have~$\colorDeg{S} \leq 2$.

    Let~$S$ be relevant size~$6$-fiber.
    Towards a contradiction, suppose~$|\ul(\inducedCC{S})| = 3$ and all neighbors of~$S$ are small.
    Then by Lemma~\ref{small-cc:induced-cc/lem} either~$\disjointCliques{2}{3} \in \inducedCC{S}$ or~$\disjointCliques{3}{2} \in \inducedCC{S}$.
    So assume that there is a constituent~$G$ in~$\inducedCC{S}$ isomorphic to~$\disjointCliques{3}{2}$.
    Since all neighbors of~$S$ are small and~$|\ul(\inducedCC{S})| = 3$, by Lemma~\ref{small-cc:interspace-implies-cc/lem} all interspaces incident to~$S$ have the interspace pattern~$\ipsixMatching$ or~$\ipsixMatchingTwice$.
    This contradicts Lemma~\ref{critical:6-cc:restorable:DUC:deg1/lem}.
    If~$\disjointCliques{2}{3}\in\inducedCC{S}$, the reasoning is similar.

    So let~$S$ be a relevant size-$6$ fiber which is not dominating, has only small neighbors in~$\quotientGraph{\coherentConfig}$, $\colorDeg{S} \leq 2$, and~$|\ul(\inducedCC{S})| > 3$.
    Since~$\colorDeg{S} \leq 2$, by Lemma~\ref{critical:6-cc:restorable:cycle/lem} there is no interspace containing a constituent isomorphic to~$\cycle{12}$.
    Hence by Lemma~\ref{critical:6-cc:restorable:large-neighborhood/lem}, there is a small fiber~$S'$ adjacent to~$S$ such that~$\interspace{S'}{S}$ has the interspace pattern~$\ipsixMatchingMatching$.
    By the same reasoning as above, we have~$\colorDeg{S'} \leq 2$ and~$\{S'\}$ is not dominating.
    If~$|\ul(\inducedCC{S'})| > 3$, Lemma~\ref{critical:6-cc:restorable:large-neighborhood/lem} applies.
    Furthermore, the part~$\mathcal{Y}$ mentioned in that Lemma is also not empty since~$\interspace{S}{S'}$ has the interspace pattern~$\ipsixMatchingTwice$. (This is determined by the interspace pattern~$\ipsixMatchingAndCycle$.)
    We conclude~$\colorDeg{S'} \geq 3$.
    This is a contradiction since we have already proven that~$\colorDeg{S} < 3$ for all relevant size~$6$ fibers.
    Thus~$|\ul(\inducedCC{S'})| = 3$.
    Due to the interspace pattern of~$\interspace{S}{S'}$, there are two constituents in~$\inducedCC{S'}$ whose underlying graphs are isomorphic to~$\clique{2,2,2}$ and~$\disjointCliques{3}{2}$ respectively.
    If all interspaces have the interspace pattern~$\ipsixMatching$ or~$\ipsixMatchingTwice$, then we have a contradiction to Lemma~\ref{critical:6-cc:restorable:DUC:deg1/lem}.
    Thus there is at least one fiber~$R$ such that~$\interspace{R}{S}$ has the interspace pattern~$\ipsixMatchingComplementD$. (Recall since Theorem~\ref{local:L-S/thm} is not applicable we can rule out pattern~$\ipsixMatchingComplement$.)
    However then all preconditions of Lemma~\ref{local:alternating-6cycle/lem}, which is not applicable, are satisfied.

    Altogether for all relevant small fibers of size~$6$ either Property~\ref{new:global-argument:assumption:relevant-2-neighbors} and Property~\ref{new:global-argument:assumption:relevant-3-neighbors} holds or~$\wldim{\coherentConfig} < 8 + 2\cdot t$.
\end{proof}


\section{A global argument}
\label{global-argument/sec}

In the previous section, we examined critical coherent configurations~$\coherentConfig$ that are~$t$-reduced (Definition~\ref{new:global-argument:assumption}) and collected some of their structural properties.
This section's goal is to bound the WL-dimension of such coherent configuration.
In the following, we briefly describe the intuition of this section's goal and method:
in the coherent configuration~$\coherentConfig$ one finds many induced subconfigurations~$\coherentConfig'$ each of which has one of two shapes:
in the first case,~$\coherentConfig'$ mainly consists of paths or cycles of large fibers with (possibly) some irrelevant small fibers attached.
To obtain an upper bound, we use our work of Section~\ref{sec:limit:fiber:sizes}.
In the second case, all fibers of~$\coherentConfig'$ are small, and in Section~\ref{wldim-small/sec} we establish an upper bound.
Thus, if we split~$\coherentConfig$ into these induced subconfigurations, we can deal with each subconfigurations on its own.
However, there are still some relevant small fibers that connect theses induced subconfigurations via paths.
Due to the structural properties, we use a set of individualizations to separate~$\coherentConfig$ into these induced subconfigurations on a global scale.
Furthermore, we charge the cost of individualization to the large fibers.
Overall, this enables us to bound the WL-dimension of such critical coherent configuration by the potential function introduced in Section~\ref{sec:potential:func}.

Before we consider the effects of the mentioned individualizations on a global scale, we examine the possible paths between the subconfigurations.
For each such path, we show that there are individualizations such that after restoring coherence and criticality which this path does not exists anymore.

\begin{lemma}
\label{global-argument:LSL/lem}
    Let~$\coherentConfig$ be a critical coherent configuration that is $t$-reduced (Defini-{\linebreak}tion~\ref{new:global-argument:assumption}), and let~$(L,S,L')$ be an induced path in~$\quotientGraph{\coherentConfig}$ such that~$L, L'$ are large and $S$ is small and relevant.
    There are~$s_{1},\dots,s_{|S|/2 -1} \in S$ such that either the vertex sets~$L \cup S$ and~$L'$ or the vertex sets~$L$ and~$L' \cup S$ are homogeneously connected in~$(\coherentConfig[L \cup S \cup L'])_{s_1,\dots,s_{|S|/2 -1}}$.
    Furthermore~$L$ or~$L'$ has at least size~$9$ if~$|S| = 6$.
\end{lemma}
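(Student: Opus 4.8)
The plan is to reduce to a short classification of $\inducedCC{S}$ and of the interspace patterns of $\interspace{L}{S}$ and $\interspace{L'}{S}$, and then to individualize vertices of $S$ that peel off all but one clique-component of a suitable constituent $M$ of $\inducedCC{S}$.

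Since $\coherentConfig$ is $t$-reduced and $S$ is a relevant small fiber having the large neighbours $L$ and $L'$, Property~\ref{new:global-argument:assumption:relevant-2-neighbors} gives $|\ul(\inducedCC{S})| = 3$ and $\colorDeg{S} = 2$, so $L$ and $L'$ are the only neighbours of $S$, and --- as in the proof of Lemma~\ref{lem:properties:or:t:reduced}, using that the local reductions of Section~\ref{recursive-argement/sec} do not apply --- each of $\interspace{L}{S}$ and $\interspace{L'}{S}$ has one of the patterns $\ipfourMatching$, $\ipfourCycle$, $\ipsixMatching$, $\ipsixTriangle$, $\ipsixMatchingComplementD$. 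By Table~\ref{small-cc:classificaiton-small-cc/tab}, $\inducedCC{S}$ contains a constituent $M$ with $\ul(M)$ a disjoint union of cliques: $2K_2$ if $|S| = 4$, and $3K_2$ or $2K_3$ if $|S| = 6$. I call $\interspace{L}{S}$ \emph{good} if it has pattern $\ipfourMatching$, $\ipsixMatching$, or $\ipsixTriangle$ --- equivalently, every basis relation $U\in\interspace{L}{S}$ has $\ell U$ a union of components of $\ul(M)$ --- and \emph{bad} (patterns $\ipfourCycle$, $\ipsixMatchingComplementD$) otherwise.

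The crucial claim is that $\interspace{L}{S}$ and $\interspace{L'}{S}$ are not both bad. If $|S| = 4$, ``both bad'' means both have pattern $\ipfourCycle$, whence Lemma~\ref{critical:adjacent-interspace-cycle/lem}(1) produces a constituent isomorphic to $\disjointCliques{4}{|L|/4,|L'|/4}$ in $\interspace{L}{L'}$, contradicting that $(L,S,L')$ is an induced path (and if $\inducedCC{S}=(\overrightarrow{C_4},2K_2)$ then no interspace to a large fiber can be bad, since otherwise $\inducedCC{\ell U}$ would not be regular). If $|S| = 6$, ``both bad'' means both have pattern $\ipsixMatchingComplementD$, so the basis relations $U\in\interspace{L}{S}$, $U'\in\interspace{L'}{S}$ with $\ell U$, $\ell'U'$ a $3$-clique of the $K_{2,2,2}$-constituent satisfy $\intDegree{U}=\intDegree{U'}=3$; but Theorem~\ref{interspace-divisor/thm} applied to the induced path $(L,S,L')$ would give $9=\intDegree{U}\intDegree{U'}=|S|\cdot|\ell U\cap\ell'U'| = 6\cdot|\ell U\cap\ell'U'|$, which is impossible. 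Moreover, still for $|S|=6$, the case $\ul(M)=2K_3$ (i.e.\ $\inducedCC{S}$ among $(2K_3,K_{3,3})$, $(2\overrightarrow{C_3},K_{3,3})$) cannot occur: then $\ipsixTriangle$ is the only admissible pattern for either interspace, but two $\ipsixTriangle$-interspaces again force $9=6\cdot|\ell U\cap\ell'U'|$ by Theorem~\ref{interspace-divisor/thm}. Hence $\ul(M)=3K_2$ when $|S|=6$, and (after possibly swapping $L$ and $L'$) we may assume $\interspace{L'}{S}$ is good.

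Now take $s_1,\dots,s_{|S|/2-1}$ to be vertices peeling off all but one component of $\ul(M)$ --- one vertex from each of the omitted components when $\ul(M)=kK_2$, and two vertices of one of the triangles when $\ul(M)=2K_3$; the count is exactly $|S|/2-1$. Writing $\coherentConfig' \coloneqq (\coherentConfig[L\cup S\cup L'])_{s_1,\dots,s_{|S|/2-1}}$, in $\coherentConfig'$ the fiber $S$ has split off each chosen component into singletons, leaving the remaining component $\bar S$ (which may split further). For every basis relation $U$ of $\interspace{L'}{S}$ and every $\ell'$, the set $\ell'U$ is a union of components of $\ul(M)$, hence contains $\bar S$ or is disjoint from it, and contains or misses each individualized singleton; so $U$ restricted to any fiber of $\coherentConfig'$ inside $S$ times any fiber of $\coherentConfig'$ inside $L'$ is empty or complete, and by coherence this restriction is a single basis relation of $\coherentConfig'$. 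Since $\interspace{L}{L'}$ is homogeneous ($(L,S,L')$ induced, and refining fibers preserves homogeneity), $L\cup S$ is homogeneously connected to $L'$ in $\coherentConfig'$, which is the first alternative of the lemma; the second arises symmetrically when $\interspace{L}{S}$ is the good one. Finally, for the size claim with $|S|=6$: since $\ul(M)=3K_2$, the good interspace has pattern $\ipsixMatching$, so by Lemma~\ref{interspace-pattern:partition-size/lem} the corresponding large fiber splits into $3$ equal parts, hence has size divisible by $3$ and therefore at least $9$. The step I expect to be most delicate is the coherence bookkeeping just used --- confirming that after taking the coherent closure the complete and empty blocks really are single basis relations, and tracking exactly how $S$ and $L'$ split --- together with checking that the exclusion ``not both bad'' is exhaustive over the finitely many triples $(\inducedCC{S},\text{pattern},\text{pattern})$ that Theorem~\ref{global-argument:large-small-interspace:classification} leaves open.
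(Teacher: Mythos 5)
Your overall strategy matches the paper's: use Property~\ref{new:global-argument:assumption:relevant-2-neighbors} to get $|\ul(\inducedCC{S})| = 3$ and $\colorDeg{S}=2$, classify the two interspace patterns, show the two interspaces cannot both fail to respect the clique-decomposition $\ul(M)$, and peel off $S$ by individualizing $|S|/2-1$ vertices chosen with respect to the ``good'' side. The not-both-bad step via Theorem~\ref{interspace-divisor/thm} and Lemma~\ref{critical:adjacent-interspace-cycle/lem} is correctly executed for the patterns you consider, and the final peel-off argument and the divisibility argument for the $|S|=6$ size claim are sound for those patterns.

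However, there is a genuine gap in the case classification. You restrict the admissible patterns of $\interspace{L}{S}$ and $\interspace{L'}{S}$ to $\ipfourMatching$, $\ipfourCycle$, $\ipsixMatching$, $\ipsixTriangle$, $\ipsixMatchingComplementD$ by invoking ``the local reductions of Section~\ref{recursive-argement/sec} do not apply.'' That is not a hypothesis of the lemma: Definition~\ref{new:global-argument:assumption} ($t$-reduced) does not exclude patterns such as $\ipsixTriangleComplement$, $\ipsixTriangleComplementTwice$, $\ipsixMatchingComplement$, or $\ipsixMatchingTwice$ between a large fiber and a relevant small fiber, and Lemma~\ref{lem:properties:or:t:reduced} only derives $t$-reducedness \emph{assuming} the reductions fail --- the implication does not go the other way. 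As a consequence, your exclusion of $\ul(M)=2K_3$ is unjustified: the combination $\interspace{L}{S}$ of pattern $\ipsixTriangle$ with $\interspace{L'}{S}$ of pattern $\ipsixTriangleComplement$ or $\ipsixTriangleComplementTwice$ is precisely case (2) of the paper's proof and is consistent with the lemma's hypotheses. Similarly $\ipsixMatching$ or $\ipsixMatchingTwice$ paired with $\ipsixMatchingComplement$ (not only $\ipsixMatchingComplementD$) must be handled, which is case (3). Your proof therefore covers a strict subset of the cases. The omission is fixable --- your Theorem~\ref{interspace-divisor/thm} argument already rules out two patterns on the $\ipsixTriangleComplement$/$\ipsixTriangleComplementTwice$ side ($2\cdot 2 = 6\cdot k$ is impossible) as well as two on the $\ipsixMatchingComplement$/$\ipsixMatchingComplementD$ side, so the not-both-bad claim extends verbatim; and the peel-off works for $2K_3$ too with the paper's choice of $s_1,s_2$ one from each triangle (not, as in your parenthetical, two vertices from a single triangle), with the divisibility by $9$ coming from Lemma~\ref{interspace-pattern:partition-size/lem} applied to $\ipsixTriangleComplement$ instead of to $\ipsixMatching$ --- but as written the proof does not establish the lemma as stated.
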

\begin{proof}
    Since~$\coherentConfig$ is~$t$-reduced, by Condition~\ref{new:global-argument:assumption:relevant-2-neighbors},
    we have~$\abs{\ul(\inducedCC{S})} = 3$ and~$S$ is the only relevant small fiber adjacent to~$L$ (respectively~$L'$) in~$\quotientGraph{\coherentConfig}$.
    Recall that~$\interspace{L}{L'}$ is homogeneous and that by Lemma~\ref{critical:small-cc:module/lem} fiber~$S$ is not a union of modules.
    Thus up to symmetry of~$L$ and~$L'$, we only need to consider the following cases:
    \begin{enumerate}
        \item The interspace~$\interspace{L}{S}$ has the interspace pattern~$\ipsixTriangle$ and~$\interspace{L'}{S}$ has the interspace pattern~$\ipsixTriangleComplement$ or~$\ipsixTriangleComplementTwice$.
        \item The interspace~$\interspace{L}{S}$ has the interspace pattern~$\ipsixMatching$  and~$\interspace{L'}{S}$ has the interspace pattern~$\ipsixMatchingComplement$ or~$\ipsixMatchingComplementD$.
        \item The interspace~$\interspace{L}{S}$ has the interspace pattern~$\ipfourMatching$ and~$\interspace{L'}{S}$ has the interspace pattern~$\ipfourCycle$.
    \end{enumerate}
    Choose~$s_{1},\dots,s_{|S|/2 -1} \in S$ so that none of the vertices are adjacent in~$(S,\arcs^1(\interspace{L}{S}))$.
    In $\coherentConfig_{s_1,\dots,s_{x-1}}$, the fiber~$S$ is split into the connected components of~$(S,\arcs^1(\interspace{L}{S}))$ and~$L$ splits into~$\equivalenceClasses{L,S}$.
    Due to the interspace pattern of~$\interspace{L}{S}$, the claim follows.

    Assume~$|S| = 6$.
    If~$\interspace{L'}{S}$ has the interspace pattern~$\ipsixTriangleComplement$ or~$\ipsixTriangleComplementTwice$, then Lemma~\ref{interspace-pattern:partition-size/lem} implies that~$9$ divides~$|L'|$.
    If~$\interspace{L}{S}$ has the interspace pattern~$\ipsixMatching$, then Lemma~\ref{interspace-pattern:partition-size/lem} implies that~$3$ divides~$|L|$.
    In both cases~$L$ or~$L'$ has size at least~$9$.
\end{proof}

\begin{lemma}
\label{global-argument:LSSL/lem}
    Let~$\coherentConfig$ be a critical coherent configuration that is $t$-reduced (Defini-{\linebreak}tion~\ref{new:global-argument:assumption}), and let~$(L_0 ,S_0 ,S_1, L_1)$ be an induced path in~$\quotientGraph{\coherentConfig}$ such that~$L_0, L_1$ are large and~$S_0, S_1$ are small and relevant.
    There are~$s_{1},\dots,s_{|S_0|/2 -1} \in S_0$ such that vertex sets~$L_0 \cup S_0$ and~$L_1 \cup S_1$ are homogeneously connected in~$(\coherentConfig[L_0 \cup S_0 \cup S_1 \cup L_1])_{s_1,\dots,s_{|S_0|/2-1}}$.
\end{lemma}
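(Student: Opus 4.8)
The plan is to mimic the structure of Lemma~\ref{global-argument:LSL/lem}, but now the single small fiber~$S$ in the middle is replaced by the path~$S_0,S_1$, and the two halves are glued at the interspace~$\interspace{S_0}{S_1}$ rather than at a homogeneous interspace~$\interspace{L}{L'}$. First I would record the structural constraints forced by $t$-reducedness: by Condition~\ref{new:global-argument:assumption:relevant-2-neighbors}, since each $S_i$ has the large neighbor $L_i$, we have $\abs{\ul(\inducedCC{S_i})}=3$ and $\colorDeg{S_i}\leq 2$, so in fact $S_0$ has exactly the two neighbors $L_0$ and $S_1$, and $S_1$ has exactly $L_1$ and $S_0$. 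In particular $(L_0,S_0,S_1,L_1)$ being an induced path means $\interspace{L_0}{L_1}$, $\interspace{L_0}{S_1}$ and $\interspace{S_0}{L_1}$ are all homogeneous. Then I would invoke Corollary~\ref{large-small-interspace:classification:uniqueness/cor} together with Lemma~\ref{small-cc:induced-cc/lem} and Lemma~\ref{critical:small-cc:module/lem} (no small fiber is a union of modules) to enumerate, up to symmetry, which interspace patterns $\interspace{L_0}{S_0}$, $\interspace{S_0}{S_1}$, and $\interspace{L_1}{S_1}$ can simultaneously carry given that $\abs{\ul(\inducedCC{S_0})}=\abs{\ul(\inducedCC{S_1})}=3$ — this should match the three cases in Lemma~\ref{global-argument:LSL/lem} (matching/cycle in the size-$4$ case; $2K_3$-type vs.\ $K_{3,3}$-type and $3K_2$-type vs.\ $K_{2,2,2}$-type in the size-$6$ cases), with $\interspace{S_0}{S_1}$ forced (by Lemma~\ref{small-cc:interspace-implies-cc/lem} and Table~\ref{small-cc:classificaiton-small-interspaces/tab}) to be the interspace type compatible with both sides.

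Next I would individualize $s_1,\dots,s_{|S_0|/2-1}\in S_0$ chosen to be pairwise non-adjacent in $(S_0,\arcs^1(\interspace{L_0}{S_0}))$, exactly as in Lemma~\ref{global-argument:LSL/lem}. In $(\coherentConfig[L_0\cup S_0\cup S_1\cup L_1])_{s_1,\dots,s_{|S_0|/2-1}}$ this splits $S_0$ into the connected components of $(S_0,\arcs^1(\interspace{L_0}{S_0}))$, hence $S_0$ becomes discrete (size-$4$ case) or splits into size-$2$ pieces (size-$6$ case); simultaneously $L_0$ splits into the classes of $\equivalenceClasses{L_0,S_0}$. The crucial propagation step is: because $\interspace{S_0}{S_1}$ has the pattern dictated by the classification, individualizing (the image of) the split of $S_0$ refines $S_1$ all the way down — in each of the enumerated cases the constituent of $\interspace{S_0}{S_1}$ is one of $\matchingCC{2}$, $\cycle{8}$, $\disjointCliques{3}{2,2}$, $\disjointCliques{2}{3,3}$ — and by the same reasoning used for $S_0$ (one side's induced graph being a disjoint union of cliques of size $\geq 2$) one gets that $S_1$ is split into pieces each of which is a clique in the corresponding constituent of $\inducedCC{S_1}$, and $L_1$ splits into $\equivalenceClasses{L_1,S_1}$. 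Then I would verify directly from coherence that each resulting piece of $L_0\cup S_0$ is connected homogeneously to each resulting piece of $L_1\cup S_1$: the interspace $\interspace{L_0}{L_1}$ is already homogeneous by hypothesis, the interspaces $\interspace{L_0}{S_1}$ and $\interspace{S_0}{L_1}$ are homogeneous, and the interspace $\interspace{S_0}{S_1}$ becomes homogeneous once both fibers are split in this coordinated way (this is the point where one uses that the split of $S_0$ determines the split of $S_1$ bijectively and degree-$1$-compatibly, so no constituent of $\interspace{S_0}{S_1}$ remains nontrivial between a piece of $S_0$ and a piece of $S_1$). Conclude that $L_0\cup S_0$ and $L_1\cup S_1$ are homogeneously connected.

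I expect the main obstacle to be the case analysis in the first step: one has to be careful that with $\abs{\ul(\inducedCC{S_i})}=3$ on \emph{both} ends, the interspace type $\interspace{S_0}{S_1}$ is actually forced and compatible, and in particular that the ``complement'' patterns ($\ipsixMatchingComplement$, $\ipsixMatchingComplementD$, $\ipsixTriangleComplement$, $\ipsixTriangleComplementTwice$) on the $S_1$-side pair correctly with the ``small-degree'' patterns ($\ipsixMatching$, $\ipsixTriangle$, $\ipfourMatching$) on the $S_0$-side, and that the opposite pairing (complement on $S_0$, small on $S_1$) is handled by symmetry. One subtlety worth flagging: unlike Lemma~\ref{global-argument:LSL/lem}, here we do not get (nor claim) a size-$9$ lower bound on $L_0$ or $L_1$; the statement only asserts homogeneous connectivity, so I would not need the divisibility bookkeeping via Lemma~\ref{interspace-pattern:partition-size/lem} at the end, which simplifies matters. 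A second point to check carefully is that individualizing vertices of $S_0$ and refining does not accidentally split $L_0$ or $L_1$ into a small or tiny fiber in a way that violates criticality of the ambient configuration — but since we work inside $\coherentConfig[L_0\cup S_0\cup S_1\cup L_1]$ and only claim a statement about this induced subconfiguration after individualization, criticality of $\coherentConfig$ is only used to invoke the classification lemmas, and the conclusion is purely about the combinatorial structure of the refined induced configuration, so this is not an issue.
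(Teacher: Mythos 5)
Your overall strategy is the same as the paper's: use $t$-reducedness to get $\abs{\ul(\inducedCC{S_i})}=3$ and $\colorDeg{S_i}\leq 2$, classify $\interspace{S_0}{S_1}$, individualize $|S_0|/2-1$ vertices of $S_0$ that are pairwise non-adjacent in $(S_0,\arcs^1(\interspace{L_0}{S_0}))$, and argue that the induced splits of $S_0$ and $S_1$ align with the biclique structure of $\interspace{S_0}{S_1}$ so that the two sides become homogeneously connected. However, there is a genuine gap in your case analysis: you never rule out $\cycle{8}\in\interspace{S_0}{S_1}$ — in fact you list it among the admissible constituents and claim the same ``disjoint union of cliques'' propagation applies to it. It does not. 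If $\cycle{8}\in\interspace{S_0}{S_1}$ (which is compatible with both fibers being relevant and with $\abs{\ul(\inducedCC{S_0})}=3$, since it only forces $\cycle{4},\disjointCliques{2}{2}\in\inducedCC{S_0}$, and which appears as a legitimate critical interspace type $(\cycle{8},\cycle{8})$ in Table~\ref{small-cc:classificaiton-small-interspaces/tab}, so the classification table alone cannot exclude it), then after splitting $S_0$ into the two $\disjointCliques{2}{2}$-components each vertex of $S_1$ has exactly one $\cycle{8}$-neighbour in each component; the refined configuration then contains perfect-matching (degree-one) relations between size-$2$ pieces of $S_0$ and pieces of $S_1$, so $L_0\cup S_0$ and $L_1\cup S_1$ are \emph{not} homogeneously connected in that case. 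The paper's proof explicitly excludes this constituent using criticality via Lemma~\ref{critical:4cc:restorable:cycle/lem} (and analogously $\cycle{12}$ via Lemma~\ref{critical:6-cc:restorable:cycle/lem}, and $\interspaceFourSix$ via criticality), which is exactly the step your argument is missing; without it, the four remaining cases ($\disjointCliques{2}{2,2}$, $\disjointCliques{2}{3,3}$, $\disjointCliques{3}{2,2}$, $\disjointCliques{2}{2,3}$ — note you also omit the mixed-size case $\disjointCliques{2}{2,3}$) do not cover all possibilities allowed by Lemma~\ref{small-cc:interspace/lem}.

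A smaller point: your propagation step asserts that $L_1$ splits into $\equivalenceClasses{L_1,S_1}$; this is neither needed nor guaranteed, and the paper does not claim it. What matters is only that, in each of the disjoint-biclique cases, every refined relation between a part of $S_0$ and a part of $S_1$ is a full product (the split of $S_0$ into the clique components of the unique disjoint-clique constituent of $\inducedCC{S_0}$ coincides with the biclique partition because $\abs{\ul(\inducedCC{S_0})}=3$), so that no non-homogeneous relation is created between the two sides. With the cycle constituents excluded as above, that part of your argument goes through and matches the paper.
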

\begin{proof}
    Since~$\coherentConfig$ is~$t$-reduced, we have~$|\ul(\inducedCC{S_0})| = |\ul(\inducedCC{S_1})| = 3$ and~$S_0$ (respectively~$S_1$) is the only relevant small fiber adjacent to~$L_0$ (respectively~$L_1$) in~$\quotientGraph{\coherentConfig}$.
    Since~$\coherentConfig$ is critical, we have~$\interspaceFourSix \notin \interspace{S_0}{S_1}$.
    Further by Lemma~\ref{critical:4cc:restorable:cycle/lem} we have~$\cycle{8} \notin \interspace{S_}{S_1}$, and by Lemma~\ref{critical:6-cc:restorable:cycle/lem} we have~$\cycle{12} \notin \interspace{S_0}{S_1}$.
    Thus by Lemma~\ref{small-cc:interspace/lem} we only need to consider the following cases:
    \begin{multicols}{2}
        \begin{enumerate}
            \item $\disjointCliques{2}{2,2} \in \interspace{S_0}{S_1}$
            \item $\disjointCliques{2}{3,3} \in \interspace{S_0}{S_1}$
            \item $\disjointCliques{3}{2,2} \in \interspace{S_0}{S_1}$
            \item $\disjointCliques{2}{2,3} \in \interspace{S_0}{S_1}$
        \end{enumerate}
    \end{multicols}
    Choose vertices~$s_{1},\dots,s_{|S_0|/2 -1} \in S_0$ so that none of the vertices are adjacent in~$(S_0,\arcs^1(\interspace{L_0}{S_0}))$.
    In~$\coherentConfig_{s_1,\dots,s_{x-1}}$, the fiber~$S_0$ is split into the connected components of~$(S_0,\arcs^1(\interspace{L_0}{S_0}))$ and~$S_1$ splits into~$\equivalenceClasses{S_1,S_0}$.
    Due to the interspace pattern of~$\interspace{S_1}{S_0}$, the claim follows.
\end{proof}

\begin{lemma}
\label{lem:the:newest:global-argument}
    Let~$\coherentConfig$ be a~$t$-reduced critical coherent configuration with the parame\-ters~$Par(\coherentConfig)=(n_\ell, k_\ell,n_s)$.
    There is a set~$M$ of size~$q\leq k_\ell$ such that each connected component of~$\quotientGraph{\coherentConfig_M}$
    \begin{itemize}
        \item induces a configuration of WL-dimension at most~$3t$ or
        \item has only small fibers and order at most~$n_s - r_1 \cdot 3-r_2\cdot 4$, where~$r_1\cdot 8.5 +r_2\cdot 8\leq n_\ell$.
    \end{itemize}
\end{lemma}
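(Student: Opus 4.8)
The plan is to produce $M$ by individualising (essentially) one carefully chosen vertex in each large fiber, using the structural restrictions of a $t$-reduced configuration to guarantee that this ``cuts'' the large part into short, path-like pieces while shattering all relevant small fibers that touch it. First I would record the shape of the relevant part of $\quotientGraph{\coherentConfig}$. By Property~\ref{new:global-argument:assumption:large-colorDeg} the graph $\quotientGraphLarge{\coherentConfig}$ has maximum degree at most $2$, so it is a disjoint union of paths and cycles (cf.\ Lemma~\ref{lem:max:degree:2:means:tw:3}). By Property~\ref{new:global-argument:assumption:large-one-relevent} every large fiber has at most one relevant small neighbour, and by Property~\ref{new:global-argument:assumption:relevant-2-neighbors} every relevant small fiber $S$ adjacent to a large fiber satisfies $|\ul(\inducedCC{S})|=3$ and $\colorDeg{S}\leq 2$; hence such an $S$ either lies on an induced path $(L,S,L')$ or $(L,S,S')$ (large--small--large or large--small--small), or is a pendant at a single large fiber. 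Together with Lemma~\ref{irrelevant-small-fibers/lem} (irrelevant small fibers have $\colorDegSmall{\,\cdot\,}\leq 1$, their neighbourhoods induce cliques, and an irrelevant small fiber with a large neighbour has no relevant small neighbour) this shows the large fibers plus the relevant small fibers attached to them form a ``spine'': paths and cycles of large fibers whose edges are possibly subdivided once or twice by relevant small fibers, decorated only by pendant cliques of irrelevant small fibers.

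\textbf{Construction of $M$.} For each large fiber $L$ I would put exactly one vertex into $M$, chosen as follows. If $L$ has no relevant small neighbour, any vertex works. If $L$ carries a relevant small neighbour $S$, I choose the individualisation vertex compatibly with Lemma~\ref{global-argument:LSL/lem} or Lemma~\ref{global-argument:LSSL/lem}: since $|\ul(\inducedCC{S})|=3$ and $\interspace{L}{S}$ has one of the patterns $\ipsixMatchingComplementD$, $\ipfourCycle$, $\ipsixMatching$, $\ipfourMatching$, $\ipsixTriangle$, individualising one vertex of $L$ shatters $S$ (for size $4$, and up to an absorbed small residue for size $6$) and simultaneously realises the homogeneous split of $(L,S,L')$, $(L,S,S')$, or $(L,S,S',L')$ promised by those lemmas, thereby cutting the spine edge through $S$. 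On a cycle of large fibers I spend one additional cut, still within the budget $q\le k_\ell$ because a cycle has as many fibers as edges. For size-$6$ relevant small fibers the last line of Lemma~\ref{global-argument:LSL/lem}, respectively the divisibility consequences of Lemma~\ref{interspace-pattern:partition-size/lem} ($9\mid |L|$ when $\interspace{L}{S}$ has pattern $\ipsixMatching$, and $|L|$ even, so $\geq 8$, otherwise), together with Property~\ref{new:global-argument:assumption:large-one-relevent} (each large fiber is charged at most one relevant small fiber), supplies exactly the input for the inequality $r_1\cdot 8.5+r_2\cdot 8\le n_\ell$.

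\textbf{Reading off the two cases.} In $\coherentConfig_M$ every relevant small fiber adjacent to a large fiber has been shattered (its tiny pieces, and any small residue, lie in a component containing pieces of a large fiber), and by Lemmas~\ref{global-argument:LSL/lem} and~\ref{global-argument:LSSL/lem} the spine has been cut into segments spanning at most two original large fibers. Hence each connected component of $\quotientGraph{\coherentConfig_M}$ that meets a piece of a large fiber consists of the split pieces of at most two large fibers, tiny fibers, and pendant cliques of irrelevant small fibers; I would exhibit a tree decomposition of its quotient graph of width at most $3$ and invoke Lemma~\ref{lem:bd:tw:and:fibre:size:bd:WL} together with Property~\ref{new:global-argument:assumption:limited-fiber-size} to conclude that it induces a configuration of Weisfeiler--Leman dimension at most $3t$. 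Every remaining component consists entirely of small fibers; it contains none of the relevant small fibers that were adjacent to large fibers, and, propagating the splitting through Lemma~\ref{6-cc:implied-interspace:DUC-DUC/lem} and Lemma~\ref{interspace-pattern:partition-size/lem}, each removed relevant small fiber of size $6$ (counted by $r_1$) accounts for the disappearance of three small fibers and each of size $4$ (counted by $r_2$) for four, so such a component has order at most $n_s-3r_1-4r_2$. Combining the two paragraphs gives $8.5r_1+8r_2\le n_\ell$ and $q\le k_\ell$.

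\textbf{Main obstacle.} The delicate points are the treewidth bound and the constants in the order estimate. For the former one must verify that individualising a single vertex per large fiber, together with the cuts from Lemmas~\ref{global-argument:LSL/lem} and~\ref{global-argument:LSSL/lem}, really leaves each ``large'' component with a quotient graph of treewidth at most $3$ --- i.e.\ that the pieces into which a large fiber shatters (which a priori only form a bounded-\emph{size}, not bounded-\emph{treewidth}, quotient), the attached tiny fibers, and the pendant irrelevant-small cliques assemble into a path-like structure after the spine has been cut. For the latter one must track precisely how the shattering of a relevant small fiber cascades into its small neighbours (via the $\disjointCliques{3}{2,2}$- and $\disjointCliques{2}{3,3}$-type interspaces of Lemma~\ref{6-cc:implied-interspace:DUC-DUC/lem}) to justify the fiber-count losses $3r_1$ and $4r_2$, and pin down the matching large-side costs $8.5r_1$ and $8r_2$ using the exact divisibilities from Lemma~\ref{interspace-pattern:partition-size/lem} and the classification of small homogeneous configurations (Lemma~\ref{small-cc:induced-cc/lem}).
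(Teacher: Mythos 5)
Your construction diverges from the paper's in a way that breaks at its core step: you individualize one vertex in each \emph{large} fiber, whereas Lemmas~\ref{global-argument:LSL/lem} and~\ref{global-argument:LSSL/lem} obtain the homogeneous disconnection only by individualizing $|S|/2-1$ vertices \emph{inside the small fiber}~$S$. Individualizing $\ell\in L$ does not ``realise the homogeneous split promised by those lemmas.'' Concretely, take a path $(L,S,L')$ with $\interspace{L}{S}$ of pattern $\ipfourMatching$ and $\interspace{L'}{S}$ of pattern $\ipfourCycle$: individualizing $\ell\in L$ and $\ell'\in L'$ shatters $S$ into singletons, but each singleton $\{s\}$ still has a non-homogeneous interspace to pieces of both $L$ and $L'$ (since $sU^\star$ is a proper nonempty subset of those pieces), so the quotient graph is \emph{not} cut there; pieces of arbitrarily long spines of large fibers can stay in one component. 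Your claim that each large-side component ``consists of the split pieces of at most two large fibers'' therefore fails, and with it the treewidth-$3$ bound and the $3t$ conclusion (note also that a large fiber may split into many pieces, so a naive treewidth bound on the resulting quotient is not constant). The paper's whole point is to spend the budget on vertices of the connecting small fibers, classified by the four types of components of $\quotientGraphSmall{\coherentConfig}$, so that after closure each small chain is homogeneously attached to at most one large fiber and the large components (treewidth $\le 2$ by Property~\ref{new:global-argument:assumption:large-colorDeg}) only gain clique-sums and degree-$2$ appendages.

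Two further gaps: first, your budget is already exhausted by one vertex per large fiber, so the ``one additional cut per cycle'' exceeds $q\le k_\ell$; in the paper $q\le k_\ell$ follows from a charging argument (each individualized small-fiber vertex is charged to large fibers, with no double charging thanks to Property~\ref{new:global-argument:assumption:large-one-relevent}). Second, you do not treat the paper's Type-4 components at all --- networks of relevant small fibers with inner fibers of color degree $3$ --- which are handled by individualizing inside the boundary fibers of the set $T$ and invoking Property~\ref{new:global-argument:assumption:relevant-3-neighbors} to make them discrete; individualizing only in large fibers leaves these networks attached to the large part and unaccounted for. Relatedly, your reading of the second bullet is off: $r_1\cdot 3 + r_2\cdot 4$ counts \emph{vertices} of small fibers charged per individualization (with $r_1$ the individualizations from the $|S|=6$ subcase of Type~2 and $r_2$ the rest), not ``three/four small fibers per removed fiber,'' and the inequality $r_1\cdot 8.5+r_2\cdot 8\le n_\ell$ comes from charging $17$ large-fiber vertices to two individualizations (using that one of $L,L'$ has size $\ge 9$) respectively $16$ to one or two, with each large-fiber vertex charged once. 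In your scheme a small fiber with two large neighbours would be double-charged and the stated constants are not justified.
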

\begin{proof}
    Our goal is to disconnect all components of the large quotient graph by individualizing vertices in small fibers that connect them.
    Due to Lemma~\ref{crictial:quotientGraph-connected/lem} the quotient graph~$\quotientGraph{\coherentConfig}$ is connected.

    Consider the subgraph of quotient graph~$\quotientGraphSmall{\coherentConfig}$ induced by the set of all small fibers.
    By Lemma~\ref{irrelevant-small-fibers/lem} either all fibers of a connected component of~$\quotientGraphSmall{\coherentConfig}$ are relevant or the connected component has at most two fibers.
    For each of its components~$C$ we will construct a vertex set~$M_C$ such that after individualizing and taking the coherent closure we disconnect the neighboring large fibers. The final vertex set~$M$ will be the union of all vertex sets~$M_C$. We will charge the cost of individualizing the set~$M_C$ partially to the vertices in the large fibers neighboring~$C$ and partially to the small fibers that disappear.
    Since large fibers have at most one small neighboring fiber, there will be no double charging.

    This graph~$\quotientGraphSmall{\coherentConfig}$ of small fibers has four types of connected components~$C$:
    \begin{itemize}
        \item (\emph{Type 1}).
        The first type of component~$C$ contains at least one irrelevant fiber (small fiber of size~$5$, size~$7$, or that induce a complete graph).
        By Lemma~\ref{irrelevant-small-fibers/lem} component~$C$ has at most~$2$ fibers.
        Further, if there are neighbors of~$C$, then these neighbors are all large and form a clique while~$C$ contains only irrelevant small fibers.

        \item (\emph{Type 2}).
        The second type are components that only consist of a relevant single small fiber~$S$. All neighbors of~$S$ are large. We can assume that~$S$ has at least two neighbors~$L_1$ and~$L_2$, otherwise we can set~$M_C$ to be the empty set.
        We set~$M_C$ to contain the~$(|S|/2 -1)$-many vertices that are to be individualized of Lemma~\ref{global-argument:LSL/lem}.
        Thus in~$(\coherentConfig[S\cup L_1\cup L_2])_{M_C}$ the partition into~$S\cup L_1$ and~$L_2$ or the partition into~$S\cup L_1$ and~$L_2$ is homogeneously connected.

        We charge the cost of the individualization to~$L_1$,~$L_2$ and the set~$S$:
        If~$|S|=4$, we charge 1 individualization to at least~$8+8=16$ vertices in large fibers and at least~$4$ vertices in small fibers.
        If~$|S|=6$, at least one of the fibers~$L_1$ or~$L_2$ has size at least 9 by Lemma~\ref{global-argument:LSL/lem}. We charge 2 individualizations to at least~$9+8=17$ vertices in large fibers and at least~$6$ vertices in small fibers.
        In either case the number of vertices in large fibers that are charged is at least~$8.5$ per individualization and at least~$3$ vertices in small fibers per individualization.

        \item (\emph{Type 3}).
        The third type of component consists of relevant small fibers that all have quotient degree 2.
        As before, all neighboring fibers of the component are large. If the component has at least two neighbors~$L_1$ and~$L_2$, by Lemma~\ref{global-argument:LSSL/lem}  we can individualize two vertices~$v_1,v_2\in S$ so that in~$(\coherentConfig[S_1\cup S_2\cup L_1\cup L_2])_v$
        the sets~$S_1\cup L_1$ and~$S_2\cup L_2$ are homogeneously connected. We set~$M_C=\{v_1,v_2\}$. We charge the vertices of~$L_1$,~$L_2$, and the all vertices in the small fibers.

        Overall we charge 2 individualizations to at least~$8+8=16$ vertices in large fibers and at least~$8$ vertices in small fibers, so at least~$8$ vertices in large and~$4$ vertices in small fibers per individualization.

        \item (\emph{Type 4}).
        The fourth type of component is a component~$C$ that contains at least 4 relevant small fibers.
        Since the configuration is~$t$-reduced,
        the component is comprised as follows. Due to Property~\ref{new:global-argument:assumption:relevant-3-neighbors}, there are two kinds of fibers. Those that have quotient degree 2 with a neighboring large and a neighboring relevant small fiber. Let us call these \emph{boundary fibers}.
        And relevant small fibers of degree 3 that have exactly 3 relevant small neighbors. We call these \emph{inner fibers}.

        Consider the quotient graph~$\quotientGraph{\coherentConfig}$.
        Let~$T$ be the set of relevant small fibers~$S$ of~$C$ for which there is a path that starts in~$S$ ends in a large fiber and all of whose internal vertices are relevant small fibers of quotient degree 2.  Property~\ref{new:global-argument:assumption:large-one-relevent} implies that~$|T|$ is bounded by the number~$\ell$ of large fibers that have a neighbor in~$C$.

        Form~$M_C$ by picking one vertex from each fiber in~$T$. Since every fiber~$S$ in~$T$ has quotient degree~$3$, by Property~\ref{new:global-argument:assumption:relevant-3-neighbors}, the set~$S$ is discrete in~$\coherentConfig_{M_C}$.

        We charge the individualization in~$S$ to a large fiber (there could be several) that is reachable via a path with internal degree~$2$ vertices and to the boundary fiber that is the penultimate vertex of that path. We conclude that per individualization at least~$8$ vertices in large fibers and at least~$4$ vertices in relevant small fibers are charged.
    \end{itemize}

    Overall, for all four types, per individualization either at  least~$8$ vertices in large fibers are charged and at least~$4$ vertices in small fibers or at least~$8.5$ vertices in large fibers and~$3$ vertices in small fibers.

    We argue that~$M\coloneqq \bigcup  \{M_C\mid C\text{ is a component of the small graph}\}$ satisfies the properties required by the lemma. Consider a component~$D$ of the quotient graph~$\coherentConfig_M$ that is not a singleton.

    \begin{itemize}
        \item
        Suppose~$D$ contains some vertex from a large fiber~$L$ of~$\coherentConfig$.
        The connected component~$C_L$ of~$\quotientGraphLarge{\coherentConfig}$ containing~$L$ has treewidth at most~$2$.
        Now consider the small fibers of~$\coherentConfig$ containing a vertex of~$D$. The small fibers in components of Type 1 are attached to a clique of~$C_L$ and have size at most 2.
        Being a clique sum, they in particular increase the treewidth to at most~$3$.

        The other small fibers (of Types 2--4) induce connected subgraphs of~$\quotientGraph{\coherentConfig}$ of maximum degree 2 which are attached to a single large fiber. Thus the component~$D$ has treewidth at most 3. The induced configuration thus has WL-dimension at most~$3\cdot t$, since every
        fiber has at most~$t$ points.

        It follows overall that the component~$D$ induces a coherent configuration of WL-dimension at most~$3\cdot t$ (Lemma~\ref{lem:bd:tw:and:fibre:size:bd:WL}).

        \item
        Suppose now that~$D$ does not contain a vertex contained in large fiber of~$\coherentConfig$. Recall that there were four types of components of the small graph, where the first three have degree at most~$2$ and thus treewidth at most~$2$ (Lemma~\ref{lem:max:degree:2:means:tw:3}), so they have WL-dimension bounded by~$2 \cdot t$ (Lemma~\ref{lem:bd:tw:and:fibre:size:bd:WL}). Suppose~$D$ is contained in the fourth type of component.
        Let~$r_1$ be the number of vertices of~$M_C$ that were added as part of the~$|S|=6$ case of components of Type 2. Let~$r_2$ be the number of remaining vertices in~$M_C$.
        When adding one vertex, respectively two vertices, to~$M_C$, vertices in large fibers are charged. Since every vertex is charged only once, we have~$r_1\cdot 8.5 +r_2\cdot 8\leq n_\ell$.
        Also, adding these vertices charges~$3$ or $4$ vertices of small fibers, respectively.
        These vertices are not part of the component~$D$, as they are either attached to a large fiber or become part of a component with only degree 2 vertices.

        Overall the number~$n_s$ of vertices contained in small components decreases by~$r_1\cdot 3 + r_2\cdot 4$.\qedhere
    \end{itemize}
\end{proof}

\begin{corollary}
\label{cor:of:main:global}
    Let~$\coherentConfig$ be a~$t$-reduced critical coherent configuration. Suppose~$Par(\coherentConfig)=(n_\ell, k_\ell,n_s)$ are the parameters of~$\coherentConfig$.
    Then
    \[
        \wldim{\coherentConfig}\leq \frac{2}{20} n_\ell + \frac{1}{20}n_s + \mathcal{O}(t) + o(n_s) \leq \tau(n_\ell, k_\ell,n_s) + \mathcal{O}(t) +  o(n_s).
    \]
\end{corollary}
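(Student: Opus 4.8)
The plan is to obtain the bound essentially for free from Lemma~\ref{lem:the:newest:global-argument} by a single round of individualizations, plus some arithmetic bookkeeping. First I would dispose of the second inequality, which is purely numerical: since $\coherentConfig$ has exactly $k_\ell$ large fibers, each of size at least~$8$, the $n_\ell$ vertices in large fibers satisfy $n_\ell \geq 8k_\ell$, and therefore $\frac{2}{20}n_\ell + \frac{1}{20}n_s \leq \frac{3n_\ell + n_s - 8k_\ell}{20} = \tau(n_\ell, k_\ell, n_s)$. So it remains to prove $\wldim{\coherentConfig}\leq \frac{2}{20}n_\ell + \frac{1}{20}n_s + 3t$.

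For that I would apply Lemma~\ref{lem:the:newest:global-argument} to get the set $M$ of size $q = r_1 + r_2$ with $8.5\,r_1 + 8\,r_2 \leq n_\ell$, and such that every connected component $D$ of $\quotientGraph{\coherentConfig_M}$ either induces a configuration with $\wldim{\coherentConfig_M[D]} \leq 3t$, or consists only of small fibers and has order at most $n_s - 3r_1 - 4r_2$. From the charging argument in the proof of that lemma I would also extract the companion inequality $3r_1 + 4r_2 \leq n_s$, which holds because the small-fiber vertices charged by distinct individualizations are pairwise distinct. The crucial algebraic observation is then that the weights line up exactly:
\[
    20\,q \;=\; 20r_1 + 20r_2 \;=\; 2\,(8.5\,r_1 + 8\,r_2) + (3r_1 + 4r_2) \;\leq\; 2n_\ell + n_s,
\]
so that $q \leq \frac{2n_\ell + n_s}{20}$.

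Next I would invoke Theorem~\ref{preliminaries:wldim-individualizations/thm} to get $\wldim{\coherentConfig} \leq q + \max\{2, \wldim{\coherentConfig_M}\}$, and use that the quotient-graph components of $\coherentConfig_M$ form a colour-disjoint decomposition, so $\wldim{\coherentConfig_M}$ is the maximum of $\wldim{\coherentConfig_M[D]}$ over the components $D$. For components of the first kind this is at most $3t$, giving $\wldim{\coherentConfig} \leq q + 3t \leq \frac{2n_\ell + n_s}{20} + 3t$. For a component $D$ of the second kind I would restore criticality (the dimension being unchanged, or already at most $1$) and apply Theorem~\ref{small-cc:wldim/thm}, obtaining $\wldim{\coherentConfig_M[D]} \leq \frac{n_s - 3r_1 - 4r_2}{20} + \mathcal{O}(1)$; combined with $q = r_1 + r_2$ and $17r_1 + 16r_2 = 2(8.5\,r_1 + 8\,r_2) \leq 2n_\ell$ this yields $\wldim{\coherentConfig} \leq \frac{17r_1 + 16r_2 + n_s}{20} + \mathcal{O}(1) \leq \frac{2n_\ell + n_s}{20} + \mathcal{O}(1)$. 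Finally I would note that the $\mathcal{O}(1)$ is absorbed into $3t$: if $\coherentConfig$ has a large fiber then $t \geq 8$, while if it has none then $n_\ell = k_\ell = 0$ and the claim follows directly from Theorem~\ref{small-cc:wldim/thm}. Chaining this with the first paragraph gives $\wldim{\coherentConfig} \leq \tau(n_\ell, k_\ell, n_s) + 3t$.

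The main obstacle is, a little anticlimactically, the arithmetic bookkeeping of the two charging inequalities: one must pair $8.5\,r_1 + 8\,r_2 \leq n_\ell$ with $3r_1 + 4r_2 \leq n_s$ so that their coefficients sum to exactly $20$, which is the whole reason the weights in the potential $\tau$ and in Lemma~\ref{lem:the:newest:global-argument} were chosen as they were. A secondary subtlety is making sure that the colour-disjoint decomposition of the Weisfeiler-Leman dimension is applied to the possibly non-critical configuration $\coherentConfig_M$, and that the additive constant coming from the small-fiber base case is genuinely dominated by the $3t$ term.
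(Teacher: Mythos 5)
Correct, and this is the same argument the paper gives: individualize via Lemma~\ref{lem:the:newest:global-argument}, bound the surviving small-fiber components with Theorem~\ref{small-cc:wldim/thm}, and use the identity $20(r_1+r_2)=2(8.5\,r_1+8\,r_2)+(3r_1+4r_2)$ to convert the charging inequalities into the stated bound. You are slightly more careful on two points the paper glosses over --- stating the companion inequality $3r_1+4r_2\leq n_s$ explicitly (which the paper also needs, to keep $n_s-3r_1-4r_2$ nonnegative), and noting that the $\mathcal{O}(1)$ from Theorem~\ref{small-cc:wldim/thm} must be absorbed into the $3t$ term --- but the underlying route is identical.
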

\begin{proof}
    By the previous theorem we have~$\wldim{\coherentConfig}\leq r_1+r_2+ \widetilde{\f}(\tau(0,0,n_s-r_1\cdot 3 - r_2\cdot 4)) +3t$, where~$r_1\cdot 8.5 +r_2\cdot 8\leq n_\ell$ for non-negative integers~$r_1,r_2$.
    Using our bound for graphs with only small fibers (Theorem~\ref{small-cc:wldim/thm}) we obtain~$\wldim{\coherentConfig}
    \leq r_1+r_2+ \frac{1}{20} (n_s-  r_1\cdot 3 - r_2\cdot 4) + o(n_s) + 3t$.

    We maximize the function~$ r_1+r_2 + \frac{1}{20} (n_s-  r_1\cdot 3 - r_2\cdot 4) = \frac{2}{20} (r_1\cdot 8.5  + r_2\cdot 8) + \frac{1}{20} n_s$ under the condition that~$r_1\cdot 8.5 +r_2\cdot 8\leq n_\ell$.
    We obtain a value of~$\frac{2}{20} n_\ell + \frac{1}{20} n_s$. Since~$k_\ell\leq n_\ell/8$ we can bound this by~$\tau(n_\ell, k_\ell,n_s)$.
\end{proof}


\section{Proof of the main theorem}
\label{sec:proof:of:main:thm}

Finally, we are able to prove this chapter's main theorem.
To this end, we combine the local reduction with the previous section's argument.

\begin{theorem}
    Let~$\coherentConfig$ be a coherent configuration in which every fiber has size at most~$t$ with parameters $\parameters(\coherentConfig)=(n_\ell, k_\ell,n_s)$. Then~$\wldim{\coherentConfig}\leq \tau(n_\ell, k_\ell,n_s) + \mathcal{O}(t) +  o(n_s)$.
\end{theorem}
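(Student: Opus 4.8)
The plan is to reduce the general statement to the $t$-reduced case handled by Corollary~\ref{cor:of:main:global}, paying a bounded additive price for each reduction step. First I would invoke Theorem~\ref{preliminaries:wldim-individualizations/thm} together with the observation (from Section~\ref{sec:limit:fiber:sizes}) that the Weisfeiler-Leman dimension of a coherent configuration is at most the maximum of $2$ and the maximum WL-dimension of its max-modules, so that it suffices to bound $\wldim{\coherentConfig}$ for each connected component of $\quotientGraph{\coherentConfig}$ separately; by Lemma~\ref{crictial:quotientGraph-connected/lem} we may then pass to a critical coherent configuration without changing the WL-dimension, at the cost of possibly replacing $\coherentConfig$ by an induced subconfiguration (which only makes $\tau(\parameters(\coherentConfig))$ smaller by monotonicity of $\tau$ and the parameters). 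Since the potential function $f$ was defined precisely so that $f(\coherentConfig')\le f(\coherentConfig)$ whenever $\coherentConfig'$ is finer than $\coherentConfig$, and $\widetilde{f}$ is monotone, restoring criticality after any individualization is harmless.

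Next I would set up the induction on $\tau(\parameters(\coherentConfig))$ (equivalently on the number of vertices), using the recursive bound sketched in Section~\ref{sec:potential:func}: a local reduction giving $\wldim{\coherentConfig}\le z+\widetilde{f}(\tau(\parameters(\coherentConfig))-z)$ with $z\ge 1$ unfolds, by induction, into $\wldim{\coherentConfig}\le \tau(\parameters(\coherentConfig))+3t+6$ as long as the base cases satisfy this bound. Concretely I would apply Lemma~\ref{lem:properties:or:t:reduced} to the critical configuration: either one of the reduction lemmas of Sections~\ref{critical-graph/sec}--\ref{recursive-argement/sec} applies, giving $\wldim{\coherentConfig}\le z+\widetilde{f}(\tau(\parameters(\coherentConfig))-z)$ for some positive integer $z$, in which case we recurse on a strictly smaller configuration and the additive slack $3t+6$ is preserved; or $\wldim{\coherentConfig}\le 10+2t$, which is at most $\tau(\parameters(\coherentConfig))+3t+6$ whenever $\tau(\parameters(\coherentConfig))\ge 4-t$ — and for the finitely many remaining tiny configurations the bound holds trivially since WL-dimension is at most the number of vertices; or $\coherentConfig$ is $t$-reduced, in which case Corollary~\ref{cor:of:main:global} yields $\wldim{\coherentConfig}\le \tau(\parameters(\coherentConfig))+3t$, which is even stronger.

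The bookkeeping to get clean is the accounting of the additive constant. Each application of Theorem~\ref{preliminaries:wldim-individualizations/thm} to pass from $\coherentConfig$ to $\coherentConfig_{v_1,\dots,v_\ell}$ contributes $\ell+\max\{2,\cdot\}$; the ``$+2$'' is absorbed once at the top (contributing part of the $+6$), and the ``$+\ell$'' individualizations are exactly matched against the drop in potential recorded in the local-reduction lemmas, which were stated in the normalized form $\wldim{\coherentConfig}\le z+\widetilde{f}(\tau(\parameters(\coherentConfig))-z')$ with $z\le z'$. I would verify that in every such lemma the number of individualized vertices $z$ is at most the potential drop $z'$, so that $z+\widetilde{f}(\tau-z')\le \widetilde{f}(\tau)$ never loses ground; this is already built into the statements of Theorems~\ref{local:L-S/thm}, \ref{local:S-L-S/thm}, \ref{local:S-L-S:rest/thm} and Lemmas~\ref{lem:local-argument:3-large-neighbors}, \ref{local:4cc:3neighbors/lem}, \ref{local:K222-3D/lem}, \ref{local:alternating-6cycle/lem}, \ref{local:6cc:3neighbors/lem}. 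The main obstacle — and the place where the constant $6$ rather than a smaller number appears — is reconciling the two-step reductions (those giving $\wldim{\coherentConfig}\le 2+\widetilde{f}(\tau(\parameters(\coherentConfig))-2.1)$, as in Lemmas~\ref{local:K222-3D/lem} and~\ref{local:alternating-6cycle/lem}) with the single-step ones and with the base-case bound $10+2t$ from Lemma~\ref{lem:properties:or:t:reduced}; I would choose the additive constant so that $10+2t\le \tau+3t+6$ holds once $\tau$ exceeds a small explicit threshold, and handle the bounded-size exceptions by the trivial estimate $\wldim{\coherentConfig}\le n\le t$. Assembling these pieces, the induction closes and gives $\wldim{\coherentConfig}\le \tau(n_\ell,k_\ell,n_s)+3t+6$.
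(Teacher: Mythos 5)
Your proposal is correct and follows essentially the same route as the paper: induction on the potential~$\tau(\parameters(\coherentConfig))$ (after reducing to critical configurations), with the trichotomy of Lemma~\ref{lem:properties:or:t:reduced} supplying either the bound~$10+2t\leq 3t+6$ (using~$t\geq 4$), the $t$-reduced case resolved by Corollary~\ref{cor:of:main:global}, or a local reduction $\wldim{\coherentConfig}\leq x+\widetilde{\f}(\tau(\parameters(\coherentConfig))-\hat{x})$ with $\hat{x}\geq x$ closed by the induction hypothesis. Your extra bookkeeping about two-step reductions and tiny configurations is harmless and already absorbed by the paper's inductive formulation.
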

\begin{proof}
    We show that~$\widetilde{\f}(\tau(n_\ell, k_\ell,n_s))\leq \tau(n_\ell, k_\ell,n_s)+o(n_s)$ by performing an induction on~$20\cdot \tau(n_\ell, k_\ell,n_s)$, which is a function into the integers.
    By induction we may assume that~$\coherentConfig$ is critical.

    For the base case~$(0,0,0)$ there is nothing to show. Thus suppose~$\coherentConfig$ has at least one vertex.

    We consider the three options of Lemma~\ref{lem:properties:or:t:reduced}.
    \begin{itemize}
        \item
        If~$\wldim{\coherentConfig} \leq 10 + 2\cdot t$, then~$\wldim{\coherentConfig} \leq  \mathcal{O}(t) + o(n_s)$.

        \item
        If~$\coherentConfig$ is~$t$-reduced, the statement follows by Corollary~\ref{cor:of:main:global}.

        \item
        The last option is that~$\wldim{\coherentConfig}\leq x+\widetilde{\f}( \tau(Par(\coherentConfig))-\hat{x})$ for positive numbers~$x,\hat{x}$ with~$\hat{x}\geq x$. The statement follows by induction hypothesis.\qedhere
    \end{itemize}
\end{proof}

\begin{theorem}
\label{main-theorem/thm}
    Let~$\coherentConfig$ be a coherent configuration on~$n$ vertices.
    Then~$\wldim{\coherentConfig}\leq 3/20 \cdot n + o(n)$.
\end{theorem}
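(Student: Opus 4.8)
The plan is to deduce the statement from the preceding theorem, which bounds $\wldim{\coherentConfig}$ in terms of $\tau(\parameters(\coherentConfig))$ for configurations of bounded fiber size, by first making all fibers sublinear in size at a sublinear cost via the fiber size limit (Lemma~\ref{lem:bound-on-cc-size}).

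\emph{Reduction to the critical case.} If $\wldim{\coherentConfig}\leq 1$ the claim is trivial, so assume $\wldim{\coherentConfig}\geq 2$. Repeatedly removing removable unions of fibers (Definition~\ref{def:critical-graph}) yields a critical coherent configuration $\coherentConfig'$ on $n'\leq n$ vertices with $\wldim{\coherentConfig'}=\wldim{\coherentConfig}$; since any bound of the form $\tfrac{3}{20}n'+o(n')$ is also of the form $\tfrac{3}{20}n+o(n)$, it suffices to bound $\wldim{\coherentConfig'}$.

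\emph{Shrinking the fibers.} Choose $c=c(n)$ and $d=d(n)$ tending to infinity slowly enough that $\frac{2n}{d}+\frac{dn}{c}=o(n)$ and $c=o(n)$; concretely $d=\lfloor n^{1/4}\rfloor$ and $c=\lfloor n^{1/2}\rfloor$ work for $n$ large. Apply Lemma~\ref{lem:bound-on-cc-size} to the critical configuration $\coherentConfig'$ with these parameters to obtain a vertex set $S$ with $|S|\leq \frac{2n}{d}+\frac{dn}{c}=o(n)$ such that in every max-module $\overline{\coherentConfig}$ of $\coherentConfig'_S$ all fibers have size at most $c$. By Theorem~\ref{preliminaries:wldim-individualizations/thm} we have $\wldim{\coherentConfig'}\leq |S|+\max\{2,\wldim{\coherentConfig'_S}\}$, and by the remark preceding Lemma~\ref{lem:bound-on-min:degree} the quantity $\wldim{\coherentConfig'_S}$ is at most the maximum of $2$ and the maximum of $\wldim{\overline{\coherentConfig}}$ over the max-modules $\overline{\coherentConfig}$ of $\coherentConfig'_S$.

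\emph{Bounding one max-module.} Fix a max-module $\overline{\coherentConfig}$ of $\coherentConfig'_S$, let $m\leq n'\leq n$ be its order, and write $\parameters(\overline{\coherentConfig})=(n_\ell,k_\ell,n_s)$. Since $\overline{\coherentConfig}$ has fibers of size at most $c$, the preceding theorem gives $\wldim{\overline{\coherentConfig}}\leq \tau(n_\ell,k_\ell,n_s)+3c+6$. It remains to estimate $\tau$. Dropping the non-positive term $-8k_\ell$ and using that every small fiber has at least four vertices (so $n_\ell+4n_s\leq m$, hence $n_s\leq\tfrac14(m-n_\ell)$), we get
\[
  3n_\ell+n_s\leq 3n_\ell+\tfrac14(m-n_\ell)=\tfrac14(11n_\ell+m)\leq\tfrac14(11m+m)=3m,
\]
so $\tau(n_\ell,k_\ell,n_s)=\tfrac{1}{20}(3n_\ell+n_s-8k_\ell)\leq\tfrac{3m}{20}\leq\tfrac{3n}{20}$. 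Combining, for $n$ large,
\[
  \wldim{\coherentConfig}=\wldim{\coherentConfig'}\leq |S|+\max\Bigl\{2,\ \tfrac{3n}{20}+3c+6\Bigr\}=\tfrac{3n}{20}+o(n),
\]
since $|S|=o(n)$ and $c=o(n)$, which is the desired bound.

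The entire substance lives in the earlier sections (the preceding theorem and Lemma~\ref{lem:bound-on-cc-size}); here the only points needing care are the joint choice of $c$ and $d$ so that both the individualization cost $\frac{2n}{d}+\frac{dn}{c}$ and the additive slack $3c$ are $o(n)$, the routine passage to a critical configuration (and to max-modules), and the elementary inequality $\tau\leq\tfrac{3m}{20}$. I expect no genuine obstacle beyond keeping the asymptotics honest; the one conceptual observation worth isolating is that this last inequality is tight exactly when all vertices lie in large fibers and crucially relies on "small" fibers having size at least $4$, which is precisely what makes the weights $\tfrac{3}{20}$ on $n_\ell$ and $\tfrac{1}{20}$ on $n_s$ in $\tau$ combine into a uniform $\tfrac{3}{20}$ per vertex.
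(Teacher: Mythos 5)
Your proposal is correct and follows essentially the same route as the paper: apply the fiber size limit (Lemma~\ref{lem:bound-on-cc-size}) to make all fibers sublinear at $o(n)$ individualization cost, invoke the preceding theorem to get $\wldim{} \leq \tau + 3t + 6$, and bound $\tau \leq \tfrac{3}{20}n$. You merely make explicit several points the paper compresses (the concrete choice of $c,d$, the passage to max-modules, and the criticality hypothesis of the lemma), and your estimate of $\tau$ goes through under either reading of $n_s$ (number of small fibers as literally defined, or number of vertices in small fibers as used elsewhere in the paper).
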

\begin{proof}
    We choose~$t$ depending on~$n$ such that $t(n) \in o(n)$ and $t(n) \in \omega(1)$.
    Suppose that~$\coherentConfig$ has parameters~$Par(\coherentConfig)=(n_\ell, k_\ell,n_s)$.
    By Theorem~\ref{lem:bound-on-cc-size} there is a refinement~$\coherentConfig'$ of~$\coherentConfig$ such that~$\wldim{\coherentConfig}\leq \wldim{\coherentConfig'} + o(n)$ and every fiber in~$\coherentConfig'$ has size at most~$t$.
    By the previous theorem, we have that
    \[
        \wldim{\coherentConfig'} \leq
        \tau(n_\ell, k_\ell,n_s) +  \mathcal{O}(t) + o(n_s)\leq
        \frac{3}{20}(n_\ell+n_s) +  \mathcal{O}(t) +  o(n_s)
        \leq \frac{3}{20} n + o(n).\qedhere
    \]
\end{proof}


\section{Lower bound}
\label{lower-bound/sec}

In this section, we assemble several known results to obtain an improved lower bound for the maximum WL-dimension of graphs in terms of their order.

\begin{theorem}
    A random cubic graph asymptotically almost surely has a treewidth of at {\linebreak}least~$ 0.042011151 \cdot n -1$.
\end{theorem}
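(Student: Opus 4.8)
The plan is to derive the treewidth lower bound for random cubic graphs from the well-known connection between treewidth and vertex expansion (bisection width), combined with the fact that random cubic graphs are good expanders. Concretely, I would use the standard inequality $\treewidth(G) \geq \bd(G)$, where $\bd(G)$ denotes the bisection width of $G$ (more precisely, for a graph of treewidth $w$ one can recursively split the graph via small separators, so the bisection width is at most roughly $w$ up to constants; conversely, a large bisection width forces large treewidth). Even more directly, I would use that if every set $S$ of at most $n/2$ vertices has edge boundary $|\partial S| \geq c \cdot |S|$ for some constant $c>0$, then $\treewidth(G) = \Omega(c\cdot n)$: a tree decomposition of small width yields a balanced separator of size $\treewidth(G)+1$ by the classical lemma of Robertson--Seymour, and this separator must have large boundary on the smaller side, giving $\treewidth(G)+1 \geq$ (some fraction of) the isoperimetric bound.

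First I would pin down the exact expansion constant for random cubic graphs. The sharpest known results here come from analyzing the independence number / bisection width of random regular graphs; in particular, Kostochka--Melnikov and later refinements (and the work of Díaz, Serna, and others, e.g.\ on the bisection width of random cubic graphs) give that a random cubic graph a.a.s.\ has bisection width at least $\beta n$ for an explicit $\beta$. The numerical constant $0.042011151$ should be exactly the best such published bisection-width (or isoperimetric) constant, possibly after dividing by an appropriate factor coming from the Robertson--Seymour balanced-separator lemma. So the second step is purely bookkeeping: cite the relevant expansion/bisection result, state the precise constant it yields, and track how it propagates through the treewidth-vs-separator inequality to land on $0.042011151 \cdot n$.

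The third step is to assemble this into the lower bound on the Weisfeiler--Leman dimension via the CFI construction, exactly as the introduction promises: the CFI graph over a base graph $G$ of treewidth $w$ has WL-dimension $\Omega(w)$ (this is the Cai--Fürer--Immerman lower bound, in the refined form of Dvořák / Grohe--Otto relating WL-dimension to treewidth of the base graph), and the CFI gadget blows up the vertex count only by a constant factor depending on the maximum degree (here $3$, so a fixed constant). Hence a random cubic $G$ on $m$ vertices with treewidth $\geq 0.042011151\cdot m$ yields a graph on $\Theta(m)$ vertices with WL-dimension $\geq \Omega(m)$, and dividing through gives the stated $0.0105027\cdot n - o(n)$ lower bound mentioned in the introduction.

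The main obstacle I expect is getting the \emph{exact} numerical constant right: the literature states expansion/bisection constants for random regular graphs in several slightly different normalizations (edge boundary vs.\ vertex boundary, bisection vs.\ general balanced separator, with vs.\ without the $+1$ and the balance-factor loss in the Robertson--Seymour lemma), and I would need to chase the best current value — plausibly from Díaz--Do--Serna--Wormald or a similar second-moment analysis of the bisection width of $G_{n,3}$ — and verify that, after accounting for the balanced-separator loss, it really comes out to $0.042011151$. The graph-theoretic skeleton of the argument (expansion $\Rightarrow$ large treewidth $\Rightarrow$ large WL-dimension via CFI) is entirely standard; all the work is in the constant tracking and in quoting the a.a.s.\ expansion statement in a form strong enough to yield that particular figure.
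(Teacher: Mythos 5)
Your overall skeleton---an a.a.s.\ expansion statement for random cubic graphs combined with an expansion-to-treewidth inequality---is exactly the paper's route, but the proposal has a genuine gap precisely where the theorem has its content: the constant $0.042011151$. The paper does not go through bisection width or edge expansion at all. It cites Kolesnik and Wormald for the \emph{vertex}-isoperimetric number of random cubic graphs, $\alpha \geq A_3(1/2) \geq 0.144208556$ a.a.s., and then applies a ready-made inequality of Dvo\v{r}\'ak and Norin (their Corollary~7), namely $\treewidth(G) \geq \frac{\alpha}{3(1+\alpha)}\, n$, which evaluates directly to $0.042011151\cdot n$. The sources you propose (Kostochka--Melnikov-style or D\'iaz--Serna--Wormald-style bisection-width bounds) measure edge cuts of balanced bipartitions; converting such a bound into a treewidth bound via the Robertson--Seymour balanced-separator lemma incurs exactly the degree-factor and balance-factor losses you yourself flag, and there is no reason the resulting constant would land on $0.042011151$. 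You acknowledge the constant-chasing as ``bookkeeping,'' but that bookkeeping \emph{is} the theorem: without identifying the vertex-expansion constant $A_3(1/2)$ and the specific $\alpha/(3(1+\alpha))$ relation (or proving an equally sharp separator-versus-vertex-expansion inequality yourself), your argument only yields \emph{some} linear lower bound on the treewidth, not the stated one.

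A secondary remark: the third step of your plan (CFI graphs and the Weisfeiler--Leman dimension) is not part of this statement; the theorem to be proved is only the treewidth bound, and the CFI step is handled by a separate theorem and corollary in the paper.
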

\begin{proof}
    By~\cite{DBLP:journals/siamdm/KolesnikW14} a random cubic graph asymptotically almost surely has vertex expansion (i.e., vertex-isoperimetric number) at least~$\alpha\geq 0.144208556$ (this is~$A_3(1/2)$ in~\cite{DBLP:journals/siamdm/KolesnikW14}).
    This implies that a random cubic graph asymptotically almost surely has treewidth at least~$\frac{\alpha}{3(1+\alpha)} n -1 \geq  0.042011151 \cdot  n -1$~\cite[Corollary 7]{DBLP:journals/siamdm/DvorakN16}.
\end{proof}

It is well known that cubic graphs of high treewidth yield graphs with high WL-dimension via the CFI-construction.
Specifically, we have the following relation.

\begin{theorem}[Consequence of {\cite[Theorem 3]{DBLP:conf/csl/DawarR07}}]
    The CFI-graph~$\cfi{G}$ with base graph~$G$ satisfies~$\wldim{\cfi{G}}\geq \treewidth(G)$.
\end{theorem}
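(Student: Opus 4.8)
The plan is to assemble the statement from three standard ingredients, glued together by the cited Theorem~3 of~\cite{DBLP:conf/csl/DawarR07}: the Cai--F\"urer--Immerman construction with its twisted variant, the characterization of the Weisfeiler-Leman dimension through the bijective pebble game, and the cops-and-robbers characterization of treewidth. Set $k\coloneqq\treewidth(G)$; we may assume $k\geq 1$. It suffices to treat connected~$G$: if $C$ is a connected component of~$G$ of maximal treewidth, then $\treewidth(G)=\treewidth(C)$ and $\wldim{\cfi{G}}\geq\wldim{\cfi{C}}$, since a pair of non-isomorphic graphs not distinguished by~\wld{j} stays non-isomorphic and undistinguished after adding a common disjoint summand.

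First I would recall the two gadget graphs $\cfi{G}$ and $\cfitwisted{G}$ and the basic structural fact that, for connected~$G$, they are non-isomorphic (the twist cannot be undone by an isomorphism) while having the same order. The non-isomorphism is exactly what makes $\cfitwisted{G}$ a legitimate witness for a WL-dimension lower bound on $\cfi{G}$: by definition of the WL-dimension it suffices to show that the $(k-1)$-dimensional Weisfeiler-Leman algorithm does not distinguish $\cfi{G}$ from $\cfitwisted{G}$, that is, $\cfi{G}\simeq_{k-1}\cfitwisted{G}$. By the Cai--F\"urer--Immerman correspondence this is equivalent to Duplicator having a winning strategy in the bijective $k$-pebble game played on the pair $(\cfi{G},\cfitwisted{G})$, equivalently to $\cfi{G}$ and $\cfitwisted{G}$ being indistinguishable in $k$-variable counting logic.

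To produce such a Duplicator strategy I would invoke the cops-and-robbers game on the base graph~$G$: since $\treewidth(G)=k$, the robber evades $k$ cops on~$G$. Theorem~3 of~\cite{DBLP:conf/csl/DawarR07} lifts any such evasion strategy to a Duplicator strategy in the bijective $k$-pebble game on $(\cfi{G},\cfitwisted{G})$. The mechanism is that the at most $k$ pebbled vertices project down to at most $k$ vertices of~$G$ occupied by cops; the region of~$G$ into which the robber escapes stays connected; and along any path inside that uncovered region the twist can be slid freely between the incident gadgets. Hence Duplicator can always respond with a partial isomorphism between the pebbled substructures that is correct up to a twist confined to the currently uncovered part of the graph, which is precisely the part that the finitely many pebbles cannot detect. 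This gives a Duplicator win, so $\cfi{G}\simeq_{k-1}\cfitwisted{G}$, and therefore $\wldim{\cfi{G}}\geq k=\treewidth(G)$.

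The main obstacle here is not mathematical but one of index bookkeeping: since Theorem~3 of~\cite{DBLP:conf/csl/DawarR07} is used as a black box, the only real thing to verify is that the four conventions line up --- ``$k$ cops do not suffice on~$G$'', ``Duplicator wins the bijective $k$-pebble game on $(\cfi{G},\cfitwisted{G})$'', ``$\cfi{G}\simeq_{k-1}\cfitwisted{G}$'', and ``$\wldim{\cfi{G}}\geq k$'' --- and that the degenerate cases ($k\le 1$, or $G$ with isolated vertices or small components) are absorbed by the monotonicity observation in the first paragraph. No estimate is required.
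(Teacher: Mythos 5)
Your proposal is correct and fills in exactly the argument that the paper leaves implicit: the paper states the theorem as a direct consequence of the cited Dawar--Richerby result with no further proof, so there is no "paper's proof" to differ from. What you do is unpack the citation through the standard chain of equivalences --- treewidth $k$ means the robber evades $k$ cops; the Dawar--Richerby lifting turns the robber's strategy on $G$ into a Duplicator strategy in the bijective $k$-pebble game on $(\cfi{G},\cfitwisted{G})$; winning that game means $\cfi{G}\simeq_{k-1}\cfitwisted{G}$ in the paper's notation; and since the two graphs are non-isomorphic this forces $\wldim{\cfi{G}}\geq k$ --- and you correctly note that the whole thing is index bookkeeping once the black box is invoked. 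Your reduction to connected $G$ and the observation that one only needs $\cfitwisted{G}$ to be non-isomorphic are both sound and worth having spelled out. One remark on hygiene: the theorem statement as printed in the paper reads $\wldim{G}\geq\treewidth(G)$, which is surely a typo for $\wldim{\cfi{G}}\geq\treewidth(G)$; you silently read it the intended way, which is the right call, but flagging the typo would be appropriate in a referee report.
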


We remark that there are two versions of the CFI-constructions used in the literature. One, where for cubic graphs each vertex is replaced by a gadget of order 10 and one, where for cubic graphs each vertex is replaced by a gadget of order 4. (See~\cite{DBLP:conf/icalp/Furer01,DBLP:conf/esa/NeuenS17,tuprints24244} for more information.)
These versions are very similar and the theorem, as well as many other theorems, hold for either of them.
The difference between the constructions is that the former produces CFI-graphs of order~$10 |G|$, while the latter produces graphs of order~$4|G|$. In the terminology of our current paper, the first version produces coherent configurations with fibers of size 2, so a non-critical configuration. Removal of the tiny fibers yields the other construction.

The two theorems combine as follows.

\begin{corollary}
    The maximum Weisfeiler-Leman dimension for graphs of order~$n$ is at least~$0.0105027 \cdot n - o(n)$.
\end{corollary}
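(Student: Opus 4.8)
The plan is to combine the two preceding theorems in the evident way, the only care needed being with the numerical constant and with covering all orders~$n$. First, by the treewidth bound for random cubic graphs stated just above, for every sufficiently large even~$m$ there exists a cubic graph~$G$ on~$m$ vertices with~$\treewidth(G) \geq 0.042011151 \cdot m$: an asymptotically almost sure property is in particular realized for all large~$m$, so this yields an explicit infinite family of base graphs.

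Next I would apply the order-$4$ variant of the CFI-construction recalled in the remark above. It replaces every vertex of the base graph~$G$ by a gadget on four vertices and joins the gadgets of adjacent base vertices by a matching along each base edge, so that for a cubic base graph~$G$ the resulting graph~$\cfi{G}$ is again cubic and has exactly~$4m$ vertices. By the consequence of Dawar--Richerby quoted above,
\[
    \wldim{\cfi{G}} \;\geq\; \treewidth(G) \;\geq\; 0.042011151 \cdot m \;=\; \frac{0.042011151}{4}\, n \;>\; 0.0105027 \cdot n ,
\]
where~$n = 4m$. This already proves the claim for every~$n$ divisible by~$8$.

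For an arbitrary even order~$n$ (cubic graphs of odd order do not exist, so only even~$n$ are relevant), I would pad: let~$n'$ be the largest multiple of~$8$ with~$n - n' \geq 4$, so that~$n-n'$ is an even number between~$4$ and~$10$ and hence the order of some cubic graph~$H$. Build~$\cfi{G}$ on~$n'$ vertices as above and take the disjoint union~$\cfi{G} \disjointUnion H$, a cubic graph on~$n$ vertices. Since the Weisfeiler--Leman dimension of a disjoint union is the maximum of the dimensions of its connected components, this graph still has dimension at least~$\tfrac{0.042011151}{4}\, n' \geq 0.0105027 \cdot n - O(1)$, and the additive constant is absorbed into the~$-o(n)$ error term.

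The only points requiring attention are bookkeeping rather than mathematics: one should confirm from the references~\cite{DBLP:conf/icalp/Furer01,DBLP:conf/esa/NeuenS17,tuprints24244} that the four-vertex-gadget CFI-variant of a cubic graph is indeed cubic of order exactly~$4|G|$, and note that the asymptotically almost sure treewidth bound genuinely supplies base graphs of all large even orders. Everything else reduces to the arithmetic~$0.042011151 / 4 = 0.01050278775 > 0.0105027$, so I expect no real obstacle.
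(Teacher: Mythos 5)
Your overall approach is exactly the one the paper intends: take the a.a.s.\ treewidth lower bound for random cubic graphs, feed those graphs into the $4$-vertex-gadget CFI construction, invoke Dawar--Richerby, and do the arithmetic $0.042011151/4 > 0.0105027$. The padding step and the use of additivity of the WL-dimension over connected components to absorb the discrepancy into the~$o(n)$ error are also fine.

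There is, however, one concrete error in the write-up: the assertion that for a cubic base graph~$G$ the order-$4m$ CFI graph ``is again cubic.'' With the standard $2^{d-1}$-vertex gadget, a gadget vertex in~$\cfi{G}$ is joined to $2^{d-2}$ vertices in each of the~$d$ adjacent gadgets, so for $d=3$ every vertex of~$\cfi{G}$ has degree~$3\cdot 2 = 6$; the graph is $6$-regular, not cubic. (The cubicity-preserving variant is the one with the $10$-vertex gadget, but that has order~$10m$ and would only yield the constant~$0.042/10 \approx 0.0042$.) This slip is mostly harmless: the introduction of the paper phrases the conclusion as ``there are graphs with Weisfeiler-Leman dimension at least~$0.0105027\cdot n - o(n)$,'' and the word ``cubic'' in the corollary is best read as referring to the base graph, not to~$\cfi{G}$ itself. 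But you should not claim cubicity of~$\cfi{G}$, and if you keep the padding step you need not (and cannot) take the disjoint union with a small cubic graph to preserve $3$-regularity --- any small graph will do, since the only property used is additivity of the WL-dimension over components. Everything else in your argument is correct and matches the paper.
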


In the light of our discussions on configurations with small fiber size we remark that, due to the nature of the CFI-construction, the statement is also true for graphs of color class size 4.
    \bibliography{bib/bound-wldim}

\end{document}